\Crefname{remark}{Remark}{Remarks}
\Crefname{observation}{Observation}{Observations}
\theoremstyle{plain}
\newtheorem{theorem}{Theorem}
\newtheorem{lemma}[theorem]{Lemma}
\newtheorem{corollary}[theorem]{Corollary}
\theoremstyle{definition}
\newtheorem{definition}[theorem]{Definition}
\newtheorem{observation}[theorem]{Observation}
\theoremstyle{remark}
\DeclareMathOperator{\poly}{poly}
\newcommand{\LOCAL}{\ensuremath{\mathsf{LOCAL}}\xspace}
\definecolor{darkgreen}{rgb}{0,0.5,0}
\definecolor{darkred}{rgb}{0.4,0,0}
\newenvironment{myabstract}
{\list{}{\listparindent 1.5em%
		\itemindent    \listparindent
		\leftmargin    1cm
		\rightmargin   1cm
		\parsep        0pt}%
	\item\relax}
{\endlist}
\newenvironment{mycover}
{\list{}{\listparindent 0pt
		\itemindent    \listparindent
		\leftmargin    1cm
		\rightmargin   1cm
		\parsep        0pt}%
	\raggedright
	\item\relax}
{\endlist}
\newcommand{\myemail}[1]{\,$\cdot$\, {\small #1}}
\newcommand{\myaff}[1]{\,$\cdot$\, {\small #1}\par\smallskip}
\newcommand{\eps}{\varepsilon}
\newcommand{\calA}{\mathcal{A}}
\newcommand{\calG}{\mathcal{G}}
\newcommand{\fA}{\mathcal{A}}
\newcommand{\dec}{\mathsf{Decline}}
\newcommand{\con}{\mathsf{Connect}}
\newcommand{\cop}{\mathsf{Copy}}
\newcommand{\pidd}{\Pi_{\Delta, d, k}}
\newcommand{\pitwo}{\pidd^{2\frac{1}{2}}}
\newcommand{\pithree}{\pidd^{3\frac{1}{2}}}
\newcommand{\inn}{\operatorname{in}}
\newcommand{\out}{\operatorname{out}}
\newcommand{\sinn}{\Sigma_{\inn}}
\newcommand{\act}{\mathsf{Active}}
\newcommand{\wei}{\mathsf{Weight}}
\newcommand{\fa}{\mathcal A}
\newcommand{\fu}{\hat{U}_{\cop}}
\newcommand{\hu}{\hat{U}}
\newcommand{\fp}{\mathcal A_{\poly}}
\DeclareMathOperator{\E}{\mathbb{E}}
\newcommand{\nodeavg}{\mathsf{AVG}_V}
\renewcommand{\subparagraph}[1]{\medskip\noindent\textbf{#1}}
\crefname{algocf}{Alg.}{Algs.}
\Crefname{algocf}{Algorithm}{Algorithms}
\newcommand\blfootnote[1]{
    \begingroup
    \renewcommand\thefootnote{}\footnote{#1}
    \addtocounter{footnote}{-1}
    \endgroup
}
\begin{document}

\title{Completing the Node-Averaged Complexity Landscape of LCLs on Trees}
\author{}

\renewcommand*{\thefootnote}{\fnsymbol{footnote}}

\begin{mycover}
    {\huge\bfseries\centering Completing the Node-Averaged Complexity Landscape of LCLs on Trees \par}
  \bigskip
  \bigskip
  \bigskip

  \textbf{Alkida Balliu}
  \myemail{alkida.balliu@gssi.it}
  \myaff{Gran Sasso Science Institute, L'Aquila, Italy}
  \textbf{Sebastian Brandt}
  \myemail{brandt@cispa.de}
  \myaff{CISPA Helmholtz Center for Information Security, Saarbr\"ucken, Germany}
  \textbf{Fabian Kuhn}
  \myemail{kuhn@cs.uni-freiburg.de}
  \myaff{University of Freiburg, Freiburg, Germany}
  \textbf{Dennis Olivetti}
  \myemail{dennis.olivetti@gssi.it}
  \myaff{Gran Sasso Science Institute, L'Aquila, Italy}
  \textbf{Gustav Schmid}
  \myemail{schmidg@informatik.uni-freiburg.de}
  \myaff{University of Freiburg, Freiburg, Germany}

\blfootnote{This work has been partially funded by the PNRR MIUR research project GAMING ``Graph Algorithms and MinINg for Green agents'' (PE0000013, CUP D13C24000430001), and by the research project RASTA ``Realtà Aumentata e Story-Telling Automatizzato per la valorizzazione di Beni Culturali ed Itinerari'' (Italian MUR PON Project ARS01 00540).}
  
\end{mycover}
\bigskip

\renewcommand*{\thefootnote}{\arabic{footnote}}
\addtocounter{footnote}{-1}

\begin{myabstract}
The node-averaged complexity of a problem captures the number of rounds nodes of a graph have to spend \emph{on average} to solve the problem in the \LOCAL model. 
A challenging line of research with regards to this new complexity measure is to understand the complexity landscape of locally checkable labelings (LCLs) on families of bounded-degree graphs. Particularly interesting in this context is the family of bounded-degree trees as there, for the worst-case complexity, we know a complete characterization of the possible complexities and structures of LCL problems. A first step for the node-averaged complexity case has been achieved recently~[DISC '23], where the authors in particular showed that in bounded-degree trees, there is a large complexity gap: There are no LCL problems with a deterministic node-averaged complexity between $\omega(\log^* n)$ and $n^{o(1)}$. For randomized algorithms, they even showed that the node-averaged complexity is either $O(1)$ or $n^{\Omega(1)}$.
In this work we fill in the remaining gaps and give a complete description of the node-averaged complexity landscape of LCLs on bounded-degree trees.
Our contributions are threefold.
\begin{itemize}
    \item On bounded-degree trees, there is no LCL with a node-averaged complexity between $\omega(1)$ and $(\log^*n)^{o(1)}$.
    \item For any constants $0<r_1 < r_2 \leq 1$ and $\eps>0$, there exists a constant $c$ and an LCL problem with node-averaged complexity between $\Omega((\log^* n)^c)$ and $O((\log^* n)^{c+\varepsilon})$.
    \item For any constants $0<\alpha\leq 1/2$ and $\eps>0$, there exists an LCL problem with node-averaged complexity $\Theta(n^x)$ for some $x\in [\alpha, \alpha+\eps]$.
\end{itemize}
\end{myabstract}

\clearpage
\tableofcontents
\clearpage 

\section{Introduction}
\label{sec:intro}

Distributed computation theory has witnessed remarkable progress since the 1980s.
Researchers have not only been able to determine tight complexities of many fundamental distributed graph problems, but they have also been able to develop frameworks that can determine the complexity of entire classes of problems at once. 
Many known results establish lower bounds on the amount of rounds of communication that are required to solve certain problems. These results are proved for ad-hoc, worst-case networks, suffering from notable limitations: in particular, such lower bounds do not provide any information about the time it takes for a randomly selected node to terminate, and they only state that at least one node of the network has to spend a lot of time.
For this reason, the last yearst have seen several attempts to go beyond worst-case complexity.

One recent successful line of research studies the node-averaged complexity of distributed graph problems, which measures the average runtime of a node in the worst-case graph. The study of node-averaged complexity of graph problems is not only interesting per se, but it can also be a powerful tool for developing algorithms with better worst-case complexity. A notable example is the recent algorithm for computing a $(\Delta+1)$-coloring in $O(\log n \log^2 \Delta)$ deterministic worst-case rounds \cite{GhaffariK21}, which is built on top of a subroutine for a variant of coloring (called list-coloring) that has $O(\log^2 \Delta)$ deterministic node-averaged complexity. Any improvement on the node-averaged complexity of this problem would lead to an algorithm for $(\Delta+1)$-coloring with better deterministic worst-case complexity.

\paragraph{Locally Checkable Labelings.} A rich and successful line of research has studied a class of problems called Locally Checkable Labelings (LCLs), that have been introduced in the seminal work of Naor and Stockmeyer \cite{NaorStockmeyer95}. Informally, these problems satisfy that a given solution is correct if and only if the constant-radius neighborhood of each node satisfies some given constraints. LCLs include many classical problems, such as coloring, maximal matching, and maximal independent set. The worst-case complexity of LCLs has been extensively studied in the context of bounded-degree graphs.  In such a setting, we now have an almost complete characterization of what the possible deterministic and randomized worst-case complexities of LCLs are, and sometimes we even have decidability results: in many cases, given an LCL defined in some formal language, it is possible to automatically compute its distributed time complexity, and synthesize an algorithm for it with optimal runtime. These works studied different graph topologies, such as paths and cycles \cite{lcls_on_paths_and_cycles,balliu19lcl-decidability}, trees \cite{B0COSS22_LCLregularTrees,BBOS18almostGlobal,BHOS19HomogeneousLCL,CP19timeHierarchy,LCLs_in_rooted_trees,binary_lcls,brandt21trees,bcmos21,B0FLMOU22,chang20}, grids \cite{lcls_on_grids}, and general graphs \cite{BBOS18almostGlobal,BBOS20paddedLCL,bcmos21,BHKLOS18lclComplexity,CKP19exponential}. 

\paragraph{Node-averaged complexity of LCLs.}
A notable line of research regards the worst-case complexity of LCLs on trees of bounded degree. There, we know that LCLs can only have the following deterministic worst-case complexities: $O(1)$, $\Theta(\log^* n)$, $\Theta(\log n)$, and $\Theta(n^{1/k})$ for any fixed integer $k > 0$. Moreover, randomization can only help for problems with deterministic complexity $\Theta(\log n)$, making their randomized complexity $\Theta(\log \log n)$. 
Building on such remarkable results, a first attempt to generalize our knowledge to the case of node-averaged complexity has been done by Feuilloley in
\cite{Feuilloley17}, who showed that on cycles, the deterministic node-averaged complexity of an LCL is asymptotically the same as its worst-case complexity. Then, \cite{Balliu0KOS23} considered LCLs on trees and proved the following results.
 Any LCL on trees either has node-averaged complexity $O(\log^* n)$, or it has polynomial node-averaged complexity. Thus, differently from the landscape of worst-case complexities, there are no LCLs with node-averaged complexity $\Theta(\log n)$ or $\Theta(\log \log n)$. A concrete and surprising application of this generic result is that a $3$-coloring can be computed in just $O(\log^* n)$ node-averaged rounds in bounded-degree trees. For this problem, it is known that $\Omega(\log n)$ worst-case rounds are required.
 Another result shown in \cite{Balliu0KOS23} states that, if a problem has worst-case complexity $\Theta(n^{1/k})$ for some $k$, then it has $\Omega(n^{1/(2^k - 1)})$ deterministic node-averaged complexity and $\Omega(n^{1/(2^k - 1)} / \log n)$ randomized node-averaged complexity. Hence, a problem that has polynomial worst-case complexity also has polynomial node-averaged complexity.
 Finally, the authors show that any LCL on trees that can be solved in subpolynomial worst-case time can be solved in $O(1)$ randomized node-averaged complexity.

\paragraph{Node-averaged complexity of graph problems.}
Apart from LCLs, the node-averaged complexity has been studied for several other specific problems.
For example, Barenboim and Tzur~\cite{BarenboimT19} considered the problem of vertex-coloring in graphs of small arboricity, showing that the node-averaged complexity can be significantly smaller than what we currently know for worst-case. 
While the node-averaged complexity of some problems is strictly better than their worst-case complexity, in \cite{BalliuGKO22_average} it has been shown that the currently best known randomized worst-case lower bound for the MIS problem, which is $\Omega(\sqrt{\log n/\log\log n})$,  holds in the case of node-averaged complexity as well. On the other hand, the paper also showed that, for a slight relaxation of MIS, called $(2,2)$-ruling set, it is possible to provide an algorithm with a node-averaged complexity that is significantly better than the known worst-case lower bound for this problem.

\subsection{Our Contribution}
In our work, we complete the characterization of the possible node-averaged complexities of LCLs on trees of bounded degree. We show that, perhaps surprisingly, the landscape of possible node-averaged complexities is significantly different from the one of possible worst-case complexities. \cref{fig:LandscapeBefore} gives an overview of everything that was known so far about the node-averaged complexity landscape of LCLs on bounded degree trees. We will later see that with our results we get a complete characterization of the landscape.

\begin{figure}
	\centering
	\includegraphics[width=.6\textwidth]{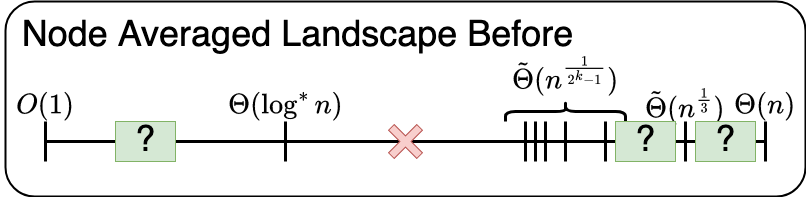}
	\caption{Everything that was known about the Node-Averaged Complexity Landscape before the results in this paper.}
	\label{fig:LandscapeBefore}
\end{figure}

\paragraph{Polynomial regime.}
As already mentioned, in the case of worst-case complexities, the only possible complexities in the polynomial regime are $\Theta(n^{1/k})$ for any constant integer $k$. Prototypical problems exhibiting these worst-case complexities are so-called $k$-hierarchical $2\frac{1}{2}$-coloring problems. In our work, we show that the landscape of possible polynomial node-averaged complexities is substantially different, that is, that the region $n^{\Omega(1)}$--$O(\sqrt{n})$ is infinitely dense. More in detail, we show that, for any rational $0 < a/b \leq 1/2$, there exists an LCL with (deterministic and randomized) node-averaged complexity $\Theta(n^{a/b})$, implying the following.
\begin{restatable}{theorem}{UpperDensity}\label{thm:UpperDensity}
For any two real numbers $0<r_1<r_2\leq\frac{1}{2}$ there exists a constant $r_1<c<r_2$ and an LCL $\Pi$ such that $\Pi$ has node-averaged complexity $\Theta(n^c)$
\end{restatable}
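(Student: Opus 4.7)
The plan is to prove the strictly stronger statement that for every rational $p/q \in (0, 1/2]$ there exists an LCL $\Pi_{p,q}$ on bounded-degree trees with node-averaged complexity $\Theta(n^{p/q})$. Since the rationals are dense in $(r_1, r_2)$, picking any rational $c \in (r_1, r_2)$ then yields the theorem directly. Thus the entire effort reduces to exhibiting such an LCL for each rational exponent, together with matching upper and lower bounds.

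For the construction I would build on the family of $k$-hierarchical $2\tfrac{1}{2}$-coloring problems, which already realize the discrete worst-case polynomial exponents $1/k$. The natural design is a ``dilution'' variant $\Pi_{k,\beta}$ whose labeling consists of (i) a distinguished subset $S$ of nodes, encoded by a binary input label and verified by a local checker, (ii) a $k$-hierarchical $2\tfrac{1}{2}$-coloring that needs to be output only on $S$, and (iii) local balance constraints that pin down $|S|$ to a prescribed polynomial size $n^{1-\gamma}$ by forcing, for every subtree of a certain scale, the right proportion of nodes inside versus outside $S$. Nodes outside $S$ are allowed to commit to a trivial label in $O(1)$ rounds, while nodes inside $S$ are effectively forced to solve an instance of the hierarchical coloring on a substructure whose size scales with $|S|$. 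Tuning $k$ and $\gamma$ independently over the rationals then lets one hit any rational target exponent in $(0, 1/2]$.

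The upper bound should follow a standard recipe: first run the (easy) $O(\log^* n)$ routine that identifies $S$ and lets complement nodes terminate immediately; then run on $S$ the known worst-case-optimal algorithm for the $k$-hierarchical $2\tfrac{1}{2}$-coloring, paying $\Theta(|S|^{1/k})$ per node in $S$. Summing contributions gives node-averaged complexity $\Theta\bigl((|S|/n)\cdot |S|^{1/k}\bigr)$, which by construction equals $\Theta(n^{p/q})$ for the chosen parameters. I would also have to verify that this algorithmic picture transfers to the randomized setting, which should be direct by using the known randomized upper bounds for the hierarchical subproblem.

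The main obstacle will be the matching lower bound. The subtlety is that an algorithm could try to ``cheat'' by making $S$ strictly smaller than the balance constraints allow, thereby reducing the average runtime; the balance constraints must be designed so that every local shrinking of $S$ violates them. Once this is set up, the lower bound should proceed by a round-elimination argument lifted from the worst-case theory: any correct output on $S$ encodes a valid $k$-hierarchical coloring on an induced substructure whose size is $\Theta(|S|)$, so \emph{every} node of $S$ needs $\Omega(|S|^{1/k})$ rounds in the worst case, and thus the node-averaged cost over the whole graph is $\Omega((|S|/n)\cdot |S|^{1/k}) = \Omega(n^{p/q})$. Extending this argument to randomized algorithms will require re-running the lower bound on the distributional variant of round elimination, as was done in the earlier node-averaged work \cite{Balliu0KOS23}; I expect this to be the technically heaviest step, since one must control both the shattering behavior of the randomized solver and the stochastic deviations of $|S|$ around its target size simultaneously.
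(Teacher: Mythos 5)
Your high-level reduction is fine: it suffices to hit a dense set of polynomial exponents, and restricting to rationals is exactly what the paper does. But the construction you propose goes in the wrong direction and the claimed lower bound is false, so the proposal does not work as stated.

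The paper's key insight (Section 5 onward) is that the \emph{bare} $k$-hierarchical $2\frac{1}{2}$-coloring already has node-averaged complexity $\Theta(n^{1/(2^k-1)})$, which is strictly \emph{below} its worst-case complexity $\Theta(n^{1/k})$. The open range to fill is precisely $[n^{1/(2^k-1)}, n^{1/k}]$, and one can only reach it by \emph{boosting} the average, not by lowering it. The paper achieves this by attaching trees of ``weight'' nodes (via an input label $\wei$) that are locally forced to copy the output of a nearby active node; the fraction of a weight tree of size $w$ that must actually wait is governed by an efficiency parameter $x = \frac{\log(\Delta-d-1)}{\log(\Delta-1)}$, yielding $\Theta(w^x)$ copiers, and sweeping $\Delta,d,k$ over integers makes $x$ hit any rational in $(0,1)$ and hence makes the resulting exponent $\alpha_1 = 1/\sum_{j=0}^{k-1}(2-x)^j$ dense (Lemmas~\ref{lem:weightOfTree}, \ref{lem:OptValues}, \ref{lem:density}). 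Your ``dilution'' construction does the opposite: shrinking the active set $S$ to size $n^{1-\gamma}$ only \emph{reduces} the node-averaged cost from $\Theta(n^{1/(2^k-1)})$ downward; it cannot reach anything between $n^{1/(2^k-1)}$ and $n^{1/k}$. Concretely, running the hierarchical coloring on an $S$ of size $m$ costs $\Theta(m^{1/(2^k-1)})$ node-averaged rounds over $S$ (this is exactly the result of \cite{Balliu0KOS23} that the paper builds on), so the overall average is $\Theta\bigl(\tfrac{|S|}{n}\cdot|S|^{1/(2^k-1)}\bigr)$, not your $\Theta\bigl(\tfrac{|S|}{n}\cdot|S|^{1/k}\bigr)$.

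This also surfaces the flaw in your lower-bound sketch: the claim that ``every node of $S$ needs $\Omega(|S|^{1/k})$ rounds in the worst case'' is false for the hierarchical coloring. The worst-case bound $\Theta(n^{1/k})$ is witnessed by a single slow node; the correct per-node-average bound is $\Omega(n^{1/(2^k-1)})$. This gap between worst-case and average is precisely the phenomenon the paper must engineer around, and what the weight nodes achieve is transferring the cost of that one slow node onto a linear number of weight nodes (cf.\ the paragraph ``LCLs with weighted node-averaged complexity'' in Section~1.2). Finally, your balance constraints are problematic on their own: an LCL verifier checks only constant-radius neighborhoods, so one cannot locally enforce that $|S| = n^{1-\gamma}$; trying to pin a global density via local proportion constraints typically only bounds density on balls, and leaves room for globally degenerate $S$. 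The paper sidesteps this entirely by making the weight/active partition an \emph{input}, so the lower bound is free to choose the adversarial partition and the algorithm never controls $|S|$.
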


We achieve this by creating a weighted version of $k$-hierarchical $2\frac{1}{2}$ coloring that depends on two more parameters $\Delta,d$. We call it $\Pi^{2.5}_{\Delta, d, k}$ and we prove matching upper and lower bounds for it.

\begin{restatable}{theorem}{UpperUpperBound}\label{thm:UpperUpperBound}
For any $D,d,k$ such that $D\geq d+3$, the node-averaged complexity of $\Pi^{2.5}_{\Delta, d, k}$ is $O(n^{\alpha_1})$, where $\alpha_1 = \frac{1}{\sum_{j = 0}^{k - 1}(2 - x)^j}$ and $x = \frac{\log (\Delta - d - 1)}{\log (\Delta - 1)}$.
\end{restatable}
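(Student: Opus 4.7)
The plan is to design a deterministic algorithm for $\Pi^{2.5}_{\Delta,d,k}$ and to bound directly the sum of running times over all nodes. The algorithm will proceed in $k$ phases indexed by $j=0,1,\ldots,k-1$, with parameters $r_0<r_1<\cdots<r_{k-1}$ to be determined. In phase $j$, every still-\emph{active} vertex inspects its $r_j$-neighborhood and tries to commit to a color from the $j$-th level of the hierarchy; a vertex that succeeds terminates after $r_j$ rounds, and a vertex that fails remains active for phase $j+1$. This is the standard framework used for $k$-hierarchical $2\tfrac12$-coloring, which already gives worst-case complexity $O(n^{1/k})$ when all $r_j$ are equal.

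The gain from the weighting $d$ in $\Pi^{2.5}_{\Delta,d,k}$ is that committing to a level-$j$ color at a vertex $v$ effectively blocks at least $d+1$ of $v$'s up to $\Delta-1$ tree neighbors from continuing to higher levels, so the density of surviving vertices after phase $j$ shrinks by a factor of at most $((\Delta-d-1)/(\Delta-1))^{r_j}$. Using the definition $x=\log(\Delta-d-1)/\log(\Delta-1)$, this factor can be rewritten as $(\Delta-1)^{-(1-x)r_j}$. Denoting the number of active vertices at the start of phase $j$ by $|A_j|$, unfolding the recursion yields $|A_{j+1}|\le n\cdot (\Delta-1)^{-(1-x)\sum_{i\le j} r_i}$, while requiring phase $k-1$ to actually finish imposes a termination constraint of the form $r_{k-1}+(1-x)\sum_{i<k-1} r_i \ge \log_{\Delta-1} n - O(1)$.

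The total work is then at most $\sum_{j=0}^{k-1} r_j \,|A_j| \le n\sum_{j=0}^{k-1} r_j (\Delta-1)^{-(1-x)\sum_{i<j} r_i}$. A routine Lagrange-multiplier computation shows that, under the termination constraint, this sum is minimized by the choice $r_j=(2-x)^j r_0$ with $r_0 = \Theta\!\left(\log n / \sum_{j=0}^{k-1}(2-x)^j\right)$. Substituting back gives total work $O(n^{1+\alpha_1})$ and therefore node-averaged complexity $O(n^{\alpha_1})$ with $\alpha_1 = 1/\sum_{j=0}^{k-1}(2-x)^j$, matching the claim. As a sanity check, taking $d=\Delta-2$ (i.e., $x=0$) gives $\alpha_1 = 1/(2^k-1)$, which agrees with the known lower bound for the unweighted hierarchical problem, and taking $x\to 1$ (i.e., $d=0$) degenerates back to the worst-case exponent $1/k$.

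The main obstacle will be establishing the density-reduction lemma $|A_{j+1}|\le|A_j|\cdot((\Delta-d-1)/(\Delta-1))^{r_j}$ rigorously for the specific combinatorics of $\Pi^{2.5}_{\Delta,d,k}$: one must describe exactly how the $d+1$ neighbors of each committed vertex are blocked and then telescope this local blocking into a global density bound, using crucially that the underlying graph is a bounded-degree tree so that subtrees rooted at blocked neighbors are disjoint. The hypothesis $\Delta \ge d+3$ guarantees $\Delta-d-1 \ge 2$, which is what makes the telescoping non-degenerate. Once this density lemma is in place, the Lagrange optimization sketched above and the resulting formula for $\alpha_1$ follow by a direct calculation.
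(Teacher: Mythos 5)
Your proposal is built on a misreading of what $\Pi^{2.5}_{\Delta,d,k}$ actually is, and the central ``density-reduction lemma'' you flag as the remaining obstacle is simply false for this problem. In $\Pi^{2.5}_{\Delta,d,k}$, the graph is partitioned by input labels into \emph{active} nodes and \emph{weight} nodes. The active nodes solve the ordinary, unmodified $k$-hierarchical $2\frac{1}{2}$-coloring on the subgraph they induce; the parameter $d$ has no effect on that subproblem whatsoever, and committing to a level-$j$ color at an active node does not ``block $d+1$ neighbors from continuing to higher levels.'' The parameter $d$ only governs the weight nodes: a weight node that outputs $\cop$ may have at most $d$ of its neighbors output $\dec$, so $\cop$-labels must spread through the attached weight trees, and a weight node cannot terminate before it learns the secondary label it must copy from an active node. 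This is the entire source of the factor $x = \log(\Delta-d-1)/\log(\Delta-1)$: roughly $W^x$ of a $\Delta$-regular weight tree of size $W$ is forced to output $\cop$ and wait for its anchor active node. None of this appears in your proposal, which never mentions weight nodes at all.

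Consequently your recursion $|A_{j+1}|\le n\,(\Delta-1)^{-(1-x)\sum_{i\le j} r_i}$, the termination constraint involving $\log_{\Delta-1} n$, and the choice of phase lengths $r_j = (2-x)^j r_0$ with $r_0 = \Theta(\log n)$ are all wrong in kind, not merely in detail. In the actual algorithm, phase $i$ has length $\gamma_i = n^{\alpha_i}$ (polynomial, not logarithmic), and after phase $i$ the number of surviving active nodes drops by the multiplicative factor $1/\gamma_i$, since any surviving level-$(i+1)$ node is witnessed by a declined level-$i$ path of length $\ge \gamma_i$ (\cref{lem:remainAfterLvlI}). The weight nodes are handled separately by an algorithm for an auxiliary LCL (the $d$-free weight problem) that runs in $O(\log n)$ worst-case rounds and produces a $\cop$-component around each anchoring weight node of size at most $O(|\hat U|^x)$ (\cref{lem:copynumber}). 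The analysis then multiplies, for each phase $i$, the count of still-waiting active nodes plus weight nodes by the phase length $\gamma_i$, and shows by the choice $\alpha_i = (2-x)\alpha_{i-1}$, $\alpha_1 = 1/\sum_{j=0}^{k-1}(2-x)^j$ (the same values that equalize the terms in the matching lower bound, \cref{lem:OptValues}) that all $k$ summands are $\Theta(n\cdot n^{\alpha_1})$. You would need to discover the $d$-free weight problem, an $O(\log n)$-round algorithm for it with the $|\hat U|^x$ copy-count guarantee, and the coupling of weight-node termination to active-node termination before the optimization calculation — which is the only part you got right in spirit — has anything to optimize.
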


\begin{restatable}{theorem}{UpperLowerBound}
\label{thm:UpperLowerBound}
For any constants $D,d,k$, such that $D\geq d+3$ the LCL $\Pi^{2.5}_{\Delta, d, k}$ has node-averaged complexity $\Omega(n^{\alpha_1})$, where $\alpha_1 = \frac{1}{\sum_{j=0}^{k-1} (2-x)^j}$ and $x = \frac{\log(\Delta - d - 1)}{\log(\Delta -1)}$.
\end{restatable}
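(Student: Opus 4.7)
The plan is to extend the node-averaged lower bound technique from \cite{Balliu0KOS23} (which gave $\Omega(n^{1/(2^k-1)})$ for the unweighted $k$-hierarchical $2\frac{1}{2}$-coloring) to the weighted problem $\Pi^{2.5}_{\Delta,d,k}$. Sanity check on the exponent: setting $d=\Delta-2$ gives $\Delta-d-1=1$, so $x=0$ and $\alpha_1 = 1/(2^k-1)$, recovering the known bound; letting $d\to 0$ pushes $x\to 1$ and $\alpha_1\to 1/k$, which is the worst-case exponent. So the parameter $d$ should enter the construction as the width of a ``restricted branching'' gadget, and the recurrence driving the proof should have branching factor $(2-x)$ instead of $2$.

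First, I would build a family of hard instances $\{H_i\}_{i=1}^{k}$ recursively, mirroring the construction used for the classical $2\frac{1}{2}$-coloring lower bound but with two types of branches at each recursion step: ``full'' branches that use the $\Delta-1$ unrestricted slots and ``restricted'' branches that only use $\Delta-d-1$ slots, as dictated by the LCL's structure. The number of each type is chosen so that the total size $n_i$ of $H_i$ satisfies a recurrence of the form $n_i \approx n_{i-1}^{2-x}$, while the worst-case depth within $H_i$ scales as $n_i^{\alpha_1}$. Next, I would apply a round-elimination / indistinguishability argument at each level $i$: if an algorithm solves the level-$i$ subproblem in fewer than the target number of rounds at a constant fraction of the nodes, then on the recursively embedded copies of $H_{i-1}$ it must solve the level-$(i-1)$ subproblem strictly faster, contradicting the inductive hypothesis. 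Finally, I would convert this into the node-averaged statement by a counting argument: ensure that in $H_k$, at least a constant fraction of the $n_k$ nodes lie at distance $\Omega(n_k^{\alpha_1})$ from any ``easy boundary,'' so they must each spend $\Omega(n_k^{\alpha_1})$ rounds on average. The exponent $\alpha_1 = 1/\sum_{j=0}^{k-1}(2-x)^j$ then pops out of the geometric sum governing how depth accumulates across the $k$ recursion levels, precisely matching the upper bound of \Cref{thm:UpperUpperBound}.

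The main obstacle will be calibrating the weighted construction so that the branching-factor recurrence yields exactly $(2-x)^j$ and not just an order-of-magnitude approximation. In particular, the ratio of full to restricted branches at each recursive step must be tuned so that the ``hard fraction'' of nodes is preserved across all $k$ levels of the recursion, since a loss of a $\log n$ factor in this fraction (or in the diameter estimate) would break the match with $\alpha_1$. A secondary difficulty is verifying that the round-elimination step still applies cleanly to $\Pi^{2.5}_{\Delta,d,k}$: one has to show that no algorithm can exploit the reduced degree budget on restricted branches to ``shortcut'' the level-lowering argument, i.e., that restricted branches are still genuine hard instances for the level-$(i-1)$ subproblem and not merely padding. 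Handling the randomized case should follow the standard reduction, losing only $\log n$ factors which are absorbed into the polynomial exponent.
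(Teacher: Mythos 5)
There is a genuine mismatch between your plan and what the problem actually forces, and it centers on how the parameter $d$ enters the construction. You propose interleaving ``full'' branches (fan-out $\Delta-1$) and ``restricted'' branches (fan-out $\Delta-d-1$) inside a recursively nested family $\{H_i\}$ whose sizes satisfy $n_i \approx n_{i-1}^{2-x}$, with the exponent $\alpha_1$ falling out of how depth accumulates across that recursion. This does not reflect the structure of $\Pi^{2.5}_{\Delta,d,k}$: the problem has an explicit input partition into $\act$ and $\wei$ nodes, and the ``restricted fan-out'' is not a property of the branching in the hard instance for the $2\frac{1}{2}$-coloring part at all. It is a property of how the $\cop$ output must propagate through a tree of weight nodes, because any $\cop$ node may let at most $d$ of its $\Delta-1$ children decline. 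The paper's central structural lemma is precisely this: if you attach a single balanced $\Delta$-regular tree of $W$ \emph{weight} nodes to one \emph{active} node $v$, then at least $W^{x}$ of those weight nodes (with $x=\log(\Delta-d-1)/\log(\Delta-1)$) are forced to copy $v$'s output and hence cannot terminate before $v$ does. Your plan never articulates this lemma or anything equivalent to it, and without it there is no mechanism for the weight of a node to translate into node-averaged cost.

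The second gap is where $(2-x)$ comes from. In the paper it is not a size recurrence on nested gadgets, and there is no round-elimination step; the hard instance is flat, not recursive. One takes the standard $k$-hierarchical lower bound graph with level-$i$ path length $\ell_i = n^{\alpha_i}$, attaches $\Theta(n/k)$ weight nodes (evenly split into $\Delta$-regular trees) to each of levels $2,\ldots,k$, and then uses an indistinguishability argument on the middle thirds of each level-$i$ path to show the algorithm must either have a constant fraction of level-$i$ nodes run for $\Omega(\ell_i)$ rounds or make them all output $D$. This yields $k$ candidate failure modes whose contributions to the node-averaged complexity are $n^{(x-1)\sum_{j<i}\alpha_j+\alpha_i}$ for $i<k$ and $n^{1+(x-2)\sum_{j<k}\alpha_j}$ for $i=k$; the lower bound is the minimum of these, and the exponents $\alpha_i$ are chosen to maximize that minimum, which forces $\alpha_i=(2-x)\alpha_{i-1}$ and gives $\alpha_1=1/\sum_{j=0}^{k-1}(2-x)^j$. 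In particular, your final counting claim that ``a constant fraction of the $n_k$ nodes lie at distance $\Omega(n_k^{\alpha_1})$ from any easy boundary'' is not what happens: for $i<k$ the fraction of nodes that is slow by $\Omega(\ell_i)$ is polynomially small, and the bound only works because the weighted product of (fraction slow)\,$\times$\,(time slow) is balanced to $\Theta(n^{\alpha_1})$ across all $k$ failure modes. Finally, your remark about losing $\log n$ factors in the randomized case is unnecessary — the paper's argument for $\Pi^{2.5}$ works directly against randomized algorithms, since the indistinguishability of middle-of-path nodes holds regardless of IDs.

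To salvage your plan you would need to (i) replace the ``full/restricted branch'' idea with separate weight-node trees attached to active nodes and prove the $W^{x}$-propagation lemma, and (ii) replace the size recurrence and round elimination with the case analysis over which level the algorithm fails at and the resulting $\max$-$\min$ optimization over the exponents $\alpha_1,\ldots,\alpha_{k-1}$. Your sanity checks on the endpoints $x=0$ and $x\to 1$ are correct and useful, but the mechanism connecting $x$ to the exponent is different from what you describe.
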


\paragraph{New complexities in the \texorpdfstring{$O(\log^* n)$}{log*(n)} regime.}
In the case of worst-case complexities, it is known that there are no LCLs that have a complexity that lies in the region $\omega(1)$--$o(\log^* n)$. Moreover, it is known that, for randomized algorithms, there are no LCLs that have a node-averaged complexity that lies in the region $\omega(1)$--$n^{o(1)}$. We show that, in the case of deterministic node-averaged complexities, this is false.
We first introduce a new class of problems we call $k$-hierarchical $3\frac{1}{2}$-coloring, which already gives an infinite amount of nonempty complexity classes between $(\log^* n)^{\Omega(1)}$--$O(\log^* n)$.\\
We repeat a similar process as in the polynomial regime to obtain a new class of LCLs $\Pi^{3.5}_{\Delta,d,k}$. We again obtain a strong lower bound.

\begin{restatable}{theorem}{LowerLowerBound}
\label{thm:LowerBoundWeighted3.5}
For any constants $D,d,k$, such that $D\geq d+3$ the LCL $\Pi^{3.5}_{\Delta, d, k}$ has node-averaged complexity $\Omega((\log^* n)^{\alpha_1(x)})$, where $\alpha_1(x) = \frac{1}{1 + (1-x)\sum_{j=0}^{k-2} (2-x)^j}$ and $x = \frac{\log(\Delta - d - 1)}{\log(\Delta -1)}$.
\end{restatable}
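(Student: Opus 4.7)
The plan is to extend the lower bound argument of \cref{thm:UpperLowerBound} (for $\Pi^{2.5}_{\Delta,d,k}$) to the $3\frac{1}{2}$-coloring-based hierarchy, with the base-case $\Omega(n^{1/k})$ bound replaced by the $\Omega(\log^* n)$ lower bound for $3\frac{1}{2}$-coloring on constant-degree trees. First I would unfold the definition of $\Pi^{3.5}_{\Delta,d,k}$ to expose that any solution induces a $k$-level hierarchy of nested $3\frac{1}{2}$-coloring sub-instances, and construct a hard instance recursively: at each outer level, attach to every active node a $\Delta$-regular subtree in which $\Delta-d-1$ of its children play the role of the next inner level while the remaining $d+1$ children are forced onto ``lazy'' labels that do not propagate responsibility downward. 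The parameter $x=\log(\Delta-d-1)/\log(\Delta-1)$ then captures, per level of recursion, the rate at which the active mass of the tree shrinks relative to its total mass.

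Next I would set up the inductive lower bound. For $1\leq i\leq k$, let $T_i$ denote the number of rounds by which, on this hard instance, any correct algorithm must have committed outputs on a constant fraction of the active level-$i$ nodes. The innermost level is essentially a $3\frac{1}{2}$-coloring instance on a subtree of size polynomial in~$n$, so $T_1 = \Omega(\log^* n)$ follows from the standard speedup argument for $3\frac{1}{2}$-coloring. For the inductive step, a round-elimination argument analogous to the one yielding the $T_i \geq T_{i-1}^{2-x}$ recurrence in the $\Pi^{2.5}$ case shows that committing a level-$i$ label too early forces the descendant active level-$(i-1)$ subtree to be solved within time $T_{i-1}$, giving $T_i \geq T_{i-1}^{2-x}$ for $i \geq 2$.

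To translate this into a node-averaged lower bound, I would count active nodes per level. At level~$i$ there are $\Theta(n\cdot r^i)$ active nodes, where $r = (\Delta-d-1)/(\Delta-1) = (\Delta-1)^{x-1}$. The total sum of per-node runtimes is therefore at least $\sum_{i=1}^{k} \Theta(n r^i)\cdot T_i$, and the node-averaged complexity is this sum divided by $n$. An algorithm is free to trade off the $T_i$'s against one another, but the recurrences together with the level-$1$ base case constrain it: optimizing over these trade-offs, taking the node-weight factors $r^i$ into account, yields exactly $\Omega\bigl((\log^* n)^{\alpha_1(x)}\bigr)$ with $\alpha_1(x) = 1/\bigl(1 + (1-x)\sum_{j=0}^{k-2}(2-x)^j\bigr)$. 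The asymmetric leading $1$ in the denominator, in contrast to the symmetric $\sum_{j=0}^{k-1}(2-x)^j$ appearing in \cref{thm:UpperLowerBound}, reflects the $\log^* n$ base bound at the innermost level, while the $(1-x)$ factor in the sum encodes the geometric shrinkage $r^i$ of the active node count at deeper levels.

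The main obstacle, and the place where this proof most departs from the polynomial case, is transporting the round-elimination recurrence from the polynomial regime to the $\log^*$ regime cleanly. In \cref{thm:UpperLowerBound} both the base bound and the inductive step scale polynomially in $n$, so exponents compose transparently; here the base is $\log^* n$, and one must argue that each speedup step multiplies the \emph{exponent of} $\log^* n$ by exactly $2-x$ without losing additive $\log^* n$ factors from re-encoding inside the speedup, from initial-configuration overhead, or from boundary effects at the interface between adjacent levels. Verifying that the asymmetric contribution of the outermost layer really yields the $\alpha_1(x)$ above, and not a symmetric expression, is the technical crux that the proof must handle.
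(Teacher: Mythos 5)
Your high-level instinct is right—take the lower bound machinery for $\Pi^{2.5}_{\Delta,d,k}$ and swap the base case from the $\Omega(\ell_k)$ $2$-coloring bound to the $\Omega(\log^* \ell_k)$ $3$-coloring bound—but the construction and argument you sketch diverge from the paper's in ways that would need real repair.

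First, the hard instance. The paper (Definition~\ref{def:WeightedConstruction}) keeps two structures orthogonal: the active subgraph is the $k$-hierarchical lower bound graph of Definition~\ref{def:LowerBoundGraph} (a path of paths of paths, with level-$i$ path length $\ell_i'$), and to each active node of level $\geq 2$ it attaches a \emph{separate} balanced $\Delta$-regular tree of weight nodes. Your recursive description, where the $\Delta$-regular subtree attached to an active node contains $\Delta-d-1$ children that ``play the role of the next inner level,'' collapses these two structures into one. That is not how $\pidd^{3.5}$ behaves: weight nodes do not carry hierarchy levels at all—they only $\cop$, $\con$, or $\dec$—and the levels of the $k$-hierarchical constraint are determined by iterated degree-$\leq 2$ peeling on the active subgraph, which is not tree depth in a $\Delta$-regular tree. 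As written, it is not clear your construction is a valid instance with the intended level structure, and the claim that level $i$ hosts $\Theta(n\cdot r^i)$ active nodes with $r=(\Delta-d-1)/(\Delta-1)$ does not match the paper's $|L_i|\in\Theta\bigl(\prod_{i\leq j\leq k}\ell_j'\bigr)$, which depends on the freely chosen path lengths $\ell_j$.

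Second, the mechanism. The paper's argument is not a round-elimination recurrence $T_i\geq T_{i-1}^{2-x}$. Instead: Lemma~\ref{lem:weightOfTree} shows that of $W$ weight nodes in a balanced tree at least $W^x$ must wait for their active anchor, Corollary~\ref{cor:EvenWeightWorst} converts this into a charge of $(n/k)^x|L_i|^{1-x}$ weight nodes per level $i$, and Lemma~\ref{lem:SameBehaviourLowerRegime} (an indistinguishability/averaging argument, not round elimination) says the algorithm either spends $\Omega(\ell_i')$ at some level $i<k$ or declines all the way down and must $3$-color the level-$k$ path, which costs $\Omega(\log^*\ell_k)$ on average by Lemma~\ref{lem:3ColLastPath}. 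Setting $\ell_i=(\log^*n)^{\alpha_i}$ for $i<k$ and $\ell_k = n(\log^*n)^{-\sum_{j<k}\alpha_j}$, one obtains per-level bounds $A_i$ (Corollary~\ref{cor:LowerRegimeTerms}), and the adversary then maximizes $\min_i A_i$ over the $\alpha_i$'s. The recursion $\alpha_i=(2-x)\alpha_{i-1}$ emerges from solving that optimization (Lemma~\ref{lem:OptValuesLowerRegime}), not from a speedup argument. So the ``technical crux'' you flag—transporting a round-elimination step and worrying about losing additive $\log^*n$ factors—never actually arises in the paper's route; the whole thing is a counting and optimization argument, and the asymmetric $1$ in the denominator of $\alpha_1(x)$ appears because the $A_k$ term scales as $(\log^*n)^{1+(x-1)\sum_{j<k}\alpha_j}$ (with $\ell_k$ linear in $n$ and the bound being $\log^*\ell_k$, not $\ell_k$), while in the polynomial regime the analogous term gave $n^{1+(x-2)\sum_{j<k}\alpha_j}$. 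Your plan never isolates the weight-node multiplicative charge $(n/k)^x|L_i|^{1-x}$, which is the quantity that actually produces the $x$-dependence in the exponent, so even granting your recurrence the accounting would not close.
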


However due to the fact that an algorithm for this problem can not even afford to have a linear number of nodes run for $\log^* n$ many rounds, proving a matching upper bound proves quite challenging. We instead get an algorithm that almost matches the lower bound.

\begin{restatable}{theorem}{LowerUpperBound}\label{thm:LowerUpperBound}
For any $D,d,k$ such that $d\geq 3, D\geq d+3$, the node-averaged complexity of $\Pi^{3.5}_{\Delta, d, k}$ is $O((\log^* n)^{\alpha_1(x')})$, where $\alpha_1(x') = \frac{1}{1 + (1-x')\sum_{j=0}^{k-2} (2-x')^j}$ and $x' = \frac{\log(\Delta -d +1)}{\log(\Delta -1)}$.
\end{restatable}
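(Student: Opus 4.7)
The plan is to lift the polynomial-regime algorithm behind \cref{thm:UpperUpperBound} into the $\log^*$ regime, replacing each polynomial-time coloring subroutine with an $O(\log^* n)$-time Linial-style color reduction, and pipelining the peeling so that only a small weighted fraction of nodes ever pay the full $\log^* n$ cost. I would proceed by induction on $k$, processing the hierarchy of $\Pi^{3.5}_{\Delta,d,k}$ from the topmost level down.

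For the base case $k=1$, the problem reduces to a coloring with $\Theta(\Delta)$ colors, solvable in $O(\log^* n)$ worst-case (and hence node-averaged) rounds; this is what produces the ``$1+$'' in the denominator of $\alpha_1(x')$. For the inductive step, I split the algorithm into $k$ phases, and in phase $i$ every still-active node runs one Linial color-reduction step using $\Delta-d+1$ effective colors while reserving two pass-tokens as a safety margin for conflict delegation. The fact that the upper bound operates with $\Delta-d+1$ rather than the tight $\Delta-d-1$ of the lower bound is precisely this safety margin, and it is where the assumption $d\geq 3$ enters the proof. After $O((\log^* n)^{\beta_i})$ such steps, for a carefully chosen per-phase budget $\beta_i$, every non-passing node has a feasible committed level-$i$ color and the remaining nodes cascade to phase $i-1$.

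The weight accounting mirrors the polynomial regime. Let $W_i$ denote the total weight of nodes still active after phase $i$; using the hierarchical constraint and the weight function defining $\Pi^{3.5}_{\Delta,d,k}$, I would prove the recursion $W_i\leq W_{i-1}\cdot(2-x')^{-1}$. Balancing the budgets $\beta_i$ so that each phase contributes the same weighted cost, exactly as in the budget-optimization of \cref{thm:UpperUpperBound}, then yields total node-averaged cost $(\log^* n)^{\alpha_1(x')}$ with exponent $\alpha_1(x')=\bigl(1+(1-x')\sum_{j=0}^{k-2}(2-x')^j\bigr)^{-1}$, where the factor $(1-x')$ reflects the fraction of weight that survives into each deeper layer and the geometric factor $(2-x')^j$ reflects the decay of $W_i$.

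The main obstacle, as foreshadowed in the excerpt, is keeping even the first phase cheap on average: a naive implementation would force every node to pay $\Theta(\log^* n)$ rounds, which by itself already exceeds the target bound whenever $\alpha_1(x')<1$. I would resolve this by interleaving the color reduction with the peeling, so that each Linial round emits a batch of committing nodes that halt immediately and are never charged for later rounds; a Cole--Vishkin style amortization then charges each node only for the rounds in which it was still active, and the inductive weight bound above absorbs the remainder. Making this pipelining rigorous while preserving both the reduction rate and the hierarchical constraint of $\Pi^{3.5}_{\Delta,d,k}$ is the technically most delicate part, and it is exactly where the $\pm 2$ gap between $x'$ and $x$ is inherited.
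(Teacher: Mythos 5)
Your proposal does not follow the paper's proof and would not go through as written. The paper's $x'=\tfrac{\log(\Delta-d+1)}{\log(\Delta-1)}$ has nothing to do with a Linial-style color palette of size $\Delta-d+1$; there is no Cole--Vishkin amortization and no induction on $k$ in the argument. The actual source of $x'$ is the $d$-free weight problem (\Cref{def:WeightProblem}) solved on the \emph{weight} nodes: the paper adapts the Fast Decomposition Algorithm of \cite{fullversion} (which yields a relaxed $(\gamma,\ell,O(\log n))$-decomposition with $O(1)$ node-averaged complexity) to make almost all weight nodes output $\dec$ quickly, and then, inside each copy-component $\mathcal{C}(v)$ attached to an active node $v$, performs a greedy reassignment (\Cref{lem:BoundOnCopyTrees}). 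Because the decomposition-based algorithm has already committed at most $2$ neighbors of each node to $\dec$ (\Cref{lem:FewDecline}), the reassignment can only decline $d-2$ further children, leaving a residual tree of fan-out $\Delta-d+1$; this is where $x'$ and the hypothesis $d\ge 3$ both enter. You correctly guessed that the $\pm 2$ gap between $x'$ and $x$ comes from a two-neighbor ``safety margin,'' but you attached it to the wrong mechanism.

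The missing idea is precisely the one the paper flags as the main difficulty: the weight nodes are a constant fraction of the graph, so one cannot afford to spend even $\Theta(1)$ rounds per weight node waiting for a coloring subroutine to finish --- let alone $\Theta(\log^* n)$ --- and no amount of ``interleaving Linial rounds with peeling'' on the active side fixes this, because the slow part is on the weight side. The paper's solution is structural: run the adapted Fast Decomposition Algorithm on the weight subgraph so that the number of undecided weight nodes decays exponentially with the iteration number (\Cref{cor:FewNotDecline}), giving $O(1)$ node-averaged time for the nodes in $W=V^W\setminus\bigcup_i W_i$ (\Cref{lem:ConstantTimeInW}), and then charge each surviving copy-component to the phase $i$ in which its anchoring active node terminates (\Cref{lem:totalTimeBeforeK,lem:TotalTimeInK}). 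Your proposed recursion $W_i\le W_{i-1}(2-x')^{-1}$ is also not what appears in the analysis; the paper bounds $|W_i|$ via Jensen/concavity against $|L_i|$ and the total available weight, giving $|W_i|=O(n(\log^*n)^{(x'-1)\sum_{j<i}\alpha_j})$, and then multiplies by the phase-$i$ runtime. Without the $d$-free weight problem, the decomposition-based algorithm for it, and the copy-tree reassignment bound, your plan has no way to control the cost incurred by the weight nodes, which is the entire content of the theorem.
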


We overcome these complications, by showing that through clever choice of parameters $\Delta$ and $d$, we can get our upper and lower bound to become arbitrarily close, giving us the same kind of guarantee as in the polynomial regime.

\begin{restatable}{theorem}{LowDensity}\label{thm:LowDensity}
For any two real numbers $0 < r_1< r_2 < 1$ and any $\varepsilon > 0$ there exist constants $\Delta, d, k, c$ such that $r_1 \leq c \leq r_2$ and LCL $\Pi^{3.5}_{\Delta,d,k}$ has node-averaged complexity between $\Omega((\log^*n)^c)$ and $O((\log^*n)^{c + \varepsilon})$.
\end{restatable}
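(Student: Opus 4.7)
The plan is to exploit the continuity of the exponent function $\alpha_1$ appearing in Theorems \ref{thm:LowerBoundWeighted3.5} and \ref{thm:LowerUpperBound}, combined with the fact that the lower- and upper-bound exponents $x$ and $x'$ coincide in the limit $\Delta \to \infty$. First I would observe that
\[
\alpha_1(y) \;=\; \frac{1}{1 + (1-y)\sum_{j=0}^{k-2}(2-y)^j}
\]
is continuous on $[0,1]$; since $(1-y)$ and each $(2-y)^j$ are non-negative and decreasing, the denominator is decreasing and $\alpha_1$ is monotonically increasing from $\alpha_1(0) = 1/2^{k-1}$ to $\alpha_1(1) = 1$. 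Picking $k$ large enough that $2^{-(k-1)} < r_1$ makes $[r_1, r_2]$ lie in the interior of the image of $\alpha_1$, so by the intermediate value theorem there is some $y^* \in (0,1)$ with $\alpha_1(y^*) = (r_1+r_2)/2$.

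Next I would show that for any $y^* \in (0,1)$ and any tolerance $\delta > 0$, one can pick integers $\Delta, d$ with $d \geq 3$ and $\Delta \geq d + 3$ so that both $x = \log(\Delta-d-1)/\log(\Delta-1)$ and $x' = \log(\Delta-d+1)/\log(\Delta-1)$ lie within $\delta$ of $y^*$. Setting $d := \Delta - \lceil (\Delta-1)^{y^*}\rceil - 1$ drives $x \to y^*$ as $\Delta \to \infty$, while the gap
\[
x' - x \;=\; \frac{\log\bigl(1 + 2/(\Delta - d - 1)\bigr)}{\log(\Delta - 1)}
\]
tends to zero because $\Delta - d$ grows like $\Delta^{y^*}$. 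For $y^* < 1$ we also have $d \to \infty$, so the side conditions $d \geq 3$ and $\Delta \geq d + 3$ are satisfied for all sufficiently large $\Delta$.

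To conclude, I would fix $\eta = \min(\varepsilon/2,\, (r_2-r_1)/2)$ and invoke uniform continuity of $\alpha_1$ on $[0,1]$ to choose $\delta > 0$ such that $|y - y^*| < \delta$ implies $|\alpha_1(y) - \alpha_1(y^*)| < \eta$. Then pick $\Delta, d$ as above with $x, x' \in (y^* - \delta, y^* + \delta)$ and define $c := \alpha_1(x)$. Monotonicity of $\alpha_1$ gives $c \leq \alpha_1(x') \leq c + 2\eta \leq c + \varepsilon$, while $|c - (r_1+r_2)/2| \leq \eta \leq (r_2-r_1)/2$ forces $c \in [r_1, r_2]$. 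Theorem \ref{thm:LowerBoundWeighted3.5} applied to this $(\Delta, d, k)$ yields a node-averaged lower bound of $\Omega((\log^* n)^c)$, and Theorem \ref{thm:LowerUpperBound} (whose hypothesis $d \geq 3$ is met) yields a matching upper bound of $O((\log^* n)^{\alpha_1(x')}) = O((\log^* n)^{c+\varepsilon})$.

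The main obstacle I expect is the second step: checking that the integrality constraints on $\Delta$ and $d$ do not prevent us from hitting $y^*$ with arbitrary precision. This is ultimately routine, because the ceiling in the choice of $d$ perturbs $x$ only by $O\!\bigl(1/((\Delta-d)\log(\Delta-1))\bigr)$, which vanishes as $\Delta\to\infty$; but some care is needed to combine this perturbation with the bound on $x'-x$ so that both $x$ and $x'$ are simultaneously within $\delta$ of $y^*$.
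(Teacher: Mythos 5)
Your proposal is correct and follows the same high-level plan as the paper's proof: exploit the continuity and strict monotonicity of $\alpha_1$ (the paper's Lemma~\ref{lem:LowerNiceFunction}), locate a target exponent in $(r_1,r_2)$ via the intermediate value theorem, and then choose $\Delta,d$ so that both $x = \log(\Delta-d-1)/\log(\Delta-1)$ and $x' = \log(\Delta-d+1)/\log(\Delta-1)$ land within a $\delta$-ball whose image under $\alpha_1$ has width below $\varepsilon$. The one place you diverge is in realizing the last ingredient. The paper first snaps the target to a rational $a/b\in(x_1,x_2)$ and then invokes Lemma~\ref{lem:GapIsSmall}, which hits $x = a/b$ exactly by setting $\Delta-1 = 2^{cb}$ and $\Delta-d-1 = 2^{ca}$ and lets $c\to\infty$ to drive $x'-x\to 0$. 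You instead aim directly at a possibly irrational $y^*$ with $d := \Delta - \lceil(\Delta-1)^{y^*}\rceil - 1$, accept that $x$ only approximates $y^*$, and absorb the rounding error into the same continuity budget. Both constructions deliver $\Delta-d\to\infty$, which is what makes $x'-x = \log(1 + 2/(\Delta-d-1))/\log(\Delta-1)$ vanish, so they are interchangeable; your variant is marginally more self-contained (no detour through rationals and no separate lemma), while the paper's exact-hit version is convenient because it reuses the rational-valued machinery already built for the polynomial regime. As a small aside, your observation that $\alpha_1(1) = 1$ (so the image of $[0,1]$ is $[2^{-(k-1)},1]$ rather than $[2^{-(k-1)},1/k]$) is the right one for this $\log^*$-regime $\alpha_1$, and it sidesteps the unnecessary and slightly confusing constraint $r_1 < 1/k$ that appears in the paper's $k$-selection.
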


\paragraph{New gaps in the \texorpdfstring{$O(\log^* n)$}{log*(n)} regime.}
We complete the results about the $O(\log^* n)$ regime by proving a gap in the possible complexities.
\begin{restatable}{theorem}{lowgap}\label{thm:lowgap}
There are no LCLs with a deterministic complexity that lies in the range $\omega(1)$--$(\log^* n)^{o(1)}$. Moreover, given an LCL, it is decidable whether it can be solved in $O(1)$ deterministic node-averaged rounds.
\end{restatable}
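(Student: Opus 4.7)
My plan is to prove both parts by combining the worst-case classification of LCLs on bounded-degree trees with a Ramsey-style averaged lower bound. The gap statement is equivalent to the following contrapositive: if an LCL $\Pi$ has deterministic node-averaged complexity $T_{\mathrm{avg}}(n) = \omega(1)$, then $T_{\mathrm{avg}}(n) = \Omega((\log^* n)^c)$ for some constant $c > 0$. Since every LCL on bounded-degree trees has worst-case complexity either $O(1)$ or $\Omega(\log^* n)$, and $O(1)$ worst-case trivially implies $O(1)$ node-averaged, the task reduces to analyzing LCLs whose worst-case complexity is $\Omega(\log^* n)$ and showing that, under the additional assumption $T_{\mathrm{avg}}(n) = \omega(1)$, the node-averaged complexity must be at least $\Omega((\log^* n)^c)$ for some fixed $c > 0$.

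The main step is a Linial-style Ramsey argument adapted to the node-averaged setting. Classically, any correct algorithm for an $\Omega(\log^* n)$-complexity LCL must spend $\Omega(\log^* n)$ rounds on a constant fraction of internal nodes of a sufficiently long labeled path. For trees, the structural classification of LCLs with $\Theta(\log^* n)$ worst-case complexity (via round elimination) implies that worst-case hardness can be realized on trees containing long path-like substructures on which this Linial-type bound applies. One then constructs adversarial trees in which enough nodes lie deep inside such substructures that the average cost is at least $\Omega((\log^* n)^c)$. The problem $\pithree$ serves as a guide here: \Cref{thm:LowerBoundWeighted3.5} quantifies, for this family, how the density of path-like substructures controls the node-averaged complexity, and a generic embedding argument should extend this same trade-off to every $\Omega(\log^* n)$-complexity LCL.

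The main obstacle is making the embedding step uniform across all LCLs with worst-case complexity $\Omega(\log^* n)$: the round-elimination hardness configuration depends on the specific LCL, and one needs a construction that simultaneously (i) produces a valid hard input instance for the LCL and (ii) makes the instance path-dense enough to force the $(\log^* n)^c$ averaged cost. Tracking the exponent $c > 0$ as a function of the LCL's round-elimination depth---while keeping it bounded away from $0$ for any fixed LCL---is where the bulk of the technical care lies, especially since the constructions of the preceding sections show that $c$ can be made arbitrarily small by varying the problem across the family.

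For the decidability statement, the gap reduces the question to a structural property of the LCL checkable by round elimination. By the gap just proved, an LCL has $O(1)$ node-averaged complexity if and only if it does not admit the kind of embedded hardness structure that forces $\Omega((\log^* n)^c)$ averaged complexity for some constant $c > 0$. Since round-elimination analyses for LCLs on bounded-degree trees are decidable via standard procedures (producing either a $\Theta(\log^* n)$-round worst-case algorithm or a hardness certificate of bounded size), the existence of such a hardness structure is decidable, and hence so is $O(1)$ node-averaged complexity.
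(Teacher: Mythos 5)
Your proposal misses the key structural insight and, as written, would not establish the theorem. The central problem is the first part of your argument: you reduce to the case where $\Pi$ has worst-case complexity $\Omega(\log^* n)$ and then appeal to a Linial-style Ramsey argument to force node-averaged complexity $\Omega((\log^* n)^c)$. But worst-case complexity $\Omega(\log^* n)$ is \emph{not} sufficient to force any non-constant node-averaged lower bound---as the paper itself establishes, there exist LCLs with worst-case $\Theta(\log^* n)$ and deterministic node-averaged complexity $O(1)$ (indeed this is exactly why the gap statement is meaningful and why the decidability question is nontrivial). You do add the hypothesis $T_{\mathrm{avg}}(n) = \omega(1)$, but your proof sketch never actually \emph{uses} it: the Linial/Ramsey argument you outline is driven by worst-case path hardness alone, so it would ``prove'' the lower bound for every $\Omega(\log^* n)$-worst-case LCL, which is false. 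To make the argument sound you would have to identify exactly which structural feature of the LCL survives the $\omega(1)$ averaged assumption and yields the $(\log^* n)^c$ lower bound---and that identification is the whole content of the theorem, not a technical afterthought.

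The paper's actual proof goes through a completely different route tied to the rake-and-compress machinery of Chang--Pettie and \cite{fullversion}: after reducing $\Pi$ to the black-white formalism and decomposing the tree, the \emph{only} source of the $O(\log^* n)$ overhead in the known $O(\log^* n)$ node-averaged algorithm is splitting long compress paths into constant-length pieces via a precomputed coloring. The paper isolates this into an auxiliary LCL $\Pi'$ on paths (the ``compress problem'' of \Cref{def:compressproblem}), calls a good function $f_{\Pi,\infty}$ \emph{constant-good} if $\Pi'$ is $O(1)$-solvable (\Cref{def:constantgood}), shows that a constant-good function yields an $O(1)$ node-averaged algorithm, and conversely shows (via \Cref{lem:summarydet} and a simulation/pumping argument over virtual trees) that a $(\log^* n)^{o(1)}$ node-averaged algorithm produces a constant-good function. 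Decidability then falls out for free because deciding whether a fixed candidate function is constant-good reduces to deciding whether the associated path LCL $\Pi'$ is $O(1)$-solvable, which is decidable by \Cref{lem:path-decidability}, and there are only finitely many candidate functions. Your decidability sketch, by contrast, refers to ``the kind of embedded hardness structure that forces $\Omega((\log^* n)^c)$ averaged complexity'' without ever defining that structure in a way that could be tested; as stated it is circular, and round elimination (a worst-case lower-bound tool) does not supply the averaged criterion you need.

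In short, the missing idea is that the right invariant is not worst-case hardness on paths, but the $O(1)$-vs-$\Theta(\log^* n)$ dichotomy of the \emph{derived} path problem $\Pi'$ inside the solver framework, and the nontrivial direction is showing that a fast node-averaged algorithm certifies the $O(1)$ side of that dichotomy. Without this, both the gap and the decidability claim remain unproved.
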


\paragraph{The new node-averaged complexity landscape of LCLs on trees}
By including all of our new results, we get a complete picture about the node-averaged complexity landscape of LCLs on bounded-degree trees. We provide a complete overview in \cref{fig:LandscapeAfter}.

\begin{figure}
	\centering
	\includegraphics[width=.6\textwidth]{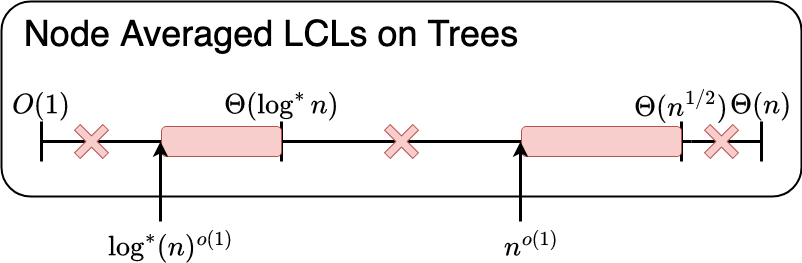}
	\caption{Everything that is known about the node-averaged complexity landscape after including the results in this paper. The gap between constant and $\log^*(n)^{o(1)}$ is due to \cref{thm:lowgap}. The regions of infinite density, represented by red bars, are a result of \cref{thm:LowDensity,thm:UpperDensity}. The gap between $\omega(\sqrt{n})$ and $o(n)$ is due to \cref{cor:linearGap}.}
	\label{fig:LandscapeAfter}
\end{figure}

\subsection{High-level Ideas and Techniques}
In the following, we outline the main ideas that we use to prove our results.
\paragraph{LCLs with weighted node-averaged complexity}
Intuitively the idea is the following: Start with a problem like $2$-hierarchical $2\frac{1}{2}$-coloring, which has worst case complexity $\Theta(n^{1/2})$ \cite{CP19timeHierarchy} and node-averaged complexity $\Theta(n^{1/(2^k-1)})$ \cite{Balliu0KOS23}. Because the problem has worst case complexity $\Theta(n^{1/2})$, there must exist an instance of the problem with a node $v$ which runs for $T_v \in \Omega(n^{1/k})$ rounds. Now assume we could give some nodes more weight during the node-averaged analysis. Then we could give a linear amount of weight $w(v) = \frac{n}{2}$ to this node $v$ and as a result the node-averaged complexity would be at least 
\[
\Bar{T} \geq \frac{1}{n} \cdot T_v \cdot w(v) = \frac{1}{n} \cdot \Omega(n^{1/2}) \cdot \frac{n}{2} \in \Omega(n^{1/2}).
\]
So the node-averaged complexity would match the worst-case complexity. We achieve such a weighted behavior simply by designing an LCL in a clever way.

Assume we have marked some neighbor $u$ of $v$ as a weight node, by giving it an input label $\wei$. Then such a weight node $u$ must copy the output of its adjacent non-weight node $v$. Essentially $u$ has to wait for $v$ to terminate, so it can then output the same thing. However to put a lot of weight on $v$, we also have to attach a lot of nodes to $v$, but since the maximum degree is constant this is not so simple. 
We solve this by having the output label of $v$ propagate through weight nodes. For example we could require weight nodes that are adjacent to other weight nodes to also copy the same output label. We leverage this by attaching a path of $\frac{n}{2}$ nodes to $v$. Then all of the nodes in that path have to copy the output of $v$ and therefore all of the $\frac{n}{2}$ nodes have to wait for $v$ to terminate. However, we have made a grave error here, because the propagation along this long path takes $\Omega(n)$ time. As a result we have increased the worst case complexity of this new LCL to $\Theta(n)$.

Luckily we have a complete understanding of what determines the worst case complexity of LCLs on trees. The problems with worst case complexity $\Theta(n)$ are exactly the problems for which, in long paths, it takes a linear amount of rounds to terminate, like, e.g., $2$-coloring. So we include a way to let nodes terminate earlier, by allowing some weight nodes to output $\dec$ and immediately terminate without copying any other output label.

So the resulting LCL could roughly be described like this: Weight nodes that are adjacent to non weight nodes have to copy the output of one such non weight node. For any weight node that did copy the output, at most $d$ neighboring weight nodes may decline to also copy the output, but the remaining nodes do have to copy the output.

As a consequence of allowing some nodes to $\dec$, not all weight nodes have to actually copy the output and we run into efficiency problems. We will see that there is now some efficiency factor $x$ that determines how many of the weight nodes actually have to wait for our node $v$ to terminate. By scaling this efficiency factor between $(0,1)$, we are able to obtain all intermediate complexities between the worst case complexity of $\Theta(n^{1/2})$ and the node-averaged complexity $\Theta(n^{1/2^k-1})$.

\paragraph{New complexities in the \texorpdfstring{$O(\log^* n)$}{log*(n)} regime.}
We want to get the same kind of density result that we get by introducing weight nodes to the family of $k$-hierarchical $2\frac{1}{2}$-coloring problems, but with complexities in $O(\log^* n)$. Similarly to the polynomial regime, where we can scale between worst case complexity $\Theta(n^{1/k})$ and node-averaged complexity $\Theta(n^{1/(2^k - 1)})$, we introduce a new infinite family of problems we call $k$-hierarchical $3\frac{1}{2}$-coloring. 
These problems have worst case complexity $\Theta(\log^*n)$ and node-averaged complexity $\Theta((\log^*n)^{1/2^{k-1}})$. This allows us to again apply the same tricks with weighted nodes, to scale between the worst case and the node-averaged complexities. However, since an algorithm is only allowed $o(\log^*n)$ rounds \emph{on average}, it becomes highly non-trivial to actually achieve a fast node-averaged complexity. 
Using the algorithm of \cite{Balliu0KOS23} we are able to still achieve a node-averaged complexity that is close to the lower bound. By applying some tricks in the choice of parameters, we can get this upper bound arbitrarily close to the lowerbound (but never actually have them match). Still this will be enough to also prove the same kind of infinite density result also for the $O(\log^* n)$ regime.

\paragraph{New gaps in the \texorpdfstring{$O(\log^* n)$}{log*(n)} regime.}
In order to show that there are no LCLs in the range $\omega(1)$--$(\log^* n)^{o(1)}$, we operate as follows.
As already mentioned, in \cite{Balliu0KOS23} it is shown that there are no LCLs with node-averaged complexity that lies in $\omega(\log^* n)$--$n^{o(1)}$. This result is shown by providing a unified way to solve all problems that have (worst-case or node-averaged) complexity $n^{o(1)}$, and the provided normal-form algorithm has node-averaged complexity $O(\log^* n)$. On a high level, this normal-form algorithm computes a decomposition of the tree that satisfies some desirable properties, and in parallel exploits the decomposition that is being computed in order to let many nodes terminate early. The resulting $O(\log^* n)$ complexity is only needed for splitting some long paths, obtained during the decomposition, into shorter paths. Moreover, as already noted in \cite{Balliu0KOS23}, if such splitting can be avoided, then the node-averaged runtime could be directly improved to $O(1)$. In our work, we provide two main ingredients:
\begin{itemize}
    \item Given an LCL, it is possible to automatically determine whether it is needed to split long paths into shorter paths when computing the decomposition.
    \item If an LCL can be solved in $(\log^* n)^{o(1)}$ deterministic node-averaged rounds, then it is not needed to split paths.
\end{itemize}
The importance of the above results is twofold. On the one hand, we get that, if a problem can be solved in $(\log^* n)^{o(1)}$ deterministic node-averaged rounds, then it has $O(1)$ deterministic node-averaged complexity. On the other hand, we also obtain that we can automatically decide whether a given problem has $O(1)$ node-averaged complexity. We highlight that, in the worst-case setting, determining whether a problem can be solved in $O(1)$ rounds is a long-standing open question.

\section{Preliminaries and Definitions}
\paragraph{Locally Checkable Labeling Problems.}
As already mentioned, on a high level, the class of Locally Checkable Labeling (LCL) problems contains all those problems defined on bounded-degree graphs such that: (i) nodes may have inputs that come from a finite set, (ii) nodes must output labels from a finite set, (iii) the produced solution must be correct in any $r$-radius neighborhood, where $r$ is some constant. More precisely, an LCL problem $\Pi$ is defined as a tuple $(\Sigma_{\mathrm{in}},\Sigma_{\mathrm{out}},C, r)$, where: (i) $\Sigma_{\mathrm{in}}$ is a finite set of possible input labels for $\Pi$; (ii) $\Sigma_{\mathrm{out}}$ is a finite set of possible output labels for $\Pi$; (iii) $r\ge 1$ is an integer called \emph{the checkability radius} of $\Pi$; (iv) $C$ (that stands for ``constraint’’) is a finite set containing the labeled graphs that represent all radius-$r$ neighborhoods that are valid for $\Pi$. More in detail $C$ is a finite set of pairs $(H, v)$, where $H=(V_H, E_H)$ is a graph and $v\in V_H$, satisfying the following: (1) $v$ has eccentricity at most $r$ in $H$; (2) To each pair $(v,e)\in V_H\times E_H$ is assigned a label $\ell_{\mathrm{in}}\in \Sigma_{\mathrm{in}}$ and a label $\ell_{\mathrm{out}}\in \Sigma_{\mathrm{out}}$.
Then, solving an LCL $\Pi=(\Sigma_{\mathrm{in}},\Sigma_{\mathrm{out}},C, r)$ on a graph $G=(V,E)$ (where all node-edge pairs are labeled with a label in $\Sigma_{\mathrm{in}}$) requires to label each pair $(v,e)\in V\times E$ with a label in $\Sigma_{\mathrm{out}}$ satisfying that, for each node $v\in V$, the labeled graph induced by the nodes in the radius-$r$ neighborhood of $v$ is isomorphic to some (labeled) graph in $C$.

\paragraph{Node-averaged complexity.}
In this paper, we use the notion of node-averaged complexity as in \cite{BalliuGKO22_average, Balliu0KOS23}. Let $\mathcal{A}$ be an algorithm solving an LCL $\Pi$. Let $\mathcal{G}$ be a family of graphs, and let $G=(V,E)\in\mathcal{G}$ be a graph where we run $\mathcal{A}$. Let $v\in V$ and let $T_v^G(\mathcal{A})$ be the number of rounds after which $v$ terminates when running $\mathcal{A}$ on $G$. Then, the  node-averaged complexity of $\mathcal{A}$ on the family of graphs $\mathcal{G}$ is defined as follows.
\[
    \nodeavg(\calA)  :=  \max_{G\in \calG} \,\frac{1}{|V|}\cdot\E\left[\sum_{v\in V(G)}T_v^G(\calA)\right]\ =\
    \max_{G\in \calG} \frac{1}{|V|}\cdot\sum_{v\in V(G)}\E\big[T_v^G(\calA)\big]\\
\]

\section{Road Map}
We now provide a summary of the structure of the paper.
\paragraph{The $k$-hierarchical $3\frac{1}{2}$-coloring problems.}
We start, in \Cref{sec:khierarchical312}, by formally defining the $k$-hierarchical $3\frac{1}{2}$-coloring problems. These problems are defined by slightly modifying the definition of the $k$-hierarchical $2\frac{1}{2}$-coloring problems of \cite{CP19timeHierarchy}. In \Cref{sec:khierarchical312}, we show that the $k$-hierarchical $3\frac{1}{2}$-coloring problems have complexity $\Theta((\log^*n)^{1/2^{k-1}})$.

\paragraph{Weighted problems.}
While the results of \Cref{sec:khierarchical312} provide LCLs with some node-averaged complexities in the range $\omega(1)$--$o(\log^* n)$, in our work we also show that the complexity landscape is in fact infinitely dense in that region, and in the polynomial region. For this purpose, we start, in \Cref{sec:weighted-problems}, by introducing weighted versions of the problems of $2\frac{1}{2}$- and $3\frac{1}{2}$-coloring.

\paragraph{Weighted lower bounds.}
We provide lower bounds for the weighted variants of $2\frac{1}{2}$- and $3\frac{1}{2}$-coloring in \Cref{sec:weightedlower}.

\paragraph{Weighted upper bounds.}
In \Cref{sec:AlgWeighted212Coloring,sec:UsingOldAlgo} we provide upper bounds for the weighted variants of $2\frac{1}{2}$-coloring, and $3\frac{1}{2}$-coloring, respectively.

\paragraph{Density results.}
In \Cref{sec:density}, we combine the results of \Cref{sec:weightedlower,sec:AlgWeighted212Coloring,sec:UsingOldAlgo} to show that the node-averaged complexity landscape of LCLs is dense in the regions $(\log^*n)^{\Omega(1)}$--$o(\log^* n)$ and $n^{\Omega(1)}$--$o(\sqrt{n})$. Since to our knowledge this has not been stated anywhere else so far, we also give a proof for the $\omega(\sqrt{n})$--$o(n)$ gap as a direct consequence of a nice lemma by \cite{Feuilloley17}.

\paragraph{More efficient weight.}
What is left out from the results of the previous sections is showing that there are LCL problems in the polynomial regime that have the same worst-case and node-averaged complexity. We do this in \Cref{sec:efficientWeight}, by defining the weighted version of $2\frac{1}{2}$-coloring differently.

\paragraph{The $\omega(1)$--$(\log^* n)^{o(1)}$ gap}
We conclude, in \Cref{sec:gap}, by proving that there are no LCLs with a node-averaged complexity that lies in the range $\omega(1)$--$(\log^* n)^{o(1)}$.

\section{\texorpdfstring{$k$}{k}-hierarchical \texorpdfstring{$3\frac{1}{2}$}{3.5}-coloring}\label{sec:khierarchical312}
We first do a warm up by answering one of the open questions in \cite{Balliu0KOS23}, namely the question if there exist LCLs with node-averaged complexity in the range $\omega(1)$--$o(\log^*n)$. To give a positive answer to this question, we define a new infinite family of LCLs called $k$-hierarchical $3\frac{1}{2}$-coloring. These problems are really just a small twist on the family of the $k$-hierarchical $2\frac{1}{2}$-coloring problems introduced by Chang and Pettie \cite{CP19timeHierarchy}, that we now report.

\begin{definition}[$k$-hierarchical $2\frac{1}{2}$-coloring]\label{def:2.5Col}
There are no input labels. Instead, each node has a level in $\{1,\ldots,k+1\}$, that can be computed in constant time, and the constraints of the nodes depend on the level that they have. The level of a node is computed as follows.
\begin{enumerate}
    \item Let $i \gets 1$.
    \item Let $V_i$ be the set of nodes of degree at most $2$ in the remaining tree. Nodes in $V_i$ are of level $i$. Nodes in $V_i$ are removed from the tree.
    \item Let $i \gets i+1$. If $i \le k$, continue from step 2.
    \item Remaining nodes are of level $k+1$.
\end{enumerate}
Each node must output a single label in $\Sigma_{\mathrm{out}} := \{W,B,E,D\}$, where $W$ stands for White, $B$ for Black, $E$ for Exempt, and $D$ for Decline, and based on their level, they must satisfy the following local constraints.
\begin{itemize}
    \item No node of level $1$ can be labeled $E$.
    \item All nodes of level $k+1$ must be labeled $E$.
    \item Any node of level $2\le i \le k$ is labeled $E$ iff it is adjacent to a lower level node labeled $W$, $B$, or $E$.
    \item Any node of level $1 \le i \leq k$ that is labeled $W$ (resp.\ $B$) has no neighbors of level $i$ labeled $W$ (resp.\ $B$) or $D$. In other words, $W$ and $B$ are colors, and nodes of the same color cannot be neighbors in the same level.
     \item Nodes of level $k$ cannot be labeled $D$. As a result they must be properly 2-colored with colors $W,B$. They may output $E$ only if their lower level neighbours did not output $D$.
\end{itemize}
\end{definition}
The problem of $k$-hierarchical $2\frac{1}{2}$-coloring has worst case complexity $\Theta(n^{1/k})$ \cite{CP19timeHierarchy}. However, in \cite{Balliu0KOS23}, it is shown that the node-averaged complexity of this problem is $\Theta(n^{1/(2^k-1)})$. 

In order to better understand this family of problems, let us consider the case $k=2$, which gives an LCL with worst-case complexity $\Theta(\sqrt{n})$. In a worst-case instance for this problem, there are only nodes of level $1$ and $2$: nodes of level $2$ form a path $P$ of length $\Theta(\sqrt{n})$, and to each node $v$ of $P$ there is a path $Q_v$ of length $\Theta(\sqrt{n})$ of nodes of level $1$ connected to it. Let us call these paths ``$Q$-paths''. The constraints of the problem require that each $Q$-path is either all $2$-colored with labels $W$ and $B$, or all labeled $D$. Then, the subpaths of $P$ induced by nodes that are not connected to a $2$-colored $Q$-path must be properly $2$-colored, while the other nodes of $P$ can output $E$. This implies that either some $Q$-path is $2$-colored, or $P$ is $2$-colored, and it is possible to prove that this implies a worst-case complexity of $\Theta(\sqrt{n})$. In the case of node-averaged complexity, a worst-case instance looks different: $Q$-paths have length $\Theta(n^{1/3})$ and $P$ has length $\Theta(n^{2/3})$, and it is possible to prove that the node-averaged complexity is $\Theta(n^{1/3})$. Intuitively, this happens because the nodes of $P$ are less, so they can spend more time while keeping the average runtime low.

In the following, we will slightly modify the definition of these problems to obtain the family of $k$-hierarchical $3\frac{1}{2}$-coloring problems.
While our new problems will not be interesting from a worst case perspective, we will prove that these problems result in an infinite amount of intermediate complexities in the range $\omega(1)$--$o(\log^*n)$. The rules are almost the same as in $k$-hierarchical $2\frac{1}{2}$-coloring, except that level $k$ nodes must either output $E$, or output a valid $3$-coloring using the colors $\{R,G,Y\}$. For completeness, we restate all of the rules.  

\begin{definition}[$k$-hierarchical $3\frac{1}{2}$-coloring]\label{def:3.5Col}
Nodes are assigned a level in $\{1,\ldots,k+1\}$, in the same way as for $k$-hierarchical $2\frac{1}{2}$-coloring. However the set of possible labels is now $\Sigma_{\mathrm{out}} = \{R,G,Y,W,B,E,D\}$, which now also contains the colors red, green and yellow. 
Nodes must satisfy the following rules based on their level:
\begin{itemize}
    \item No node of level $1$ can be labeled $E$.
    \item All nodes of level $k+1$ must be labeled $E$.
    \item Any node of level $2\le i \le k$ is labeled $E$ iff it is adjacent to a lower level node labeled $W$, $B$, or $E$.
    \item Any node of level $1 \le i < k$ that is labeled $W$ (resp.\ $B$) has no neighbors of level $i$ labeled $W$ (resp.\ $B$) or $D$. In other words, $W$ and $B$ are colors, and nodes of the same color cannot be neighbors in the same level. Also, nodes in these levels must not output any label in $\{R,G,Y\}$.
    \item Nodes of level $k$ cannot be labeled $D,W,B$. Also, adjacent nodes must not both have the same label among $R,G,Y$, that is, nodes of level $k$ not labeled $E$ must be properly $3$-colored with colors in $\{R,G,Y\}$. Nodes of level $k$ may output $E$ only if their lower level neighbours did not output $D$.
\end{itemize}
\end{definition}
This problem can be expressed as a standard LCL by setting the checkability radius $r$ to be~$O(k)$ since, in $O(k)$ rounds, a node can determine its level, and hence which constraints apply. Furthermore, by simply having all nodes of level $1,\ldots, k-1$ output $D$, and having the nodes in level $k$ compute a valid $3$-coloring, we can solve the problem in worst case time $O(\log^*n)$, due to the fact that $3$-coloring a path can be done in $O(\log^*n)$ worst-case rounds by, e.g., using the algorithm by Linial~\cite{Linial92}.
In the rest of the section, we will prove that the node-averaged complexity of $k$-hierarchical $3\frac{1}{2}$-coloring lies strictly in the range $\omega(1)$--$o(\log^*n)$, which in particular implies $\Omega(\log^* n)$ worst-case complexity, due to the fact that, on trees, there are no LCLs in the range $\omega(1)$--$o(\log^*n)$ \cite{brandt21trees}. Hence, we obtain the following corollary.
\begin{corollary}
$k$-hierarchical $3\frac{1}{2}$-coloring has worst case complexity $\Theta(\log^*n)$.
\end{corollary}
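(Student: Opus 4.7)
The corollary packages two bounds. The plan for the $O(\log^* n)$ upper bound is to formalize the sketch already given in the text: every node first determines its level in $O(k) = O(1)$ rounds, then nodes of level $1 \le i < k$ output $D$, nodes of level $k+1$ output $E$, and nodes of level $k$ compute a proper $3$-coloring in $\{R,G,Y\}$ among themselves via Linial's algorithm. The key structural observation is that the subgraph induced by level-$k$ nodes has maximum degree at most $2$: after peeling levels $1,\ldots,k-1$, every surviving level-$k$ node has degree at most $2$ in the residual graph, and restricting further to level-$k$ neighbors can only decrease this degree. Since the tree is acyclic, this induced subgraph is a disjoint union of paths, which Linial's algorithm $3$-colors in $O(\log^* n)$ rounds. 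It remains to check routinely that the resulting labeling satisfies every constraint of \cref{def:3.5Col}: the ``iff'' rule for levels $2 \le i \le k$ forces level-$k$ nodes \emph{not} to output $E$, since their lower-level neighbors output $D$ rather than $W$, $B$, or $E$; and the $3$-coloring constraint among level-$k$ nodes is unaffected by any $E$-labeled level-$(k+1)$ neighbors.

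For the $\Omega(\log^* n)$ worst-case lower bound, the plan is to invoke the node-averaged results that the remainder of \cref{sec:khierarchical312} will establish, namely that the node-averaged complexity of $k$-hierarchical $3\frac{1}{2}$-coloring is $\omega(1)$. Since the worst-case termination time of any algorithm trivially dominates its node-averaged termination time, an $\omega(1)$ node-averaged lower bound yields an $\omega(1)$ worst-case lower bound. Applying the classical gap theorem of \cite{brandt21trees}---which states that no LCL on bounded-degree trees has worst-case complexity strictly between $\omega(1)$ and $o(\log^* n)$---then promotes this to $\Omega(\log^* n)$, completing the corollary.

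The main obstacle is the $\omega(1)$ node-averaged lower bound, which is the substantive content of the upcoming part of the section and must carefully exploit the hierarchical structure of $k$-hierarchical $3\frac{1}{2}$-coloring. In contrast, the $O(\log^* n)$ upper bound is a straightforward composition of the level computation with Linial's algorithm, and the promotion via the gap theorem is a one-line invocation of a known result. An alternative, more self-contained route to the worst-case lower bound would be to directly embed a long path of level-$k$ nodes in a tree instance (for example, by attaching a leaf to each vertex of a long path for $k=2$, and iterating analogous constructions for larger $k$) and then invoke the classical $\Omega(\log^* n)$ lower bound for $3$-coloring a path; however, the route via node-averaged complexity is the one the paper takes, as it simultaneously establishes that the node-averaged complexity lies strictly below $\log^* n$, which is the more interesting half of the story.
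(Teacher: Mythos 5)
Your proposal is correct and follows the paper's argument essentially verbatim: the $O(\log^* n)$ upper bound comes from letting levels $1,\ldots,k-1$ output $D$, level $k+1$ output $E$, and level $k$ run Linial on the induced disjoint union of paths; the $\Omega(\log^* n)$ lower bound comes from the $\omega(1)$ node-averaged lower bound established later in the section (since worst-case dominates node-averaged) combined with the $\omega(1)$--$o(\log^* n)$ gap on trees from \cite{brandt21trees}. The additional structural and constraint-checking details you spell out (max degree 2 among level-$k$ nodes, the ``iff'' rule forbidding $E$ at level $k$) are implicit in the paper's short exposition but entirely consistent with it.
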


The rest of the section is dedicated to proving tight upper and lower bounds for this infinite family of problems, and thereby establishing an infinite amount of complexity classes in the node-averaged landscape. We will prove the following theorem.
\begin{theorem}\label{thm:3.5Col}
    The deterministic node-averaged complexity of computing a $k$-hierarchical $3\frac{1}{2}$-coloring is $\Theta((\log^*n)^{1/2^{k-1}})$.
\end{theorem}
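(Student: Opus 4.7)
The plan is to prove matching upper and lower bounds of $(\log^*n)^{1/2^{k-1}}$ by an induction on the hierarchy depth $k$. I would rely on two primitive facts: (i) Linial's lower bound, which gives $\Omega(\log^*\ell)$ node-averaged complexity for $3$-coloring an $\ell$-node path, and (ii) the elementary $\Theta(\ell)$ node-averaged complexity for $2$-coloring an $\ell$-node path (each node has to determine the parity of its distance to an endpoint). The hard-instance family is a ``caterpillar of caterpillars'': level $k$ is a path of length $L_k$, each of whose nodes carries a level-$(k{-}1)$ path of length $L_{k-1}$, etc., down to level $1$, so that $n\approx\prod_i L_i$.

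For the lower bound I would analyze an arbitrary algorithm $\mathcal A$ running on this instance and let $t_i$ be the fraction of level-$i$ nodes that output $E$. The iff-clause of the $E$-rule cascades downward: making a level-$i$ node output $E$ forces some adjacent level-$(i{-}1)$ node to output $W$, $B$, or $E$, which in turn (by the same cascade) forces some level-$1$ path to be properly $2$-colored, incurring $\Omega(L_1)$ per node on it. Level-$k$ nodes not labelled $E$ must be properly $3$-colored with $\{R,G,Y\}$ and so each pay $\Omega(\log^*(\text{segment length}))$ by Linial. Writing the total cost as the sum of these contributions, dividing by $n$, and then taking the maximum over $L_1,\ldots,L_k$ of the minimum over the algorithm's $t_i$'s yields the claimed exponent: at each of the $k-1$ inductive steps, balancing the two branches of the trade-off roughly equates a $t\cdot L$ cost against a $\log^*\!/L$ cost, halving the exponent of $\log^*n$. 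The base case $k=1$ is plain $\log^*n$, the case $k=2$ evaluates to $(\log^*n)^{1/2}$, and the pattern $(\log^*n)^{1/2^{k-1}}$ follows by induction.

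For the matching upper bound I would design an algorithm parametrised by the same $L_i$'s: each node first computes its level in $O(k)$ rounds and a polynomial estimate of $n$ in $O(1)$ rounds, then uses a threshold $s_i$ per level, $2$-colouring (resp.\ $3$-colouring via Linial) paths shorter than $s_i$ and outputting $D$ on longer ones. Level-$i$ nodes for $2\leq i\leq k$ output $E$ whenever a lower-level neighbour is labelled $W$, $B$, or $E$, which guarantees validity. Choosing the $s_i$'s to exactly match the adversarial balance above makes the total expected cost $O(n\cdot(\log^*n)^{1/2^{k-1}})$. The chief difficulty I expect is on the lower bound side: one has to rule out clever algorithms that distribute work non-uniformly across the caterpillar, which I would handle by a potential/amortisation argument that charges every forced $E$-label all the way down to the $2$-coloured level-$1$ path responsible for it, and each $3$-coloured level-$k$ node to the symmetry-breaking cost of its segment. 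A secondary technical annoyance is that $\log^*$ is essentially constant, so the balance has to be carried out level-by-level via explicit choices of the $L_i$'s rather than by solving a smooth fixed-point recursion.
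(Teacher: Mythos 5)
Your hard-instance family (a caterpillar of caterpillars with level sizes $L_1,\ldots,L_k$) is exactly the paper's $k$-hierarchical lower bound graph, and your upper bound — computing levels, then per level either $2$-coloring paths below a threshold $s_i$ or declining on longer ones, with level~$k$ handled by Linial — is precisely the paper's generic algorithm (its Lemma~\ref{lem:remainAfterLvlI} + Lemma~\ref{lem:UB3.5Col}, with $\gamma_i = t^{2^{i-1}}$). One small slip there: there is no ``polynomial estimate of $n$ in $O(1)$ rounds''; in \LOCAL the value of $n$ is simply provided as input.

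The genuine gap is on the lower-bound side. You parameterise the algorithm by $t_i$, the fraction of level-$i$ nodes that output $E$, and propose a charging scheme that follows each forced $E$ down to the $2$-coloured path that licensed it. This does not by itself yield a deterministic lower bound, for two reasons. First, the cost $\Omega(L_i^2)$ of $2$-colouring a level-$i$ path is only forced on instances where the algorithm cannot break symmetry using IDs; on a single fixed ID assignment a deterministic algorithm may $2$-colour the paths it chooses to $2$-colour essentially for free by exploiting the ID pattern. The paper's Lemma~\ref{lem:slow-or-d} is exactly the machinery that closes this hole: it partitions the ID space into $n^{c-1}$ blocks, builds one instance per block, and shows by induction over levels that \emph{either} some instance has half its level-$i$ nodes running for $\Omega(\ell_i)$ rounds, \emph{or} a constant fraction of instances exist in which every node below level~$k$ outputs $D$ — crucially by an indistinguishability argument showing that a fast deterministic node on a long level-$i$ path cannot safely output $W$ or $B$ and is therefore \emph{forced} to output $D$. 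Your sketch never establishes this forced-$D$ step, and framing the trade-off in terms of ``fraction outputting $E$'' obscures it, since the relevant dichotomy for a level-$i$ node is between $\{W,B\}$ (pay $\Omega(\ell_i)$) and $D$ (push the cost up a level), with $E$ only available once lower levels have already paid. Second, once all lower levels output $D$, you still need a reduction from $3$-colouring a bare path to $3$-colouring the level-$k$ path \emph{inside} the caterpillar instance (the paper's Lemma~\ref{lem:3ColLastPath}, which is again a multi-instance simulation), not just a direct appeal to Linial. Without making those two steps precise, the ``potential/amortisation'' paragraph is the part of your plan I would not accept as is.
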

This result, however, tells us nothing about the spaces between these complexities. In later sections, we will build upon these problems to show that, not only are there an infinite amount of intermediate complexities in the range $\omega(1)$--$o(\log^*n)$, but also that the complexity landscape is in fact infinitely dense in that region.

\subsection{A Generic Algorithm for \texorpdfstring{$2\frac{1}{2}$- or $3\frac{1}{2}$}{2.5- or 3.5}-coloring}\label{sec:upperbound}
We first give a generic algorithm that consists of $k$ phases and depends on parameters $\gamma_1, \ldots, \gamma_{k-1}$. Phase $i<k-1$ proceeds as follows.
\begin{itemize}
    \item \textbf{Fixing paths of level $i$:} Consider the subgraph of nodes that did not yet output any label. Then, nodes of level $i$ check if they are in a path $P$, induced by level $i$ nodes, of length at least $\gamma_i$. Every node can have this information after at most $2\gamma_i$ rounds. If $P$ has length at least $\gamma_i$, then all nodes in $P$ output $D$. Otherwise $P$ must have length strictly less than $\gamma_i$. As a result, all nodes in $P$ have seen the entire path and can output a consistent $2$-coloring using the labels $W,B$.
    \item \textbf{Higher level nodes choose $E$:} Since some paths got consistently $2$-colored, some higher-level nodes are allowed to output $E$. Consider some path $P$ of nodes of level $i$ that just decided their output labels. According to the rules of $2\frac{1}{2}$- and $3\frac{1}{2}$-coloring the endpoints of $P$ might be adjacent to higher level nodes $u,v$ (there can only be two, because each endpoint had degree 2 when it was assigned a level). If $P$ did output a proper 2-coloring, then $u,v$ can output $E$. Then again $u$ and $v$ might be adjacent to higher level nodes which can then also output $E$. We iterate this until there are no more nodes that can output $E$, this takes at most $k$ rounds, because there are only $k$ levels.
\end{itemize}
In phase $k$, all remaining level $k$ nodes (the ones that did not yet output $E$) either compute a consistent $2$-coloring in linear time (in the case of $2\frac{1}{2}$-coloring), or compute a consistent $3$-coloring in $O(\log^* n)$ time (in the case of $3\frac{1}{2}$-coloring). This finishes the description of our algorithm.

We note some properties of the provided algorithm. All paths of level $i$ nodes either output a consistent coloring, or $D$. Furthermore, nodes only output $E$ if they have a lower level neighbor that did not output $D$. Finally, nodes of level $1$ do not output $E$, and nodes of level $k$ do not output $D$. Therefore, the algorithm satisfies all constraints, and we get the following corollary.

\begin{corollary}\label{cor:genericcorrect}
The generic algorithm computes a valid solution to the $k$-hierarchical $2\frac{1}{2}$-coloring, or, respectively, the $k$-hierarchical $3\frac{1}{2}$-coloring problem.
\end{corollary}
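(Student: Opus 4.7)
The plan is to verify, clause by clause, that the labelling produced by the generic algorithm satisfies every requirement in Definition~\ref{def:2.5Col} (resp.~Definition~\ref{def:3.5Col}). Most of the work is already packaged in the three informal observations stated immediately before the corollary; the task is essentially to check that those observations jointly cover every condition in the respective definition.

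First I would show that every node eventually outputs a label. A level-$i$ node with $i<k$ is labelled in phase $i$ (as $D$ on a long component, as $W$/$B$ on a short one) unless it was already set to $E$ by a propagation step from some earlier phase. A level-$k$ node is either set to $E$ by propagation or coloured in phase $k$. Level-$(k+1)$ nodes simply output $E$ as soon as they have determined their level, which takes $O(k)$ rounds. This immediately discharges the two unconditional clauses: no level-$1$ node is ever $E$ (phase $1$ only produces $D$, $W$, and $B$), and every level-$(k+1)$ node is $E$.

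Next I would verify the proper-colouring constraints for levels $1\le i\le k-1$ (and also $i=k$ in the $2\tfrac{1}{2}$-case). The key structural fact is that, at the start of phase $i$, the still-unlabelled level-$i$ nodes form a disjoint union of paths, since by construction they had degree at most $2$ in the residual tree when their level was assigned. On a short component every node sees the entire component after $\gamma_i$ rounds and can agree on a consistent alternating $W/B$ labelling; a long component is uniformly $D$. Because two different maximal level-$i$ components are never adjacent, a $W$ (resp.~$B$) node can never have a same-level neighbour labelled $W$, $B$, or $D$. For the $3\tfrac{1}{2}$-case, the leftover level-$k$ nodes again form paths, so Linial's algorithm produces a proper $3$-colouring in $\{R,G,Y\}$ in $O(\log^* n)$ rounds, and adjacent level-$k$ nodes receive distinct colours by construction.

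Finally I would check the $E$-iff constraint for levels $2\le i\le k$ together with the extra restrictions at level $k$. The propagation rule ``set $E$ on any still-unlabelled higher-level node adjacent to a lower-level node in $\{W,B,E\}$, and iterate until saturation'' is a literal implementation of the forward direction of the iff; the converse holds because a label of $E$ is only ever produced by this rule. The same observation yields the level-$k$ side condition ``may output $E$ only if no lower-level neighbour output $D$'', since propagation is triggered exclusively by non-$D$ labels, and level-$k$ nodes never output $D$ because phase $k$ runs only a colouring subroutine. The one delicate step, which I expect to be the main potential obstacle, is ruling out a node $v$ of level $i\ge 2$ that has a lower-level $W/B/E$ neighbour but nonetheless ends up labelled $W$, $B$, or $D$; this cannot happen because the propagation step at the end of the earlier phase that produced that $W/B/E$ label removes $v$ from the residual tree before phase $i$ even considers it.
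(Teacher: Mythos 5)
Your proposal is correct and follows essentially the same approach as the paper: the paper's proof is exactly the short paragraph of observations preceding the corollary (paths are either uniformly $D$ or consistently coloured, $E$ is only propagated from non-$D$ lower-level labels, level-$1$ nodes never receive $E$, level-$k$ nodes never receive $D$), and you are simply unpacking these observations into a clause-by-clause check against Definitions~\ref{def:2.5Col} and~\ref{def:3.5Col}. Your explicit treatment of the backward direction of the $E$-iff condition (a node adjacent to a lower-level $W/B/E$ neighbour is removed by the propagation step before its own phase runs) is the one point the paper leaves implicit, and it is handled correctly.
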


We will first prove a generic statement that will be useful in many places of the paper. 
\begin{lemma}\label{lem:remainAfterLvlI}
Given any instance of either $k$-hierarchical $2\frac{1}{2}$-coloring, or $k$-hierarchical $3\frac{1}{2}$-coloring, consider the execution of the generic algorithm. Let $i$ be between $1$ and $k-1$, and let $n'$ be an upper bound on the number of nodes of level $\geq i$ that did not yet output a label. Then, after executing phase $i$ of the generic algorithm with parameter $\gamma_i$, the number of remaining nodes (which all have level $>i$) is at most $O(\frac{n'}{\gamma_i})$.
\end{lemma}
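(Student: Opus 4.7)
The plan is to bound $|R|$, the set of level $>i$ nodes remaining after phase $i$, by a level-by-level charging argument that ties each remaining node back to a $D$-labeled level-$i$ path.

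I would begin by establishing the key structural observation: every level-$j$ node with $j \geq 2$ has at least one level-$(j-1)$ neighbor in the original tree. This follows directly from the definition of the level decomposition, since such a node has degree strictly greater than $2$ after removing levels $1,\ldots,j-2$ but degree at most $2$ after further removing level $j-1$, so at least one of its removed neighbors must lie in level $j-1$. Using this, I would show: for any $v \in R$ of level $j > i$, its guaranteed level-$(j-1)$ neighbor $u$ cannot carry a label in $\{W, B, E\}$, else the LCL constraints would force $v$ to output $E$, contradicting $v \in R$. The argument then splits into two cases: if $j = i+1$, then $u$ is a level-$i$ node labeled in phase $i$, so it must be $D$-labeled and sits at an endpoint of some $D$-path; if $j > i+1$, then $u$ is a level $>i$ node that is either $E$-labeled or still unlabeled at the end of phase $i$, and since it cannot be $E$-labeled, $u \in R$ as well.

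I would then assemble the bound. Since every $D$-path has length at least $\gamma_i$ and the total number of unlabeled level-$i$ nodes at the start of phase $i$ is at most $n'$, the number of $D$-paths is at most $n'/\gamma_i$. Each such path has at most two endpoints, each with at most one level $>i$ neighbor in the unlabeled subgraph (by the degree-$\leq 2$ property), giving $|R_{i+1}| \leq 2n'/\gamma_i$, where $R_{j}$ denotes the level-$j$ nodes in $R$. For $j > i+1$, mapping each $v \in R_j$ to a guaranteed level-$(j-1)$ neighbor in $R_{j-1}$ yields $|R_j| \leq \Delta \cdot |R_{j-1}|$; iterating gives $|R_j| \leq \Delta^{j-i-1} \cdot 2n'/\gamma_i$. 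Summing over $j = i+1, \ldots, k+1$ and absorbing the constant $\sum_{m=0}^{k-i}\Delta^m$, which depends only on the fixed parameters $\Delta$ and $k$, into the big-$O$, we conclude $|R| = O(n'/\gamma_i)$.

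The main subtlety I anticipate is, in the case $j > i+1$, verifying that $u$ was never $E$-labeled during any earlier phase $1,\ldots,i-1$ (not only phase $i$). This follows because $v$'s survival through all those phases forces each of $v$'s lower-level neighbors, including $u$, to have avoided labels in $\{W, B, E\}$ during the entire execution. A degenerate case worth checking separately is when $S$ at the start of phase $i$ happens to contain no level-$i$ nodes; the structural observation then forces $S$ to be empty (since a surviving level-$(i+1)$ node would need a non-$E$ level-$i$ neighbor), so the bound holds vacuously.
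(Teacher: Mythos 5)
Your proof is correct and follows the same two-step approach as the paper's: first bound the remaining level-$(i+1)$ nodes via the length-$\geq\gamma_i$ constraint on $D$-paths (you count paths directly where the paper uses a charging argument, each giving $O(n'/\gamma_i)$), then iterate upward through the levels with a constant multiplicative factor per level (you use $\Delta$ where the paper uses $2$ via the path structure of each level; both are $O(1)$). One small imprecision: your structural observation that every level-$j$ node with $j\geq 2$ has a level-$(j-1)$ neighbor is supported by a degree-drop argument that only holds for $2\leq j\leq k$. A level-$(k+1)$ node is by definition merely a survivor of the $k$ removal rounds and need not drop below degree $3$ after level $k$ is removed, so it can have no level-$k$ neighbor at all (e.g.\ the center of a tree all of whose neighbors also survive to level $k+1$). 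This does not damage the lemma --- level-$(k+1)$ nodes must output $E$ unconditionally and thus never appear in the remaining set $R$ --- but you should either stop your sum at $j=k$, as the paper does, or explicitly record that $R_{k+1}=\emptyset$ rather than relying on the observation there.
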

\begin{proof}
We argue that for each node of level $i+1$ that remains, there must be at least $\frac{\gamma_i}{2}$ nodes of level $i$ that did already terminate. Consider some remaining node $v$ of level $i+1$, it must have at least one neighbor $u$ of level $i$ that did output $D$. Otherwise $v$ would have output $E$ already. Since $u$ did output $D$, the path of level $i$ nodes containing $u$ must have length at least $\gamma_i$, so charge half of all nodes in this path towards $v$ (the other half might be charged to the node at the other end of the path). So each remaining level $i+1$ node has at least $\frac{\gamma_i}{2}$ nodes that already terminated charged to it (and the sets of charged nodes are disjoint). As a result there can be at most $\frac{2n'}{\gamma_i} \in O(\frac{n'}{\gamma_i})$ such nodes of level $i+1$ remaining.
Any path of level $i+1$ nodes has at most 2 level $i+2$ nodes adjacent to the endpoints. So for any two nodes of level $i+2$ that still remain, there must exist at least one node of level $i+1$ that did not terminate yet. By repeating this reasoning we get that for every $2^{j-i-1}$ nodes of level $j>i+1$ there exists at least one node of level $i+1$ that did not terminate yet. Or conversely, for every level $i+1$ node that remains, there remain at most 
\[
\sum_{i+1<j\leq k}2^{j-i-1} \in O(1)
\]
nodes of higher levels. If such a level $i$ node did not exist, all of these higher level nodes could simply output $E$. So since there exist at most $O(\frac{n'}{\gamma_i})$ level $i+1$ nodes, there also remain at most $O(\frac{n'}{\gamma_i})$ nodes in total.
\end{proof}

Now, equipped with \cref{lem:remainAfterLvlI}, getting our upper bound on the node-averaged complexity of $k$-hierarchical $3\frac{1}{2}$-coloring is just a matter of choosing the parameters $\gamma_1,\ldots, \gamma_{k-1}$ correctly.
Define $t := (\log^*n)^{1/2^{k-1}}$  as the target complexity. For $1 \le i \le k-1$ define $\gamma_i := t^{2^{i-1}}$.
We run the generic algorithm with value $\gamma_i$ for phase $i$.

\begin{lemma}\label{lem:UB3.5Col}
    Let $t:= (\log^* n)^{1/2^{k-1}}$. The algorithm solves $k$-hierarchical $3\frac{1}{2}$-coloring in $O(t)$ node-averaged complexity.
\end{lemma}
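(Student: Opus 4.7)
The plan is to bound the total sum $\sum_{v} T_v$ of runtimes by splitting contributions according to the phase in which each node terminates, and then divide by $n$. The parameters $\gamma_i = t^{2^{i-1}}$ are chosen exactly so that each phase contributes $O(tn)$ to the total.

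First I would observe that phase $i$ of the generic algorithm takes $O(\gamma_i)$ rounds (the $2\gamma_i$ rounds needed to see the surrounding path, plus the additional $O(k) = O(1)$ rounds for higher-level nodes to cascade into $E$), and that phase $k$ takes $O(\log^* n)$ rounds using Linial's algorithm. A node that terminates during phase $i$ has therefore spent at most $O(\sum_{j \le i} \gamma_j) = O(\gamma_i)$ rounds (and $O(\gamma_{k-1} + \log^* n)$ rounds if it terminates in the last phase); here I use that $\gamma_j$ is geometrically increasing in $j$.

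Next I would apply \cref{lem:remainAfterLvlI} inductively: starting with at most $n$ nodes of level $\ge 1$, after phase $i$ the number of undecided nodes is at most $O(n / \prod_{j=1}^{i} \gamma_j)$. Hence the number of nodes that can terminate in phase $i$ is bounded by the number remaining at the start of that phase, namely $O(n / \prod_{j<i} \gamma_j)$. The total contribution of phase $i < k$ to $\sum_v T_v$ is therefore
\[
O(\gamma_i) \cdot O\!\left(\frac{n}{\prod_{j<i} \gamma_j}\right) = O\!\left(\frac{\gamma_i}{\prod_{j<i}\gamma_j}\right) \cdot n.
\]
Using $\prod_{j=1}^{i-1}\gamma_j = t^{\sum_{j=1}^{i-1} 2^{j-1}} = t^{2^{i-1}-1}$, the ratio telescopes to $\gamma_i / t^{2^{i-1}-1} = t$, so each of the first $k-1$ phases contributes $O(tn)$.

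For phase $k$, the number of remaining nodes is $O(n / \prod_{j=1}^{k-1}\gamma_j) = O(n / t^{2^{k-1}-1})$, and each such node spends $O(\log^* n)$ additional rounds. By the choice $t = (\log^* n)^{1/2^{k-1}}$, we have $\log^* n = t^{2^{k-1}}$, so this contributes $O(n \cdot t^{2^{k-1}} / t^{2^{k-1}-1}) = O(tn)$ as well. Summing the $k = O(1)$ phase contributions gives $\sum_v T_v = O(tn)$, and dividing by $n$ yields the claimed $O(t)$ node-averaged complexity. The only step requiring any care is matching the parameter $t$ with the last-phase cost via the identity $t^{2^{k-1}} = \log^* n$; the rest is routine telescoping.
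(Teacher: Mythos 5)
Your proof is correct and follows essentially the same approach as the paper's: apply \cref{lem:remainAfterLvlI} inductively to bound the number of nodes remaining at the start of each phase by $O(n/\prod_{j<i}\gamma_j) = O(n/t^{2^{i-1}-1})$, charge each such node $O(\gamma_i)$ rounds for that phase (and $O(\log^* n) = O(t^{2^{k-1}})$ for the last phase), and observe that the telescoping makes each of the constantly many phases contribute $O(tn)$ to the total. The bookkeeping differs only in whether you charge by the phase a node terminates in versus by every phase it is present in, but since the $\gamma_i$ are geometrically increasing these give the same bound up to constants.
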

\begin{proof}
    By repeatedly applying \cref{lem:remainAfterLvlI}, and since $k$ is a constant, we get that after phase $i<k$ only $r_i = O\left(\frac{n}{t^{2^i-1}}\right)$ nodes remain. In fact,
    \begin{align*}
        r_i &= n \cdot O\left(\frac{1}{\gamma_1}\right) \cdot O\left(\frac{1}{\gamma_2}\right) \cdot \ldots \cdot O\left(\frac{1}{\gamma_{i}}\right)
            = O\left( \frac{n}{\prod_{j\leq i}\gamma_j} \right) = O\left( \frac{n}{\prod_{j \leq i}t^{2^{j-1}}} \right)\\
            &= O\left( \frac{n}{t^{\sum_{j \leq i} 2^{j-1}}} \right) = O\left(\frac{n}{t^{2^i-1}}\right).
    \end{align*}
    Nodes spend $2\gamma_i = 2t^{2^{i-1}}$ rounds in phase $i<k$, and $O(\log^*n) =  O(((\log^*n)^{1/2^{k-1}})^{2^{k-1}}) = O(t^{2^{k-1}})$ rounds in phase $k$. Combining this with the above result of how many nodes remain for a given phase, we get the following upper bound on the node-averaged complexity.
    \[
    \Bar{T} = \frac{1}{n} \left(r_{k-1} \cdot O(t^{2^{k-1}}) + \sum_{i = 1}^{k-1} r_{i-1} \cdot O(\gamma_i)\right) = O\left(\frac{1}{n}\sum_{i = 1}^k \frac{n}{t^{2^{i-1}-1}} \cdot t^{2^{i-1}}\right) = O\left(\frac{1}{n} \sum_{i = 1}^k \frac{n}{t^{-1}}\right) = O(t)\qedhere
    \]
\end{proof}

\subsection{Lower Bounds For \texorpdfstring{$3\frac{1}{2}$}{3.5}-Coloring}\label{sec:lowerbound}
In this section, we prove the following result.
\begin{lemma}\label{lem:lb-3half-noweights}
    The deterministic node-averaged complexity of $k$-hierarchical $3\frac{1}{2}$-coloring is $\Omega((\log^*n)^{1/2^{k-1}})$.
\end{lemma}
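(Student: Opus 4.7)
The plan is to prove \cref{lem:lb-3half-noweights} by constructing a worst-case instance mirroring the parameter choices of \cref{lem:UB3.5Col} and arguing against an arbitrary deterministic algorithm. Set $t := (\log^* n)^{1/2^{k-1}}$ and $\gamma_i := t^{2^{i-1}}$ for $1 \le i \le k-1$. Following the hierarchical-tree constructions of \cite{CP19timeHierarchy, Balliu0KOS23}, build a balanced tree $T^*$ whose peeling decomposition (in the sense of \cref{def:3.5Col}) yields level-$i$ paths of length $\gamma_i$ and $n_i = \Theta(n/t^{2^{i-1}-1})$ nodes at each level $i$. A short calculation shows $n_i \gamma_i = \Theta(nt)$ for every $i < k$, and also $n_k \log^* n_k = \Theta(nt)$.

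Now fix any deterministic algorithm $\calA$. By \cref{def:3.5Col}, every maximal non-$E$ subsegment of a level-$i$ path (for $i < k$) must be output by $\calA$ either as all-$D$ or as a proper $2$-coloring in $\{W,B\}$. Let $p_i \in [0,1]$ denote the length-weighted fraction of such subsegments that $\calA$ outputs as $2$-colored. A $2$-colored subsegment of length $\ell$ forces each of its nodes to learn the parity of its distance to the subsegment endpoint (a standard $2$-coloring argument), contributing $\Omega(\ell^2)$ rounds. Combined with the fact that, in $T^*$, subsegments at level $i$ have typical length $\Theta(\gamma_i)$ and the total number of non-$E$ nodes at level $i$ is $\Theta(n_i)$, this gives $\Omega(p_i n_i \gamma_i) = \Omega(p_i nt)$ rounds of $2$-coloring contribution at level $i$, and $\Omega(nt \sum_{i<k} p_i)$ in total. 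Conversely, each all-$D$ subsegment touching a level-$(i+1)$ endpoint forces that level-$(i+1)$ node into the non-$E$ regime, so iterating up the hierarchy leaves a $\prod_{i=1}^{k-1}(1 - p_i)$-fraction of the level-$k$ backbone requiring a proper $3$-coloring.

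For the level-$k$ contribution, I use the node-averaged analogue of Linial's lower bound: $3$-coloring a path of length $m$ has node-averaged complexity $\Omega(\log^* m)$, which follows from \cite{Feuilloley17}'s worst-case-vs.-average equivalence on cycles applied to $3$-coloring. Provided $\prod (1-p_i)$ is bounded below by a constant, the non-$E$ region contains $\Omega(n_k)$ nodes in $3$-colored subsegments of typical length polynomial in $n$, contributing $\Omega\bigl(\prod(1-p_i) \cdot n_k \log^* n_k\bigr) = \Omega\bigl(nt \cdot \prod(1-p_i)\bigr)$. Combining,
\begin{equation*}
\sum_v T_v(\calA) \;\geq\; \Omega(nt) \cdot \left( \sum_{i<k} p_i + \prod_{i<k}(1-p_i) \right),
\end{equation*}
and the bracketed quantity is $\Omega(1)$ for every choice of $(p_i)$: either some $p_i \ge 1/(2k)$, giving $\sum p_i = \Omega(1)$, or all $p_i < 1/(2k)$, in which case Bernoulli's inequality gives $\prod (1-p_i) \ge 1/2$. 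Dividing by $n$ yields $\nodeavg(\calA) = \Omega(t) = \Omega((\log^* n)^{1/2^{k-1}})$, as required.

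The main obstacle is making rigorous both the "typical subsegment length" assumption and the per-segment $3$-coloring lower bound when $\calA$ adversarially fragments the non-$E$ region into very short subsegments to drive down the $\log^*$ factor. Each such fragment must be created by a $2$-colored level-$(k-1)$ subsegment, so each "break" comes at a quantifiable cost at level $k-1$; formalizing this trade-off will require either a convexity/averaging argument on the distribution of subsegment lengths, or a direct case split on whether $p_{k-1}$ exceeds a suitable threshold. An analogous delicate charging argument appears in \cite{Balliu0KOS23} for the $k$-hierarchical $2\frac{1}{2}$-coloring node-averaged lower bound, and I expect to reuse essentially that machinery, substituting the Linial $3$-coloring node-averaged lower bound at the top level for the linear $2$-coloring lower bound used there.
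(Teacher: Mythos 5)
Your proposal takes a genuinely different route from the paper, but it has two gaps that both collapse to the same underlying issue: for a \emph{deterministic} algorithm, the per-level "cost" depends on the ID assignment, and your argument implicitly treats the algorithm's behavior on the lower-bound graph as if it were fixed once the topology is chosen. Concretely, you define $p_i$ as the length-weighted fraction of level-$i$ subsegments output as $2$-colored, and you then claim each such $2$-colored subsegment of length $\ell$ "contributes $\Omega(\ell^2)$ rounds." That claim is a worst-case-over-ID-assignments statement, not a per-instance one: the standard indistinguishability argument for $2$-coloring shows there \emph{exists} an ID assignment on which some node in the middle of the segment must run long, but on the particular ID assignment that determined your $p_i$, the algorithm may well be fast on every $2$-colored segment. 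You therefore cannot simultaneously fix the instance (to define the $p_i$) and invoke a lower bound that requires re-choosing IDs. The paper avoids this by never defining a fractional quantity; instead, \cref{lem:slow-or-d} partitions the ID space into $n^{c-1}$ disjoint blocks, runs the algorithm on all of them, and proves an inductive dichotomy: either \emph{some} ID assignment makes at least half the level-$i$ nodes run $\Omega(\ell_i)$ rounds (which directly yields the bound), or there remain $\Omega(n^{c-1}/4^i)$ instances on which \emph{every} level-$\le i$ node outputs $D$. This chaining over disjoint ID blocks is exactly what makes the argument sound against deterministic algorithms, and it is what your proposal is missing.

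The second gap is the level-$k$ contribution. Your sketch asserts that if $\prod(1-p_i)$ is bounded below, then $\Omega(n_k)$ nodes lie in "$3$-colored subsegments of typical length polynomial in $n$," contributing $\Omega(n_k\log^* n_k)$ — but this again presupposes both control over the fragment lengths and an ID assignment realizing the $3$-coloring lower bound, and you flag this yourself as the main obstacle. The paper's \cref{lem:3ColLastPath} handles it cleanly by a \emph{simulation} argument: once the dichotomy guarantees a supply of instances where all lower levels output $D$, one can embed an arbitrary path $3$-coloring instance into the level-$k$ backbone of such an instance (each path node carries a private virtual subtree of lower-level nodes), run $\calA$ on the virtual graph, and inherit the $\Omega(\log^*\ell_k)$ node-averaged lower bound of \cref{cor:3ColNA} directly. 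Your high-level intuition about trading off early termination against the cost of coloring lower-level paths matches the paper's, and the case split on $p_i$ thresholds is in the right spirit, but the case split needs to be over ID assignments (as in \cref{lem:slow-or-d}) rather than over a fraction $p_i$ that is itself a function of the ID assignment you are trying to choose.
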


Before proving a lower bound for $k$-hierarchical $3\frac{1}{2}$-coloring, we state few important results about node-averaged complexity.
We use this very nice characterisation of the node-averaged complexity of LCLs on paths and cycles by Feuilloley \cite{Feuilloley17}.
\begin{lemma}[\cite{Feuilloley17}]
    For any LCL problem $\Pi$ defined on paths, the following holds:
    \begin{itemize}
        \item $\Pi$ has randomised and deterministic node-averaged complexity $\Theta(n)$ if and only if it has randomised and deterministic worst case complexity $\Theta(n)$;
        \item $\Pi$ has deterministic node-averaged complexity $\Theta(\log^*n)$ if and only if it has deterministic worst case complexity $\Theta(\log^*n)$.
    \end{itemize}
\end{lemma}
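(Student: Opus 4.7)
The plan is to prove both equivalences by splitting each into its two directions. The direction ``worst-case complexity $T$ implies node-averaged complexity $O(T)$'' is immediate, since running the worst-case algorithm yields $\frac{1}{n}\sum_v T_v \leq \max_v T_v = O(T)$. All the content therefore lies in the reverse direction. Here I would first invoke the known classification of LCLs on paths/cycles: every LCL has deterministic worst-case complexity in $\{O(1),\Theta(\log^* n),\Theta(n)\}$, and randomization provides no asymptotic speedup on paths/cycles (this is Naor–Stockmeyer and the subsequent gap results, already cited as \cite{lcls_on_paths_and_cycles,balliu19lcl-decidability} in the introduction). Consequently, to establish both bullets it is enough to rule out the two ``mismatched'' cases: (a) worst-case $\Theta(n)$ but node-averaged $o(n)$; (b) worst-case $\Theta(\log^*n)$ but deterministic node-averaged $o(\log^*n)$.

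For (a), I would use a cut-and-paste/indistinguishability argument. Assume for contradiction that an algorithm $\mathcal{A}$ solves $\Pi$ with node-averaged complexity $o(n)$ on paths of length $n$. By Markov's inequality, on the worst-case instance witnessing this bound a constant fraction of nodes terminate within at most $n/C$ rounds for a large constant $C$. Pick two such fast-terminating nodes $u,v$ whose view radii are disjoint; their outputs depend only on their respective local windows. Because the worst-case complexity of $\Pi$ is $\Theta(n)$, the known structural characterization implies that $\Pi$ encodes some global constraint on paths (e.g.\ a parity-type or orientation-type constraint). I would then splice two different valid path instances at positions between $u$'s and $v$'s windows, producing a new path on which $u$ and $v$ output exactly what they did before, but whose stitched interior cannot be completed into a valid labeling, contradicting correctness of $\mathcal{A}$. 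The randomized $\Theta(n)$ case goes through identically after replacing runtimes by their expectations over random bits and using linearity to apply the same splicing to the induced distribution of local views.

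For (b), I would use a symmetrization/Ramsey argument tailored to paths. Suppose for contradiction that $\mathcal{A}$ is deterministic with node-averaged complexity $o(\log^* n)$. By padding and passing to a longer path of length $N$, one may assume most nodes of the padded instance also terminate in time $o(\log^* N)$. The classical Linial-style argument shows that, with a view of radius $o(\log^* N)$, for any fixed output function there exist ID assignments on two overlapping windows that are indistinguishable yet must produce conflicting outputs — this is essentially Ramsey's theorem on ID patterns along the path. Because the ``fast'' nodes form a constant-density subset of the path (not a single worst-case position), one can always find such a conflicting pair among them, contradicting correctness and yielding the desired $\Omega(\log^* n)$ lower bound on the node-averaged complexity.

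The main obstacle is step (b): transferring a worst-case $\Omega(\log^* n)$ bound that is classically witnessed at a single hard position into a statement holding simultaneously at many nodes. The crucial ingredient is the homogeneity of paths and cycles, which lets one ``slide'' the Ramsey-based lower bound to any position with a sufficiently short local view; this is the observation underlying Feuilloley's lemma and is what allows the averaging direction to match the worst case on these simple graph families, in sharp contrast to the tree setting studied in the remainder of the paper.
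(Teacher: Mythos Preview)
The paper does not prove this lemma at all: it is stated with a citation to \cite{Feuilloley17} and used as a black box (see the sentence immediately following the statement, and the related \Cref{lem:neighborsrun}, also cited from the same source). There is therefore no proof in the paper to compare your proposal against.

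As for your sketch itself: the overall strategy (trivial direction plus the path/cycle complexity trichotomy to reduce to two ``mismatch'' cases) is sound, and your case (a) is essentially the right idea---it is in fact close to how the paper later derives \Cref{cor:linearGap} from \Cref{lem:neighborsrun}. Your case (b), however, is where the real work lies, and the proposal remains a plan rather than a proof. Saying ``slide the Ramsey-based lower bound to any position with a sufficiently short local view'' is exactly the intuition, but you have not shown how to turn a constant-density set of fast nodes into an actual contradiction for an \emph{arbitrary} LCL with worst-case $\Theta(\log^* n)$, as opposed to the specific case of $3$-coloring. Feuilloley's original argument does not go via Ramsey directly; it instead establishes a Lipschitz-type structural fact about termination times along a path (this is the content of \Cref{lem:neighborsrun}) and combines it with an averaging argument tailored to the $\log^* n$ regime. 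If you want a self-contained proof, you would either need to reproduce that structural lemma and its consequences, or make the ``many-fast-nodes $\Rightarrow$ $O(1)$ worst case'' reduction precise for general LCLs, not just coloring.
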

This lemma immediately gives us bounds on the deterministic node-averaged complexity of $2$-coloring and $3$-coloring paths, since these problems, on paths, are known to require $\Omega(n)$ and $\Omega(\log^* n)$ worst-case rounds, respectively.
\begin{corollary}\label{cor:3ColNA}
    The $3$-coloring problem, on paths, has deterministic node-averaged complexity $\Omega(\log^* n)$.
\end{corollary}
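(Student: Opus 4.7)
The plan is to derive the corollary as an immediate consequence of the Feuilloley lemma stated just above, combined with the classical worst-case complexity of $3$-coloring paths.

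First, I would recall that $3$-coloring on paths has deterministic worst-case complexity exactly $\Theta(\log^* n)$. The upper bound $O(\log^* n)$ follows from Linial's classical algorithm (or Cole--Vishkin), and the matching lower bound $\Omega(\log^* n)$ is Linial's seminal lower bound for coloring paths/cycles with a constant number of colors. In particular, $3$-coloring does not fall into the $\Theta(n)$ regime mentioned in the Feuilloley lemma, so the relevant dichotomy case applies.

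Then I would invoke the second bullet of the Feuilloley lemma: for any LCL on paths, deterministic node-averaged complexity $\Theta(\log^* n)$ is equivalent to deterministic worst-case complexity $\Theta(\log^* n)$. Applying the ``if'' direction to $3$-coloring yields that its node-averaged complexity is $\Theta(\log^* n)$, which in particular implies the claimed lower bound $\Omega(\log^* n)$.

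There is essentially no obstacle here; the statement is a direct specialization of the quoted lemma to a problem whose worst-case complexity is already classical. The only thing to be careful about is making sure that $3$-coloring on paths is indeed an LCL in the sense used in the lemma (it is: check on radius-$1$ neighborhoods that adjacent nodes receive distinct colors from $\{R,G,Y\}$) and that it is not the linear case, which is ruled out by Linial's $O(\log^* n)$ algorithm.
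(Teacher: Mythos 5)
Your proposal matches the paper's argument exactly: the paper also derives this corollary immediately from the Feuilloley lemma together with the classical $\Omega(\log^* n)$ worst-case lower bound for $3$-coloring paths. Nothing to add.
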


We now define the family of graphs that we use to prove our lower bounds.
\begin{definition}[$k$-hierarchical lower bound graph]\label{def:LowerBoundGraph}
    For some parameters $\ell_1,\ldots,\ell_k$, consider the following recursive construction. Start from a path of length $\ell_k$, which is called \emph{path of level $k$}, and its nodes are called \emph{nodes of level $k$}. Then, recursively, for $i = k-1,\ldots,1$ do the following. For each path $P$ of level $i+1$, for each node $v$ of $P$, create a path $P'$ of length $\ell_{i}$ and connect one endpoint of $P'$ to $v$. The path $P'$ is a \emph{path of level $i$} and its nodes are \emph{nodes of level $i$}. This graph is called \emph{$k$-hierarchical lower bound graph}.
\end{definition}
An example of this construction, for $k=2$, is depicted in \Cref{fig:lowerBoundGraph}. We note a small corollary about this construction, that will be useful in later sections. It is just an immediate consequence of the definition.

\begin{corollary}\label{cor:SizeOfLevels}
Consider the computation of levels as in \cref{def:2.5Col}. For all $i \in \{1, \ldots, k\}$, let $L_i$ be the set of level $i$ nodes. Then, $|L_i| \in \Omega(\prod_{i \leq j \leq k} \ell_{j})$.
\end{corollary}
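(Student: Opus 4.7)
The plan is to show that the computation-level assigned by \cref{def:2.5Col} agrees with the ``construction level'' from \cref{def:LowerBoundGraph} on all but a constant number of nodes per construction-level-$i$ path, and then use the fact that there are $\prod_{i<j\le k}\ell_j$ such paths, each of length $\ell_i$. A constant loss per path is then absorbed into the $\Omega(\cdot)$.

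First I would enumerate the original-tree degrees of each type of construction-level node: an internal node of a construction-level-$i$ path has degree $3$ for $i>1$ (two path edges plus one into its attached construction-level-$(i-1)$ subtree) and degree $2$ for $i=1$; the attached endpoint of a construction-level-$i$ path (the one glued to a construction-level-$(i+1)$ node) has degree $3$ for $1<i<k$; free endpoints of level-$i$ paths with $i>1$, and both endpoints of the construction-level-$k$ path, have degree $2$.

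Next, I would prove by induction on $i\in\{1,\ldots,k\}$ the following invariant: after $V_1,\ldots,V_{i-1}$ have been peeled away, the surviving tree is precisely the subgraph induced by construction levels $\ge i$, modulo a constant-sized ``boundary erosion'' at the two ends of each construction-level-$j$ path ($j\ge i$). The inductive step rests on two observations: (a) once a construction-level-$i$ node has lost its attached construction-level-$(i-1)$ subtree in the previous phase, its remaining degree is at most $2$, so it joins $V_i$; (b) a construction-level-$j$ node with $j>i$ whose two path-neighbours and attached subtree are still intact retains degree $3$ and hence survives. Combining the invariant with the counting in the first step, each of the $\prod_{i<j\le k}\ell_j$ construction-level-$i$ paths contributes at least $\ell_i-O(1)$ nodes to $L_i$, which gives
\[
|L_i|\;\ge\;\bigl(\ell_i-O(1)\bigr)\cdot\prod_{i<j\le k}\ell_j\;\in\;\Omega\!\left(\prod_{i\le j\le k}\ell_j\right),
\]
once the $\ell_j$'s exceed the hidden boundary constant.

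I expect the main obstacle to be controlling the cascade in the inductive step: when a construction-level-$j$ endpoint is demoted into some earlier $V_i$, its path-neighbour loses an edge and might in turn drop to degree $\le 2$ and be demoted too, and so on. The key is that this chain reaction can propagate at most $O(k)=O(1)$ steps along any path before meeting a node whose remaining degree is still $3$, so only $O(1)$ nodes per path are ever demoted below their construction level, which is exactly what the invariant and the final estimate need.
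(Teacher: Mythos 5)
Your argument is correct, and it fills in the details behind a claim the paper dismisses as ``just an immediate consequence of the definition'' without giving a proof. The degree enumeration, the invariant that after peeling $V_1,\ldots,V_{i-1}$ the surviving tree is (up to $O(k)$ boundary nodes per path) the subgraph of construction levels $\ge i$, and the observation that the two-sided chain reaction along a construction-level-$i$ path advances by at most one node per end per step — hence eats at most $O(k)=O(1)$ nodes before step $i$ consumes the rest — are all sound, and multiplying $\ell_i - O(1)$ by the $\prod_{i<j\le k}\ell_j$ construction-level-$i$ paths gives exactly the stated $\Omega$ bound (under the standing assumption, which you correctly flag and the paper always satisfies, that the $\ell_j$ are large enough to dominate the erosion constant). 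This matches the paper's intent; it simply spells out what the authors treated as self-evident.
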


We now prove the following lemma.
\begin{lemma}\label{lem:slow-or-d}
    Let $n$ be the size of the $k$-hierarchical lower bound graph, as a function of  $\ell_1,\ldots,\ell_k$.
    Let $\mathcal{S}= \{1,\ldots,n^c\}$ be the ID space. Let $\mathcal{G}_k$ be the family of $k$-hierarchical lower bound graphs obtained by assigning IDs from $\mathcal{S}$.
    Let $\mathcal{A}$ be an algorithm for $k$-hierarchical $3\frac{1}{2}$-coloring. 
    Then, one of the following holds:
    \begin{itemize}
        \item either there exists an instance in $G \in \mathcal{G}_k$ and a value $i \in \{1,\ldots,k-1\}$ such that at least half of the nodes of level $i$ spend at least $\ell_i / 10$, or 
        \item there exist $n^{c-1} / 4^k$ graphs in $\mathcal{G}_k$, using disjoint sets of IDs, such that, for each of them, for all $i$ in $\{1,\ldots,k-1\}$, all nodes of level $i$ output $D$.
    \end{itemize}
\end{lemma}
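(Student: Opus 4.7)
The approach is to argue by contrapositive: suppose that for every $G\in\mathcal G_k$ and every level $i\in\{1,\ldots,k-1\}$, strictly fewer than half of the level-$i$ nodes of $G$ have runtime at least $\ell_i/10$. Under this assumption we must produce $n^{c-1}/4^k$ disjoint-ID instances in $\mathcal G_k$ in which every level-$i$ node with $i<k$ outputs $D$. The plan combines a per-path structural dichotomy, a local indistinguishability argument, and an inductive counting argument over the ID space.

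First, I would prove a per-path dichotomy: at level $i<k$, under the inductive hypothesis that all nodes at levels $<i$ output $D$, every level-$i$ path is either entirely labeled $D$ or entirely properly $2$-colored with $\{W,B\}$. This follows because an $E$ label requires a lower-level non-$D$ neighbor (which, by the inductive hypothesis and, at level~$1$, by definition, cannot occur), and the adjacency constraints on $W,B,D$ forbid any mixed pattern along a level-$i$ path.

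Second, a standard indistinguishability argument shows that any level-$i$ node $v$ strictly more than $\ell_i/10$ away from both endpoints of its level-$i$ path whose runtime is less than $\ell_i/10$ must output $D$. Indeed, $v$'s view lies entirely in the interior of its level-$i$ path and reaches no endpoint, so one can modify the ID assignment outside $v$'s view to flip the required parity of $v$ in any valid $2$-coloring while keeping $v$'s view, and hence $v$'s output, fixed; only $D$ is then consistent with correctness. Combined with the dichotomy, this forces every $2$-colored level-$i$ path to contain at least $(4/5)\ell_i$ slow interior nodes, so the assumed slow-node budget of at most $n_i/2$ per instance implies that in every instance at most a $5/8$-fraction of level-$i$ paths can be $2$-colored.

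Third, I would iterate this bound inductively over levels $1,\ldots,k-1$ while simultaneously aggregating over ID assignments. Partitioning the ID space $\{1,\ldots,n^c\}$ into $n^{c-1}$ disjoint chunks of $n$ IDs, and exploiting symmetry of the slow-node budget across ID assignments, one shows level by level that a constant fraction (at least $1/4$) of chunks admits an ID assignment that fully realizes the dichotomy at that level, i.e., labels every level-$i$ path as all-$D$. Conditioning on this and reapplying the dichotomy at level $i+1$ (now legitimately, since level $<i+1$ is all-$D$), we pick up another factor $1/4$, and compounding up to level $k-1$, together with a final factor $1/4$ to extract a pairwise disjoint-ID subfamily, yields the claimed $n^{c-1}/4^k$ good instances.

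The main obstacle is the third step: the pointwise bound ``at most $5/8$ of level-$i$ paths are $2$-colored per instance'' does not on its own imply that a constant fraction of instances are fully good at level~$i$, because the events ``path $P$ is $2$-colored'' across different $P$ can be correlated through the shared ID space. The delicate technical work is to exploit the large $n^c$-sized ID space to decouple paths via a chunk-based accounting, carrying the inductive hypothesis ``all lower levels are all-$D$'' through the argument while losing only a factor of $4$ per level.
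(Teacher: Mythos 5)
Your high-level plan matches the structure of the paper's proof: argue by induction over levels, use a per-path dichotomy (all-$D$ or properly $2$-colored), show that fast interior nodes must output $D$, count, and then aggregate over disjoint chunks of the ID space, losing a factor of $4$ per level. The counting in your second step (a $2$-colored level-$i$ path forces $\geq (4/5)\ell_i$ slow interior nodes, hence at most $\tfrac{5}{8}x$ paths can be $2$-colored, hence $\geq \tfrac{3}{8}x > x/4$ are all-$D$) is the same bound the paper derives. However, there are two genuine problems.

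First, the indistinguishability argument you use to force fast interior nodes to output $D$ is incorrect as stated. You claim one can ``modify the ID assignment outside $v$'s view to flip the required parity of $v$.'' Changing IDs does not affect the parity of a $2$-coloring, which is determined by the path topology; moreover, on a level-$i$ path both alternating colorings are legal, so there is no ``required parity'' that a single isolated node must satisfy. The paper instead takes \emph{two} fast nodes $u,v$ with disjoint $\ell_i/10$-hop views (possible since $n \geq \ell_i$), observes that their outputs are fixed (their views do not change), and then constructs a new instance containing both $u$ and $v$ inside a single level-$i$ path at a distance whose parity makes their fixed outputs (two colors in $\{W,B\}$, or a color and a $D$) incompatible with any valid labeling. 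A single-node argument does not suffice; the collision between two fixed outputs is what yields the contradiction.

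Second, the step you yourself flag as ``the main obstacle''---going from ``at most $5/8$ of level-$i$ paths are $2$-colored in each instance'' to ``a constant fraction of ID chunks yield instances where \emph{all} level-$i$ paths are all-$D$''---is precisely where the work is, and your proposal does not supply it. The paper's resolution is a recombination argument: a level-$i$ path $P$ being all-$D$ is witnessed by a fast interior $D$-node whose entire $\ell_i/10$-hop view lies in the subtree hanging off $P$; this is a \emph{local} property of the ID pattern on that subtree. Since each of $4$ disjoint-ID instances in $I_{i-1}$ contributes $> x/4$ such good local patterns, one has $> x$ disjoint-ID good patterns in total, which can be reassembled into a single new instance of the lower-bound graph in which all $x$ level-$i$ paths are good (and lower levels remain all-$D$ by the inductive hypothesis). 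This is what justifies charging exactly $4$ old instances per new instance in $I_i$. Without this recombination, the pointwise $5/8$ bound does not propagate through the induction, and the $4^{-k}$ loss is unsupported.
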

\begin{proof}
    We split the ID space into $n^{c-1}$ sets of size $n$, and for each set $S$ we construct an instance by using an arbitrary ID assignment over the IDs from $S$. Let $I_0$ be the set of resulting instances.
    We will prove by induction that, either we can construct an instance in which the first condition of the lemma is satisfied, or we can construct a sequence of sets $I_1,\ldots, I_k$ satisfying that, for all $i$, for all instances in $I_i$, by running $\mathcal{A}$, all nodes at level at most $i$ output $D$, where the size of $I_i$ is at least $|I_0| / 4^i$. Note that for $i=0$ the claim trivially holds. 
    
    Hence, in the following, assume that, for all instances in $I_{i-1}$, by running $\mathcal{A}$, all nodes at level at most $i-1$ output $D$. 
    Then, construct $I_i$ as follows. We run $\mathcal{A}$ on all instances in $I_{i-1}$, and:
    \begin{itemize}
        \item either there exists an instance $I \in I_{i-1}$ in which at least half of the nodes of level $i$ spend at least $\ell_i / 10$, or
        \item in all instances $I \in I_{i-1}$, at least half of the nodes of level $i$ spend at most $\ell_i /10$.
    \end{itemize}
    Note that, if the first case applies, then the claimed statement holds. Hence, in the following assume that the second condition holds. 
    
    We now prove that all nodes that run in at most $\ell_i /10$ must output $D$. Since $n \ge \ell_i$, for each node $u$ that runs in at most $\ell_i/10$, we can find some other node $v$ that also runs in at most $\ell_i/10$, and such that $u$ and $v$ have disjoint views (that is, their radius-$\ell_i/10$ neighborhood is disjoint).
    Note that, by inductive hypothesis, all nodes in level $i$ cannot output $E$, and hence they either output $B$,$W$, or $D$.
    If both $u$ and $v$ output $B$ or $W$, then by adjusting the parity accordingly, we can create a single path containing both $u$ and $v$ in which their outputs cannot be completed into a proper $2$-coloring, which is a contradiction. If at least one node outputs $B$ or $W$, and the other outputs $D$, we can create a single path that directly gives a contradiction. Hence, all nodes that run in at most $\ell_i /10$ rounds output $D$.
    
    We now lower bound the number of paths containing at least one node $u$ that satisfies the following properties:
    \begin{itemize}
        \item Node $u$ terminates in at most $\ell_i /10$ rounds (and hence it outputs $D$).
        \item Node $u$ is at distance at least $\ell_i /10 + 1$ from the endpoints of the path containing $u$ (and hence $u$ does not see outside the path). 
    \end{itemize}
    Let $x$ be the total number of paths of level $i$, and let $p$ the number of paths satisfying the desired properties. We have $x \ell_i /2$ nodes that output $D$ and terminate in at most $\ell_i/10$ rounds. At most $ 2 x \ell_i / 10$ of them can be too close to the endpoints of their paths. In the worst case, the remaining ones are in the same paths. Hence, we can lower bound $p$ as follows.
    \[
    p \ge (\frac{x \ell_i}{2} - \frac{2x \ell_i}{10})\cdot \frac{10}{8 \ell_i} = \frac{3x}{8} > x/4
    \]
    We thus get that, for every $4$ instances in $I_{i-1}$ (recall that they use disjoint sets of IDs), we can construct a single instance in which all nodes of level $i$ output $D$. Such instances are added to $I_i$.
\end{proof}

We are now ready to prove \Cref{lem:lb-3half-noweights}. Assume for a contradiction that, for some $k$, the $k$-hierarchical $3\frac{1}{2}$-coloring can be solved in $o((\log^*n)^{1/2^{k-1}})$ rounds. 
Let $t := (\log^*n)^{1/2^{k-1}}$, and for $1 \le i \le k-1$, let $\ell_i := t^{2^{i-1}}$. Then, let $\ell_k := \lfloor n / (\prod_{1 \le i \le k-1} \ell_i) \rfloor$.
We consider the $k$-hierarchical lower bound graphs for these parameters. Observe that, by construction, the total number of nodes is $\Theta(n)$.

We apply \Cref{lem:slow-or-d}. If the first case of the lemma applies, we obtain that there exists an instance and a value of $i$ in which at least half of the nodes of level $i$ spend at least $\ell_i / 10$. In this case, we get that the node-averaged complexity is at least 
\[
\frac{1}{2n} \cdot \prod_{i \le j \le k} \ell_j \cdot \frac{\ell_i}{10} \ge \frac{1}{2n} \cdot \frac{n}{ 2(\prod_{1 \le i \le k-1} \ell_i)} \cdot \prod_{i \le j \le k-1} \ell_j \cdot \frac{\ell_i}{10} = \frac{\prod_{i \le j \le k-1} \ell_j}{4(\prod_{1 \le i \le k-1} \ell_i)} \cdot \frac{\ell_i}{10} = \frac{\ell_i}{40 \prod_{1 \le j \le i-1} \ell_j}.
\]
Note that $\ell_i = t^{2^{i-1}}$, and that $\prod_{1 \le j \le i-1} \ell_j = t^{2^{i-1}-1}$. Hence, the node-averaged complexity is at least $t/40$, which is a contradiction.

Hence, the second case of the lemma applies. We obtain that there are many instances in which all nodes, except the ones at level $k$, output $D$, which implies that the nodes at level $k$ need to properly $3$-color the path at level $k$. We now prove that the nodes in the path at level $k$ need to spend $\Omega(\log^* n)$ rounds on average, by a reduction from the hardness of $3$-coloring given by \Cref{cor:3ColNA}.
\begin{lemma}\label{lem:3ColLastPath}
    Assume all nodes at levels strictly smaller than $k$ output $D$. Then, the nodes in the path at level $k$ need to spend $\Omega(\log^* \ell_k)$ rounds on average.
\end{lemma}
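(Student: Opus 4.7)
The plan is to reduce from the node-averaged hardness of $3$-coloring on paths given by \cref{cor:3ColNA}. Suppose, toward a contradiction, that under the hypothesis of the lemma the level-$k$ nodes spend only $o(\log^* \ell_k)$ rounds on average on \emph{every} $k$-hierarchical lower bound graph. I would turn any such algorithm $\mathcal{A}$ into a $3$-coloring algorithm $\mathcal{A}'$ for paths of length $\ell_k$ with the same node-averaged complexity, contradicting \cref{cor:3ColNA}.

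Given any ID-labeled path $P$ of length $\ell_k$ with IDs from a polynomial range, I would \emph{virtually} embed $P$ as the level-$k$ path of a $k$-hierarchical lower bound graph $G$, with the same $\ell_1,\ldots,\ell_{k-1}$ fixed in the proof of \cref{lem:lb-3half-noweights}. Since $\prod_{i<k}\ell_i$ is polylogarithmic in $n$ and hence in $\ell_k$, the total size of $G$ is polynomial in $\ell_k$, so I can assign virtual IDs to the attached subtrees via a deterministic rule (for instance, interleaving the ID of the nearest level-$k$ node with the position of the virtual node inside its subtree) so that the IDs of $G$ are pairwise distinct and lie in $\{1,\ldots,\operatorname{poly}(\ell_k)\}$. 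Each level-$k$ node $v\in P$ can then reconstruct its entire attached hierarchy, including all IDs, from its own ID, with no communication.

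Each node $v\in P$ would then simulate $\mathcal{A}$ on $G$. The key structural observation is that the $r$-ball of a level-$k$ node in $G$ is fully determined by its $r$-ball in $P$ together with the deterministic construction, because every attached subtree is \emph{pendant} and any path between two distinct level-$k$ nodes in $G$ passes through $P$. Thus $r$ rounds of communication along $P$ suffice to simulate $r$ rounds of $\mathcal{A}$ on $G$, with matching per-node runtimes. By the hypothesis of the lemma, all nodes at levels strictly smaller than $k$ output $D$, so the constraints of $k$-hierarchical $3\frac{1}{2}$-coloring force every level-$k$ node of $G$ to output a label in $\{R,G,Y\}$ inducing a proper $3$-coloring of $P$. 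Hence $\mathcal{A}'$ is a valid $3$-coloring algorithm on paths whose node-averaged complexity equals the average runtime of $\mathcal{A}$ over level-$k$ nodes of $G$. Taking the worst case over $P$ and applying \cref{cor:3ColNA} yields the desired $\Omega(\log^* \ell_k)$ lower bound.

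The main obstacle is making the simulation genuinely local. Concretely, one must fix a virtual ID assignment that is simultaneously (i) injective over all of $G$, (ii) contained in a polynomial range so that the $3$-coloring lower bound of \cref{cor:3ColNA} applies, and (iii) computable by each level-$k$ node from its own ID alone, so that the pendant attached hierarchy adds no communication cost. Once this bookkeeping is in place, the simulation step is routine and the lower bound drops out from \cref{cor:3ColNA} applied to the worst-case path of length $\ell_k$.
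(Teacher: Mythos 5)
Your high-level plan matches the paper's: reduce to the node-averaged hardness of $3$-coloring on paths (\Cref{cor:3ColNA}) by having each level-$k$ path node virtually attach a lower-level tree and simulate $\mathcal{A}$. But there is a genuine gap at the step where you invoke ``by the hypothesis of the lemma, all nodes at levels strictly smaller than $k$ output $D$.'' The hypothesis is not a universal property of the algorithm; it asserts that the specific instances produced by the second branch of \Cref{lem:slow-or-d} have the $D$-output property. It does not say that on an arbitrary lower-bound graph with an arbitrary ID assignment, $\mathcal{A}$ will have all lower-level nodes decline. Your construction fabricates virtual IDs for the attached trees by a deterministic rule (interleaving the level-$k$ ID with positions). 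Nothing forces $\mathcal{A}$ to output $D$ on those trees: the algorithm could decide, for your particular ID pattern, to $2$-color some level-$i$ path with $W,B$, which would then let some level-$k$ nodes output $E$, so the simulation would not produce a $3$-coloring of $P$ and $\mathcal{A}'$ would be incorrect.

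The paper avoids this by drawing the attached trees not from a synthetic ID scheme but from the concrete instances guaranteed by \Cref{lem:slow-or-d}: it fixes an injective map $f$ from IDs to those instances, and node $v$ attaches the tree $T_v$ hanging below a level-$k$ node of $f(v)$. The crucial point is that the $D$-output on those trees is \emph{locally forced}: in the construction of \Cref{lem:slow-or-d}, each level-$i$ path contains a node that terminates within $\ell_i/10$ rounds and whose view stays inside its own path, so it outputs $D$ regardless of how the rest of the graph looks, and $D$ then propagates along the path by the $2\frac12$-coloring constraint. Because these trees are ported verbatim (same structure, same IDs), the same argument forces $D$ in the combined virtual instance. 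The disjointness of the ID sets across the instances of \Cref{lem:slow-or-d} is also what lets the virtual graph have globally unique IDs, which your proposal instead tries to arrange by hand. To repair your proof you would need to replace the ad hoc ID interleaving with exactly this step: pull the attached subtrees from the instances where $D$ is provably forced, rather than from a freshly invented ID labeling.
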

\begin{proof}
Assume the ID space in which $\mathcal{A}$ runs is $\{1,\ldots,n^c\}$.
Let $f$ be an injective function that maps the IDs from $\{1,\ldots,n^{c-1}\}$ into instances given by \Cref{lem:slow-or-d}. 
For $c$ large enough, the $3$-coloring problem on paths has still $\Omega(\log^* n)$ deterministic node-averaged complexity even if the IDs are in the range $\{1,\ldots,n^{c-1}\}$. We now show that we can use $\mathcal{A}$ to solve $3$-coloring on paths, assuming the IDs are in $\{1,\ldots,n^{c-1}\}$, by using a standard simulation argument. Given a $3$-coloring instance of size $\ell_k$, nodes create a virtual instance of $k$-hierarchical $3\frac{1}{2}$-coloring as follows. Each node $v$ takes an arbitrary node $u$ of degree $2$ in the path of level $k$ in $f(v)$, and takes the tree $T_v$ induced by all nodes of lower layers reachable from $u$. Each node $v$ pretends to be connected not only with its neighbors in the path, but also to $T_v$. By simulating $\mathcal{A}$ on this instance, we obtain that all nodes in the virtual trees output $D$, and hence we get a $3$-coloring in the path. Since the nodes in the path are $\ell_k$, and since by \Cref{cor:3ColNA} the $3$-coloring problem on a path of length $\ell_k$ requires $\Omega(\log^* \ell_k)$, the claim follows.
\end{proof}

Since $\log^* \ell_k \in \Omega(\log^* n)$,  for some large-enough constant $c$ we obtain the following lower bound on the node-averaged complexity of $\mathcal{A}$:
\[
\frac{1}{n} \cdot \ell_k \cdot c \log^* n \ge \frac{1}{2c} \frac{\log^* n}{t^{2^{k-1}-1}} = \frac{1}{2c} t,
\]
which is a contradiction.

\begin{figure}
	\centering
	\includegraphics[width=0.6\textwidth]{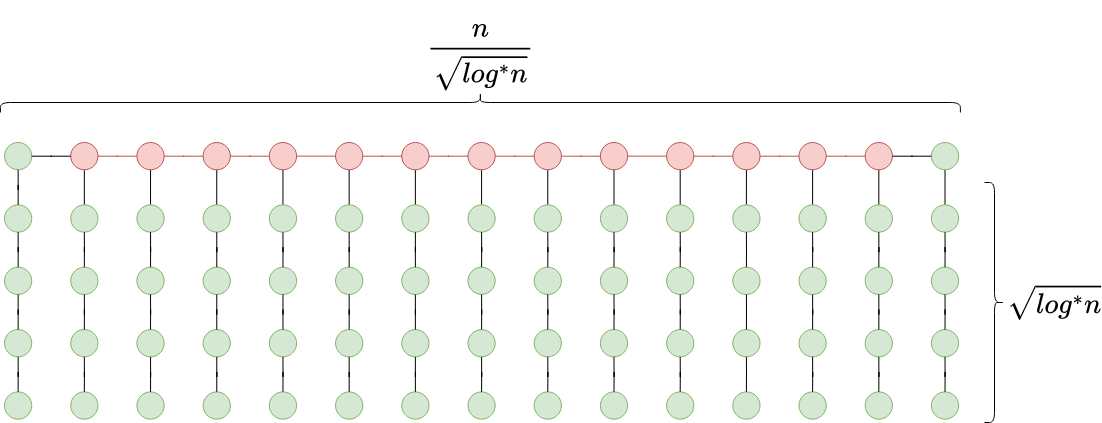}
	\caption{The figure illustrates the lower bound graph for $2$-hierarchical $3\frac{1}{2}$-coloring. All nodes of level 1 are colored green and all nodes of level 2 are colored red. The nodes in level 1 form paths of length $\sqrt{\log^*n}$ (and $\sqrt{\log^*n}+1$ for the left and rightmost). The level 2 nodes are just one long path of length $\frac{n}{\sqrt{\log^*n}} - 2$}
	\label{fig:lowerBoundGraph}
\end{figure}

This, together with \cref{lem:UB3.5Col}, concludes the proof of \cref{thm:3.5Col}. In the next section, we will see a way of extending $2\frac{1}{2}$- and $3\frac{1}{2}$-coloring in a way that turns them into something like a weighted version of original problems. Here weighted means weighted in terms of the node-averaged complexity. The idea is to attach some special \emph{weight} nodes to every normal node that has to wait until the normal node terminates. We start with doing this for $2\frac{1}{2}$-coloring, as there the analysis is easier.

\section{Weighted Problems}\label{sec:weighted-problems}
In this section, we introduce weighted versions of the problems of $2\frac{1}{2}$- and $3\frac{1}{2}$-coloring.
These problems will be instrumental in showing the density of non-empty complexity classes in the polynomial and the sub-$\log^* n$ regimes.

Formally, we will denote the weighted versions of $k$-hierarchical $2\frac{1}{2}$- and $3\frac{1}{2}$-coloring by $\Pi^Z_{\Delta, d, k}$, where $Z \in \{ 2\frac{1}{2}, 3\frac{1}{2} \}$ indicates whether the problem is a weighted version of $2\frac{1}{2}$- or $3\frac{1}{2}$-coloring, and the other three parameters $\Delta$, $d$, and $k$ indicate an upper bound for the maximum degree of the considered graphs, a parameter for tuning the complexity of the problem, and the parameter $k$ from the definition of $k$-hierarchical $2\frac{1}{2}$- or $3\frac{1}{2}$-coloring, respectively.


\begin{definition}\label{def:WeightedColoring}
Let $\Delta$, $d$, and $k$ be positive integers satisfying $\Delta \geq d + 3$, and let $Z \in \{ 2\frac{1}{2}, 3\frac{1}{2} \}$. 

The LCL $\pidd^Z$ has input label set $\sinn := \{ \act, \wei \}$, i.e., each node is labeled with either $\act$ or $\wei$. 
In the former case, we call the node an \emph{active node}, in the latter we call the node a \emph{weight node}.

Each active node has to output a label from $\Sigma^{\act}_{\out}$, where $\Sigma^{\act}_{\out}$ is the output label set of $k$-hierarchical $Z$-coloring.
Each weight node $v$ has to output a label from the set $\{ \dec, \con, \cop \}$.
If $v$ outputs $\cop$, then it has to additionally output a label from $\Sigma^{\act}_{\out}$.
We will call this additional output label from $\Sigma^{\act}_{\out}$ the \emph{secondary} output of $v$.

The output is correct if it satisfies the following properties.
\begin{enumerate}
    \item \label{prop:twoand} The connected components induced by active nodes satisfy the constraints of $k$-hierarchical $Z$-coloring provided in \Cref{def:2.5Col} and \Cref{def:3.5Col}.
    \item \label{prop:conorcop} Each weight node that is adjacent to at least one active node must output $\con$ or $\cop$.
    \item \label{prop:con} For each weight node $v$ that outputs $\con$, at least two neighbors of $v$ are active or output $\con$ as well. (To be clear, if $v$ has one active neighbor and one weight neighbor that outputs $\con$, then this property is satisfied.)
    \item \label{prop:allbutd} For each node $v$ that outputs $\cop$, at most $d$ neighbors of $v$ output $\dec$.
    \item \label{prop:copylab} If a weight node $v$ that outputs $\cop$ has an active neighbor, then the secondary output of $v$ is identical to the output of at least one active neighbor of $v$. Moreover, for any two adjacent weight nodes $v, w$ that both output $\cop$, the secondary output of $v$ is identical to the secondary output of $w$.
\end{enumerate}
\end{definition}

As the constraints of $k$-hierarchical $Z$-coloring mentioned in Property~\ref{prop:twoand} as well as all other properties mentioned in the problem description only depend on a constant-radius neighborhood of the respectively considered nodes, the problem $\pidd^Z$ is an LCL problem.

We start with some intuition on these problems. We use inputs to decide which nodes are weight nodes and which nodes have to solve $2\frac{1}{2}$-coloring (resp. $3\frac{1}{2}$-coloring). Property~\ref{prop:conorcop} and \ref{prop:copylab} ensure that weight nodes don't simply all output $\dec$. Notice that in a disconnected component consisting of just weight nodes, they can all simply output $\dec$. However, if there is at least one active node adjacent, then Property~\ref{prop:conorcop} creates at least one weight node with $\con$ or $\cop$.\\
We first ignore $\con$ labels and just think about a component of weight nodes $C$ with just a single active node $v$ adjacent to $C$. As a result of Property~\ref{prop:conorcop}, the weight nodes adjacent to $v$ all have to output $\cop$, and by Property~\ref{prop:copylab} they must have as secondary output the same output as $v$. Furthermore, this output \emph{spreads} through $C$ because nodes adjacent to a $\cop$ node also have to output $\cop$. As a result all of the nodes in $C$ must wait for $v$ to decide on its output label and can only then propagate this output throughout $C$. This will cause very long dependency chains and make this propagation take very long. To avoid very long dependency chains we allow some nodes to output $\dec$ in Property~\ref{prop:allbutd} and exclude $\dec$ nodes from the need to propagate the labels further.
To understand Property~\ref{prop:con} and the $\con$ label we have to consider components of weight nodes with multiple active neighbors, for example a path of 4 nodes with both endpoints being active nodes and the two middle nodes being weight nodes. If we didn't have the $\con$ label, both the middle nodes would have to output $\cop$ and have as secondary output the labels that the endpoints chose. However, if these two labels are different we would get a contradiction to Property~\ref{prop:copylab}. So if we consider a larger component of weight nodes with multiple active nodes adjacent to it, we can use the $\con$ label to \emph{connect} different active nodes that are too close together. By Property~\ref{prop:con} the nodes that output $\con$ are paths between active nodes and nothing else, so this is the only use case for this $\con$ label. We will later see, in the upper bound, that if these active nodes are sufficiently far apart we don't need to use $\con$, and in our lower bound constructions weight components will only ever be adjacent to one active node, so the label $\con$ can never be used.

\paragraph{Tour of the lower bound.}
In order to show the effectiveness of our construction, we start by showing that if we attach $W$ weight nodes in a balanced $\Delta$-regular tree to a single active node $v$, then $W^x$ of these weight nodes have to actually output $v$'s output as secondary output (\cref{lem:weightOfTree}). Here $x$ denotes some efficiency factor that is based on the parameters chosen. We then elaborate on this by showing that if we instead spread these $W$ nodes evenly in weight trees attached to $\ell$ different active nodes, we have that $\ell (\frac{W}{\ell})^x$ many nodes must copy an output from an active node (\cref{cor:EvenWeightWorst}). We use these insights to extend the classical lower bound constructions for these problems by putting a linear amount of weight evenly distributed on nodes of each level (\cref{def:WeightedConstruction}). Then, in order to actually prove our lower bounds, we first find ID assignments for these problem instances such that the algorithm behaves in a predictable way, such that it either completely colors all nodes of a level, or none (\cref{lem:SameBehaviourLowerRegime}). We then calculate how much time it would cost to color each level as a function of the length of the paths given by parameters $n^{\alpha_i}$ for level $i$ (\cref{cor:PolyRegimeTerms}). We get an optimisation problem that makes it so that no matter which level the algorithm decides to color, it should take as long as possible (\cref{cor:OptGivesLB}). 
We obtain an analytic solution that gives us the length of the paths as a function of the efficiency factor $x$, and obtain a lower bound for every choice of parameters $\Delta, d, k$ (\cref{lem:OptValues}).

\paragraph{Tour of the upper bound.}
We start by restating the problem of outputting the correct weight labels as its own problem, and then develop an algorithm that only takes care of solving this weight problem.
The guarantees that this algorithm is able to achieve matches the efficiency factor from the lower bound section (\cref{lem:copynumber}). 
We then use the generic algorithm from \cref{sec:khierarchical312} to allow the active nodes to compute a correct solution to the $2\frac{1}{2}$-coloring problem (resp. $3\frac{1}{2}$).
To upper bound the node-averaged complexity, we calculate how much time would be spent coloring each level based on the parameters of the generic algorithm and assuming a worst case distribution of weight. We then see that, with the choice of parameters given by the lower bound section, we obtain a tight upper bound (Proof of \cref{thm:UpperUpperBound}).

\section{Weighted Lower Bounds}\label{sec:weightedlower}
For the rest of this section, fix $k,\Delta,d$ to be some integer constants, such that $\Delta \geq d +3$. 
Since not necessarily all of the weight nodes need to copy a label from an active node, the efficiency of the weight decreases. The next lemma gives a lower bound on how many nodes have to wait for active nodes. 

\begin{lemma}\label{lem:weightOfTree}
    Let $\Delta,k,d\geq 2$ be constant such that $\Delta\geq d+3$. Consider the LCL $\Pi^{Z}_{\Delta,d,k}$ for any $Z$. Consider an \emph{active} node $v$ with a balanced $\Delta$ regular tree $T_w$ of $w$ weight nodes attached to it, then at least $w^{\log(\Delta -1 -d)/\log(\Delta -1)}$ of these weight nodes have to output $\cop$ and have as secondary output the same output as $v$.
\end{lemma}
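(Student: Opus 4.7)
The plan is to show (i) that no weight node in $T_w$ can output $\con$, (ii) that starting from the root, the $\cop$ label propagates downward with a branching factor of at least $\Delta-1-d$ at every level, and (iii) convert the resulting level-by-level count to a bound in terms of $w$.

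First I would rule out $\con$ labels. Consider an arbitrary connected component $C$ of $\con$-labeled weight nodes in $T_w$. Because $T_w$ is a tree, so is $C$. By Property~\ref{prop:con}, every node of $C$ needs at least two neighbors that are either active or in $C$. The only active node adjacent to any vertex of $T_w$ is $v$, and the only vertex of $T_w$ adjacent to $v$ is the root. Hence every non-root vertex of $C$ has $0$ active neighbors and therefore needs $\geq 2$ neighbors in $C$, i.e.\ degree $\geq 2$ in $C$. If $|C| \geq 2$, then $C$ (as a tree) has at least two leaves; each such leaf has degree $1$ in $C$, so it would need an active neighbor, forcing it to be the root — contradicting that there are two of them. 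If $|C|=1$, the lone vertex would need $\geq 2$ active neighbors, again impossible. Thus no weight node in $T_w$ outputs $\con$, and each weight node outputs either $\cop$ or $\dec$.

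Next I would establish the branching argument. By Property~\ref{prop:conorcop}, the root $r$ of $T_w$ must output $\cop$ (it is adjacent to the active node $v$ and cannot output $\con$). By Property~\ref{prop:copylab}, its secondary output equals the output of $v$. Define $S_i$ to be the set of weight nodes at depth $i$ in $T_w$ that output $\cop$ and whose ancestors in $T_w$ all output $\cop$. Then $S_0 = \{r\}$. For the inductive step, let $u \in S_i$. The parent of $u$ (which is either $v$ or a $\cop$ node in $S_{i-1}$) is not labeled $\dec$. By Property~\ref{prop:allbutd}, at most $d$ of $u$'s $\Delta$ neighbors output $\dec$, so among its $\Delta-1$ children at least $\Delta-1-d$ are not $\dec$; having ruled out $\con$, they all output $\cop$ and thus lie in $S_{i+1}$. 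Hence $|S_{i+1}| \geq (\Delta-1-d)\,|S_i|$, giving $|S_i| \geq (\Delta-1-d)^i$. Moreover, iteratively applying the ``adjacent $\cop$ weight nodes share their secondary output'' clause of Property~\ref{prop:copylab} along the root-to-$u$ path shows that every node in $S_i$ has secondary output equal to $v$'s output.

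Finally I would convert depths to the bound in $w$. Let $h$ be the depth of the balanced tree $T_w$; since the root has $\Delta - 1$ children and every internal node has $\Delta-1$ children, one has $w = \Theta((\Delta-1)^h)$, equivalently $(\Delta-1)^h = \Theta(w)$. Summing over all depths gives
\[
\sum_{i=0}^{h} |S_i| \;\geq\; \sum_{i=0}^{h}(\Delta-1-d)^i \;\geq\; (\Delta-1-d)^h \;=\; \bigl((\Delta-1)^h\bigr)^{\log(\Delta-1-d)/\log(\Delta-1)} \;=\; \Omega\!\left(w^{\log(\Delta-1-d)/\log(\Delta-1)}\right),
\]
which is the required bound (absorbing the constant from $w = \Theta((\Delta-1)^h)$ into the inequality, or by taking the balanced tree with $w$ exactly a power of $\Delta-1$ so the root of the estimate is tight).

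The only genuinely delicate step is ruling out $\con$: one has to notice that the constraint ``$\geq 2$ active or $\con$ neighbors'' forces a $\con$-component to have enough ``boundary'' access to active vertices, and that a tree hanging off the single active node $v$ simply does not have enough. Once that is done, the rest is a clean geometric-branching count followed by the standard change of base to express $(\Delta-1-d)^h$ as a power of $w$.
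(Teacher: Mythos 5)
Your proof is correct and follows the same approach as the paper's: root $T_w$ at the unique weight node adjacent to $v$, argue that $\con$ cannot appear in $T_w$ since $v$ is the only active node attached to it, and observe that each $\cop$ node (starting from the root, which is forced to $\cop$ by Property~\ref{prop:conorcop}) propagates $\cop$ to at least $\Delta-1-d$ of its $\Delta-1$ children by Property~\ref{prop:allbutd}, giving $(\Delta-1-d)^h$ such nodes across the height $h \approx \log_{\Delta-1} w$ of the tree before the change-of-base step. The one place you add substance is the leaf-counting argument ruling out $\con$-components, which the paper asserts in a single sentence without justification; otherwise the branching count, the secondary-output propagation via Property~\ref{prop:copylab}, and the final algebra are the same.
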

\begin{proof}
Let $r$ be the weight node that is directly adjacent to $v$, think of the tree of weight nodes as rooted at $r$. Since $v$ is the only \emph{active} node attached to $T_w$, none of the nodes in $T_w$ can output $\con$. Consider some non-leaf node $u$ in this tree of weight nodes. It has to have one edge towards its parent (or for $r$ one edge towards $v$) and then $\Delta -1$ children. As a result this tree has height $\log_{\Delta -1}(w)$. 
Now if $u$ does not output $\dec$, then  $\Delta -1 -d$ of $u$'s children have to copy the output of $u$. Since $r$ is directly adjacent to an active node it has to output $\cop$ and use $v$'s output as secondary output. Then at least  $\Delta -d -1$ of $r$'s children have to also copy the output, then $\Delta -1 -d$ of the childrens children have to do the same and so on. As a result the number of nodes copying the output of $v$ is at least:
\begin{gather*}
(\Delta -1 -d)^{\log_{\Delta -1}(w)} = (\Delta - 1)^{\log_{\Delta-1}(\Delta -1 - d) \cdot \log_{\Delta -1}(w)} = \\
w^{\log_{\Delta-1}(\Delta -1 - d)} = w^{\log(\Delta -1 -d)/\log(\Delta -1)}\qedhere
\end{gather*}
\end{proof}

From now on, let $x = \frac{\log(\Delta -d - 1)}{\log(\Delta - 1)}$. Notice that the more weight we put on a tree, the less efficient it becomes, so to maximize the efficiency of our weight it makes sense to distribute it as evenly as possible. Consider if we attach a tree of $w/\ell$ weight nodes to $\ell$ active nodes. By \Cref{lem:weightOfTree} for each of these trees $(w/\ell)^{x}$ nodes have to copy the output of their respective active node. Now clearly the total amount of copying nodes is:
\[
(w/\ell)^{x} \ell = w^x \ell^{1-x} > w^x
\]

\begin{corollary}\label{cor:EvenWeightWorst}
Let $x = \log(\Delta -1 -d)/\log(\Delta -1)$. By distributing $w$ weight nodes evenly among $\ell$ $\Delta$-regular trees, each attached to an active node the amount of nodes that have to output $\cop$ is $w^x \ell^{1 - x}$. Additionally such an even distribution results in the maximum amount of nodes copying.
\end{corollary}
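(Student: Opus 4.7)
My plan is to handle the two assertions in the corollary separately, both by reducing to the per-tree bound of \Cref{lem:weightOfTree}.

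For the first claim, I would apply \Cref{lem:weightOfTree} independently to each of the $\ell$ balanced $\Delta$-regular trees. Since each tree contains $w/\ell$ weight nodes and is attached to a distinct active node (so no $\con$ labels can appear, and the per-tree argument of \Cref{lem:weightOfTree} applies unchanged), at least $(w/\ell)^x$ of its weight nodes must output $\cop$. Summing over the $\ell$ trees yields
\[
\ell \cdot (w/\ell)^x \;=\; w^x \ell^{1-x},
\]
which is exactly the stated quantity.

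For the second claim, I would consider an arbitrary distribution $w_1, \dots, w_\ell \ge 0$ with $\sum_i w_i = w$, and apply \Cref{lem:weightOfTree} tree-by-tree to get the total lower bound $\sum_{i=1}^\ell w_i^x$ on the number of forced copying nodes. The goal is to maximize this sum subject to $\sum_i w_i = w$. The key observation is that $x = \log(\Delta-1-d)/\log(\Delta-1) \in (0,1)$: the assumption $\Delta \ge d+3$ gives $\Delta-1-d \ge 2 > 0$, while $d \ge 2$ gives $\Delta-1-d < \Delta-1$. Hence $t \mapsto t^x$ is strictly concave on $[0,\infty)$, and Jensen's inequality yields
\[
\frac{1}{\ell}\sum_{i=1}^\ell w_i^x \;\le\; \left(\frac{1}{\ell}\sum_{i=1}^\ell w_i\right)^x \;=\; (w/\ell)^x,
\]
so $\sum_i w_i^x \le \ell^{1-x} w^x$, with equality precisely when $w_1 = \dots = w_\ell = w/\ell$. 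This shows that the even distribution indeed maximizes the number of weight nodes forced to copy.

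I do not expect any real obstacle here, since the corollary is essentially the per-tree bound of \Cref{lem:weightOfTree} combined with concavity of $t^x$ for $x \in (0,1)$. The only detail worth double-checking is that the parameter regime $\Delta \ge d+3$, $d \ge 2$ genuinely forces $x \in (0,1)$ (so that Jensen's gives the right direction), which is immediate from the inequalities above.
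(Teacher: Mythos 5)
Your proposal matches the paper's argument exactly: the bound $w^x\ell^{1-x}$ follows by applying \Cref{lem:weightOfTree} to each of the $\ell$ trees and summing, and the maximality claim follows from Jensen's inequality applied to the concave function $t\mapsto t^x$ with $x\in(0,1)$. No issues.
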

Notice that the additionally part is a direct consequence of Jensen's Inequality for the concave function $\phi(a) = a^x$ (for $x \leq 1$), if we split the weight into $\ell$ parts $w_1, \dots, w_\ell$:
\[
\phi\left(\frac{\sum w_i}{\ell}\right) \geq \frac{\sum\phi(w_i)}{\ell}. \text{ So, we get that: } \ell \phi\left(\frac{w}{\ell}\right) \geq \sum \phi(w_i)
\]

Equipped with the knowledge from these last two results, we now know how to extend the lower bound construction \cref{def:LowerBoundGraph}. We define a new weighted version of it, illustrated in \cref{fig:WLBConstruction}. The idea is to use $\frac{n}{k}$ nodes to create the lower bound graph from \cref{def:LowerBoundGraph}. In that construction the number of nodes of level 1 is already linear in $n$, so we do not need additional weight there. For the rest of the levels, we balance things out, by putting $\frac{n}{k}$ weight nodes on the nodes of level $2, \ldots k$. As a result every layer has a linear amount of weight. Note that since the number of nodes in lower levels is significantly larger, this also means that the weight is more efficient. We distribute this weight evenly in balanced $\Delta$-regular trees, so we get the bounds from \cref{cor:EvenWeightWorst}. This gives us our new construction.

\begin{definition}[Weighted Construction]\label{def:WeightedConstruction}
Given a target size of $n$ and parameters $\ell_1, \ldots, \ell_k$ such that $\prod_{1\leq i\leq k}\ell_i = n$, define $n' = n/k$, then start with the counterexample graph $G'$ from \cref{def:LowerBoundGraph} with $n'$ nodes, by setting $\ell_i' = \ell_i \cdot \frac{1}{k^{1/k}}$ (Then $\prod_{1\leq i\leq k} \ell_i' = n'$).\\
Define $L_1, \ldots L_k$ to be sets of nodes such that $L_i$ is exactly the set of level $i$ nodes in $G'$ ($V(G') = \bigcup_{1 \leq i \leq k} L_i$). For each of $L_2, \ldots, L_k$ distribute $\frac{n}{k}$ nodes evenly in balanced $\Delta$-regular trees, each attached to one of the nodes in $L_i$. This is our weighted construction $G$, with a total of $n$ nodes, $\frac{n}{k}$ for $G'$ and then $k-1$ times $\frac{n}{k}$ for the weight nodes. \newline
\textbf{The Problem Instance:} To obtain an instance of $\Pi^{Z}_{\Delta,d,k}$, we have to assign input labels $\act,\wei$. All nodes in $G'$ get input label $\act$, all of the other nodes in the trees get input label $\wei$. As a result we have a valid instance for $\Pi^{Z}_{\Delta,d,k}$.
\end{definition}

\begin{figure}
	\centering
	\includegraphics[width=0.6\textwidth]{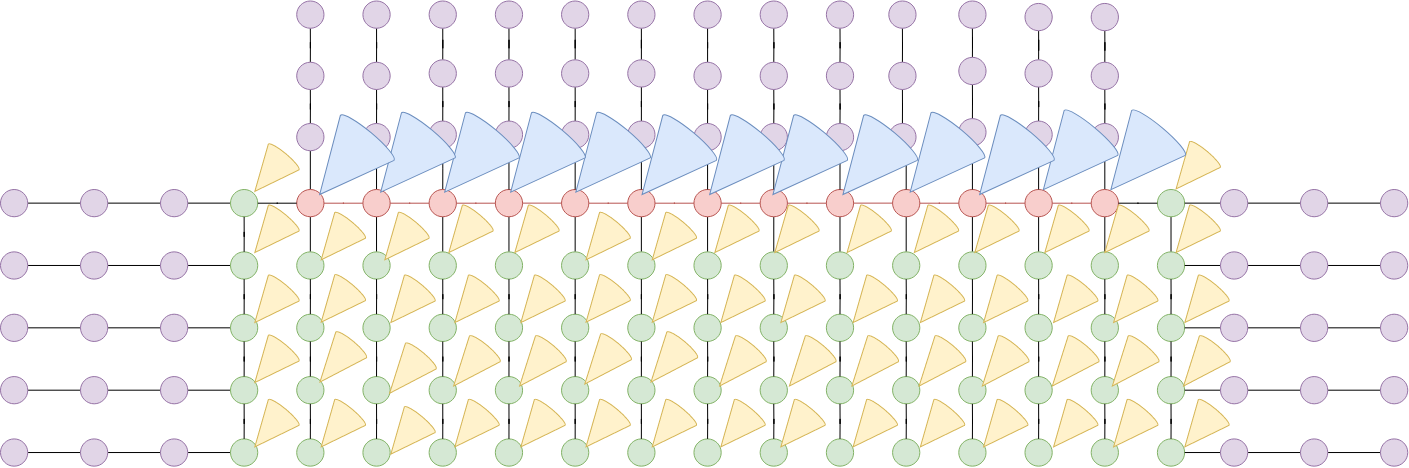}
	\caption{Showing the construction for $k=3$. The red nodes have level 3, the green nodes level 2 and the purple nodes have level 1 (Some purple paths from the green nodes in the middle are omitted). No trees are attached to level 1 nodes, but every level 2, or 3 node has a tree attached to it. The trees attached to level 3 nodes are larger, as the same number of $\frac{n}{k}$ nodes is evenly distributed among less nodes.}
	\label{fig:WLBConstruction}
\end{figure}

Crucially, our lower bound construction contains the original lower bound graph from \cref{def:LowerBoundGraph}. As a result any algorithm that correctly solves $\Pi^{Z}_{\Delta,d,k}$ inside any graph from \cref{def:WeightedConstruction} also solves either $k$-hierarchical $2\frac{1}{2}$-coloring, or $k$-hierarchical $3\frac{1}{2}$-coloring in $G'$.
So for any algorithm that correctly solves $\Pi^{3.5}_{\Delta,d,k}$, we can apply \cref{lem:slow-or-d}.
\begin{lemma}\label{lem:SameBehaviourLowerRegime}
Consider any correct deterministic algorithm that solves $\Pi^{3.5}_{\Delta,d,k}$, given any parameters $\ell_1, \ldots, \ell_k$ and an instance of the Weighted Construction. Then one of the following is true:
\begin{itemize}
    \item There exists an assignment of IDs to the Weighted Construction and a value $i \in \{1, \ldots, k-1\}$, such that at least half of the nodes of level $i$ run for at least $\ell'_i/10$ rounds.
    \item There exists an assignment of IDs to the Weighted Construction, such that for all of $i \in \{1, \ldots, k-1\}$ all nodes of level $i$ output $D$. 
\end{itemize}
\end{lemma}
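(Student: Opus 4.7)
The plan is to follow closely the argument used for Lemma~\ref{lem:slow-or-d}, but carried out inside the Weighted Construction rather than the plain lower bound graph. Fix the ID space $\mathcal{S}=\{1,\dots,n^c\}$ and partition it into $n^{c-1}$ blocks of size $n$; each block yields an instance of the Weighted Construction, and we denote this initial family by $I_0$. We will construct, by induction on $i$, families $I_i \subseteq I_{i-1}$ of size at least $|I_0|/4^i$ on which $\mathcal{A}$ forces every active node of level $\le i$ to output $D$. The first conclusion of the lemma corresponds to the induction failing at some level.

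In the inductive step, run $\mathcal{A}$ on every instance in $I_{i-1}$ and distinguish two cases. If some instance has at least half of its level-$i$ active nodes running for at least $\ell'_i/10$ rounds, we obtain the first alternative of the lemma immediately. Otherwise, in every instance at least half of the level-$i$ active nodes terminate within $\ell'_i/10$ rounds, and the central claim is that any such fast node $u$ must output $D$. By the inductive hypothesis, all lower-level neighbors of $u$ output $D$, so $u$ cannot output $E$; since $i<k$, its only remaining output options are $W$, $B$, and $D$. If $u$ were to output $W$ or $B$, we would pick a second fast node $v$ with disjoint $\ell'_i/10$-view and splice the two local neighborhoods into a single level-$i$ path; by tuning the parity of the splice length we can force $u$ and $v$'s outputs to be incompatible with any proper $2$-coloring, and the analogous splice gives a direct contradiction if one of the two outputs $D$ and the other outputs $W$ or $B$. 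Crucially, the splice yields another valid Weighted Construction because all weight trees attached along a level-$i$ path are identical balanced $\Delta$-regular trees with the same input labels and stay in place under the rerouting.

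The counting step is then exactly as in Lemma~\ref{lem:slow-or-d}: if $p$ denotes the total number of level-$i$ paths, at least $p\ell'_i/2$ fast nodes output $D$, at most $2p\ell'_i/10$ of them lie within $\ell'_i/10$ of a path endpoint, and hence more than $p/4$ distinct paths host a certified fast $D$-outputting node interior to its path. Grouping $I_{i-1}$ in fours and gluing via the standard reduction produces at least $|I_{i-1}|/4$ instances in $I_i$ where, in addition to all lower-level outputs being $D$, every level-$i$ output is $D$ as well. After $k-1$ steps we get at least one instance in $I_{k-1}$ witnessing the second alternative of the lemma.

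The main obstacle I expect is the bookkeeping: we must confirm that each splice or gluing operation actually produces an instance of the Weighted Construction with pairwise disjoint ID sets. Disjointness is guaranteed by the initial partition of $\mathcal{S}$ into $n^{c-1}$ blocks, so that instances never reuse identifiers. Validity of the Weighted Construction under splicing reduces to the observation that, along any level-$i$ path, the attached weight subtrees are isomorphic as $\Delta$-regular trees with identically labeled inputs and with the same attached lower-level structures, so rerouting level-$i$ edges does not disturb them. This compatibility check is the only genuinely new ingredient compared to Lemma~\ref{lem:slow-or-d}; the indistinguishability-based coloring contradictions and the combinatorial counting carry over verbatim.
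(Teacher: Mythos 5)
Your proof is correct and follows the same reasoning the paper relies on. The paper gives no explicit proof of this lemma: it merely observes that the Weighted Construction contains the original lower bound graph $G'$ and then cites \cref{lem:slow-or-d}. What you have done is make that citation precise by re-running the inductive argument of \cref{lem:slow-or-d} inside the Weighted Construction, and you correctly isolate the one point that needs to be checked --- that two indistinguishable level-$i$ views can be spliced into a single valid Weighted Construction because the weight subtrees attached along any level-$i$ path are identical balanced $\Delta$-regular trees with identical input labels, so rerouting the active path preserves each node's $\ell'_i/10$-view.
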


Instead consider a randomised algorithm trying to solve $\Pi^{2.5}_{\Delta,d,k}$ in the weighted lower bound construction, then nodes cannot distinguish based on IDs, so the argumentation becomes a lot easier than the proof of \cref{lem:slow-or-d}.

\begin{lemma}\label{lem:SameBehaviourPolyRegime}
Consider any correct randomised algorithm that solves $\Pi^{2.5}_{\Delta,d,k}$, given any parameters $\ell_1, \ldots, \ell_k$ and an instance of the Weighted Construction. Then one of the following is true:
\begin{itemize}
    \item There exists a value $i \in \{1, \ldots, k-1\}$, such that at least a third of the nodes of level $i$ run for at least $\ell_i'/4$ rounds
    \item For all $i \in \{1, \ldots, k-1\}$ all nodes of level $i$ output $D$. 
\end{itemize}
\end{lemma}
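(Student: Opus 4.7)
The plan is to mirror the structure of Lemma~\ref{lem:slow-or-d}, inducting on $i \in \{1, \ldots, k-1\}$ with the inductive statement that either the first bullet of the lemma already holds for some $j \leq i$, or, with probability one, every node of level at most $i$ outputs $D$. The base case $i=0$ is vacuous. For the step I would assume that for each $j < i$ every level-$j$ node outputs $D$ almost surely; by Property~\ref{prop:twoand} the active part of the graph must satisfy the $k$-hierarchical $2\frac{1}{2}$-coloring rules, and since a level-$i$ node can output $E$ only when some lower-level neighbour outputs $W$, $B$, or $E$, every level-$i$ output then lies in $\{W,B,D\}$. Compared with the deterministic Lemma~\ref{lem:slow-or-d}, the argument will be cleaner because randomisation alone cannot break the $\{W,B\}$-symmetry of a $2$-colouring task.

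Assume from here that the first bullet fails at level $i$, so strictly more than two thirds of the level-$i$ nodes are \emph{fast} (expected running time below $\ell_i'/4$). The key structural observation I would establish next is that in the Weighted Construction every level-$i$ path has identical length $\ell_i'$ and carries an identical balanced $\Delta$-regular weight tree at each of its nodes, so two level-$i$ nodes occupying the same relative position in two different level-$i$ paths have isomorphic radius-$r$ views for every $r$ (with no distinguishing input labels). Since the algorithm is randomised and acts as a function of its view and private coins only, $\E[T_v]$ depends solely on the relative position inside a path, so the fast/slow partition is identical in every path and strictly more than two thirds of the nodes of each path are fast. Because at most $\ell_i'/2$ of a path's nodes can fail to be \emph{middle} (at distance $\geq \ell_i'/4$ from both endpoints), every path then contains fast middle nodes, and in particular two fast middle nodes $u,v$ in the same path at some odd distance in $[\ell_i'/2,\, 3\ell_i'/4]$.

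The radius-$\ell_i'/4$ views of such $u$ and $v$ are disjoint and isomorphic, so their outputs in the randomised algorithm are i.i.d.\ and independent. If $p := \Pr[u \text{ outputs } W] > 0$ then $\Pr[u = v = W] = p^2 > 0$; but every valid $2\frac{1}{2}$-colouring must give $u$ and $v$ different colours (odd distance), so with probability $p^2$ the algorithm violates the LCL, contradicting correctness. The same argument rules out $B$, so every fast middle node outputs $D$ almost surely. The rule from \Cref{def:2.5Col} forbidding a level-$i$ node labelled $W$ (resp.\ $B$) from being adjacent to a level-$i$ neighbour labelled $W$, $B$, or $D$ then propagates the $D$-label along each path: if some level-$i$ node in a path $P$ were labelled $W$ or $B$, walking from it towards a guaranteed fast middle $D$-node in $P$ would eventually meet a forbidden adjacency, since $E$ is excluded by the inductive hypothesis. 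Hence almost surely every level-$i$ node of every path outputs $D$, closing the induction.

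The main obstacle I expect is making the symmetry argument of the second paragraph fully precise, namely verifying that two level-$i$ nodes at the same relative position in two different paths really do have isomorphic views at every radius, which requires unpacking the recursion of \Cref{def:WeightedConstruction} and accounting for the fact that the weight trees attached to higher-level active nodes differ in size from those attached to lower-level ones. Once this positional invariance of views is in hand, the $2$-colouring contradiction is routine, and the $D$-propagation step follows immediately from inspecting the local constraints of $k$-hierarchical $2\frac{1}{2}$-coloring.
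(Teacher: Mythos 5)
Your proposal follows essentially the same route as the paper's (very terse) proof: for each level $i$, exploit the fact that nodes far from both endpoints of a level-$i$ path have indistinguishable $\Theta(\ell_i')$-hop views in the anonymous/randomized setting, so either they are all slow (giving the first alternative), or two such fast nodes at odd distance have i.i.d.\ outputs and, were $W$ or $B$ to appear with positive probability, the $2$-colouring constraint would be violated with positive probability. The two ingredients you add beyond what the paper records are genuinely needed: the induction showing all lower levels are $D$ almost surely is what rules out $E$ at level $i$ (without it, the step ``cannot output a consistent $2$-colouring, hence $D$'' is not valid since $E$ is also available to levels $\geq 2$), and the propagation of $D$ from the middle of a path to all of it is what upgrades ``the middle nodes output $D$'' to ``all level-$i$ nodes output $D$'', which is the form actually used downstream in \Cref{cor:OptGivesLB}.

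Two steps as you wrote them would not go through and need repair. First, two nodes at the same relative position in two different level-$i$ paths do \emph{not} have isomorphic radius-$r$ views for \emph{every} $r$ (for large $r$ they see different parts of the surrounding tree), so ``$\E[T_v]$ depends solely on the relative position'' is unjustified; since ``fast'' defined via expectation can then differ across paths at a fixed position, the claim that the fast/slow partition is identical in every path does not follow. The fact one actually wants (and the one the paper's proof rests on) is position- \emph{and} path-invariance of the $\ell_i'/4$-hop view for all nodes at distance $>\ell_i'/4$ from both endpoints of any level-$i$ path, which makes $\Pr[T_v < \ell_i'/4]$ and the output distribution conditioned on early termination the same for all of them, and lets you split into ``all middle nodes are slow with constant probability'' (contributing $\ge \ell_i'/2$ slow nodes per path) versus ``all are fast''. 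Second, your distance window $[\ell_i'/2,\ 3\ell_i'/4]$ cannot be achieved among two nodes that are each at distance $\geq \ell_i'/4$ from both endpoints (they span at most $\ell_i'/2$), so disjointness of the two views plus odd parity does not come for free at these constants; taking a smaller radius (say $\ell_i'/10$, with ``middle'' meaning distance $>\ell_i'/10$ from both endpoints, adjusting the lemma threshold accordingly) gives enough room to find two fast middle nodes at odd distance whose balls are disjoint.
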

\begin{proof}
Fix $i \in \{1, \ldots, k-1\}$ and let $P$ be an arbitrary path of level $i$ nodes. Then by construction $P$ has length $\ell_i'$ and furthermore the middle third of all of the nodes in $P$ have an indistinguishable $\frac{\ell'_i}{4}$ hop neighbourhood. As a result they all have to behave the same in the first $\frac{\ell'_i}{4}$ rounds. If they terminate in strictly less than $\frac{\ell'_i}{4}$ rounds, it means that they all have to follow the same output distribution. As a result, they cannot output a consistent 2-coloring and so for the algorithm to be correct, they have to output $D$. If instead they terminate after at least $\frac{\ell'_i}{4} \in \Omega(\ell'_i)$ time we are in exactly the other case.
\end{proof}

We can now use these results to obtain lower bounds on the total time an algorithm would spend on nodes of level $i$, regardless of whether or not we are solving $\Pi^{2.5}_{\Delta,d,k}$, or $\Pi^{3.5}_{\Delta,d,k}$. Notice that regardless of whether we invoke \cref{lem:SameBehaviourLowerRegime}, or \cref{lem:SameBehaviourPolyRegime}, at least a third of the level $i$ nodes spend $\Omega(\ell')$ time. So in either case we can prove the following.

\begin{lemma}\label{lem:ColLowLvlBound}
Consider the nodes of some level $1\leq i < k$. If at least a third of all level $i$ nodes spend $\Omega(\ell'_i)$ time, then the total amount of time used is $\Omega \left( n^x \cdot (\prod_{i\leq j\leq k }\ell_j)^{1-x} \cdot \ell_i \right)$
\end{lemma}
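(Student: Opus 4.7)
The plan is to lower bound the total termination time by combining the contributions of the $\geq |L_i|/3$ slow active nodes of level $i$ with those of the weight nodes attached to them that are forced to copy the active node's label. By \Cref{cor:SizeOfLevels}, $|L_i| = \Theta(\prod_{i \leq j \leq k} \ell_j)$. For $i \geq 2$, the weighted construction distributes $n/k$ weight nodes evenly into balanced $\Delta$-regular trees of size $W_i = \Theta(n/|L_i|)$ attached to the $|L_i|$ active nodes; \Cref{lem:weightOfTree} then guarantees at least $W_i^x$ weight nodes per tree that must output $\cop$ with secondary output equal to the label of the corresponding active node.

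The main obstacle is to show that every such copying weight node $u$ attached to a slow active node $v$ also spends $\Omega(\ell_i')$ rounds. I will prove this by an indistinguishability argument: the hardness witness that produces the slow $v$ (from \Cref{lem:SameBehaviourLowerRegime} in the deterministic $3\tfrac{1}{2}$-coloring case, and from \Cref{lem:SameBehaviourPolyRegime} in the randomized $2\tfrac{1}{2}$-coloring case) shows that perturbing the IDs or inputs along the level-$i$ path at distance $\Omega(\ell_i')$ from $v$, on the side opposite to $u$'s tree, would change $v$'s output. Since $u$'s output must equal $v$'s, $u$'s view must reach that perturbation; as every shortest path from $u$ to the perturbation passes through $v$, the triangle inequality yields $T_u \geq T_v = \Omega(\ell_i')$. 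The subtlety is verifying that the perturbation can always be chosen on the path side of $v$ (rather than inside the weight tree), which the original constructions already ensure because the modifications used there live in the middle third of level-$i$ paths, and the attached trees are disjoint from those paths.

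Finally, summing, for each of the $|L_i|/3$ slow active nodes, its own $\Omega(\ell_i')$ contribution together with the $\Omega(W_i^x \cdot \ell_i')$ contribution from its attached copying weight nodes (for $i \geq 2$), the total termination time is at least
\[
\Omega\!\left(|L_i| \cdot W_i^x \cdot \ell_i'\right) \;=\; \Omega\!\left(|L_i|^{1-x} \cdot n^x \cdot \ell_i\right) \;=\; \Omega\!\left(\Big(\prod_{i \leq j \leq k}\ell_j\Big)^{\!1-x}\!\cdot n^x \cdot \ell_i\right),
\]
using $\ell_i' = \Theta(\ell_i)$ (since $k$ is constant) and the formula for $|L_i|$. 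For the boundary case $i=1$ no trees are attached, but the active nodes alone already contribute $\Omega(n\cdot\ell_1)$, which matches the claimed bound because $\prod_{1\leq j\leq k}\ell_j = \Theta(n)$ and hence $n^x \cdot (\prod_{1\leq j\leq k}\ell_j)^{1-x}\cdot \ell_1 = \Theta(n\cdot\ell_1)$.
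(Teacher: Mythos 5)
Your calculation tracks the paper's proof almost exactly: the paper also lower-bounds the number of copying weight nodes by $(\frac{n}{k})^x|L_i|^{1-x}$ via \Cref{cor:EvenWeightWorst} (which is the same count as your $|L_i|\cdot W_i^x$), asserts that a third of those copying nodes also spend $\Omega(\ell'_i)$ rounds because their attached active nodes do, and simplifies. Your separate treatment of $i=1$ is fine (the paper's formula numerically gives $\Theta(n\cdot\ell_1)$ there anyway, since no trees are attached to level-$1$ nodes).

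The one step you single out as ``the main obstacle'' is the one you should be careful about. You justify it by claiming that the hardness witness from \Cref{lem:SameBehaviourLowerRegime}/\Cref{lem:SameBehaviourPolyRegime} ``shows that perturbing the IDs or inputs along the level-$i$ path at distance $\Omega(\ell_i')$ from $v$ \dots would change $v$'s output,'' and then derive $T_u\geq\Omega(\ell'_i)$ by triangle inequality. But those lemmas do not establish any such perturbation sensitivity: their conclusions only say that a constant fraction of level-$i$ nodes run for $\Omega(\ell'_i)$ rounds (the alternative being that they all output $D$), not that a far-away change flips $v$'s output. A priori an algorithm could have $v$ run long ``wastefully'' while $v$'s output is already determined by its immediate neighborhood, in which case nothing at distance $\Omega(\ell'_i)$ affects $v$'s output and the triangle-inequality step has no perturbation to anchor to. So the specific claim you inject is not substantiated by the cited lemmas. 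To be fair, the paper itself treats this step informally, writing that the copying weight nodes ``have to wait'' for the active node without further justification; your proof is at essentially the same level of rigor as the paper's on this point, but the explicit (unsupported) perturbation claim is something you should either prove or replace with an indistinguishability argument that compares weight nodes attached to distinct middle-third active nodes directly, rather than asserting a consequence the hardness lemmas do not supply.
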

\begin{proof}
By \cref{cor:SizeOfLevels} $|L_i| \in \Omega(\prod_{i\leq j\leq k }\ell'_j)$. Since we distribute $\frac{n}{k}$ many nodes in balanced $\Delta$-regular trees of weight nodes, by \cref{cor:EvenWeightWorst} the amount of nodes that have to wait for nodes in $L_i$ to terminate is at least $(\frac{n}{k})^x|L_i|^{1-x}$. So if a third of the nodes in $|L_i|$ spend at least $\Omega(\ell'_i)$ time, then since the weight is distributed evenly, also a third of the $(\frac{n}{k})^x|L_i|^{1-x}$ nodes have to spend $\Omega(\ell'_i)$ time. So for the total amount of time spent, we get at least 
\begin{align*}
\left(\frac{n}{k}\right)^x|L_i|^{1-x} \cdot \Omega(\ell'_i) = \left(\frac{n}{k}\right)^x \left(\Omega\left(\prod_{i\leq j\leq k }\ell'_j\right) \right)^{1-x} \cdot \Omega \left( \frac{\ell_i}{k^{1/k}} \right) = \Omega \left( n^x \cdot \left(\prod_{i\leq j\leq k }\ell_j\right)^{1-x} \cdot \ell_i \right)
\end{align*}
The last inequality just comes from hiding a bunch of constants in the $\Omega$ notation. $\frac{1}{k^x}$ is a constant, because $k$ and $x$ are constants, the same goes for the $k-i + 1$ factors of $\frac{1}{k^{1/k}}$ from transforming the $\ell'_j$ into $\ell_j$.
\end{proof}

We still need something to argue about the level $k$ path, since \cref{lem:ColLowLvlBound} only holds for $i<k$.

\begin{lemma}\label{lem:ColHighLvlBound}
Computing a $2$-coloring of the level $k$ path  takes $\Omega(n^x \cdot \ell_k^{1-x} \cdot \ell_k)$ total time for any randomised algorithm and if we are in the second case of \cref{lem:SameBehaviourLowerRegime} then deterministically computing a $3$-coloring takes $\Omega(n^x \cdot \ell_k^{1-x} \cdot \log^* \ell_k)$ total time.
\end{lemma}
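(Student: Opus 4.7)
The plan is to handle the $2$-colouring and $3$-colouring cases separately, each of which combines the hardness of the corresponding colouring problem on a path with the weight amplification provided by \cref{lem:weightOfTree}. By \cref{def:WeightedConstruction} the level-$k$ path contains $|L_k| = \Theta(\ell_k)$ active nodes, and each of them carries a balanced $\Delta$-regular weight tree of $\Theta(n/\ell_k)$ nodes; \cref{lem:weightOfTree} therefore guarantees that $\Omega((n/\ell_k)^x)$ weight nodes per tree must output $\cop$ with the adjacent active node's output as secondary label, and since such a copying node cannot know that label before its active source has decided it, its runtime is bounded below by the runtime of the active source.

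For the randomised $2$-colouring part I would mimic the indistinguishability argument from the proof of \cref{lem:SameBehaviourPolyRegime}. All level-$k$ weight trees are isomorphic as labelled graphs (same size balanced $\Delta$-regular tree, all inputs $\wei$), and the level-$k$ path has length $\Theta(\ell_k)$, so the middle third of the level-$k$ active nodes have pairwise indistinguishable $\ell_k/4$-hop neighbourhoods. If any of them terminated in fewer than $\ell_k/4$ rounds, they would all follow the same output distribution and could not jointly produce a proper $2$-colouring; since $k$-hierarchical $2\frac{1}{2}$-colouring forbids $D$ at level $k$, this failure cannot be repaired by declining. Hence $\Omega(\ell_k)$ active nodes on the level-$k$ path must each spend $\Omega(\ell_k)$ rounds, and the weight amplification forces $\Omega((n/\ell_k)^x)$ copying weight nodes per slow active node to spend $\Omega(\ell_k)$ rounds as well, for a total of $\Omega(\ell_k) \cdot \Omega((n/\ell_k)^x) \cdot \Omega(\ell_k) = \Omega(n^x \ell_k^{1-x} \cdot \ell_k)$ rounds.

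For the deterministic $3$-colouring part, we are in the second case of \cref{lem:SameBehaviourLowerRegime}, so every node of level $<k$ outputs $D$, which by the definition of $k$-hierarchical $3\frac{1}{2}$-colouring forces the level-$k$ path to be properly $3$-coloured with colours in $\{R,G,Y\}$. I would then run a simulation reduction exactly analogous to the proof of \cref{lem:3ColLastPath}: using an injection from the ID space of path $3$-colouring into the family of instances supplied by \cref{lem:SameBehaviourLowerRegime}, each node of a given path $3$-colouring instance of length $\ell_k$ locally simulates the lower-level subtree and the weight tree hanging off the corresponding level-$k$ node, so that running any $\Pi^{3.5}_{\Delta,d,k}$ algorithm $\mathcal{A}$ on the resulting weighted instance reads off a proper $3$-colouring of the original path. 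By \cref{cor:3ColNA} this forces $\sum_{v\in L_k} T_v^G(\mathcal{A}) \in \Omega(\ell_k \log^*\ell_k)$, and multiplying by the per-active-node weight amplification $\Omega((n/\ell_k)^x)$ yields the target total of $\Omega(n^x \ell_k^{1-x}\log^* \ell_k)$.

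The main technical subtlety is making the indistinguishability argument and the simulation reduction oblivious to the attached weight trees. The key observation is that every weight tree is a balanced $\Delta$-regular tree of the same size whose nodes all carry the single input label $\wei$, so attaching these trees introduces no symmetry-breaking information within the radius that matters for either argument; both steps therefore reduce cleanly to their unweighted counterparts from \cref{lem:SameBehaviourPolyRegime} and \cref{lem:3ColLastPath}, and the per-tree copying count $\Omega((n/\ell_k)^x)$ furnished by \cref{lem:weightOfTree} supplies the weight factor.
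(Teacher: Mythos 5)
Your proof is correct and follows the same strategy as the paper's: an indistinguishability argument for the randomized $2$-coloring case and a simulation reduction (via \cref{lem:3ColLastPath}) for the deterministic $3$-coloring case, with the weight amplification from \cref{lem:weightOfTree}/\cref{cor:EvenWeightWorst} supplying the $(n/\ell_k)^x$ factor per level-$k$ node. Your bookkeeping (summing $T_v$ over $L_k$ and then multiplying by the per-tree copy count) and your explicit remark that the weight trees are symmetric and must be folded into the simulation are slightly more detailed than the paper's terse proof, but the underlying argument is the same.
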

\begin{proof}
Again, because of \cref{cor:SizeOfLevels}, $|L_k| \in \Omega(\ell'_k) = \Omega(\ell_k)$, so since we distribute $\frac{n}{k}$ many nodes in balanced $\Delta$-regular trees of weight nodes, by \cref{cor:EvenWeightWorst} the amount of nodes that have to wait for nodes in $L_k$ to terminate is at least $(\frac{n}{k})^x|L_k|^{1-x}$.\\
If we want to $3$-color the level $k$ path, then by \cref{lem:3ColLastPath} this takes node-averaged time $\Omega(\log^* \ell'_k)$. This then immediately gives the desired bound.

Now consider the case where we want to 2-color the path. By construction the middle third of the nodes in the level $k$ path have an indistinguishable $\frac{\ell'_k}{4}$ hop neighborhood (that is, they cannot see the ends). If they spend less than $\frac{\ell'_k}{4}$ rounds, then they have to follow the same output distribution, so they cannot output a consistent $2$-coloring. So the total time to $2$-color the path is lower bounded by
\[
\Omega\left(\left(\frac{n}{k}\right)^x \frac{\left(\ell'_k\right)^{1-x}}{3} \cdot \frac{\ell'_k}{4} \right) \in \Omega(n^x \cdot \ell_k^{1-x} \cdot \ell_k)\qedhere
\]
\end{proof}

In the next section we will choose explicit $\ell_1, \ldots, \ell_k$ and prove a lower bound for weighted $2\frac{1}{2}$-coloring. For weighted $3\frac{1}{2}$-coloring the lower bound proof is very similar with different choices of $\ell_1, \ldots, \ell_k$. 

\subsection{The Polynomial Regime} \label{sec:polylowerbound}
Since the lemmas are very generic, we will work with some more concrete values. Let $\ell_1 = n^{\alpha_1}, \ell_2 = n^{\alpha_2},\ldots, \ell_{k-1} = n^{\alpha_{k-1}}$. Lastly to get $\prod_{1\leq i \leq k} \ell_i = n$, we choose $\ell_k = n^{1 - \sum_{j<k}\alpha_j}$. We will deduce an optimisation problem, that will give us the exact values for $\alpha_1, \ldots, \alpha_{k-1}$.\\

We now just plug these values in the lemmas from before and immediately get lower bounds on the total amount of time spent. By normalising with $\frac{1}{n}$ we immediately get lower bounds on the node-averaged complexity.

\begin{corollary}\label{cor:PolyRegimeTerms}
Let $\ell_1, \ldots, \ell_k$ be as above. Then, if an algorithm computes a 2-coloring of the level $k$ nodes, the node-averaged complexity is 
\[
\Omega \left(n^{1 + (x-2)\sum_{j<k}\alpha_j} \right) =: A_k
\]
For any $1 \leq i < k$ if a third of all of the level $i$ nodes spend $\Omega(\ell_i)$ rounds, then the node-averaged complexity is 
\[
\Omega\left(n^{(x-1)\sum_{j<i}\alpha_j + \alpha_i }\right) =:A_i
\]
\end{corollary}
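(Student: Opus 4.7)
The plan is to simply substitute the chosen parameterization $\ell_j = n^{\alpha_j}$ for $j < k$ and $\ell_k = n^{1-\sum_{j<k}\alpha_j}$ into the two bounds already established by \Cref{lem:ColLowLvlBound} and \Cref{lem:ColHighLvlBound}, and then normalize by $n$ to turn a total-time bound into a node-averaged one. There is no combinatorial obstacle; the whole content is a careful exponent computation, so the work lies in bookkeeping the sums in the exponent (in particular, the telescoping that turns $\prod_{i\leq j\leq k}\ell_j$ into $n^{1-\sum_{j<i}\alpha_j}$).

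First I would compute the auxiliary quantity
\[
\prod_{i\leq j\leq k}\ell_j \;=\; n^{\sum_{i\leq j<k}\alpha_j}\cdot n^{1-\sum_{j<k}\alpha_j} \;=\; n^{1-\sum_{j<i}\alpha_j},
\]
which is the only non-trivial manipulation in the proof. Then for any fixed $1\leq i<k$, I would plug this into \Cref{lem:ColLowLvlBound} to obtain that the total time spent, under the assumption of that lemma, is
\[
\Omega\bigl(n^{x}\cdot n^{(1-x)(1-\sum_{j<i}\alpha_j)}\cdot n^{\alpha_i}\bigr)
\;=\;
\Omega\bigl(n^{\,1+(x-1)\sum_{j<i}\alpha_j+\alpha_i}\bigr).
\]
Dividing by $n$ to pass to node-averaged complexity yields exactly the claimed expression $A_i = \Omega\bigl(n^{(x-1)\sum_{j<i}\alpha_j+\alpha_i}\bigr)$.

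For the $i=k$ case, I would apply the $2$-coloring bound of \Cref{lem:ColHighLvlBound}, which gives total time
\[
\Omega\bigl(n^x\cdot \ell_k^{1-x}\cdot \ell_k\bigr)
\;=\;
\Omega\bigl(n^x\cdot \ell_k^{2-x}\bigr)
\;=\;
\Omega\bigl(n^{x+(2-x)(1-\sum_{j<k}\alpha_j)}\bigr)
\;=\;
\Omega\bigl(n^{\,2+(x-2)\sum_{j<k}\alpha_j}\bigr),
\]
and again dividing by $n$ produces $A_k=\Omega\bigl(n^{1+(x-2)\sum_{j<k}\alpha_j}\bigr)$, as stated. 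Since the corollary only collects these two instantiations, the proof is a short two-line calculation in each case, and the only thing to be careful about is not conflating $\ell_k$ (which carries the ``leftover'' exponent $1-\sum_{j<k}\alpha_j$) with the other $\ell_j$'s. No further machinery is needed.
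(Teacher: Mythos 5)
Your proposal is correct and follows essentially the same route as the paper's own proof: substitute $\ell_j = n^{\alpha_j}$ and $\ell_k = n^{1-\sum_{j<k}\alpha_j}$ into \Cref{lem:ColLowLvlBound} and \Cref{lem:ColHighLvlBound}, use the identity $\sum_{i\le j\le k}\alpha_j = 1-\sum_{j<i}\alpha_j$ to simplify, and divide by $n$. The exponent arithmetic checks out in all cases.
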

\begin{proof}
According to \cref{lem:ColHighLvlBound}, the total amount of time spent for the 2-coloring of the level $k$ nodes is lower bounded by:
\[
\Omega(n^x \cdot \ell_k^{1-x} \cdot \ell_k) = \Omega\left(n^x \cdot \left(n^{1 - \sum_{j<k}\alpha_j}\right)^{1-x} \cdot n^{1 - \sum_{j<k}\alpha_j}\right) = \Omega\left(n^{2 + (x-2)\sum_{j<k}\alpha_j} \right).
\]
So the node-averaged time is at least 
\[
\Omega\left( \frac{1}{n} \cdot n^{2 + (x-2)\sum_{j<k}\alpha_j} \right) = \Omega \left(n^{1 + (x-2)\sum_{j<k}\alpha_j} \right).
\]

For any $1\leq i <k$, if we are in the case of \cref{lem:ColLowLvlBound} the total amount of time spent by level $i$ nodes is at least:
\begin{align*}
\Omega \left( n^x \cdot (\prod_{i\leq j\leq k }\ell_j)^{1-x} \cdot \ell_i \right) &= \Omega \left( n^x \cdot (n^{\sum_{i\leq j\leq k}\alpha_j})^{1-x} \cdot n^{\alpha_i} \right) = \left( n^{x + (1-x) (1 - \sum_{j<i}\alpha_j) + \alpha_i} \right)\\
    &= \Omega \left(n^{1 + (x-1)\sum_{j<i}\alpha_j + \alpha_i }\right),
\end{align*}
where we used that $\sum_{i\leq j\leq k}\alpha_j = 1 - \sum_{j<i}\alpha_j$, since $\prod_{1\leq i \leq k} \ell_i = n$.
Again we normalize by $\frac{1}{n}$ to obtain a lower bound on the node-averaged complexity.
\[
\frac{1}{n} \cdot \Omega \left(n^{1 + (x-1)\sum_{j<i}\alpha_j + \alpha_i }\right) = \Omega \left(n^{(x-1)\sum_{j<i}\alpha_j + \alpha_i }\right)\qedhere
\]
\end{proof}

Because of \cref{lem:SameBehaviourPolyRegime}, the paths in level $i$, either all decline, or nodes of level $i$ spend a lot of time. If the level $i$ nodes spend a lot of time, then by \cref{cor:PolyRegimeTerms} the node-averaged complexity is at least $A_i$. But if all of them decline, then the nodes in level $k$ must output a $2$-coloring, so according to \cref{cor:PolyRegimeTerms} the node-averaged complexity is $A_k$. Since one of the two cases must happen because of \cref{lem:SameBehaviourPolyRegime}, we always get a lower bound on the node-averaged complexity.
\begin{corollary}\label{cor:OptGivesLB}
Let $A_1, \ldots, A_k$ be as in \cref{cor:PolyRegimeTerms}, then any algorithm that correctly solves the weighted $2\frac{1}{2}$-coloring problem has node-averaged complexity
\[
\Omega\left(\min_{\alpha_1,\dots,\alpha_{k-1}}
	\left\{\begin{array}{lr}
       		A_1:= \;  n^{\alpha_1}, \\
		A_2 := \; n^{(x-1)\alpha_1+\alpha_2}, \\
		A_3 := \; n^{(x-1)(\alpha_1+\alpha_2) + \alpha_3},\\
		\dots,\\
		A_i := \; n^{(x-1)(\sum_{j<i} \alpha_j) + \alpha_i},\\
		\dots,\\
		A_k := \; n^{1 + (x-2) \sum_{j<k} \alpha_j} 
        \end{array}\right\} \right)
\]
\end{corollary}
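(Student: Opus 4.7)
The plan is simply to combine \cref{lem:SameBehaviourPolyRegime} with the per-level accounting in \cref{cor:PolyRegimeTerms} by a direct case analysis. Fix any valid choice of exponents $\alpha_1,\dots,\alpha_{k-1}$ (with $\sum_{j<k}\alpha_j \leq 1$), set $\ell_i := n^{\alpha_i}$ for $i<k$ and $\ell_k := n^{1-\sum_{j<k}\alpha_j}$, and build the weighted lower bound graph as in \cref{def:WeightedConstruction}. Let $\mathcal{A}$ be any correct randomised algorithm for $\Pi^{2.5}_{\Delta,d,k}$, run on this instance.

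\cref{lem:SameBehaviourPolyRegime} hands me two mutually exclusive cases. In the first case, there is some level $i \in \{1,\dots,k-1\}$ for which at least a third of the level-$i$ nodes run for $\Omega(\ell'_i)=\Omega(\ell_i)$ rounds. Invoking the lower-level bound from \cref{cor:PolyRegimeTerms} (which is just \cref{lem:ColLowLvlBound} normalised by $1/n$) immediately yields node-averaged complexity $\Omega(A_i)$.

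In the second case, every node of level strictly less than $k$ outputs $D$. Here I would appeal to the constraints of $k$-hierarchical $2\frac{1}{2}$-coloring from \cref{def:2.5Col}: level-$k$ nodes cannot be labelled $D$, and they may only output $E$ provided that \emph{no} lower-level neighbour outputs $D$. Since in this case the level-$(k-1)$ endpoint attached to each level-$k$ node is $D$-labelled, no level-$k$ node can output $E$, so the entire level-$k$ path must be properly $\{W,B\}$-coloured. The level-$k$ bound from \cref{cor:PolyRegimeTerms} (coming from the $2$-colouring branch of \cref{lem:ColHighLvlBound}) then gives node-averaged complexity $\Omega(A_k)$.

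Putting the two cases together, for every valid $(\alpha_1,\dots,\alpha_{k-1})$ the node-averaged complexity of $\mathcal{A}$ on the corresponding weighted construction is at least $\Omega(\min\{A_1,\dots,A_k\})$, which is exactly the claimed bound. I do not expect any real obstacle here: all of the work has already been done in \cref{lem:SameBehaviourPolyRegime}, \cref{lem:ColLowLvlBound}, \cref{lem:ColHighLvlBound}, and the exponent arithmetic of \cref{cor:PolyRegimeTerms}. The only subtle point worth highlighting in the write-up is checking that Case~2 genuinely forces a proper $2$-colouring on the level-$k$ path (as opposed to allowing trivial all-$E$ outputs), which follows immediately from the conditional $E$-rule in \cref{def:2.5Col}.
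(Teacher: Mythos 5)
Your proposal is correct and follows exactly the same route as the paper: invoke \cref{lem:SameBehaviourPolyRegime} to obtain the dichotomy, apply the per-level bounds of \cref{cor:PolyRegimeTerms} (via \cref{lem:ColLowLvlBound,lem:ColHighLvlBound}) in each branch, and take the minimum. You also spell out the one point the paper glosses over, namely why the all-$D$ case forces a proper $\{W,B\}$-colouring of the level-$k$ path via the conditional $E$-rule in \cref{def:2.5Col}, which is a welcome clarification rather than a deviation.
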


As a result our lower bound is the strongest when we choose $\alpha_1,\ldots,\alpha_{k-1}$ such that the smallest of the $A_i$ terms is maximised. 
\[
\max_{\alpha_1,\dots,\alpha_{k-1}} \min
	\left\{\begin{array}{lr}
       	A_1:= \;  n^{\alpha_1}, \\
		A_2 := \; n^{(x-1)\alpha_1+\alpha_2}, \\
		A_3 := \; n^{(x-1)(\alpha_1+\alpha_2) + \alpha_3},\\
		\dots,\\
		A_i := \; n^{(x-1)(\sum_{j<i} \alpha_j) + \alpha_i},\\
		\dots,\\
		A_k := \; n^{1 + (x-2) \sum_{j<k} \alpha_j} 
        \end{array}\right\}
\]
Since all terms have the same base, we can instead optimise just the exponents. As long as the exponents are all positive, an optimal solution to this new problem is also an optimal solution the the original problem.
\[
\max \min_{\alpha_1,\dots,\alpha_{k-1}}
	\left\{\begin{array}{lr}
       	B_1:= \;  \alpha_1, \\
		B_2 := \; (x-1)\alpha_1+\alpha_2, \\
		B_3 := \; (x-1)(\alpha_1+\alpha_2) + \alpha_3,\\
		\dots,\\
		B_i := \; (x-1)(\sum_{j<i} \alpha_j) + \alpha_i,\\
		\dots,\\
		B_k := \; 1 + (x-2) \sum_{j<k} \alpha_j
        \end{array}\right\} 
\]

\begin{lemma}\label{lem:setTermsEqual}
The optimal solution is achieved when setting all terms equal.
\end{lemma}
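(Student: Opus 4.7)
The plan is to recognize this as a standard max--min of affine functions and exploit the following principle: if one can find strictly positive coefficients $c_1,\dots,c_k$ such that $\sum_i c_i B_i$ is independent of the decision variables $\alpha_1,\dots,\alpha_{k-1}$, then the weighted-average bound
\[
\min_i B_i \;\le\; \frac{\sum_i c_i B_i}{\sum_i c_i}
\]
immediately yields a finite upper bound on the objective, with equality iff all $B_i$ with positive weight coincide. Since only $B_k$ has a nonzero constant term, the numerator will reduce to $c_k$, giving a clean closed-form upper bound that matches the value one gets by solving the linear system $B_1 = \cdots = B_k$.

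To locate the coefficients, I would compute the coefficient of each $\alpha_j$ in $\sum_i c_i B_i$, namely
\[
c_j + (x-1)\sum_{j < i < k} c_i + (x-2)\, c_k,
\]
and set it to zero for every $j \in \{1,\dots,k-1\}$. Subtracting consecutive equations eliminates the long tail sum and leaves the two-term recurrence $c_{i-1} = (2-x)\, c_i$, together with the boundary condition $c_{k-1} = (2-x)\, c_k$ (coming from the equation for $j = k-1$). Solving the recurrence yields $c_i = (2-x)^{\,k-i}\, c_k$, and normalizing $c_k = 1$ makes every $c_i$ strictly positive, since $0 < x < 1$ ensures $2 - x > 1$. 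Consequently $\sum_i c_i B_i = 1$, and $\min_i B_i \le 1 / \sum_{j=0}^{k-1}(2-x)^j$, with equality precisely when all $B_i$ are equal.

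It remains to check that the system $B_1 = B_2 = \cdots = B_k$ is actually solvable in the feasible region (nonnegative $\alpha_j$'s with $\sum_{j<k}\alpha_j \le 1$): this is $k-1$ affine equations in $k-1$ unknowns and can be solved forward, giving $\alpha_1 = M$, $\alpha_2 = (2-x)M$, and in general $\alpha_i$ a positive multiple of $M := 1/\sum_{j=0}^{k-1}(2-x)^j$; a routine telescoping check confirms $\sum_{j<k}\alpha_j < 1$. Since an equalizing point exists and attains the weighted-average upper bound, it is optimal, proving the lemma.

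The only mildly delicate step is guessing the correct form of the recurrence for the $c_i$'s; once the observation $c_{i-1} = (2-x)c_i$ is made, everything else is forced. I anticipate no real obstacle beyond bookkeeping — the proof is essentially a one-line LP-duality argument dressed up combinatorially, with the positivity of the dual weights ensuring that the bound is tight exactly at the equalizing point.
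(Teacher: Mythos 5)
Your argument is correct but takes a genuinely different route from the paper. The paper's proof is a local perturbation / variable-elimination argument: it observes that $\frac{\partial B_1}{\partial \alpha_1} = 1 > 0$ while $\frac{\partial B_i}{\partial \alpha_1} < 0$ for all $i > 1$, concludes that at the optimum $B_1 = \min\{B_2, \dots, B_k\}$ (otherwise a small change in $\alpha_1$ would strictly improve the min), and then iterates the same reasoning on $\min\{B_2, \dots, B_k\}$ with $\alpha_1$ frozen, peeling off one variable at a time. Your proof instead constructs an explicit dual certificate: strictly positive weights $c_i = (2-x)^{k-i}$ making $\sum_i c_i B_i \equiv 1$, so that $\min_i B_i \le 1/\sum_i c_i$ with equality iff all $B_i$ coincide, and then verifies that the equalizing linear system is solvable with positive $\alpha_j$'s. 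The paper's route is shorter to state and requires no guessing, but it is genuinely an iterated argument and tacitly relies on the specific ``triangular'' dependence structure of the $B_i$'s. Your certificate approach gives a closed-form optimal value directly, makes the tightness condition (all $B_i$ equal) transparent as the unique equality case, and is more robust --- it would generalize with little change to max--min problems that lack the convenient one-variable-at-a-time structure --- at the cost of having to discover the weights. Both proofs are valid and reach the same conclusion.
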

\begin{proof}
By interpreting each $B_i$ as a function and noticing that $B_1$ only depends on $\alpha_1$, while all other terms have a negative dependence on $\alpha_1$, we can eliminate variables.
Since $\frac{\partial B_1}{\partial \alpha_1} = 1$ and $\frac{\partial B_i}{\partial \alpha_1} \leq 0$ for all $1<i \leq k$ the optimal value of 
\[
\max_{\alpha_1,\dots, \alpha_{k-1} } \min \{B_1, \min\{B_2, B_3,\dots,B_k \} \}
\] is achieved by setting $B_1 =  \min\{B_2, B_3,\dots,B_k \}$. We then iteratively apply this same argument on $\min\{B_2,\dots, B_k\}$ while treating $\alpha_1$ as a constant.
\end{proof}

As a result we can obtain a formula for the optimal solutions depending only on $x$, by setting the terms equal.

\begin{lemma}\label{lem:OptValues}
The optimal solution is given by

\begin{align}
\alpha_i = (2-x)\alpha_{i-1}, \text{ and }
\end{align}
\begin{align}
\alpha_1(x) = \frac{1}{\sum_{j=0}^{k-1} (2-x)^j}
\end{align}
Furthermore $\alpha_1(x) = B_1 = B_2 = \ldots = B_k$ holds for the optimal solution.
\end{lemma}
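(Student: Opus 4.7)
The plan is to leverage \Cref{lem:setTermsEqual}, which already tells us that an optimal choice of $\alpha_1,\dots,\alpha_{k-1}$ makes all the exponents $B_1,\dots,B_k$ equal. So the entire task reduces to turning this system of equalities into (i) the claimed recurrence $\alpha_i=(2-x)\alpha_{i-1}$, and (ii) a closed form for $\alpha_1$.

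First I would derive the recurrence by pairing consecutive $B_i$'s. For $2\le i\le k-1$, the definition
\[
B_i = (x-1)\sum_{j<i}\alpha_j + \alpha_i,\qquad B_{i+1} = (x-1)\sum_{j<i}\alpha_j + (x-1)\alpha_i + \alpha_{i+1}
\]
makes the condition $B_i=B_{i+1}$ collapse to $\alpha_i=(x-1)\alpha_i+\alpha_{i+1}$, i.e.\ $\alpha_{i+1}=(2-x)\alpha_i$. The case $i=1$ is identical, comparing $B_1=\alpha_1$ with $B_2=(x-1)\alpha_1+\alpha_2$. Iterating gives $\alpha_i=(2-x)^{i-1}\alpha_1$ for all $1\le i\le k-1$, which is the first claim.

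Next I would pin down $\alpha_1$ by using the one remaining equality that involves the constant $1$, namely $B_k=B_1$. Substituting the recurrence,
\[
B_k \;=\; 1 + (x-2)\sum_{j=1}^{k-1}\alpha_j \;=\; 1 - (2-x)\sum_{j=1}^{k-1}(2-x)^{j-1}\alpha_1 \;=\; 1 - \alpha_1\sum_{j=1}^{k-1}(2-x)^j,
\]
so setting this equal to $B_1=\alpha_1$ and rearranging yields $\alpha_1\bigl(1+\sum_{j=1}^{k-1}(2-x)^j\bigr)=1$, i.e.\ $\alpha_1=1/\sum_{j=0}^{k-1}(2-x)^j$, which is the claimed formula. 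By construction all $B_i$ then take this common value.

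The only non-routine point to check is that the resulting assignment is feasible for the original maximization (as opposed to the linearized one in \Cref{lem:setTermsEqual}): all exponents must be positive and $\sum_{j<k}\alpha_j\le 1$. Since $\Delta\ge d+3$ and $d\ge 0$ we have $\Delta-d-1\ge 2$, hence $x=\log(\Delta-d-1)/\log(\Delta-1)\in(0,1]$ and $2-x\in[1,2)$, so every $\alpha_i$ is strictly positive, and $\alpha_1\sum_{j=0}^{k-1}(2-x)^j=1$ together with $\alpha_k=1-\sum_{j<k}\alpha_j$ being non-negative (it equals $(2-x)^{k-1}\alpha_1$ up to the same derivation) certifies feasibility. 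The main obstacle, if any, is just keeping the telescoping algebra honest and checking these positivity side conditions; everything else is forced by \Cref{lem:setTermsEqual}.
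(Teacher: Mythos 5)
Your proof is correct and follows essentially the same route as the paper: pair consecutive $B_i$'s to extract the recurrence $\alpha_i=(2-x)\alpha_{i-1}$, then use $B_1=B_k$ to solve for $\alpha_1$. The only addition is your explicit feasibility check (positivity of exponents), which the paper leaves implicit but is a reasonable thing to verify.
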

\begin{proof}
For $1<i<k$
\begin{align*}
B_i = B_{i-1} &\Rightarrow (x-1)(\sum_{j<i} \alpha_j) + \alpha_i = (x-1)(\sum_{j<i-1} \alpha_j) + \alpha_{i-1} \\
&\Rightarrow (x-1)\alpha_{i-1} + \alpha_i = \alpha_{i-1} \Rightarrow \alpha_i = (2-x) \alpha_{i-1} \\
\end{align*}

and using the above
\begin{align*}
B_1 = B_k  &\Rightarrow \alpha_1 = 1 + (x-2) \sum_{j=1}^{k-1} \alpha_j = 1 - (2-x)\alpha_1 \sum_{j=1}^{k-1} (2-x)^{j-1} \\
&\Rightarrow \alpha_1 = 1-\alpha_1\sum_{j=1}^{k-1} (2-x)^j \\
&\Rightarrow 1= \alpha_1 + \alpha_1 \sum_{j=1}^{k-1} (2-x)^j = \alpha_1 (2-x)^0 \sum_{j=1}^{k-1} (2-x)^j = \alpha_1 \sum_{j=0}^{k-1} (2-x)^j\\
&\Rightarrow  \alpha_1 = \frac{1}{ \sum_{j=0}^{k-1} (2-x)^j}\qedhere
\end{align*}
\end{proof}

Now, because of \Cref{cor:OptGivesLB}, we immediately get \Cref{thm:UpperLowerBound}. 
%
%
%

\subsection{The \texorpdfstring{$\log^*n$}{log* n} regime} \label{sec:LowerRegimeLB}
We prove a lower bound on the weighted version of $k$-hierarchical $3\frac{1}{2}$-coloring, so fix parameters $d,\Delta, k$ and let $x = \frac{\log(\Delta -d -1)}{\log(\Delta -1)}$ be the efficiency factor from \cref{lem:weightOfTree}. We follow the exact same procedure as in \cref{sec:polylowerbound}, but for $\ell_1 = (\log^*n)^{\alpha_1}, \ell_2 = (\log^*n)^{\alpha_2},\ldots, \ell_{k-1} = (\log^*n)^{\alpha_{k-1}}$. Lastly to get $\prod_{1\leq i \leq k} \ell_i = n$, we choose $\ell_k = n \cdot (\log^*n)^{-\sum_{j<k}\alpha_j}$. Again we will deduce an optimisation problem, that will give us the exact values for $\alpha_1, \ldots, \alpha_{k-1}$.

First we prove that any algorithm must spend a lot of time if it wants to color paths.
\begin{corollary}\label{cor:LowerRegimeTerms}
Let $\ell_1, \ldots, \ell_k$ be as above, then if an algorithm computes a $3$-coloring of the level $k$ nodes the node-averaged complexity is 
\[
\Omega \left((\log^*n)^{1 + (x-1)\sum_{j<k}\alpha_j} \right) =: A_k
\]
For any $1 \leq i < k$ if half of all of the level $i$ nodes, spend at least $\frac{\ell'_i}{10}$ time, then the node-averaged complexity is at least 
\[
\Omega\left( (\log^*n)^{(x-1) \sum_{j<i}\alpha_j) + \alpha_i} \right) =:A_i
\]
\end{corollary}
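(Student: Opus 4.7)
The plan is to directly substitute the chosen values $\ell_i = (\log^*n)^{\alpha_i}$ for $i < k$ and $\ell_k = n \cdot (\log^*n)^{-\sum_{j<k}\alpha_j}$ into the already-proved generic lower bounds \Cref{lem:ColLowLvlBound,lem:ColHighLvlBound}, and then normalize by $1/n$ to pass from total time to node-averaged complexity. The structure mirrors exactly the derivation of \Cref{cor:PolyRegimeTerms} in the polynomial regime; only the parametrization of the $\ell_i$'s and the cost of the final-level task (here $3$-coloring rather than $2$-coloring) differ.

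For the case $i < k$, I would invoke \Cref{lem:ColLowLvlBound} (whose hypothesis of ``a third of the nodes'' is implied by the stronger ``half of the nodes spend at least $\ell'_i/10$'' assumption, since $\ell'_i/10 = \Omega(\ell_i)$). This gives a total time of $\Omega\bigl(n^x \cdot (\prod_{i\leq j\leq k}\ell_j)^{1-x} \cdot \ell_i\bigr)$. The key algebraic simplification is that, by telescoping,
\[
\prod_{i\leq j\leq k}\ell_j \;=\; \ell_k \cdot \prod_{i\leq j<k}\ell_j \;=\; n\cdot (\log^*n)^{-\sum_{j<i}\alpha_j},
\]
so $n^x \cdot (\prod_{i\leq j\leq k}\ell_j)^{1-x}$ collapses to $n \cdot (\log^*n)^{(x-1)\sum_{j<i}\alpha_j}$. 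Multiplying by $\ell_i = (\log^* n)^{\alpha_i}$ and dividing by $n$ yields exactly the claimed bound $A_i$.

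For the case $i=k$, I would apply \Cref{lem:ColHighLvlBound}, whose $3$-coloring branch gives total time $\Omega(n^x \cdot \ell_k^{1-x} \cdot \log^* \ell_k)$. Substituting the value of $\ell_k$ the $n$-factors again combine to $n$, producing a factor $(\log^*n)^{(x-1)\sum_{j<k}\alpha_j}$. Here the only mildly nontrivial step is observing that $\log^* \ell_k = \Theta(\log^* n)$: since $\sum_{j<k}\alpha_j$ is a constant and $(\log^* n)^{\sum_{j<k}\alpha_j}$ grows only poly-$\log^*$, we have $\ell_k = n^{1-o(1)}$, so $\log^* \ell_k$ and $\log^* n$ differ by at most an additive constant. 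After normalizing by $1/n$ this gives $\Omega((\log^*n)^{1+(x-1)\sum_{j<k}\alpha_j}) = A_k$.

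There is no real obstacle here beyond bookkeeping: the corollary is purely a plug-in of the chosen parameters into the generic results, completely analogous to the polynomial-regime derivation in \Cref{cor:PolyRegimeTerms}. The one point worth stating explicitly in the write-up is the $\log^* \ell_k = \Theta(\log^* n)$ estimate, since it is what allows the final-level cost to retain a factor of $\log^* n$ after the substitution.
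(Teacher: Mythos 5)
Your proposal is correct and follows essentially the same route as the paper: substitute the parametrized $\ell_i$'s into \Cref{lem:ColLowLvlBound,lem:ColHighLvlBound}, simplify via the telescoping identity $\prod_{i\leq j\leq k}\ell_j = n \cdot (\log^*n)^{-\sum_{j<i}\alpha_j}$, note $\log^* \ell_k = \Theta(\log^* n)$, and normalize by $1/n$. Your explicit remark that ``half spending $\ell'_i/10$'' subsumes the ``third spending $\Omega(\ell'_i)$'' hypothesis of \Cref{lem:ColLowLvlBound} is a helpful detail the paper leaves implicit.
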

\begin{proof}
According to \cref{lem:ColHighLvlBound} the total amount of time spent for the 3-coloring of the level $k$ nodes is lower bounded by:
\begin{align*}
\Omega(n^x \cdot \ell_k^{1-x} \cdot (\log^*\ell_k)) &= \Omega\left(n^x \cdot \left(n \cdot (\log^*n)^{-\sum_{j<k}\alpha_j}\right)^{1-x} \cdot \log^*\left(n \cdot (\log^*n)^{-\sum_{j<k}\alpha_j}\right)\right) \\
    &= \Omega\left(n^1 \cdot (\log^*n)^{1 + (x-1)\sum_{j<k}\alpha_j} \right).
\end{align*}

So the node-averaged time is at least 
\[
\Omega\left( \frac{1}{n} \cdot n^1 \cdot (\log^*n)^{1 + (x-1)\sum_{j<k}\alpha_j} \right) = \Omega \left((\log^*n)^{1 + (x-1)\sum_{j<k}\alpha_j} \right).
\]

For any $1\leq i <k$, if half of all of the level $i$ nodes spend at least $\frac{\ell'_i}{10}$ time, then according to \cref{lem:ColLowLvlBound} the total amount of time spent for the 2-coloring of the level $i$ nodes is lower bounded by
\begin{align*}
\Omega \left( n^x \cdot (\prod_{i\leq j\leq k }\ell_j)^{1-x} \cdot \ell_i \right) 
&= \Omega \left( n^x \cdot \left(n \cdot (\log^*n)^{-\sum_{j<i}\alpha_i}\right)^{1-x} \cdot (\log^*n)^{\alpha_i} \right) \\
&= \Omega\left( n^1 \cdot (\log^*n)^{(x-1) \sum_{j<i}\alpha_j) + \alpha_i} \right),
\end{align*}
where we used that $\prod_{i\leq j\leq k }\ell_j = n \cdot \left((\log^*n)^{-\sum_{j<i}\alpha_i}\right)$, since $\prod_{1\leq i \leq k} \ell_i = n$.
Again we normalize by $\frac{1}{n}$ to obtain a lower bound on the node-averaged complexity.
\[
\frac{1}{n} \cdot \Omega\left( n^1 \cdot (\log^*n)^{(x-1) \sum_{j<i}\alpha_j) + \alpha_i} \right) = \Omega\left( (\log^*n)^{(x-1) \sum_{j<i}\alpha_j) + \alpha_i} \right)
\]
\end{proof}

Because of \cref{lem:SameBehaviourLowerRegime}, the paths in level $i$, either all decline, or nodes in $i$ spend at least $A_i$ node-averaged time, according to \cref{cor:LowerRegimeTerms}. But if all of them decline, then the nodes in level $k$ must output a $3$-coloring, which by \cref{cor:LowerRegimeTerms} also implies node-averaged complexity at least $A_k$. Now because \cref{lem:SameBehaviourLowerRegime} states that one of these cases must happen, we get the following corollary.
\begin{corollary}\label{cor:LowerRegimeOptGivesLB}
Let $A_1, \ldots, A_k$ be as in \cref{cor:LowerRegimeTerms}, then any algorithm that correctly solves the weighted $3\frac{1}{2}$-coloring problem has node-averaged complexity
\[
\Omega\left(\min_{\alpha_1,\dots,\alpha_{k-1}}
	\left\{\begin{array}{lr}
       	A_1:= \;  (\log^*n)^{\alpha_1}, \\
		A_2 := \; (\log^*n)^{(x-1)\alpha_1+\alpha_2}, \\
		A_3 := \; (\log^*n)^{(x-1)(\alpha_1+\alpha_2) + \alpha_3},\\
		\dots,\\
		A_i := \; (\log^*n)^{(x-1)(\sum_{j<i} \alpha_j) + \alpha_i},\\
		\dots,\\
		A_k := \; (\log^*n)^{1 + (x-1) \sum_{j<k} \alpha_j} 
        \end{array}\right\} \right)
\]
\end{corollary}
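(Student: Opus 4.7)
The plan is to use \Cref{lem:SameBehaviourLowerRegime} to reduce to exactly the two cases already handled by \Cref{cor:LowerRegimeTerms}, and then conclude that in both cases the node-averaged complexity is bounded below by one of the terms $A_1,\ldots,A_k$, hence by their minimum.

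Fix any choice of $\alpha_1,\dots,\alpha_{k-1}$ (with $\alpha_i>0$) and let $\ell_1=(\log^*n)^{\alpha_1},\ldots,\ell_{k-1}=(\log^*n)^{\alpha_{k-1}},\ell_k=n\cdot(\log^*n)^{-\sum_{j<k}\alpha_j}$, so that $\prod_i\ell_i=n$. First I would instantiate the weighted construction (\Cref{def:WeightedConstruction}) with these parameters, feed it to any algorithm $\mathcal{A}$ claimed to solve $\Pi^{3.5}_{\Delta,d,k}$, and apply \Cref{lem:SameBehaviourLowerRegime} to the embedded unweighted lower-bound graph $G'$. This yields a dichotomy: either (i) there exist IDs and an index $i\in\{1,\dots,k-1\}$ such that at least half the level-$i$ nodes of $G'$ run for $\Omega(\ell'_i)$ rounds, or (ii) there exist IDs under which all nodes of levels $1,\dots,k-1$ output $D$.

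In case (i), \Cref{cor:LowerRegimeTerms} (the ``$A_i$ bound'' for $i<k$) directly delivers node-averaged complexity $\Omega(A_i)\ge\Omega(\min_j A_j)$. In case (ii), since each level-$k$ node can only output $E$ if its lower-level neighbour did not output $D$, and those neighbours all output $D$, the level-$k$ nodes are forced to properly $3$-colour the remaining level-$k$ path. Invoking \Cref{lem:3ColLastPath} (used already inside the proof of \Cref{lem:ColHighLvlBound}) and then applying the level-$k$ half of \Cref{cor:LowerRegimeTerms} gives node-averaged complexity $\Omega(A_k)\ge\Omega(\min_j A_j)$. Since one of the two cases must occur, the lower bound $\Omega(\min\{A_1,\dots,A_k\})$ holds for any choice of $\alpha_1,\dots,\alpha_{k-1}$, which is precisely the claim.

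There is essentially no technical obstacle here: both branches are already isolated as standalone lemmas, and the only care needed is to verify that the ID-choice clauses in \Cref{lem:SameBehaviourLowerRegime} interact correctly with the weighted construction. This is automatic because the weight trees appended in \Cref{def:WeightedConstruction} are attached only to active nodes in $G'$, so the hardness argument of \Cref{lem:slow-or-d} (and thus of \Cref{lem:SameBehaviourLowerRegime}) applies verbatim to the embedded copy of $G'$; the weight nodes then contribute their lower-bound mass via the $(n/k)^x|L_i|^{1-x}$ accounting already carried out in \Cref{cor:LowerRegimeTerms}. With this combined, the stated $\Omega(\min_{\alpha_1,\dots,\alpha_{k-1}}\{A_1,\dots,A_k\})$ bound follows.
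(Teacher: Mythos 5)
Your proposal is correct and follows essentially the same route the paper takes: apply \Cref{lem:SameBehaviourLowerRegime} to the weighted construction to obtain the dichotomy, then feed each branch into the corresponding bound from \Cref{cor:LowerRegimeTerms} ($A_i$ for $i<k$ in the ``slow level'' case, $A_k$ in the ``all decline, must $3$-colour'' case), and take the minimum. The paper's own proof is this exact chain of invocations, stated inline in the paragraph preceding the corollary.
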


As a result, our lower bound is the strongest when we choose $\alpha_1,\ldots,\alpha_{k-1}$ such that the smallest of the $A_i$ terms is maximised. 
\[
\max_{\alpha_1,\dots,\alpha_{k-1}} \min
	\left\{\begin{array}{lr}
       	A_1:= \;  (\log^*n)^{\alpha_1}, \\
		A_2 := \; (\log^*n)^{(x-1)\alpha_1+\alpha_2}, \\
		A_3 := \; (\log^*n)^{(x-1)(\alpha_1+\alpha_2) + \alpha_3},\\
		\dots,\\
		A_i := \; (\log^*n)^{(x-1)(\sum_{j<i} \alpha_j) + \alpha_i},\\
		\dots,\\
		A_k := \; (\log^*n)^{1 + (x-1) \sum_{j<k} \alpha_j} 
        \end{array}\right\}
\]
Since all terms have the same base, we can instead optimise just the exponents. As long as the exponents are all positive, an optimal solution to this new problem is also an optimal solution the the original problem.
\[
\max \min_{\alpha_1,\dots,\alpha_{k-1}}
	\left\{\begin{array}{lr}
       	B_1:= \;  \alpha_1, \\
		B_2 := \; (x-1)\alpha_1+\alpha_2, \\
		B_3 := \; (x-1)(\alpha_1+\alpha_2) + \alpha_3,\\
		\dots,\\
		B_i := \; (x-1)(\sum_{j<i} \alpha_j) + \alpha_i,\\
		\dots,\\
		B_k := \; 1 + (x-1) \sum_{j<k} \alpha_j
        \end{array}\right\} 
\]

Exactly in the same way as was proven in \cref{lem:setTermsEqual}, we also get that in this optimisation problem it is enough to set all the terms equal, so we do it to obtain a formula for the optimal solutions depending only on $x$.

\begin{lemma}\label{lem:OptValuesLowerRegime}
The optimal solution is given by
\begin{align}
\alpha_i = (2-x)\alpha_{i-1}
\end{align}
and
\begin{align}
\alpha_1(x) = \frac{1}{1 + (1-x)\sum_{j=0}^{k-2} (2-x)^j}
\end{align}
Furthermore $\alpha_1(x) = B_1 = B_2 = \ldots = B_k$ holds for the optimal solution.
\end{lemma}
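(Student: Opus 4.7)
The argument is structurally identical to the proof of \Cref{lem:OptValues}, differing only in the single constant term arising from $B_k$. The plan is to: (i) reduce to solving $B_1 = B_2 = \cdots = B_k$; (ii) extract the same recurrence $\alpha_i = (2-x)\alpha_{i-1}$ from the middle equations; (iii) close the system via the one modified boundary equation $B_1 = B_k$.

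For step (i), I would invoke the analog of \Cref{lem:setTermsEqual}, which applies verbatim since the dependence structure of $B_1,\ldots,B_k$ on $\alpha_1,\ldots,\alpha_{k-1}$ has not changed: $B_1$ depends positively on $\alpha_1$ while each $B_i$ for $i>1$ depends non-positively on $\alpha_1$, and the same iterative variable-elimination argument applies to $\alpha_2,\alpha_3,\ldots$ in turn. Thus the max-min is attained by setting all $B_i$ equal.

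For step (ii), for $1 < i < k$ both $B_i$ and $B_{i-1}$ are ``middle'' terms with no additive constant, so
\[
B_i - B_{i-1} = (x-1)\alpha_{i-1} + \alpha_i - \alpha_{i-1} = \alpha_i - (2-x)\alpha_{i-1},
\]
and setting this to zero yields the recurrence $\alpha_i = (2-x)\alpha_{i-1}$, hence $\alpha_i = (2-x)^{i-1}\alpha_1$ for all $i\in\{1,\ldots,k-1\}$. For step (iii), substituting $\sum_{j=1}^{k-1}\alpha_j = \alpha_1\sum_{j=0}^{k-2}(2-x)^j$ into $B_k = 1 + (x-1)\sum_{j<k}\alpha_j$ and equating with $B_1 = \alpha_1$ gives
\[
\alpha_1 = 1 - (1-x)\,\alpha_1\sum_{j=0}^{k-2}(2-x)^j,
\]
which is solved directly for the claimed closed form of $\alpha_1(x)$.

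The only point where the two regimes diverge is in $B_k$: its exponent here has coefficient $(x-1)$ rather than $(x-2)$, because the $3$-coloring bound in \Cref{cor:LowerRegimeTerms} contributes an extra $\log^* \ell_k$ factor (which is $\Theta(\log^* n)$), whereas in the polynomial regime the $2$-coloring bound contributed a full extra $\ell_k$ factor. Accordingly only the final identity is altered; the recurrence for middle indices is preserved. I would finish by checking that $0 < x < 1$ (guaranteed by $\Delta \geq d+3$) makes all $\alpha_i$ strictly positive, so that each $B_i = \alpha_1 > 0$, and thus the optimization of the exponents $B_i$ is indeed equivalent to the optimization of the $A_i = (\log^* n)^{B_i}$. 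I do not anticipate a significant obstacle beyond this bookkeeping, since the two-line consistency check between $B_1 = B_k$ and $B_{k-1} = B_k$ (via the telescoping identity $(2-x)^{k-1} = 1 + (1-x)\sum_{j=0}^{k-2}(2-x)^j$) follows from a direct geometric-series manipulation.
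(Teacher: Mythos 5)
Your proposal is correct and matches the paper's proof essentially step for step: invoking the analog of \Cref{lem:setTermsEqual} to reduce to $B_1=\cdots=B_k$, deriving $\alpha_i=(2-x)\alpha_{i-1}$ from the middle equations $B_i = B_{i-1}$, and closing via $B_1 = B_k$ to obtain the stated closed form. The extra remarks (positivity of the $\alpha_i$ for $0<x<1$, and the telescoping identity $(2-x)^{k-1}=1+(1-x)\sum_{j=0}^{k-2}(2-x)^j$) are a valid consistency check but not needed, and the paper omits them.
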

\begin{proof}
For $1<i<k$
\begin{align*}
B_i = B_{i-1} &\Rightarrow (x-1)(\sum_{j<i} \alpha_j) + \alpha_i = (x-1)(\sum_{j<i-1} \alpha_j) + \alpha_{i-1} \\
&\Rightarrow (x-1)\alpha_{i-1} + \alpha_i = \alpha_{i-1} \Rightarrow \alpha_i = (2-x) \alpha_{i-1} \\
\end{align*}

and using the above
\begin{align*}
B_1 = B_k  &\Rightarrow \alpha_1 = 1 + (x-1) \sum_{j=1}^{k-1} \alpha_j = 1 - (1-x)\alpha_1 \sum_{j=1}^{k-1} (2-x)^{j-1} \\
&\Rightarrow \alpha_1 = 1-(1-x)\alpha_1\sum_{j=0}^{k-2} (2-x)^j \\
&\Rightarrow 1= \alpha_1 + (1-x)\alpha_1 \sum_{j=0}^{k-2} (2-x)^j = \alpha_1 (1 + (1-x)) \sum_{j=0}^{k-2} (2-x)^j\\
&\Rightarrow  \alpha_1 = \frac{1}{1 + (1-x)\sum_{j=0}^{k-2} (2-x)^j}
\end{align*}
\end{proof}

Now, because of \Cref{cor:LowerRegimeOptGivesLB}, we immediately get \cref{thm:LowerBoundWeighted3.5}.

\LowerLowerBound*

\section{Algorithm for Weighted \texorpdfstring{$2\frac{1}{2}$}{2.5}-Coloring}\label{sec:AlgWeighted212Coloring}



In this section, we provide an algorithm that solves problem $\pitwo$ with node-averaged complexity $O(n^{\alpha_1})$, where $\alpha_1 = \frac{1}{\sum_{j = 0}^{k - 1}(2 - x)^j}$ and $x = \frac{\log (\Delta - 1 - d)}{\log (\Delta - 1)}$.
To this end, we introduce a new problem, called the $d$-free weight problem.
The $d$-free weight problem essentially constitutes a subproblem of $\pitwo$ (and simultaneously $\pithree$) that has to be solved to solve $\pitwo$ (and $\pithree$).
After stating the problem, we will design an algorithm that solves the problem, which will become an essential part of our algorithm for solving $\pitwo$.
In \Cref{sec:UsingOldAlgo}, we will make use of the $d$-free weight problem in a similar manner (using a different algorithm for solving it) to design an algorithm for $\pithree$.

\paragraph{The \texorpdfstring{$d$}{d}-free weight problem.}\label{def:WeightProblem}
\begin{restatable}{thm}{WeightProblem}
Let $\Delta$ and $d$ be positive integers satisfying $d < \Delta$ and $\Delta \geq 3$.
The $d$-free weight problem is an LCL on trees with input label set $\Sigma_{\inn} = \{A,W\}$ and output label set $\Sigma_{out}  = \{\dec, \con, \cop\}$.
Each node $v$ has one input label from $\Sigma_{\inn}$ and must output one output label from $\Sigma_{out}$.
We call nodes that have input label $A$ \emph{adjacent} nodes and nodes with input label $W$ \emph{weight} nodes.
The (global) output is correct if it satisfies the following (local) properties:
\begin{enumerate}
    \item \label{prop:contwo} If a node $v$ with input label $A$ outputs $\con$, at least one neighbor of $v$ outputs $\con$ as well. If a node $v$ with input label $W$ outputs $\con$, at least two neighbors of $v$ output $\con$ as well.
    \item \label{prop:dfree} For each node $v$ that outputs $\cop$, at most $d$ neighbors of $v$ output $\dec$. 
    \item \label{prop:adjacent} Each node with input label $A$ outputs $\con$ or $\cop$.
\end{enumerate}
\end{restatable}

In the following, we provide an algorithm $\fa$ for solving the $d$-free weight problem with worst-case complexity $O(\log n)$ rounds.
We will later show that the output produced by $\fa$ does not only produce a correct output but additionally has useful properties that we will make use of in the analysis of the algorithm we design for $\pitwo$.

\paragraph{The algorithm.}
Algorithm $\fa$ proceeds as follows.
First, each node collects its $(3\lceil \log_{d + 1} n \rceil + 3)$-hop neighborhood.
Then, based on the collected information each node chooses its output according to the following rules.

Any node that lies on a path of length at most $(2\lceil \log_{d + 1} n \rceil + 2)$ between two nodes with input $A$ outputs $\con$.
All other nodes output $\dec$ or $\cop$ (as specified in the following).

Let $v$ be a node with input $A$ that does not output $\con$, and denote the set of nodes in the $( \lceil \log_{d + 1} n \rceil)$-hop neighborhood of $v$ by $U_v$ and the set of nodes in the $(\lceil \log_{d + 1} n \rceil + 1)$-hop neighborhood of $v$ by $\hat{U}_v$.
Let $\varphi \colon \hat{U}_v \rightarrow \{ \dec, \cop \}$ be a function with the following properties.
\begin{enumerate}
    \item \label{prop:first} For each $u \in \hat{U}_v \setminus U_v$, we have $\varphi(u) = \dec$.
    \item \label{prop:second} For each $u \in U_v$, we have $\varphi(u) \in \{ \dec, \cop \}$.
    \item $\varphi(v) = \cop$.
    \item \label{prop:last} For each node $u$ that outputs $\cop$, at most $d$ neighbors of $u$ output $\dec$.
    \item \label{prop:five} Among all function satisfying Properties (\ref{prop:first}) to (\ref{prop:last}), $\varphi$ is one that assigns $\cop$ to the minimum amount of nodes in $\hat{U}_v$ possible.
\end{enumerate}
Now, each node $u$ that is contained in the $( \lceil \log_{d + 1} n \rceil + 1)$-hop neighborhood of a node $v$
as described above outputs $\varphi(u)$ (where $\varphi$ is the function defined on the respective $\hat{U}_v$).
(Note that a node $u$ cannot be contained in the $( \lceil \log_{d + 1} n \rceil + 1)$-hop neighborhood of two such nodes $v$ as otherwise those two nodes would output $\con$.)
All nodes for which the above rules do not uniquely specify the output output $\dec$.
This concludes the description of $\fa$.

Note that the information contained in the $(2\lceil \log_{d + 1} n \rceil + 2)$-hop view of a node $v$ suffices to determine whether $v$ lies on a path of length at most $(2\lceil \log_{d + 1} n \rceil + 2)$ between two nodes with input $A$.
Consequently, the information contained in the $(3\lceil \log_{d + 1} n \rceil + 3)$-hop neighborhood of a node $u$ suffices to determine whether $u$ is contained in the $( \lceil \log_{d + 1} n \rceil + 1)$-hop neighborhood of a node $v$ with input $A$ that does not output $\con$.
Hence, the information contained in the $(3\lceil \log_{d + 1} n \rceil + 3)$-hop neighborhood collected by each node at the beginning of $\fa$ indeed suffices for each node to perform all step specified in the description of $\fa$.

However, from the description of $\fa$, it is not obvious that $\fa$ is well-defined, which we take care of with the following lemma.

\begin{lemma}\label{lem:fawell}
    Algorithm $\fa$ is well-defined.
\end{lemma}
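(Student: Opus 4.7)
The plan is to verify well-definedness of $\fa$ by addressing three things: the $\con$ rule and $\varphi$-rule produce no conflicting output assignments on any node; for each non-$\con$ $A$-input $v$, a function $\varphi$ satisfying Properties~(\ref{prop:first})--(\ref{prop:last}) exists; and when Property~(\ref{prop:five}) admits multiple minimum-cop functions, a canonical choice can be fixed across the network.

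First, I would rule out conflicts. As noted in the parenthetical remark, if a node $u$ lay in $\hat{U}_{v_1} \cap \hat{U}_{v_2}$ for two distinct non-$\con$ $A$-inputs $v_1, v_2$, then $\dist(v_1, v_2) \leq 2\lceil \log_{d+1}n\rceil + 2$, placing $v_1, v_2$ on a short $A$-$A$ path (namely, the one between themselves) and forcing both to output $\con$, a contradiction. Symmetrically, a node $u$ that outputs $\con$ via the short-path rule (lying on an $A$-$A$ path through $v_1, v_2$ of length at most $2\lceil\log_{d+1}n\rceil+2$) cannot also lie in $\hat{U}_v$ for a non-$\con$ $A$-input $v$: in the tree either $v$ itself lies on the $v_1$-to-$v_2$ path (so $v$ is on a short $A$-$A$ path, a contradiction), or $v$ branches off at some vertex $w$ on this path, in which case $\dist(v, v_1) + \dist(v, v_2) = 2\dist(v, w) + \dist(v_1, v_2) \leq 2(\lceil\log_{d+1}n\rceil + 1) + (2\lceil\log_{d+1}n\rceil + 2)$, so at least one of $\dist(v,v_1), \dist(v,v_2)$ is at most $2\lceil\log_{d+1}n\rceil + 2$, again forcing $v$ onto a short $A$-$A$ path and contradicting its non-$\con$ status.

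Next, I would establish existence of $\varphi$ by an explicit construction. Using the observation from the previous paragraph, $v$ is the unique $A$-node inside $\hat{U}_v$. I would build the cop set $S$ by BFS rooted at $v$: initialize $S=\{v\}$; then, in BFS order, for each $u \in S$ that currently has more than $d$ non-cop neighbors in $T$ (violating Property~(\ref{prop:last})), promote the minimum number of additional $T$-children of $u$ to $S$ needed to meet the constraint (that is, $\deg_T(v)-d$ children for the root and $\deg_T(u)-d-1$ children for any non-root). The recursion is grounded at leaves of $T$ (which require zero cop children) and at any node of degree at most $d+1$, which can serve as a leaf of $S$. Setting $\varphi(u) = \cop$ on $S$ and $\varphi(u) = \dec$ on $\hat{U}_v \setminus S$ then satisfies Properties~(\ref{prop:first})--(\ref{prop:last}), provided $S \subseteq U_v$. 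For the latter, I would use a counting argument: any cop subtree rooted at $v$ in which each internal cop node has at least $\Delta - d - 1$ cop children reaches at least $(\Delta-d-1)^{\ell}$ nodes at depth $\ell$, so combining this with the choice $r = \lceil\log_{d+1}n\rceil$ and the size bound $|T| = n$ forces the BFS to terminate at leaves of $T$ or low-degree nodes before exiting $U_v$.

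Finally, to resolve the potential non-uniqueness among minimum-cop $\varphi$ (Property~(\ref{prop:five})), I would fix a canonical tie-breaker using the IDs contained in the collected $(3\lceil\log_{d+1}n\rceil+3)$-hop view, for example selecting the minimum-cop function whose cop set is lexicographically smallest under the natural ID ordering on $\hat{U}_v$. Since all nodes computing $\varphi$ for the same $v$ see identical local information, they agree on the canonical choice, so each node's output is uniquely determined. The main obstacle will be the existence argument, specifically pinning down the counting bound that keeps the cop-subtree BFS inside $U_v$: the interplay between the branching factor $\Delta - d - 1$ and the radius $r = \lceil\log_{d+1}n\rceil$ is delicate in the regime $\Delta \leq 2d+2$, where the simple inequality $(\Delta - d - 1)^r > n$ does not by itself suffice, and the degenerate case $\Delta = d + 1$ must be disposed of separately by noting that a depth-$1$ cop set already meets the root constraint.
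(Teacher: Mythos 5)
Your parts (1) and (3) are fine—ruling out overlapping $\hat{U}_v$'s is indeed taken care of by the distance observation, and fixing a canonical tie-break among minimum-$\cop$ functions is a standard (and non-essential) hygiene step. The gap is in part (2), the existence of $\varphi$, and you have actually put your finger on it yourself without closing it. Your counting argument lower-bounds the number of $\cop$-nodes at depth $\ell$ by the \emph{branching factor} $\Delta - d - 1$ of the cop subtree and tries to conclude $(\Delta-d-1)^r > n$ for $r = \lceil\log_{d+1}n\rceil$. That inequality only holds when $\Delta - d - 1 > d + 1$, i.e.\ $\Delta > 2d + 2$; in the entire regime $d+3 \le \Delta \le 2d+2$ (which is allowed and, in fact, is exactly where the density results need the parameters to live) the bound degenerates and nothing prevents the greedy BFS from escaping $U_v$ while the root constraint is still unsatisfied. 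You flag this as "the main obstacle" but dispose only of the degenerate $\Delta = d+1$ case, which is excluded anyway by the hypothesis $\Delta \ge d+3$; the problematic regime remains open.

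The paper's proof avoids branching counts entirely and instead controls \emph{subtree weight}. Its sequential construction $\fa^*$ always assigns $\dec$ to the $d$ (or $\deg - 1 - d$, suitably adjusted) children with the \emph{heaviest} subtrees. This choice yields a clean invariant: if a $\cop$-node $u$ is at distance $i$ from $v$, the subtree hanging from $u$ has at most $|\hat{U}_v|/(d+1)^i$ nodes, because $u$'s subtree is no heavier than any of the $d$ declined siblings, so the parent's subtree (of weight at most $|\hat{U}_v|/(d+1)^{i-1}$ by induction) is split among at least $d+1$ subtrees each at least as heavy as $u$'s. Since the radius is $\lceil\log_{d+1}n\rceil$, the weight drops below $1$ at distance $\lceil\log_{d+1}n\rceil + 1$, forcing $\dec$ outside $U_v$. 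The crucial point is that the decay factor $d+1$ is \emph{independent of $\Delta$}, so the argument works uniformly across the full parameter range; your BFS construction could be repaired to give the same bound, but only if you specify that the children retained as $\cop$ are the \emph{lightest} ones (equivalently, that the heaviest are declined), and then argue via subtree weight rather than via the branching factor of the cop tree.
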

\begin{proof}
    To show well-definedness, it suffices to prove that for any node $v$ with input $A$ that does not output $\con$ there exists a function $\psi \colon \hat{U}_v \rightarrow \{ \dec, \cop \}$ that satisfies Properties (\ref{prop:first}) to (\ref{prop:last}), which we do in the following.

    Let $v$ be a node with input $A$ that does not output $\con$.
    Consider the following (sequential) algorithm $\fa^*$ for defining $\psi$, where, abusing notation, we consider $\hat{U}_v$ as a rooted tree with root $v$.
    Set $\psi(v) := \cop$.
    Select the $\min\{d, \deg(v)\}$ children of $v$ for which the subtrees hanging from the children have the largest numbers of nodes (breaking ties arbitrarily).
    Set $\psi(u) := \dec$ for any node $u$ in any of those $\min\{d, \deg(v)\}$ subtrees (including their roots).
    Set $\psi(u) := \cop$ for any child of $v$ that is not in any of these subtrees.
    Then iterate on each subtree hanging from such a child $u$ with $\psi(u) = \cop$ where, for the min expression, we use $\min\{d, \deg(u) - 1\}$ (where, as before, $\deg(u)$ denotes the degree of $u$ \emph{in $\hat{U}_v$}).
    This concludes the description of $\fa^*$.

    We argue that the function $\psi$ produced by $\fa^*$ satisfies Properties (\ref{prop:first}) to (\ref{prop:last}).
    From the description of $\fa^*$, it is immediate that Properties (\ref{prop:second}) to (\ref{prop:last}) are satisfied.
    To show that also Property (\ref{prop:first}) is satisfied, consider the following claim: if a node $u \in \hat{U}_v$ with $\psi(u) = \cop$ has distance $i$ from $v$, then the number of nodes in the subtree hanging from $u$ is at most $|\hat{U}_v| / \left( (d + 1)^{i}\right)$.
    This claim implies that Property (\ref{prop:first}) is satisfied, as any node $u \in \hat{U}_v \setminus U_v$ has distance $\lceil \log_{d + 1} n \rceil + 1$ from $v$ and $|\hat{U}_v| / \left( (d + 1)^{i}\right) < 1$ for $i = \lceil \log_{d + 1} n \rceil + 1 \geq \lceil \log_{d + 1} |\hat{U}_v| \rceil + 1$, which implies that such a node $u$ cannot output $\cop$.

    In the following, we prove the claim by induction in $i$.
    For $i = 0$, the claim is trivially true.
    Now assume that the claim holds for some $i$, and consider some node $u \in \hat{U}_v$ with $\psi(u) = \cop$ that has distance $i + 1$ from $v$.
    Let $w$ denote the parent of $u$, which implies that $w$ has distance $i$ from $v$.
    Moreover, by the design of $\fa^*$, we know that $\psi(w) = \cop$. 
    By applying the induction hypothesis to $w$, we know that the subtree hanging from $w$ has at most $|\hat{U}_v| / \left( (d + 1)^{i}\right)$ nodes.
    Since, by the design of $\fa^*$, the subtree hanging from $u$ contains at most as many nodes as any of the $d$ ``heaviest'' subtrees hanging from children of $w$, it follows that the subtree hanging from $u$ has at most $\left(|\hat{U}_v| / \left( (d + 1)^{i}\right)\right) / (d+1) = |\hat{U}_v| / \left( (d + 1)^{i+1}\right)$ nodes, as desired.
\end{proof}

As it immediately follows from the algorithm description that $\fa$ has a runtime of $O(\log n)$, we obtain the following corollary.

\begin{corollary}\label{cor:dfreerun}
    Algorithm $\fa$ solves the $d$-free weight problem in $O(\log n)$ rounds.
\end{corollary}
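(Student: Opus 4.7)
The plan is to observe that the runtime bound is essentially free from the algorithm description, and then verify that each of Properties~\ref{prop:contwo}, \ref{prop:dfree}, \ref{prop:adjacent} of the $d$-free weight problem is satisfied by the output produced by $\fa$, leaning on \Cref{lem:fawell} for well-definedness. Concretely, every node collects a neighborhood of radius $3\lceil \log_{d+1} n \rceil + 3 = O(\log n)$ and then computes everything locally, so the round complexity bound is immediate; the substance of the proof is correctness.

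For Property~\ref{prop:adjacent} I would observe that, by construction, any node $v$ with input $A$ either lies on a path of length at most $2\lceil \log_{d+1} n \rceil + 2$ between two $A$-nodes (and then outputs $\con$ by the first rule), or does not (and then $\varphi(v)=\cop$ is enforced by the third defining property of $\varphi$, and this is $v$'s output). For Property~\ref{prop:contwo} I would just unfold the definition of ``lies on a short $A$-to-$A$ path'': if $v$ has input $A$ and outputs $\con$, then $v$ is an endpoint of such a path, so the next node along that path toward the other $A$-endpoint is a neighbor that itself lies on a short $A$-to-$A$ path, and hence outputs $\con$; if $v$ has input $W$ and outputs $\con$, then $v$ is an interior node of such a path, so both path-neighbors of $v$ lie on that same short $A$-to-$A$ path and hence both output $\con$.

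The main obstacle is Property~\ref{prop:dfree}, the $d$-freeness condition. The plan here is to show that whenever a node $u$ outputs $\cop$, all of $u$'s neighbors are assigned an output by the \emph{same} function $\varphi$ associated with some anchor $v$, so that property~(\ref{prop:last}) in the definition of $\varphi$ directly yields the bound. For this I would argue: (i) if $u$ outputs $\cop$ then there is a node $v$ with input $A$ that does not output $\con$ such that $u \in \hat U_v$ and $\varphi(u)=\cop$, since the only other way to get output $\cop$ is through such a $\varphi$; (ii) property~(\ref{prop:first}) of $\varphi$ forces $u \in U_v$, so the radius condition guarantees that every neighbor of $u$ lies in $\hat U_v$; (iii) the parenthetical remark in the algorithm description that no node lies in two such $\hat U_v$'s (otherwise those two $A$-nodes would be connected by a short $A$-to-$A$ path and thus would output $\con$) ensures that the $\varphi$-values in question are unambiguously the \emph{global} outputs. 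Once all neighbors of $u$ are accounted for by $\varphi$, property~(\ref{prop:last}) of $\varphi$ gives at most $d$ of them the label $\dec$, which is exactly Property~\ref{prop:dfree}.

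Combining the three verifications above with the observation that the runtime is $O(\log_{d+1} n) = O(\log n)$ (since $d\ge 1$ is a constant in the relevant regime, and otherwise the bound still holds because $\log_{d+1} n \le \log n$), the corollary follows. The only slightly delicate point is (iii) in the previous paragraph, namely making sure that when we say ``neighbors of $u$ output $\dec$'' we are talking about the global output and not about $\varphi$; this is where the uniqueness of the anchor $v$ for each $\cop$-node is crucial, and it is also the reason why paths of length at most $2\lceil\log_{d+1}n\rceil+2$ (rather than some smaller threshold) are declared to be $\con$-paths.
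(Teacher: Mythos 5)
Your proposal is correct and takes the same approach the paper treats as immediate: the $O(\log n)$ bound comes from collecting a radius-$(3\lceil\log_{d+1}n\rceil+3)$ neighborhood, and correctness is a direct check of the three properties against the algorithm's rules, leaning on \Cref{lem:fawell} for well-definedness. Your careful handling of Property~\ref{prop:dfree} — using property~(\ref{prop:first}) of $\varphi$ to conclude $u\in U_v$ (so all neighbors lie in $\hat U_v$) and the anchor-uniqueness remark to identify $\varphi$-values with global outputs — is precisely the point the paper elides, and you have it right.
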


Next, we collect some useful properties of the output that $\fA$ produces in \Cref{obs:conncopy,lem:copynumber}.

\begin{observation}\label{obs:conncopy}
    Each maximal connected component $C$ of nodes that output $\cop$ contains exactly one node with input $A$.
    Moreover, for each node $v$ with input $A$ that outputs $\cop$, exactly one such maximal connected component $C$ contains nodes that are also contained in the $( \lceil \log_{d + 1} n \rceil + 1)$-hop neighborhood of $v$, and this connected component $C$ is a subgraph of the $( \lceil \log_{d + 1} n \rceil)$-hop neighborhood of $v$.
\end{observation}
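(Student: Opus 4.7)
The plan is to exploit two facts: pairwise disjointness of the neighborhoods $\hat{U}_v$ of different ``anchor'' $A$-nodes (by which I mean $A$-nodes that do not output $\con$, equivalently the $A$-nodes for which a function $\varphi_v$ is actually constructed), and the connectedness of each resulting $\cop$-block $K_v := \{u \in \hat{U}_v : \varphi_v(u) = \cop\}$, forced by Property~(\ref{prop:five}). First I would establish the disjointness: for any two distinct anchors $v, v'$ we have $\hat{U}_v \cap \hat{U}_{v'} = \emptyset$, and moreover $v$ is the only $A$-node inside $\hat{U}_v$. The argument is the same in both cases: if a second $A$-node $v'$ lay within distance $\lceil \log_{d+1} n \rceil + 1$ of $v$ (which would follow from $\hat{U}_v \cap \hat{U}_{v'} \neq \emptyset$ by the triangle inequality, or from $v' \in \hat{U}_v$ directly), then $v$ would sit on a path of length at most $2\lceil \log_{d+1} n \rceil + 2$ between two $A$-nodes and would therefore output $\con$ by the first rule of $\fa$---contradicting that $v$ is an anchor.

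Next I would prove that $K_v$ is connected (as a subgraph of the host tree) and is contained in $U_v$. The containment $K_v \subseteq U_v$ is immediate from Property~(\ref{prop:first}). For connectedness I would argue by contradiction using minimality. Suppose $K_v$ decomposed into tree-components $K_1, K_2, \ldots$ with $v \in K_1$. Since two adjacent $\cop$-nodes automatically belong to the same component of $K_v$, no edge of the tree joins $K_1$ to any $K_i$ with $i \geq 2$. Hence flipping $\varphi_v$ from $\cop$ to $\dec$ on $K_2 \cup K_3 \cup \cdots$ leaves the $\dec$-count in the neighborhood of every still-$\cop$ node unchanged, so Properties~(\ref{prop:first}), (\ref{prop:second}), the condition $\varphi_v(v) = \cop$, and~(\ref{prop:last}) all remain satisfied. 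This modified function assigns $\cop$ to strictly fewer nodes of $\hat{U}_v$, contradicting Property~(\ref{prop:five}).

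To assemble the observation, I would note that the global $\cop$-set is precisely $\bigsqcup_v K_v$ where $v$ ranges over anchors, a disjoint union by the disjointness step applied to the containments $K_v \subseteq \hat{U}_v$. To see that the $K_v$'s are exactly the maximal $\cop$-components in the tree, suppose $u \in K_v$ and $u' \in K_{v'}$ with $v \neq v'$ were adjacent; then $\dist(v, v') \leq 2\lceil \log_{d+1} n \rceil + 1$, once more forcing $v$ to output $\con$, a contradiction. Combined with the connectedness step, each maximal $\cop$-component equals exactly one $K_v$ and contains the single $A$-node $v$. For the second sentence of the observation: given an $A$-node $v$ outputting $\cop$ (necessarily an anchor, since non-anchor $A$-nodes output $\con$), the component containing $v$ is $K_v \subseteq U_v$; any other maximal $\cop$-component $K_{v'}$ meeting $\hat{U}_v$ would give $\hat{U}_{v'} \cap \hat{U}_v \neq \emptyset$, whence $v' = v$. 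The subtle point is the connectedness argument: one must be confident that collapsing disconnected $\cop$-pieces to $\dec$ cannot push any surviving $\cop$-node above the degree-$d$ threshold, and the resolution is the tautology that adjacency of $\cop$-nodes implies same-component membership in $K_v$, so no surviving $\cop$-node even sees any collapsed node as a neighbor.
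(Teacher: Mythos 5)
Your proof is correct and follows essentially the same route as the paper's: disjointness of the balls $\hat{U}_v$ around anchor $A$-nodes (forced by the $\con$-rule), together with connectedness of each local $\cop$-block $K_v$. The paper states the connectedness step as ``by the design of $\fa$'' without elaboration; your minimality argument (pruning away any component of $K_v$ not containing $v$ yields a function that still satisfies Properties (\ref{prop:first})--(\ref{prop:last}) but assigns strictly fewer $\cop$'s, contradicting Property (\ref{prop:five})) is the right justification and supplies the detail the paper omits.
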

\begin{proof}
    From the description of $\fa$, it follows that any two nodes with input $A$ that output $\cop$ are at distance at least $2 \lceil \log_{d + 1} n \rceil + 3$ from each other.
    Moreover, all nodes that output $\cop$ are contained in the $( \lceil \log_{d + 1} n \rceil)$-hop neighborhood of such a node $v$.
    Hence, each maximal connected component of nodes outputting $\cop$ must be entirely contained in the $( \lceil \log_{d + 1} n \rceil)$-hop neighborhood of such a node $v$.
    Observe further that, by the design of $\fa$, the nodes in the $( \lceil \log_{d + 1} n \rceil + 1)$-hop neighborhood of such a node $v$ that output $\cop$ form a connected component and contain $v$.
    All statements made in the observation follow.
\end{proof}

\begin{lemma}\label{lem:copynumber}
    Let $v$ be a node with input $A$ that outputs $\cop$, and let $\hu$ denote the set of nodes of the $( \lceil \log_{d + 1} n \rceil)$-hop neighborhood of $v$.
    Let $\fu \subseteq \hu$ denote the subset of nodes in $\hu$ that output $\cop$.
    Then $|\fu| \leq 6|\hu|^{\frac{\log (\Delta - 1 - d)}{\log (\Delta - 1)}}$.
\end{lemma}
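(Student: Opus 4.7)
My plan is to bound $|\fu|$ by exploiting two structural consequences of the greedy design of $\fa^*$. First, by \Cref{obs:conncopy}, the set $\fu$ coincides with the connected $\cop$-component containing $v$, which is a subtree of the host tree that I root at $v$. Second, from the definition of $\fa^*$, every non-root $\cop$ node $u$ has at most $c' := \Delta - 1 - d \geq 2$ cop children (the remaining children, at least $d$ in number whenever $\deg(u) - 1 > d$, are $\dec$), and at each $\cop$ node the $d$ children whose subtrees are largest are precisely the ones labeled $\dec$. Consequently, if such a $u$ has subtree size $W$ and $c'' \leq c'$ cop children with subtree sizes $t_1,\dots,t_{c''}$, then every $\dec$ sibling subtree has size at least $\max_j t_j \geq \tfrac{1}{c''}\sum_j t_j$, so $\sum_j t_j \leq (W-1)\,c''/(c''+d)$.

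The crux of the argument is a strong induction on $W$ establishing that the number of $\cop$ nodes in the subtree rooted at a non-root $\cop$ node of subtree size $W$ is at most $g(W) := (c' W^x - 1)/(c' - 1)$, which is at most $2 W^x$ since $c' \geq 2$. The base case $W = 1$ matches $g(1) = 1$. In the inductive step, Jensen's inequality applied to the concave function $t \mapsto t^x$ (noting $0 < x < 1$) combined with the sum bound above yields $\sum_j t_j^x \leq c'' (W-1)^x/(c''+d)^x$; since $c''/(c''+d)^x$ is increasing in $c''$ with maximum $c'/(\Delta-1)^x = 1$ at $c'' = c'$ (using the identity $(\Delta-1)^x = c'$), one obtains $\sum_j t_j^x \leq (W-1)^x$. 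Plugging the inductive hypothesis into $g(W) \leq 1 + \sum_j g(t_j)$ and simplifying reduces to verifying $g(W) \leq 1 + c' \cdot g((W-1)/(\Delta-1))$, which the closed-form $g$ satisfies exactly by direct computation using $c'/(\Delta-1)^x = 1$.

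For the root $v$, a small adjustment is needed because $v$ can have up to $\Delta$ children (rather than $\Delta-1$), and hence up to $\Delta - d$ cop children. Re-running the Jensen calculation for $v$'s cop children gives $\sum_j t_j^x \leq (\Delta - d)(|\hu|-1)^x / \Delta^x$, and the prefactor satisfies $(\Delta-d)/\Delta^x \leq (\Delta-d)/(\Delta-1)^x = (c'+1)/c' \leq 3/2$ (since $c' \geq 2$). Combining this with the non-root bound $g(t_j) \leq 2 t_j^x$ yields $|\fu| \leq 1 + 2 \cdot \tfrac{3}{2}|\hu|^x \leq 4 |\hu|^x \leq 6 |\hu|^x$. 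The main technical obstacle I anticipate is closing the induction tightly: a looser hypothesis such as $g(W) \leq C W^x$ fails to propagate because the accumulating additive $+1$ terms dominate the gap $W^x - (W-1)^x$, which is precisely why the additive correction $-1/(c'-1)$ in the closed-form $g$ is essential to the argument.
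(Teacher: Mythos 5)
Your approach is genuinely different from the paper's: you aim to close a single strong induction with the closed-form potential $g(W) = (c'W^x-1)/(c'-1)$, whereas the paper splits the count into a shallow part $K_1$ (bounded crudely via the fan-out $\Delta-1-d$) and a deep part $K_2$ (bounded by the geometric decay of the fraction of the subtree surviving into $\cop$-children at each level). However, the inductive step as written only closes in the special case $c''=c'$, i.e.\ when $u$ has full degree $\Delta$. After bounding $\sum_j t_j^x \leq \tfrac{c''}{(c''+d)^x}(W-1)^x$ and then dropping the factor $\tfrac{c''}{(c''+d)^x}\leq 1$, you get $1+\sum_j g(t_j) \leq 1+\tfrac{c'(W-1)^x - c''}{c'-1}$. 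The target identity $1 + c'\,g\bigl(\tfrac{W-1}{\Delta-1}\bigr) = g(W-1)\leq g(W)$ has $-c'$ (not $-c''$) in the numerator, so your chain only implies it when $c''\geq c'$. For $c''<c'$ the lossy bound would require $c'\bigl(1+(W-1)^x-W^x\bigr)\leq c''$, and since $1+(W-1)^x-W^x\to 1$ as $W\to\infty$, this fails for all large $W$ whenever $c''<c'$. To save the induction you must keep the factor $\rho:=c''/(c''+d)^x$ and establish $W^x-\rho(W-1)^x\geq 1-c''/c'$ for every attainable $W\geq c''+d+1$; at the boundary $W=c''+d+1$ (all child subtrees singletons) this reduces to $(c''+d+1)^x\geq 1+c''-c''/c'$, and one must additionally check the interior critical point of $W\mapsto W^x-\rho(W-1)^x$. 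This inequality seems to hold but is delicate — for $d=1$ and $c'$ large the two sides approach each other — and none of this verification appears in your write-up.

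A secondary issue: the greedy structure you exploit (at most $c'$ cop children per cop node, with the $d$ heaviest subtrees declared $\dec$) is a property of the output $\psi$ of the sequential procedure $\fa^*$, not of the actual output $\varphi$ returned by $\fa$, which by Property~(5) merely minimizes the number of $\cop$'s over all admissible functions and need not look greedy at all. You should, as the paper does, bound the number of $\psi$-cop nodes on $\hu$ and then invoke Property~(5) to conclude $|\fu|\leq K$.
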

\begin{proof}
    We upper bound the number of nodes in $\fu$ by upper bounding the number of nodes that output copy under the function $\psi$ obtained by executing the sequential algorithm $\fa^*$ from the proof of \Cref{lem:fawell} on $\hu$.
    Let $K$ denote the number of nodes $u \in \hu$ with $\psi(u) = \cop$.
    Furthermore, let $K_1$ denote the number of nodes $u \in \hu$ with $\psi(u) = \cop$ that are at distance at most $\lfloor\log_{\Delta - 1} |\hu|\rfloor$ from $v$, and $K_2$ the number of nodes $u \in \hu$ with $\psi(u) = \cop$ that are at distance at least $\lfloor\log_{\Delta - 1} |\hu|\rfloor + 1$ from $v$.
    In particular, we have $k = K_1 + K_2$.

    We first bound $K_1$.
    By the description of $\fa^*$, the nodes in $\hu$ that output $\cop$ induce a subtree (which we can assume to be rooted at $v$) in which each node has at most $\Delta - 1 - d$ children, possibly except for $v$, which has at most $\Delta - d$ children.
    This implies
    \begin{align*}
        K_1 &\leq \frac{\Delta - d}{\Delta - 1 - d} \sum_{i = 0}^{\lfloor\log_{\Delta - 1} |\hu|\rfloor} (\Delta - 1 - d)^i \leq 2 \cdot 2(\Delta - 1 - d)^{\lfloor\log_{\Delta - 1} |\hu|\rfloor}\\
        &\leq 4 \cdot 2^{\log (\Delta - 1 - d) \frac{\log |\hu|}{\log (\Delta - 1)}} \leq 4 |\hu|^{\frac{\log (\Delta - 1 - d)}{\log (\Delta - 1)}}.
    \end{align*}

    Next, we bound $K_2$.
    Let $L_i$ denote the number of nodes in $\fu$ that are contained in trees hanging from nodes in $\hu$ that are at distance $i$ from $v$.
    Observe that the design of $\fa^*$ ensures that $L_{i + 1} \leq \frac{\Delta - 1 - d}{\Delta - 1} L_i$ for any $i \geq 1$, and $L_1 \leq \frac{\Delta - d}{\Delta} L_0 = \frac{\Delta - d}{\Delta} |\hu|$.
    Therefore, we obtain
    \begin{align*}
        K_2 &\leq \frac{\Delta - d}{\Delta} \cdot \left(\frac{\Delta - 1 - d}{\Delta - 1}\right)^{\lfloor\log_{\Delta - 1} |\hu|\rfloor} \cdot |\hu| \\
        &\leq 2 \cdot \left(\frac{\Delta - 1 - d}{\Delta - 1}\right)^{\log_{\Delta - 1} |\hu|} \cdot |\hu| = 2 (\Delta - 1 - d)^{\log_{\Delta - 1} |\hu|} = 2 |\hu|^{\frac{\log (\Delta - 1 - d)}{\log (\Delta - 1)}}.
    \end{align*}

    Combining the bounds on $K_1$ and $K_2$, we obtain
    \[
        K = K_1 + K_2 \leq 6 |\hu|^{\frac{\log (\Delta - 1 - d)}{\log (\Delta - 1)}}.
    \]
    Observe that Property~\ref{prop:five} in the definition of Algorithm $\fa$ ensures that the number of nodes in $\fu$ is upper bounded by $K$.
    Hence, we obtain 
    \[
        |\fu| \leq 6|\hu|^{\frac{\log (\Delta - 1 - d)}{\log (\Delta - 1)}},
    \]
    as desired.
\end{proof}

\subsection{The Upper Bound}
\label{sec:polyupper}

Now we are set to describe the algorithm $\fp$ for $\pitwo$ that will achieve the desired upper bound.

\paragraph{The algorithm \texorpdfstring{$\fp$}{A-poly}}
Let $x := \frac{\log (\Delta - 1 - d)}{\log (\Delta - 1)}$, and set
\begin{align}
\alpha_i := (2-x)\alpha_{i-1},
\end{align}
for all $2 \leq i \leq k - 1$, and
\begin{align}
\alpha_1 := \frac{1}{\sum_{j=0}^{k-1} (2-x)^j},
\end{align}
as in \Cref{lem:OptValues}.
Moreover, set $\gamma_i := n^{\alpha_i}$, for all $1 \leq i \leq k - 1$.

In $\fp$, each \emph{active} node $v$ executes the generic algorithm for solving $k$-hierarchical $2\frac{1}{2}$-coloring from \Cref{sec:upperbound} on the maximal connected component of active nodes containing $v$ with parameters $\gamma_i$ as specified above.

Each \emph{weight} node $w$ starts by solving the $d$-free weight problem on the subgraph induced by the maximal connected component of weight nodes containing $w$, using Algorithm $\fa$.
For this, each weight node that is adjacent to an active node assumes that it has input $A$, while any other weight node assumes that it has input $W$.
After $3\lceil \log_{d + 1} n \rceil + 3$ rounds, each weight node has finished executing $\fa$.
If a weight node outputs $\con$ or $\dec$ in the execution of $\fa$, then it also outputs the respective label in $\fp$ and terminates.
If a weight node outputs $\cop$ in the execution of $\fa$, then it will also output $\cop$ in $\fp$ but it also has to compute its secondary output, so it will not terminate yet.
Instead it will wait until $3\lceil \log_{d + 1} n \rceil + 3$ rounds have passed (counted from the beginning of the algorithm), and then proceed to the next phase, described in the following.

After round $3\lceil \log_{d + 1} n \rceil + 3$, as soon as an active neighbor of a weight node $w$ with input label $A$ decides on its output label in the $k$-hierarchical $2\frac{1}{2}$-coloring problem, $w$ will flood this output label through the connected component of weight nodes outputting $\cop$ that contains $w$ (breaking ties arbitrarily in case $w$ has two or more active neighbors that decide on their output label simultaneously).
If immediately after round $3\lceil \log_{d + 1} n \rceil + 3$, a weight node $w$ with input label $A$ already knows about a neighbor that has decided on its output label in the $k$-hierarchical $2\frac{1}{2}$-coloring problem, then $w$ will use this output label for flooding through the connected component of weight nodes outputting $\cop$ that contains $w$ (again, breaking ties arbitrarily.)
As soon as a node in the connected component learns this output label $\ell$, it will output $\ell$ as its secondary output (and $\cop$ as its primary output).
This concludes the description of Algorithm $\fp$.

Algorithm $\fp$ is well-defined, by \Cref{obs:conncopy} and the fact that the design of $\fa$ ensures that any two weight nodes with input label $A$ outputting $\cop$ are at distance at least $2\lceil \log_{d + 1} n \rceil + 3$ from each other.

In the following, we will prove the correctness of $\fp$ and bound its node-averaged complexity.

\begin{lemma}
    Algorithm $\fp$ computes a correct output for $\pitwo$.
\end{lemma}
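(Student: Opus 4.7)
The plan is to check each of the five properties listed in \Cref{def:WeightedColoring}. Property~\ref{prop:twoand} is immediate: within each maximal connected component of active nodes, Algorithm $\fp$ runs exactly the generic algorithm from \Cref{sec:upperbound}, so \Cref{cor:genericcorrect} guarantees a valid $k$-hierarchical $2\frac{1}{2}$-coloring on that subtree. For Properties~\ref{prop:conorcop}, \ref{prop:con}, and \ref{prop:allbutd}, the reduction to the $d$-free weight problem does almost all the work. A weight node is declared to have input label $A$ in the call to $\fa$ precisely when it has an active neighbor, so:

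\smallskip
\noindent\textit{Property~\ref{prop:conorcop}} follows from Property~\ref{prop:adjacent} of the $d$-free weight problem (nodes with input $A$ output $\con$ or $\cop$). \textit{Property~\ref{prop:con}} follows from Property~\ref{prop:contwo}: a weight node with input $A$ that outputs $\con$ already has an active neighbor plus at least one weight $\con$-neighbor, and a weight node with input $W$ that outputs $\con$ has at least two weight $\con$-neighbors; in both cases it sees at least two neighbors that are either active or output $\con$. \textit{Property~\ref{prop:allbutd}} follows from Property~\ref{prop:dfree}, using the fact that only weight nodes can output $\cop$ in $\fp$, and that active nodes never output $\dec$, so restricting attention to weight neighbors loses nothing.

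The real content of the proof is Property~\ref{prop:copylab}, and here I would lean on \Cref{obs:conncopy}. Fix a maximal connected component $C$ of weight nodes outputting $\cop$. The observation tells us that $C$ contains exactly one weight node $w^\ast$ with input $A$ in $\fa$, i.e.\ exactly one node in $C$ that borders an active node. In the flooding phase, $w^\ast$ picks one of its active neighbors $v^\ast$ (using the tie-breaking rule) and floods $v^\ast$'s output label through $C$. Since $C$ is connected and contains $w^\ast$, this flood eventually assigns the same secondary label, namely the output of $v^\ast$, to every node of $C$. This simultaneously gives the second clause of Property~\ref{prop:copylab} (adjacent $\cop$-weight nodes agree on their secondary output) and the first clause (the unique $\cop$-weight node in $C$ with an active neighbor copies the output of one such neighbor, namely $v^\ast$).

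The only step that needs care is verifying that $\fp$ is truly well-defined, i.e.\ that $w^\ast$ always eventually learns some output from an active neighbor so the flood can begin, and that no node in $C$ is asked to flood two different labels. The first point follows because $w^\ast$ has at least one active neighbor by definition of its input label $A$, and the generic algorithm terminates on every active node within a bounded number of rounds. The second point is the reason why \Cref{obs:conncopy} is needed in its sharp form (``exactly one'', not ``at least one''): if two different weight nodes in $C$ each tried to initiate a flood with labels coming from disjoint active components, the labels could disagree; uniqueness of $w^\ast$ rules this out. I expect this uniqueness argument to be the main subtle point, and it reduces cleanly to the distance lower bound between nodes with input $A$ that output $\cop$ that is already established in the proof of \Cref{obs:conncopy}.
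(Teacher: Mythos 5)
Your proof matches the paper's proof exactly: Property~\ref{prop:twoand} via \Cref{cor:genericcorrect}, Properties~\ref{prop:conorcop}, \ref{prop:con}, \ref{prop:allbutd} via Properties~\ref{prop:adjacent}, \ref{prop:contwo}, \ref{prop:dfree} of the $d$-free weight problem, and Property~\ref{prop:copylab} via \Cref{obs:conncopy} together with the flooding mechanism. You merely spell out the reasoning (especially for Property~\ref{prop:con} and the uniqueness argument behind well-definedness, which the paper handles in a remark immediately before the lemma) in more detail than the paper's terse one-line citations, but the route is identical.
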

\begin{proof}
To show correctness of $\fp$, it suffices to show that the five properties of a correct output for $\pitwo$ specified in \Cref{def:WeightedColoring} are satisfied.
Property (\ref{prop:twoand}) (with $Z = 2\frac{1}{2}$) follows from \Cref{cor:genericcorrect}.
Properties (\ref{prop:conorcop}), (\ref{prop:con}), and (\ref{prop:allbutd}) follow from Properties (\ref{prop:adjacent}), (\ref{prop:contwo}), and (\ref{prop:dfree}) in the definition of the $d$-free weight problem, respectively. 
Property (\ref{prop:copylab}) follows from the way in which weight nodes that output $\cop$ determine their secondary output in $\fp$ and \Cref{obs:conncopy}.
\end{proof}

\UpperUpperBound*
\begin{proof}
    We start by observing that the weight nodes that output $\cop$ form connected components containing exactly one node $u$ that has at least one active neighbor and that all nodes in such a component output the same secondary label that furthermore is the output label of an active neighbor of $u$.
    (This follows from the design of $\fp$ and \Cref{obs:conncopy}.)
    Let $v$ be an active neighbor of such a node $u$ such that $u$ (and therefore all nodes in the connected component $C(u)$ of $\cop$ nodes containing $u$) ``copied'' the output of $v$ and returned it as secondary output during the execution of $\fp$.
    Then we say that each node in $C(u)$ is \emph{assigned} to $v$ (where we break ties arbitrarily so that each weight node outputting $\cop$ is assigned to exactly one active node).
    For each active node $v$, let $D(v)$ denote the set of all weight nodes assigned to $v$.
    Note that the design of $\fp$ ensures that all nodes in $D(v)$ terminate at most $O(\log n)$ rounds after $v$, by \Cref{cor:dfreerun}.
    Hence, for our calculations we can (and will) assume that all nodes in $D(v)$ terminate at the same time as node $v$ as the $O(\log n)$ additive overhead per node does not exceed the targeted node-averaged complexity.
    Furthermore, observe that the weight nodes that output $\con$ or $\dec$ terminate in $O(\log n)$ rounds (by \Cref{cor:dfreerun}) and can therefore be ignored for the calculations of the node-averaged complexity as, again, their contribution does not exceed the targeted node-averaged complexity. 
    
    We proceed by bounding the sum of the individual termination times of all other nodes.
    As a first step, we compute, for each phase $1 \leq i \leq k$ of the generic algorithm applied on the active nodes (described in \Cref{sec:upperbound}), the number of nodes that have still not terminated at the \emph{start} of the phase.
    By \Cref{lem:remainAfterLvlI}, at the beginning of phase $i$, there are at most still $n \cdot\prod_{j = 1}^{i-1} \frac{1}{\gamma_j}$ active nodes remaining (where we assume an empty product to evaluate to $1$).
    Moreover, the argument used for proving the second part of \Cref{cor:EvenWeightWorst}, combined with \Cref{lem:copynumber}, implies that the number of weight nodes that have not terminated at the beginning of phase $i$ is at most
    \[
        6 \cdot n \prod_{j = 1}^{i-1} \frac{1}{\gamma_j} \cdot \left( n / \left( n \prod_{j = 1}^{i-1} \frac{1}{\gamma_j} \right)\right)^x = 6n\left(\prod_{j = 1}^{i-1} \gamma_j \right)^{x - 1}.
    \]
    Hence, the total number of nodes that have not terminated at the beginning of phase $i$ is at most
    \[
        n \cdot\prod_{j = 1}^{i-1} \frac{1}{\gamma_j} + 6n\left(\prod_{j = 1}^{i-1} \gamma_j \right)^{x - 1} \leq 7n\left(\prod_{j = 1}^{i-1} \gamma_j \right)^{x - 1}.
    \]
    Multiplying with the runtime of the respective phase and summing up over all $k$ phases, we obtain that the aforementioned sum of individual termination times is upper bounded by
    \[
        7n \left(\sum_{i = 1}^{k - 1} \left( \gamma_i \left(\prod_{j = 1}^{i-1} \gamma_j \right)^{x - 1} \right) + \left( \frac{n}{\prod_{j = 1}^{k - 1} \gamma_j} \right) \cdot \left(\prod_{j = 1}^{k-1} \gamma_j \right)^{x - 1} \right).
    \]
    By using the fact that $\gamma_i = n^{\alpha_i}$, for each $1 \leq i \leq k - 1$, our upper bound can written as
    \[
        7n\left( \sum_{i = 1}^{k - 1} \left( n^{\alpha_i + (x-1)\sum_{j = 1}^{i-1} \alpha_j } \right) + n^{1 + (x-2)\sum_{j = 1}^{k-1} \alpha_j} \right).
    \]
    Observe that the exponents of the $k$ summands are precisely those that also appeared in \Cref{sec:weightedlower}.
    In particular, \Cref{lem:OptValues} ensures that all $k$ exponents are equal to $\alpha_1$, enabling us to rewrite our upper bound as
    \[
        7n(k\cdot n^{\alpha_1}).
    \]
    As $k$ is a constant, we obtain that the aforementioned sum of individual termination times is in $O(n \cdot n^{\alpha_1})$, implying a node-averaged complexity of $\fp$ of
    \[
        O(n^{\alpha_1}) = O\left(n^{1/(\sum_{j = 0}^{k - 1}(2 - x)^j)}\right),
    \]
    where $x = \frac{\log (\Delta - 1 - d)}{\log (\Delta - 1)}$, as desired.
    
\end{proof}

\section{Solving Weighted \texorpdfstring{$3\frac{1}{2}$}{3.5}-coloring} \label{sec:UsingOldAlgo}
When trying to solve the weighted versions of $2\frac{1}{2}$- and $3\frac{1}{2}$-coloring, the new challenge is to deal with the weight nodes. This challenge is significantly harder in the lower regime, where we often have only $o(\log^* n)$ node-averaged time to work with. We formalised the problem that the weight nodes want to solve as the $d$-free Weight Problem in \cref{def:WeightProblem}. We restate it here for completeness.

\paragraph{The \texorpdfstring{$d$}{d}-free weight problem}
\WeightProblem*

We are interested in solutions where not too many nodes must output the label $\cop$. The ultimate goal would be to match \cref{lem:weightOfTree} and show that if a component of $w$ weight nodes is attached to an adjacent node $v$, at most $w^x$ many nodes have to output $\cop$, where $x = \frac{\log(\Delta - d -1)}{\log(\Delta - 1)}$.
In the polynomial regime we were able to give an algorithm that achieves such a behaviour, but this algorithm was allowed to spend $O(\log n)$ rounds. In the $\log^*$ regime we can not afford such luxuries. We will instead show that we can achieve something similar, but with a slightly worse efficiency factor of $x' = \frac{\Delta -d + 1}{\Delta -1}$.

\subsection{Using the algorithm from \texorpdfstring{\cite{fullversion}}{On the Node-Averaged Complexity of LCLs on Trees}}
To achieve the goal of an efficiency factor $x' = \frac{\Delta -d + 1}{\Delta -1}$ we use the algorithm of \cite{fullversion} which computes a $(\gamma,\ell, O(\log n))$-decomposition (\cref{def:modi}) with a node-averaged complexity of just $O(\log^*n)$. We will call it the \emph{Fast Decomposition Algorithm}. This decomposition can then be used to obtain a good solution for the weight problem. The full description and analysis of their algorithm is quite lengthy and we encourage the reader to look up the details in the original paper. To keep this work at a reasonable length, we will only restate the relevant details.\\

The layers of such a decomposition implicitly define an ordering and the main point of interest in \cite{fullversion} are local maximums with regards to the partial order of layers defined in \cref{def:ordering}.

We adapt the naming convention and call nodes that already have a layer assigned \emph{assigned nodes} and all other nodes \emph{free nodes}.

\begin{definition}[local maximum \cite{fullversion}]\label{def:locMax}
    A \emph{local maximum} is an assigned node $v \in V(G)$ with the following two properties:
    \begin{enumerate}
        \item Node $v$ and all of its neighbors are assigned.
        \item For each neighbor $w$ of $v$, the layer of $w$ is strictly smaller than the layer of $v$.
    \end{enumerate}
\end{definition}

However the algorithm of \cite{fullversion} requires $\Theta(\log^* n)$ rounds of precomputation, which is not fast enough for us. The bottleneck of their algorithm is the \emph{Compress With Slack} procedure, which requires us to be able to split paths into short subpaths of length in $[\ell, 2\ell]$ for some constant $\ell$. This is required to satisfy the definition of a proper $(\gamma, \ell, L)$-decomposition. We define a relaxed version of a $(\gamma, \ell, L)$-decomposition that does not need to split paths into small subpaths. We emphasise that this is the only difference to the partial $(\gamma, \ell, i)$-decomposition in \cite{fullversion}.

\begin{definition}[relaxed $(\gamma,\ell,i)$-decomposition]
Given three integers $\gamma, \ell, i$, a \emph{$(\gamma,\ell,i)$-decomposition} is a partition of a subset $V' \subset V(G)$ into $2i-1$ layers $V_1^R = (V^R_{1,1},\ldots,V^R_{1,\gamma}), \ldots, V_{i}^R  = (V^R_{i,1},\ldots,V^R_{i,\gamma})$, $V_1^C, \ldots, V_{i-1}^C$ such that the following hold for all layer numbers $1 \leq \ell \leq i$.
    \begin{enumerate}
		\item Compress layers: The connected components of each $G[V_\ell^C]$ are paths of length at least $\ell$, the endpoints have exactly one neighbor in a higher layer, and all other nodes do not have any neighbor in a higher layer, or that is not yet assigned a layer.
		\item Rake layers: The diameter of the connected components in $G[V_\ell^R]$ is $O(\gamma)$, and for each connected component at most one node has a neighbor in a higher layer, or that is not yet assigned a layer.
		\item The connected components of each sublayer $G[V^R_{\ell,j}]$ consist of isolated nodes. Each node in a sublayer $V^R_{\ell,j}$ has at most one neighbor in a higher layer, or that is not yet assigned a layer. 
	\end{enumerate}    
\end{definition}

The following is implicit in Section 5.4 of \cite{fullversion}.
\begin{corollary}[\cite{fullversion}~Section 5.4]
For any constant $\ell$, the \emph{Fast Decomposition Algorithm} computes a relaxed $(\gamma, \ell, O(\log n))$-decomposition with node-averaged complexity $O(1)$ and worst case complexity $O(\log n)$.
\end{corollary}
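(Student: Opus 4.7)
The plan is to inspect the Fast Decomposition Algorithm of \cite{fullversion} and isolate the single subroutine responsible for the $\Theta(\log^* n)$ contribution to its node-averaged complexity. That subroutine is the \emph{Compress With Slack} procedure, which is invoked purely to refine each Compress layer into subpaths of length in $[\ell, 2\ell]$, as required by the stricter original definition of a proper $(\gamma, \ell, L)$-decomposition. Since the relaxed decomposition only requires paths of length at least $\ell$ (with no upper bound), this refinement step can be dropped entirely, and everything else in the algorithm of \cite{fullversion} can be used verbatim.

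First I would recall the overall structure of the Fast Decomposition Algorithm: it proceeds in $O(\log n)$ iterations, each producing one Rake layer (split into $\gamma$ sublayers) followed by one Compress layer. In each iteration, the rake phases peel off constant-diameter branches in $O(1)$ worst-case time, and the compress phase identifies maximal paths of degree-two nodes in the remaining graph, where a node terminates as soon as the extent of its own path has been learned. Only the subsequent Compress With Slack step requires symmetry breaking on a long path, inherently costing $\Theta(\log^* n)$ rounds per surviving path node. In the relaxed setting we simply skip this step.

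The next step is to check that skipping Compress With Slack does not break any later iteration of the algorithm: each subsequent iteration only depends on the topology of the unassigned remainder and on the boundary between assigned and unassigned nodes, and neither of these is affected by whether a long compress path is subdivided or left intact. Hence the $O(\log n)$ bound on the number of iterations, and therefore on the worst-case complexity, is preserved, and the three properties in the definition of a relaxed $(\gamma, \ell, O(\log n))$-decomposition hold directly, the only relaxation being in the first property, which is precisely what we weakened. Finally, for the node-averaged bound, one sums the per-iteration contributions: the rake and unrefined compress steps contribute only $O(1)$ in total, by the same geometric decay of unassigned nodes established in \cite{fullversion}, so the total node-averaged complexity drops from $O(\log^* n)$ to $O(1)$. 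The main obstacle is really just the accounting step of pinpointing, within the analysis of \cite{fullversion}, exactly which terms in their node-averaged bound are charged to Compress With Slack; once this is done the corollary is immediate, which is why it is attributed as implicit in their Section~5.4.
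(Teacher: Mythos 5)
Your proof plan matches the reasoning the paper itself gives in the text immediately preceding the corollary: the only $\Theta(\log^* n)$ contribution in the Fast Decomposition Algorithm of \cite{fullversion} comes from Compress With Slack, which exists solely to subdivide compress paths into $[\ell,2\ell]$-length subpaths; since the relaxed decomposition drops that requirement, the subroutine can be omitted, the remaining rake/compress iterations are unchanged, and the node-averaged cost falls to $O(1)$ while the $O(\log n)$ iteration bound (and hence worst-case complexity) is preserved. The paper does not spell out a formal proof — it states the corollary as implicit in Section 5.4 of \cite{fullversion} — so your plan is essentially a faithful elaboration of the paper's own argument.
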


Similar to the algorithm presented in \cref{ssec:treeDec}, the Fast Decomposition Algorithm consists of iteratively performing Rakes and Compresses. The key idea to obtain a fast node-averaged complexity is to insert additional compress paths, to create more local maximums. As a result, they will mess with the ordering of the layers. To still be able to argue about which nodes were assigned first, they introduce an orientation of the edges in the following way. When a node $v$ is raked, the node orients its unique remaining edge (if it exists) towards itself. When a path is compressed, the first and last $\ell$ edges are oriented inwards. Refer to \cref{fig:orientation} for an illustration. 

\begin{figure}
	\centering
	\includegraphics[width=0.6\textwidth]{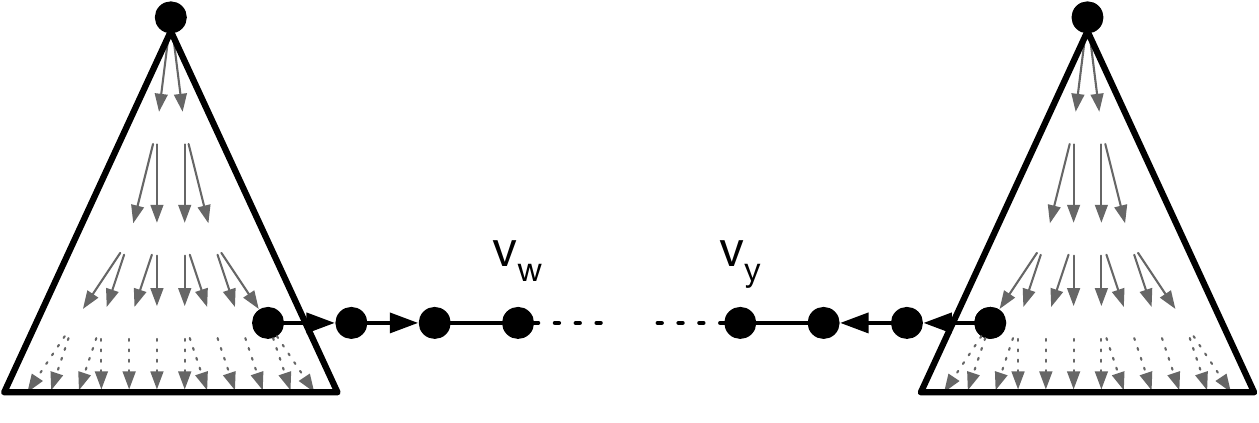}
	\caption{This figure from \cite{fullversion} illustrates the way edges are oriented in the Fast Decomposition Algorithm. The two black nodes at the top are nodes that are not yet assigned a layer. The arrows in the tree show how edges are oriented when such a tree is raked away and the bold path in the middle represents a compress path. In such a compress path, only the first and last $\ell$ edges are oriented.}
	\label{fig:orientation}
\end{figure}

During the analysis the authors mark nodes when they are able to terminate. For any iteration $i$, this is only done in the following two cases (refer to \cite{fullversion}~Lemmas 34-36 for details):
\begin{enumerate}
    \item Whenever a node $v$ becomes a local maximum it becomes \emph{marked}. Furthermore all nodes that can be reached from $v$ through a consistently oriented path also become \emph{marked} in $O(i)$ rounds. 
    \item For all compress layers $V^C_{i-1}$ \footnote{In iteration $i$, compress layer $i-1$ is assigned.} in the relaxed decomposition, all nodes $u$ that are at distance at least $\ell$ from any endpoint of a path become \emph{marked}. Furthermore all nodes that can be reached from $u$ through a consistently oriented path also become \emph{marked} in $O(i)$ rounds. 
\end{enumerate}

They then use these marks to obtain a fast node-averaged complexity with the following lemma.
\begin{lemma}[\cite{fullversion} Lemma 36]
There exists a constant $0 < \sigma < 1$, such that for every iteration $i > 5$ of the Fast Decomposition Algorithm at most $2 \Delta^b n \sigma^{i}$ nodes are not marked.
\end{lemma}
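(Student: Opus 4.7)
The plan is to argue by induction on $i$ that the fraction of unmarked nodes shrinks by a constant factor in each iteration past some initial constant number of iterations. Concretely, I would fix a constant $\sigma\in(0,1)$ and a constant $b$ large enough to absorb the number of ``boundary'' nodes created per iteration by the rake/compress operations (in particular, $b$ must dominate the checkability radius and the length $\ell$). The base case is trivial for $i\le 5$ (all $n$ nodes are potentially unmarked, which is handled by the multiplicative slack $2\Delta^b$). For the inductive step, I want to show that if $U_i$ denotes the set of unmarked nodes at the start of iteration $i$, then $|U_{i+1}|\le \sigma\,|U_i|$, from which the claimed bound $2\Delta^b n\sigma^i$ follows.

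The key structural step is to argue that a constant fraction of the nodes in $U_i$ become marked during iteration $i$ via the two marking rules stated in the excerpt. First, I would classify the nodes of $U_i$ according to how they are processed in iteration $i$: some nodes are raked into $V_{i,j}^R$ sublayers, some are assigned into the compress layer $V_{i-1}^C$ as interior nodes, some sit near endpoints of compress paths, and some remain free. For each raked node $v$ that becomes a local maximum, $v$ itself is marked, and in addition every node that can reach $v$ through a consistently oriented path is marked in $O(i)$ rounds; similarly, each interior compress node at distance $\ge \ell$ from an endpoint of its path is marked and triggers mark propagation. Using the edge-orientation rule (rakes orient the unique remaining edge toward the raked node, compresses orient the first and last $\ell$ edges inward), I would observe that the in-neighborhood of any newly marked node in the oriented graph is exactly the set of previously processed nodes that were waiting on it—so these marks sweep through all of $U_i$ that is ``anchored'' to the newly created local maxima or deep compress interiors.

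The counting step is then to lower bound, up to boundary losses of order $\Delta^b$ per compress path and per rake tree, the number of nodes of $U_i$ that get marked this way. From the properties of a relaxed $(\gamma,\ell,\cdot)$-decomposition, every rake tree of diameter $O(\gamma)$ produces at least one local maximum and therefore marks all of its interior nodes; every compress path of length $\ge \ell$ produces at least one deep interior node and therefore marks all but at most $2\ell$ boundary nodes of the path. Summing over all rake and compress components processed in iteration $i$, and using that standard rake-and-compress reduces the remaining free forest by a constant factor per iteration, gives $|U_{i+1}|\le\sigma|U_i|$ for some constant $\sigma<1$ (after absorbing the multiplicative $\Delta^b$ factor into the base case).

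The main obstacle I anticipate is the careful bookkeeping of the boundary nodes, i.e., the nodes near endpoints of compress paths and near the ``root'' of rake trees that are not marked in the current iteration but only in a later one. These boundary contributions must be bounded by $O(\Delta^b)$ per component, and one must verify that the constant $\sigma$ can be chosen uniformly in $i$ despite the accumulating boundary terms. The cleanest way I see to handle this is to charge each unmarked boundary node to an adjacent marked interior node in the same component and note that such a charge is of bounded multiplicity (since degrees are at most $\Delta$ and there are only $O(1)$ boundary layers), which is exactly what the $2\Delta^b$ slack in the statement is there to accommodate.
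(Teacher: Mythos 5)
First, a meta-observation: this lemma is \emph{cited} from \cite{fullversion} (their Lemma 36) and is not proved in the present paper at all. There is therefore no ``paper's own proof'' to compare against; the paper simply imports the result as a black box into its analysis of the adapted Fast Decomposition Algorithm.

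That said, evaluating your sketch on its own merits: the overall shape (induction, geometric decay $|U_{i+1}|\le \sigma|U_i|$ after a constant burn-in, slack $2\Delta^b$ to absorb the first five iterations and per-component boundary losses) is the right kind of argument, and the observation that marks propagate along consistently oriented paths is correct. However, the load-bearing step — ``every rake tree of diameter $O(\gamma)$ produces at least one local maximum and therefore marks all of its interior nodes'' — is false, and this is exactly the issue that the Fast Decomposition Algorithm of \cite{fullversion} was designed to work around. A rake subtree is peeled off a still-unassigned free node; its ``root'' (the last-raked node) has its one remaining edge pointing into that free node, so by \cref{def:locMax} it is \emph{not} a local maximum while the parent is unassigned, and in a generic tree this persists for many iterations. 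Vanilla rake-and-compress therefore does \emph{not} create local maxima at a constant rate, and a naive charging of each rake tree to a local maximum breaks down. The actual mechanism in \cite{fullversion} is (i) to deliberately insert extra compress paths so that rake subtrees are cut off from the rest of the tree early, and (ii) to rely primarily on the second marking rule — deep interior compress nodes are marked immediately — and then let those marks sweep backward through the oriented edges into the attached rake subtrees. The geometric decay comes from a careful accounting of how many unmarked nodes can survive this sweep, not from counting local maxima per rake tree. Your sketch would need to be rebuilt around the compress-interior marking rule (plus the inserted compress paths) rather than around local maxima to be salvageable, and in particular the clean per-rake-tree charging you describe does not work as stated.
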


We will not cover more details of the original algorithm, as we believe this would only cause more confusion. The original analysis is quite long and complicated, so instead we will state the following observations from the original paper which we will need to solve the $d$-free weight problem efficiently:
\begin{observation}\label{obs:ObservedProperties}
\noindent
\begin{enumerate}
    \item If $\ell  \geq 2$ any assigned node has at most one incoming edge. This holds because edges are only ever oriented from unassigned nodes to newly assigned nodes. (If $\ell$ is too small, a node in a compress path might have two incoming edges.)
    \item If $\ell>2$ then two local maxima must have distance at least $3$. Consider a path between two local maxima, there must necessarily be a compress path, as otherwise the layers on this path must be strictly decreasing in both directions.
    \item The only case in which the endpoint of a compress path $v$ gets marked, is when there is a consistently oriented path from another node $u$ towards it and $u$ got marked.
    \item Unoriented edges between assigned nodes are only in compress layers, at distance at least $\ell$ from the endpoints.
    \item In iteration $i$ of the algorithm, the connected components induced by consistently oriented edges have diameter at most $O(i)$. This is a direct result of each Rake or Compress only extending such a component by only a constant length.
\end{enumerate}
\end{observation}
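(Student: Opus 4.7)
The plan is to verify each of the five items separately, relying on the edge-orientation and marking rules of the Fast Decomposition Algorithm stated just before the observation (\Cref{fig:orientation} together with the two itemized marking rules). None of the five items requires a deep invariant from \cite{fullversion}; each is a local combinatorial consequence of those rules, so I would organize the argument as a short case analysis per item.

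For items (1) and (4), I would do a case analysis on how a given assigned node $v$ first became assigned. A rake orients the unique remaining edge of the raked node toward that node, contributing exactly one incoming edge. A compress orients only the first and last $\ell$ edges of the path, pointing inward, so $v$ receives at most one incoming compress edge (and only if $v$ is strictly inside one of the two length-$\ell$ boundary segments). The remaining edges incident to $v$ either stay unoriented, and then they lie in the interior of the compress path at distance $\geq \ell$ from both endpoints (which is precisely item (4)), or they are oriented later when some neighbor of $v$ is itself raked, in which case the orientation points \emph{away} from $v$. For $\ell \geq 2$, no two of these oriented edges have head $v$, yielding item (1).

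For item (2), I would argue that on any simple path $P$ between two distinct local maxima $u \neq v$, the layer of an assigned node must strictly decrease as one moves away from either endpoint, by \Cref{def:locMax}. Hence the layer sequence along $P$ first strictly decreases from $u$ and then strictly increases toward $v$, meeting at a local minimum; except that in the algorithm the only way two adjacent assigned nodes can share the same layer is if they belong to a common compress path, so the ``turnaround'' region of $P$ must sit inside some compress path. Since every compress path has length at least $\ell$, and $\ell > 2$ forces $\ell \geq 3$, the distance between $u$ and $v$ is at least $3$. Item (3) is then immediate: an endpoint $w$ of a compress path has a neighbor in the same compress layer (so some neighbor is not in a strictly smaller layer, ruling out \Cref{def:locMax}) and it lies at distance $0 < \ell$ from an endpoint of its own compress path, so neither of the two marking-seed rules applies to $w$; thus $w$ can only be marked by propagation along a consistently oriented path from an already-marked node.

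For item (5), I would induct on the iteration $i$. A single rake appends at most one edge to a consistently oriented component (the edge of the newly raked node), and a single compress adds to any existing component at most two oriented segments of length $\ell$, one on each side, while possibly merging two components through the oriented boundaries of the compress path. Since $\ell$ is a constant and each iteration performs one rake layer and one compress layer, the diameter grows by $O(1)$ per iteration, giving diameter $O(i)$ after $i$ iterations. The main obstacle I anticipate is item (2), where one has to argue carefully that the high-layer region of $P$ is forced to be a single compress path of length $\geq \ell$ rather than a more exotic configuration; making this rigorous is where one leans on the precise semantics of layers used in \cite{fullversion}.
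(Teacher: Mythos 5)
The paper states this as an observation accompanied only by brief inline justifications, so expanding each item via a short case analysis is the right approach, and items (1), (3), (5) come out essentially correct. One mechanical point on your treatment of items (1) and (4): once $v$ has been assigned and removed from the remaining tree, no \emph{later} rake can orient an edge incident to $v$, because a removed node cannot be anyone's unique remaining neighbor; the edges from $v$ to its lower-layer rake children were oriented \emph{before} $v$ was assigned, at the moments those children were raked, and they already point away from $v$. The conclusion you need survives, but the phrase ``oriented later when some neighbor of $v$ is itself raked'' describes a step the algorithm never takes.

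The substantive gap is in item (2), which you correctly flag as the obstacle. You assert that the layer sequence along $P$ ``first strictly decreases from $u$ and then strictly increases toward $v$,'' citing \Cref{def:locMax}, but the local-maximum definition constrains only the first step out of $u$ and the first step out of $v$, so the ``hence'' does not follow. The missing ingredient is the invariant that every assigned node has at most one strictly higher-layer neighbor: a rake node's sole remaining neighbor at removal time is its parent, and an endpoint of a compress path of length at least $2$ has only its single boundary neighbor in a higher layer. With this invariant, if no compress node lay on $P$, then from $u$ onward the layer would strictly decrease at \emph{every} step, because the predecessor along $P$ already uses up the unique higher-layer neighbor of the current node; by symmetry the same holds from $v$ onward, which is a contradiction. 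This is precisely what the paper's one-line hint (``otherwise the layers on this path must be strictly decreasing in both directions'') is pointing at, and the same invariant also rules out the case your V-shape picture glosses over, namely a single-node local minimum of the layer profile: such a node would need two strictly higher neighbors, forcing a length-$1$ compress path, which $\ell > 2$ excludes.
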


We use these observations to find a good solution to the weight problem. To be more explicit, we change the algorithm in the following ways.

\paragraph{Adapted Algorithm:}
Each node $v$ with input label $A$ checks its 5-hop neighborhood. If there is at least one other node with input label $A$, then the unique path between $u$ and $v$ all output $\con$. As a result all remaining nodes with input label $A$ have distance at least $5$. We then proceed to start to run the Fast Decomposition Algorithm with parameter $\ell = 3$ (so all of the observations hold) in the remaining graph (so ignoring all nodes with output $\con$). In each iteration we apply these additional rules whenever they come up:
\begin{enumerate}
    \item If a node $v$ with input label $A$ gets assigned a layer in iteration $i$, it immediately does the following based on the type of layer:
    \begin{itemize}
        \item \textbf{Case 1 Rake Layer:} Then by the properties of a relaxed $(\gamma,\ell,i)$ decomposition, $v$ has at most a single edge towards an active node $u$. If such a $u$ exists, we label it $\dec$ and call it a \emph{border node}. Furthermore $v$ outputs $\cop$ and all nodes that can be reached from $v$ through a consistently oriented path (that did not already output $\dec$) also output $\cop$ in $O(i)$ rounds. 
        \item \textbf{Case 2 Compress Layer:} Then by the properties of a relaxed $(\gamma,\ell,i)$ decomposition, $v$ has at most 2 neighbors $u,w$ that are in the same layer, or that are not yet assigned. Both of these nodes immediately output $\dec$ and and are now \emph{border nodes}. Furthermore $v$ outputs $\cop$ and all nodes that can be reached from $v$ through a consistently oriented path (that did not already output $\dec$) also output $\cop$ in $O(i)$ rounds. If either $u,w$, or both were already an assigned node all nodes that can be reached from them through a consistently oriented path also output $\dec$ in $O(i)$ rounds. 
    \end{itemize}
    \item If a \emph{border node} $v$ becomes assigned a layer all nodes (that did not already output $\dec$) that can be reached from $v$ through a consistently oriented path also output $\dec$ in $O(i)$ rounds. 
    \item If a node $v$ becomes a local maximum, it immediately adapts label $\dec$. Furthermore all nodes (that did not already output $\dec$) that can be reached from $v$ through a consistently oriented path also output $\dec$ in $O(i)$ rounds. 
    \item For all compress layers $V^C_{i-1}$ in the relaxed decomposition, all nodes $u$, that are at distance at least $\ell$ from any endpoint of a path, adapt label $\dec$. Furthermore all nodes (that did not already output $\dec$) that can be reached from $u$ through a consistently oriented path also output $\dec$ in $O(i)$ rounds. 
\end{enumerate}

All nodes that become marked in the original Fast Decomposition Algorithm fix their output label in our adapted version, in fact we are more aggressive with having nodes terminate than the base algorithm. As a consequence we obtain the following result.

\begin{corollary}\label{cor:FewNotDecline}
Let $R(i) = \{v \in V \mid v \text{ did not output }\cop \text{ or }\dec \text{ after iteration }i\}$. Then there exists a constant $0 < \sigma < 1$, such that for every iteration $i > 5$ of the adapted Fast Decomposition Algorithm the number of nodes in $R$ is at most $2 \Delta^b n \sigma^{i}$.
\end{corollary}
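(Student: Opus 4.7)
The plan is to reduce the statement to Lemma~36 of \cite{fullversion} by showing that $R(i)$, the set of nodes without a $\cop$ or $\dec$ output after iteration $i$ of the adapted algorithm, is contained in the set $U(i)$ of nodes that are not \emph{marked} after iteration $i$ of the original Fast Decomposition Algorithm. Since that lemma bounds $|U(i)|$ by $2\Delta^b n \sigma^i$ for some constant $0<\sigma<1$, the corollary will follow immediately.

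First I would align the marking rules of the original algorithm with the output rules of the adapted algorithm. Marking in the original algorithm happens in two ways: when a node becomes a local maximum, and when a node in a compress layer lies at distance at least $\ell$ from the endpoints of its compress path, with the resulting marks propagating along consistently oriented paths within $O(i)$ rounds. Rules~3 and~4 of the adapted algorithm reproduce exactly these two triggers, simply replacing the word ``marked'' by ``output $\dec$''. Hence every node that is marked in the original algorithm by iteration $i$ gets an output of $\dec$ in the adapted algorithm by iteration $i$, unless it was already assigned $\cop$ or $\dec$ earlier by rules~1 or~2, which only shortens its termination time and does not hurt the containment.

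Next I would dispose of the only apparent subtlety: a $\dec$-propagation along a consistently oriented path might appear to be blocked by nodes that already carry a $\cop$ label coming from rules~1 or~2. The point to argue is that the consistently oriented paths are determined solely by the edge orientations produced by the underlying Fast Decomposition Algorithm, which is unchanged by the adaptation, and that the parenthetical ``(that did not already output $\dec$)'' in the description of the adapted algorithm is a label-overwrite-avoidance convention rather than a statement that the propagation itself stops at such nodes. Consequently, every node reachable from a marking source via a consistently oriented path in the original algorithm carries some output (either $\cop$ or $\dec$) in the adapted algorithm, so it is not in $R(i)$. Here I would explicitly appeal to Observation~\ref{obs:ObservedProperties}, in particular item~5, to control the time needed for these propagations.

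The hard part, to the extent there is one, is precisely this bookkeeping: making sure that the extra $\cop$- and $\dec$-assignments introduced by rules~1 and~2 of the adapted algorithm do not prevent any node that would be marked in the original run from receiving an output in the adapted run. Once this containment $R(i)\subseteq U(i)$ is established, Lemma~36 of \cite{fullversion} yields $|R(i)|\leq |U(i)|\leq 2\Delta^b n\sigma^i$ for every $i>5$, which completes the proof.
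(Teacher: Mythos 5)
Your proof takes exactly the same route as the paper: the paper justifies the corollary with a single sentence asserting that the adapted algorithm is "more aggressive" about terminating nodes than the original (so every marked node in the original has a fixed output in the adapted run), and then invokes Lemma~36 of \cite{fullversion}. Your argument makes this containment $R(i)\subseteq U(i)$ explicit and addresses the one potential worry (whether $\cop$-labels block $\dec$-propagation), which the paper silently dismisses, but the substance and conclusion are identical.
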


To prove that we actually compute a solution to the $d$-free Weight Problem, we have to make sure that any node with output $\cop$ has at most $d$ neighbors with output $\dec$. The following lemma will be enough to prove correctness. Note also that, by ignoring the precomputation, we do not actually use the $\con$ label at all. Furthermore, since the rest of the algorithm runs on the graph without the nodes that did output $\con$ in the beginning, we never actually encounter a node that has output $\con$ for the rest of the algorithm.

\begin{lemma}\label{lem:FewDecline}
Assuming nodes with input label $A$ have distance at least $5$ and $\ell>2$, then for all iterations $i$, any node $v$ that did not output $\dec$ has at most 2 neighbors that do. Furthermore, at most one of them is not a \emph{border node}.
\end{lemma}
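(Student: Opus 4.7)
The plan is to bound $v$'s $\dec$ neighbors by analyzing the \emph{border} contributions and the \emph{non-border} contributions separately, leveraging the distance-$5$ separation between nodes with input label $A$ (guaranteed by the adapted algorithm's preprocessing) together with the orientation properties collected in \cref{obs:ObservedProperties}. Each border neighbor $b$ of $v$ is, by construction, adjacent to a unique input-$A$ node $a$ that created $b$ upon being assigned. Since any two input-$A$ nodes are at distance at least $5$, at most one input-$A$ node can lie within distance $2$ of $v$. If $v$ has input $W$, two borders of $v$ sharing the same input-$A$ creator would form the $4$-cycle $a$--$b_1$--$v$--$b_2$--$a$, impossible in a tree, so $v$ has at most one border neighbor. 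If $v$ has input $A$, the unique nearby input-$A$ node is $v$ itself, which by Rule~1 creates at most $2$ borders ($1$ in Case~1, $2$ in Case~2).

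Next, I bound the number of non-border $\dec$ neighbors by one. Let $n$ be such a neighbor; then $n$ is either a local maximum (Rule~3), a compress middle (Rule~4), or received $\dec$ via propagation from one of those along a consistently oriented path. The edge $(v,n)$ cannot be unoriented, for otherwise \cref{obs:ObservedProperties}(4) would force both endpoints to be compress middles and hence $v$ itself to be $\dec$, a contradiction. The edge also cannot be oriented $n \to v$, for then the consistently oriented path that marked $n$ extends by this edge to $v$, making $v$ $\dec$, again a contradiction. Hence the edge is oriented $v \to n$, and by \cref{obs:ObservedProperties}(1), $v$ is the unique incoming neighbor of $n$. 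A symmetric second neighbor $n'$ would yield $v \to n'$ with $v$ the unique incoming neighbor of $n'$ as well; case-analyzing whether $n$ and $n'$ are direct sources or were reached via propagation, and using the uniqueness of incoming edges at both $n$ and $n'$, forces the propagation path reaching one of them to pass through $v$, yielding a contradiction.

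It remains to exclude the joint configuration ``$v$ has input $A$, creates $2$ borders in Case~2, and has a non-border $\dec$ neighbor $m$'' so that the total count is at most $2$. In Case~2, $v$ lies on a compress path; any lower-layer neighbor $m$ of $v$ satisfies $v \to m$ (as $m$ was raked or compressed with $v$ as its unique unassigned neighbor), so $v$'s $\cop$-propagation along this outgoing edge would mark $m$ as $\cop$ unless $m$ had previously output $\dec$. But any $\dec$-propagation that could have reached $m$ earlier must, by \cref{obs:ObservedProperties}(1), use $m$'s unique incoming edge, i.e., the edge from $v$, and therefore pass through $v$, marking $v$ $\dec$ before its assignment, contradicting $v$'s ultimate $\cop$ label. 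The main obstacle I anticipate is precisely this timing-and-conflict argument: carefully tracking the order of marking events across iterations and using \cref{obs:ObservedProperties}(5) to keep each propagation wavefront local enough that the contradiction goes through despite the asynchronous interplay of $\cop$- and $\dec$-propagations triggered at different times.
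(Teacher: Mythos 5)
Your decomposition of $\dec$-neighbors into border and non-border contributions, and the conclusion that each can be bounded separately, matches the shape of the result; your border count is essentially the paper's argument (the paper uses the distance-$4$ contradiction directly, you phrase it via nearby input-$A$ nodes and a $4$-cycle, but these are the same fact). The non-border analysis, however, is where you diverge significantly and where the proposal has genuine gaps.

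The paper does not reason about propagation paths or orientations of $\dec$-sources at all. Instead it runs a case analysis on whether $v$ is currently unassigned, in a rake layer, or in a compress layer, and reads off directly from \cref{obs:ObservedProperties} which kinds of neighbors can be $\dec$: in the compress case the only candidates are a same-layer neighbor at distance exactly $\ell$ (when $v$ is at distance $\ell-1$) or the unique higher-layer neighbor (when $v$ is an endpoint and that neighbor is a local maximum). The key observation, which you do not make, is that in both of these compress sub-cases the candidate is \emph{already one of $v$'s two border nodes} when $v$ has input $A$ (the border nodes of a compress input-$A$ node are precisely its same-layer-or-unassigned neighbors). This ``overlap'' is a purely structural fact and eliminates the need for any timing argument. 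You instead try to \emph{exclude} the configuration outright via an asynchronous interplay of $\dec$- and $\cop$-propagation, which you yourself flag as the ``main obstacle'' and do not carry out; this part of your proposal is an acknowledged gap, and moreover it does not cover the case where the $\dec$-neighbor is a direct source (a compress middle or local maximum) rather than a node reached by propagation, so the claimed contradiction would not go through as stated.

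A second gap: your argument that $v$ cannot have two non-border $\dec$ neighbors is compressed into ``case-analyzing whether $n$ and $n'$ are direct sources or were reached via propagation \ldots forces the propagation path reaching one of them to pass through $v$.'' This does not actually cover the case where both $n$ and $n'$ are direct sources: that case is not resolved by a propagation path passing through $v$ at all. It requires separate structural arguments (two local maxima at distance $2$ contradict \cref{obs:ObservedProperties}(2); two compress middles flanking $v$ force $v$ itself to be a compress middle hence $\dec$; a local maximum cannot have an incoming edge $v \to n$, etc.), none of which appear in the proposal. The paper sidesteps all of this because its case analysis on $v$'s own layer status determines what $v$'s neighbors can look like directly. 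I'd suggest abandoning the propagation-timing route and instead mirroring the paper's layer-based case split, using the overlap observation to handle the input-$A$ compress case.
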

\begin{proof}
There are only three different reasons why nodes output $\dec$. As the result of a local maximum, in the middle of compress paths, or because of border nodes. If a node $v$ does not have input label $A$, then it has at most one neighboring border node. This is true because, if there were at least two such border nodes, then the respective nodes with input $A$ would have distance exactly $4<5$ a contradiction.\\ 
If instead $v$ has input label $A$ itself, it can only have border nodes when $v$ itself is assigned a layer. We distinguish whether or not $v$ is a rake or compress node.
\begin{itemize}
    \item If $v$ is a rake node, it has exactly one border node and can not have more because of the distance to other nodes that have input $A$.
    \item If $v$ is in a compress layer, then it has two border nodes. Furthermore if $v$ is an endpoint of the compress layer, the two border nodes are the unique higher layer neighbor and its unique neighbor inside its compress layer. If instead $v$ is not an endpoint of the compress layer, then the two border nodes are its two unique neighbors inside the layer.
\end{itemize}

Next we consider how many neighbors with output $\dec$ a node $v$ has without considering border nodes. We consider the three cases unassigned, assigned to a rake layer, and assigned to a compress layer. Since we already took care of border nodes, we only consider local maximums and the nodes in compress paths.
\begin{itemize}
    \item \textbf{Case 1 $v$ is unassigned:} Since local maximums must be surrounded by assigned nodes, none of the neighbors of $v$ can be local maximums. Furthermore for any already assigned neighbor $u$ of $v$, $v$ must be the unique unassigned neighbor of $u$. As a result the edge between them is oriented from $v$ to $u$, because of \cref{obs:ObservedProperties}. So $u$ must either be a rake node itself, or the endpoint of a compress path, so $u$ did not output $\dec$.
    \item \textbf{Case 2 $v$ is in a rake layer:}  If $v$ is in a rake layer, then the edge to its unique higher layer (or unassigned) neighbor $u$ must be oriented from $u$ to $v$. So if $u$ had output $\dec$, then also $v$ would have output $\dec$. Clearly no other neighbor $w$ of $v$ can be a local maximum and again, cannot have output $\dec$ because of some other local maximum because of \cref{obs:ObservedProperties}. If $u$ was the endpoint of a path, it also would not have output $\dec$ because of \cref{obs:ObservedProperties}.
    \item \textbf{Case 3 $v$ is in a compress layer:} If $v$ is in a compress layer, then it can have at most one neighbor $u$ in the same layer that did output $\dec$. This would be the case only if $u$ is at distance exactly $\ell$ from an endpoint of the path and $v$ is at distance exactly $\ell-1$ from an endpoint of the path. This holds because otherwise they either both output $\dec$ (when they are deep enough in the path), or both do not (because both are not and the edge is oriented from $v$ to $u$). 
    The other possibility is that $v$ is an endpoint of the path and the unique higher layer neighbor $u$ is a local maximum. Then the edge between them must be oriented from $v$ to $u$, since otherwise $v$ would have also output $\dec$. Because $\ell>2$ these two cases cannot happen simultaneously.\\
    Whichever one of the two possibilities for compress nodes is true (either $v$ is the endpoint, or $v$ is exactly at distance $\ell$ from an endpoint), if $v$ also has input $A$, it still will have at most 2 neighbors that do output $\dec$, since there is an overlap.
\end{itemize}
\end{proof}

As a result of \cref{lem:FewDecline}, we immediately get that at the end we will have a valid solution to the $d$-free weight problem, as long as $d\geq2$. Additionally, because of \cref{cor:FewNotDecline}, we get that we will be done after $O(\log n)$ rounds and that the node-averaged complexity is $O(1)$.
\begin{corollary}\label{cor:TerminationFastDecomp}
The adapted Fast Decomposition Algorithm computes a valid solution to the $d$-free weight problem, for $d\geq 2$. Furthermore the node-averaged complexity is $O(1)$ and all nodes have decided on an output after $O(\log n)$ rounds.
\end{corollary}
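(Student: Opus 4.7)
The plan is to verify in turn the three claims: correctness, the $O(\log n)$ worst-case bound, and the $O(1)$ node-averaged bound. Most of the heavy lifting is already done by the preceding lemmas, so the proof is essentially a bookkeeping argument on top of \Cref{lem:FewDecline} and \Cref{cor:FewNotDecline}.

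For correctness I would check the three properties of the $d$-free weight problem from \Cref{def:WeightProblem} one after another. Property~\ref{prop:adjacent}, that every $A$-node outputs $\con$ or $\cop$, splits into two cases: an $A$-node that has another $A$-node within its $5$-hop neighborhood gets label $\con$ during the precomputation, while any remaining $A$-node is eventually assigned a layer by the Fast Decomposition Algorithm and then rule~1 of the adapted algorithm forces it to output $\cop$ at the moment of assignment (rules~3 and~4, which could otherwise assign $\dec$, only apply after assignment). Property~\ref{prop:contwo} on $\con$-outputs follows directly from the precomputation, which labels the entire shortest path between any two close $A$-nodes, so interior $\con$-nodes have two $\con$-neighbors and each $A$-endpoint has one. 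Property~\ref{prop:dfree} is exactly the conclusion of \Cref{lem:FewDecline} combined with the hypothesis $d \geq 2$, since every non-$\dec$ node (in particular every $\cop$ node) has at most two $\dec$-neighbors.

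The worst-case bound is short: the underlying Fast Decomposition Algorithm terminates in $O(\log n)$ rounds, and the adapted version only adds, in iteration $i$, propagations of output labels along consistently oriented paths which by item~5 of \Cref{obs:ObservedProperties} have diameter $O(i)$. Summing $O(i)$ over $O(\log n)$ iterations still gives $O(\log n)$ overall.

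For the node-averaged bound, the key input is \Cref{cor:FewNotDecline}: the number of nodes still undecided after iteration $i > 5$ is at most $2\Delta^b n \sigma^{i}$ for some constant $\sigma \in (0,1)$. A node that fixes its output during iteration $i$ contributes at most $O(i)$ to the total runtime (the iteration plus one round of propagation along an $O(i)$-diameter oriented component), so the sum of individual runtimes is bounded up to constants by
\[
    O(n) \;+\; \sum_{i > 5} i \cdot 2\Delta^b n \sigma^{i-1} \;=\; O(n),
\]
where the first term absorbs the first few iterations and convergence of the geometric-type series $\sum_i i\sigma^{i-1}$ handles the tail. Dividing by $n$ yields node-averaged complexity $O(1)$. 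The main subtle point, and effectively the only nontrivial step, is exactly that the $O(i)$ per-iteration propagation cost is absorbed because the undecided count decays geometrically; a weaker, e.g.\ polynomial, decay bound in place of \Cref{cor:FewNotDecline} would not suffice.
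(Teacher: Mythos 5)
Your proof takes essentially the same route as the paper, which presents this corollary as an immediate consequence of \Cref{lem:FewDecline} and \Cref{cor:FewNotDecline}; you correctly fill in the bookkeeping that the paper leaves implicit, and both the correctness argument and the node-averaged computation are sound. The one step that does not survive a literal reading is the worst-case bound: ``summing $O(i)$ over $O(\log n)$ iterations'' gives $O(\log^2 n)$, not $O(\log n)$. The propagations triggered in iteration $i$ run concurrently with the subsequent iterations of the Fast Decomposition Algorithm; a node whose output is fixed as a result of iteration $i$ therefore terminates at round $O(i) + O(i) = O(i)$ (the round at which iteration $i$ occurs, plus the $O(i)$ rounds of propagation along an $O(i)$-diameter consistently oriented component), so the worst-case termination time is the \emph{maximum} of these quantities over all iterations, attained at $i = O(\log n)$, rather than their sum. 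A secondary wording slip in the node-averaged part: ``one round of propagation along an $O(i)$-diameter oriented component'' should read ``$O(i)$ rounds of propagation''; the per-node $O(i)$ cost that you then plug into the series is nevertheless what you intended, and since $\sum_{i>5} i\,\sigma^{i-1}$ converges the conclusion of $O(1)$ node-averaged complexity is correct.
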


Assume some node $v$ that has input label $A$ gets assigned a layer in iteration $i$ and as a consequence immediately outputs $\cop$. Then all of the nodes that have not yet chosen an output label and can be reached from $v$ over a consistently oriented path also output $\cop$. Let us denote by $\mathcal{C}(v)$ all of the nodes that do output $\cop$ as a consequence of $v$ being assigned a layer. 
\begin{lemma}\label{lem:CopyComponentsIsolated}
For any node $v$ with input label $A$, the set $\mathcal{C}(v)$ is a tree rooted at $v$ with diameter in $O(i)$, where $i$ is the iteration in which $v$ is assigned a layer.
Assuming nodes with input label $A$ have distance at least $5$, then for any $v$, if $u \notin \mathcal{C}(v)$ is adjacent to $\mathcal{C}(v)$, then $u$ has output $\dec$.
Furthermore for two nodes $u \neq v$, both with input label $A$, $\mathcal{C}(v)$ and $\mathcal{C}(u)$ are disjoint. 
\end{lemma}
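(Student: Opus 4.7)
The plan is to establish the three claims in turn, each time exploiting the edge-orientation structure summarised in \Cref{obs:ObservedProperties}. For the first claim, I use the definition of $\mathcal{C}(v)$ as the set consisting of $v$ together with all nodes reachable from it along consistently oriented edges whose propagation was not cut short at a $\dec$. Since the first item of \Cref{obs:ObservedProperties} bounds the in-degree of every assigned node by one, each $u \in \mathcal{C}(v)$ has a unique oriented predecessor, and iterating that relation produces a unique oriented path back to $v$; hence $\mathcal{C}(v)$ is a tree rooted at $v$. The $O(i)$ diameter bound then follows directly from the fifth item of \Cref{obs:ObservedProperties}, which bounds the diameter of any consistently oriented connected component after iteration $i$.

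For the second claim I fix $u \notin \mathcal{C}(v)$ adjacent to some $w \in \mathcal{C}(v)$ and split on the orientation of the edge $\{u,w\}$. If it is oriented $w \to u$, extending $v$'s oriented path by this edge would place $u$ in $\mathcal{C}(v)$ unless $u$ had already output $\dec$, so $u$ must carry $\dec$. If the edge is unoriented, the fourth item of \Cref{obs:ObservedProperties} places both endpoints deep in a compress layer at distance at least $\ell$ from its endpoints, and then Rule~4 of the adapted algorithm would have already forced $w$ itself to output $\dec$, contradicting $w \in \mathcal{C}(v)$. If the edge is oriented $u \to w$ with $w \neq v$, then $w$ would acquire a second incoming edge on top of the one from its path-predecessor in $\mathcal{C}(v)$, contradicting the first item of \Cref{obs:ObservedProperties}. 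The remaining case $u \to v = w$ is the one that requires effort: a short enumeration over the layer type to which $v$ was assigned (rake node, compress endpoint, or non-endpoint compress position within distance $\ell - 1$ of an endpoint) shows that in every such scenario $u$ is either a \emph{same-layer} or \emph{not-yet-assigned} neighbor of $v$, and then the border-node rules of Cases~1 and~2 of the adapted algorithm explicitly tag $u$ with $\dec$. Non-endpoint compress positions farther than distance $\ell - 1$ from the endpoint do not arise, because Rule~4 would force $v$ itself to output $\dec$ and thereby exclude $v$ from being a propagation root.

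For the third claim I argue by contradiction: assuming some $z$ lies in both $\mathcal{C}(v)$ and $\mathcal{C}(v')$ with $v \neq v'$, I take the oriented paths $P_v$ and $P_{v'}$ ending at $z$, and walk backward from $z$ along the single incoming edges granted by the first item of \Cref{obs:ObservedProperties}. At each backward step the two predecessors must coincide, so either both paths have equal length and $v = v'$ (a contradiction), or the shorter path is a suffix of the longer one, say $v = v'_{k'-k}$ with $1 \leq k < k'$. In the latter situation the node immediately preceding $v$ along $P_{v'}$ is $v$'s unique oriented in-neighbor, and the second claim applied at $v$ forces that neighbor to carry label $\dec$; consequently it cannot belong to $\mathcal{C}(v')$, yielding the desired contradiction. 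I expect the hardest step to be precisely the $u \to v$ sub-case of the second claim, where one must match each way $v$ could have acquired its incoming edge against the border-node rules of the adapted algorithm and the structural guarantees of the relaxed $(\gamma,\ell,i)$-decomposition; once that is in place, the third claim follows mechanically from the second claim combined with the single-incoming-edge property.
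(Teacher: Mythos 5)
Your proposal is correct and follows essentially the same route as the paper's proof: all three claims are derived from the edge-orientation invariants in \Cref{obs:ObservedProperties} (unique incoming edge, $O(i)$-diameter of consistently oriented components, unoriented edges only deep inside compress paths), followed by a case split on the orientation of the boundary edge for the second claim and a backtracking-along-unique-in-edges argument for the disjointness claim. The paper is somewhat terser — it handles the $u\to v$ subcase in the first paragraph by noting directly that $v$'s unique incoming edge leads to a border node, and it dispatches disjointness by invoking the one-incoming-edge bound — but the underlying argument is the same one you spell out.
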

\begin{proof}
By \cref{obs:ObservedProperties}, we get the claim on the diameter and that no node has more than one incoming edge. The incoming edge of $v$ is exactly its edge to its unique unassigned neighbor. But that neighbor does output $\dec$. So all other edges of $v$ are oriented away from $v$.
Since also all other nodes in $\mathcal{C}(v)$ have only one incoming edge and only nodes reachable by a consistently oriented path output $\cop$, the subtree $\mathcal{C}(v)$ is rooted at $v$. \\
Consider the orientation of an edge between a node $w \in \mathcal{C}(v)$ and a node $u \notin \mathcal{C}(v)$.
\begin{itemize}
    \item \textbf{Oriented from $w$ to $u$:} either $u$ did already output $\dec$ previously, or $u$ would have to also output $\cop$. Furthermore $u$ cannot have $\cop$ as the result of some other propagation of $\cop$ labels, since its unique incoming edge comes from $w$.
    \item \textbf{Oriented from $u$ to $w$:} $w$ would have two incoming edges, since $w$ must have a consistently oriented path from $v$ to $w$, a contradiction.
    \item \textbf{The edge $\{u,w\}$ is not oriented:} by \cref{obs:ObservedProperties}, $u$ must be a node inside a compress layer that is at distance at least $\ell$ from the endpoints, so $u$ did output $\dec$. If $u$ also has input $A$, then $w$ cant also have input $A$ and so $w$ would be a border node and output $\dec$.
\end{itemize}

To see the furthermore part, assume this is not the case. So take a node $w \in \mathcal{C}(u) \cup \mathcal{C}(v)$. Because of the distance between two nodes that have input $A$, we have that $v \neq w \neq u$. So $w$ can be reached over a consistently oriented path from $u$ and another such path from $v$. But nodes only have one incoming edge, a contradiction.
\end{proof}

As a result, the components that output $\cop$ are completely separated by $\dec$. This will be enough for the rest of our analysis.

\subsection{Generic algorithm for \texorpdfstring{$\Pi^{3.5}_{\Delta, d, k}$}{weighted 3.5-coloring}}
We are ready to give an algorithm for the LCL $\Pi^{3.5}_{\Delta, d, k}$ from \cref{def:WeightedColoring}. So assume $\Delta, d, k$ to be fixed constants satisfying $\Delta \geq d + 3$ and $d \geq 3$, and let $x' = \frac{\log(\Delta -d +1)}{\log(\Delta -1)}$. We will give an algorithm that achieves node-averaged complexity $\Theta((\log^*n)^{\alpha_1(x')})$, where the function $\alpha_1(x') = \frac{1}{1 + (1-x')\sum_{j=0}^{k-1} (2-x')^j}$ is the one from \cref{thm:LowerBoundWeighted3.5}. Note that this is not tight because $x' > x$, however we will see in the proof of \cref{thm:LowDensity} that with some tricks we will get $x' < x + \varepsilon$ for any arbitrarily small constant $\varepsilon$.
Our generic algorithm will be a mixture of the generic algorithm for $k$-hierarchical $3\frac{1}{2}$-coloring from \cref{sec:upperbound} and the algorithm for the $d$-free weight problem that we just discussed. 

\paragraph{Algorithm Description}
Each node has input either $\act$ or $\wei$, let $V^A$ be the set of nodes that have input $\act$ and $V^W$ be the set of nodes that have input $\wei$. We call $V^A$ the set of active nodes and $V^W$ the set of weight nodes, we further note that $V = V^A \cup V^W$. 
Let $\alpha_1, \ldots, \alpha_{k-1}$ be the values obtained in \cref{lem:OptValuesLowerRegime} by using $x' = \frac{\log(\Delta -d +1)}{\log(\Delta -1)}$. We set $\gamma_1 = (\log^*n)^{\alpha_1}, \ldots \gamma_{k-1} = (\log^*n)^{\alpha_{k-1}}$ and run the generic algorithm with these parameters on the graph induced by $V^A$. We note that computing these $\gamma_i$ values is just local computation.

On the other hand, nodes in $V^W$ run the adapted Fast Decomposition Algorithm to compute a solution to the $d$-free Weight Problem. Each node in $V^W$ that is adjacent to an active node $v \in V^A$ gets input label $A$, while all other nodes in $V^W$ adopt input label $W$. The graph induced by $V^W$ is now a valid instance for the $d$-free weight problem (the parameter $d$ is the same as in  $\Pi^{3.5}_{\Delta, d, k}$). When running the adapted Fast Decomposition Algorithm in $V^W$, every node that chooses output label $\dec$ as its solution for the $d$-free weight problem, also immediately outputs $\dec$ as its output for $\Pi^{3.5}_{\Delta, d, k}$.\\
Some nodes will be assigned $\con$ right at the beginning, these nodes also immediately output $\con$ as their output for $\Pi^{3.5}_{\Delta, d, k}$.

Any node $v$ with input label $A$ waits for the adapted fast decomposition algorithm to assign it the label $\cop$, let the iteration in which that happens be iteration $i$. Then $v$ does the following with the set $\mathcal{C}(v)$ from \cref{lem:CopyComponentsIsolated}.
\begin{itemize}
    \item \textbf{Case 1 $v$ has an active neighbor that already terminated with output $o$:} Then all nodes in $\mathcal{C}(v)$ output label $\cop$ with secondary output $o$ in $O(i)$ rounds.
    \item \textbf{Case 2 all of $v$'s active neighbors still did not terminate yet:} Then it first collects the entire topology of the rooted subtree induced by $\mathcal{C}(v)$ in $O(i)$ time. By \cref{lem:FewDecline} any node in $\mathcal{C}(v)$ has at most 2 neighbors with output label $\dec$. $v$ changes some labels in $\mathcal{C}(v)$ to $\dec$ in a way that minimises the number of nodes that have label $\cop$, while still maintaining a valid solution to the $d$-free weight problem. We note that we do not change nodes that previously had output $\dec$. Then, all nodes that are newly assigned $\dec$ immediately output $\dec$ as their output for $\Pi^{3.5}_{\Delta, d, k}$.
\end{itemize}
When an active node $v \in V^A$ terminates with output $o$ and an adjacent active node $u \in V^W$ already was assigned a set $\mathcal{C}(u)$ in iteration $i$, in $O(i)$ rounds all nodes in $\mathcal{C}(u)$, that still have label $\cop$ as their solution for the $d$-free weight problem, output $\cop$ with secondary output $o$ as their output for $\Pi^{3.5}_{\Delta, d, k}$.\\
This finishes the description of the algorithm.
We will first argue that the algorithm is correct.

\begin{lemma}
The algorithm computes a correct solution to $\Pi^{3.5}_{\Delta, d, k}$.
\end{lemma}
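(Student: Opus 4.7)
The plan is to verify each of the five properties from Definition~\ref{def:WeightedColoring} in turn, leveraging the correctness of the two subroutines. Property~(\ref{prop:twoand}) is immediate from \Cref{cor:genericcorrect}: the active nodes in each connected component of $V^A$ execute the generic algorithm for $k$-hierarchical $3\frac{1}{2}$-coloring, which by that corollary produces a valid output.

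For Properties~(\ref{prop:conorcop}), (\ref{prop:con}), and (\ref{prop:allbutd}), I would reduce to the correctness of the $d$-free weight solution on $V^W$. The preprocessing assigns $\con$ only to nodes lying on the unique length-at-most-$5$ path between two weight nodes adjacent to $V^A$; every interior node on such a path has two $\con$-neighbors, and every endpoint has both a $\con$-neighbor along the path and an active neighbor in $V^A$, so Property~(\ref{prop:con}) holds for these nodes. The remaining weight nodes then run the adapted Fast Decomposition Algorithm, which by \Cref{cor:TerminationFastDecomp} produces a valid $d$-free weight solution; its Property~(\ref{prop:dfree}) directly yields Property~(\ref{prop:allbutd}), and its Property~(\ref{prop:adjacent}) yields Property~(\ref{prop:conorcop}). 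The Case~2 relabeling performed by each active-input node $v$ on $\mathcal{C}(v)$ is by design constrained to preserve a valid $d$-free weight solution (leaving everything unchanged is always a valid choice), and by \Cref{lem:CopyComponentsIsolated} the different $\mathcal{C}(v)$'s are disjoint and bordered entirely by $\dec$-nodes, so these local modifications never interact across components.

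For Property~(\ref{prop:copylab}), I would rely on \Cref{lem:CopyComponentsIsolated} as the main structural tool. Each maximal connected component of $\cop$-nodes coincides with some $\mathcal{C}(v)$ for a unique active-input weight node $v$, and every node in $\mathcal{C}(v)$ ultimately receives the same secondary output, namely the label that $v$ uses to seed the flooding (chosen in Case~1 from an already-terminated active neighbor of $v$, or later from the first such neighbor that terminates). Two adjacent $\cop$-nodes must therefore lie in the same $\mathcal{C}(v)$, since distinct components are separated by $\dec$-nodes, and consequently share their secondary output. Moreover, within $\mathcal{C}(v)$ only $v$ itself is adjacent to a node in $V^A$ (any other node in $\mathcal{C}(v)$ has input $W$ in the $d$-free problem), so the requirement that a $\cop$-node with an active neighbor matches one such neighbor's output only needs to be checked at $v$, where it holds by construction.

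The main obstacle I anticipate is justifying that the Case~2 relabeling does not spoil the validity of the $d$-free solution: we must ensure that no $\cop$-node inside or adjacent to $\mathcal{C}(v)$ ends up with more than $d$ neighbors labelled $\dec$. \Cref{lem:CopyComponentsIsolated} handles the ``adjacent'' side because every neighbor of $\mathcal{C}(v)$ outside it already outputs $\dec$ and is therefore unaffected by the modification. For the inside, \Cref{lem:FewDecline} bounds the number of pre-existing $\dec$-neighbors per node by two, so together with the assumption $d \geq 3$ there is enough slack to carry out the local minimization while still respecting the $d$-free constraint.
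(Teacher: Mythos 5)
Your proposal is correct and follows essentially the same approach as the paper: verify Property~(\ref{prop:twoand}) via the generic algorithm's correctness, reduce Properties~(\ref{prop:conorcop}), (\ref{prop:con}), (\ref{prop:allbutd}) to correctness of the $d$-free weight solution plus the $\con$-preprocessing, and use the structure of the $\mathcal{C}(v)$ sets to handle the $\cop$ flooding. Your write-up is in fact somewhat more thorough than the paper's proof, which glosses over Property~(\ref{prop:copylab}) entirely; you correctly argue via \Cref{lem:CopyComponentsIsolated} that distinct $\cop$-components are separated by $\dec$-nodes, that only the root $v$ of each $\mathcal{C}(v)$ can have an active neighbor, and that the flooded secondary output therefore matches $v$'s active neighbor. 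One small imprecision in your last paragraph: the reason Case~2 relabeling preserves validity is simply that the minimization is, by construction, constrained to valid $d$-free solutions and the feasible set is non-empty because ``leave everything unchanged'' is always a valid choice (as you note earlier); the $d \geq 3$ slack from \Cref{lem:FewDecline} is needed for the \emph{efficiency} bound in \Cref{lem:BoundOnCopyTrees}, not for correctness.
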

\begin{proof}
Because of the correctness of the generic algorithm from \cref{sec:upperbound} the nodes in $V^A$ output a correct solution to $k$-hierarchical $3\frac{1}{2}$-coloring. So we only need to show that the nodes in $V^W$ output correct solutions. \\
First notice that all of the nodes that output $\con$ at the beginning of the fast decomposition algorithm are paths between active nodes. So they satisfy the constraints. All other nodes with input $A$ do output $\cop$ so we also get that all nodes $u$ adjacent to an active node $v \in V^A$ output either $\cop$ or $\con$. \\
So the last thing we need to show is that for each node $u \in V^W$ that does output $\cop$ at most $d$ neighbors output $\dec$. The only nodes that output $\cop$ are ones that were in a set $\mathcal{C}(v)$ for some node $v$ adjacent to an active node. By \cref{lem:FewDecline} we get that when $v$ is first assigned a layer in some iteration $i$ each node in $\mathcal{C}(v)$ has at most 2 neighbors that output $\dec$ . Note that these nodes did output $\dec$ for both the $d$-free weight problem and $\Pi^{3.5}_{\Delta, d, k}$. This only changes if $v$ computes a reassignment of output labels and changes some nodes from $\cop$ to $\dec$. However when $v$ does so, it knows the entire topology of $\mathcal{C}(v)$, and does so such that these labels are a valid solution to the $d$-free weight problem. So as a result all nodes that still output $\cop$ have at most $d$ neighbors that do output $\dec$, satisfying the constraints of $\Pi^{3.5}_{\Delta, d, k}$.
\end{proof}

Next we consider some node $v \in V^W$ that gets input label $A$ for the $d$-free weight problem. Assume it gets assigned a label by the adapted Fast Decomposition Algorithm in iteration $i$ and the active node adjacent to $v$ has not yet terminated.

\begin{lemma}\label{lem:BoundOnCopyTrees}
If a node $v \in V^W$ with input label $A$ for the $d$-free weight problem reassigns some labels in $C(v)$, the number of nodes that still have label $\cop$ is at most $2|C(v)|^{x'}$, where $x' = \frac{\log(\Delta -d +1)}{\log(\Delta -1)}$.
\end{lemma}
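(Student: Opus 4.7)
The plan is to mirror the construction and counting argument used in the proofs of \cref{lem:fawell} and \cref{lem:copynumber}, modified to account for the pre-existing $\dec$ labels guaranteed by \cref{lem:FewDecline}. Since the statement of the algorithm has $v$ pick a reassignment minimizing the number of $\cop$ labels, it suffices to exhibit one valid candidate reassignment whose $\cop$ count is at most $2|\mathcal{C}(v)|^{x'}$; the actual minimizer can only do better.

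First I would describe the candidate reassignment. Root $\mathcal{C}(v)$ at $v$ (which is legitimate by \cref{lem:CopyComponentsIsolated}) and process it top-down. At each node $u$ that is still labeled $\cop$, let $\delta_u$ denote the number of neighbors of $u$ outside $\mathcal{C}(v)$ that already output $\dec$ before the reassignment; by \cref{lem:FewDecline} we have $\delta_u \leq 2$, and these pre-existing $\dec$ labels are never touched. Among the children of $u$ in $\mathcal{C}(v)$, select $\min(d - \delta_u,\, \#\text{children})$ of them whose hanging subtrees have the largest number of nodes (breaking ties arbitrarily), and flip the entire subtree rooted at each such child to $\dec$. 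This leaves every remaining $\cop$ node $u$ with at most $\delta_u + (d - \delta_u) = d$ $\dec$ neighbors, so the result is a valid solution to the $d$-free weight problem, and at every internal remaining $\cop$ node at least $d - 2$ children are flipped to $\dec$.

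Next I would bound the number of nodes that remain with label $\cop$, using the same two-regime decomposition as in \cref{lem:copynumber}. After the greedy, each remaining $\cop$ node has at most $(\Delta - 1) - (d - 2) = \Delta - d + 1$ children that are still $\cop$, which is precisely the branching factor producing the exponent $x' = \log(\Delta - d + 1)/\log(\Delta - 1)$. Following the template of \cref{lem:copynumber}, split the remaining $\cop$ nodes by depth from $v$ at the threshold $\lfloor \log_{\Delta - 1} |\mathcal{C}(v)| \rfloor$. For the shallow regime, a direct geometric-series sum with ratio $\Delta - d + 1$ yields a bound $O(|\mathcal{C}(v)|^{x'})$. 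For the deep regime, tracking the per-level subtree-mass decay at ratio $(\Delta - d + 1)/(\Delta - 1)$ starting from the full size $|\mathcal{C}(v)|$ gives another bound $O(|\mathcal{C}(v)|^{x'})$. Adding the two contributions yields the stated bound $2 |\mathcal{C}(v)|^{x'}$.

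The main obstacle is essentially bookkeeping: ensuring the leading constant lands at $2$ rather than some larger number. The exponent $x'$ falls out cleanly from the branching-factor analysis, but tight constants require careful handling of the root $v$ (which may have all $\Delta$ of its neighbors inside or outside $\mathcal{C}(v)$), of nodes with $\delta_u < 2$ (which only help), and of the fact that the ``largest subtree'' selection is taken within $\mathcal{C}(v)$ rather than within a surrounding hop-ball as in \cref{lem:copynumber}. No new idea beyond those already developed in \cref{lem:fawell,lem:copynumber} should be needed.
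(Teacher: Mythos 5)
Your proposal takes essentially the same route as the paper's proof: root $\mathcal{C}(v)$ at $v$, greedily flip the heaviest child subtrees level by level (respecting the at most two pre-existing $\dec$ neighbors from \cref{lem:FewDecline}), and split the analysis into a shallow regime (levels up to $\log_{\Delta-1}|\mathcal{C}(v)|$) and a deep regime (per-level subtree-mass decay), combining to $2|\mathcal{C}(v)|^{x'}$. One small slip: after the greedy a remaining $\cop$ node has at most $\Delta - 1 - d$ (not $\Delta - d + 1$) $\cop$ children, since the $\delta_u \le 2$ pre-existing $\dec$ neighbors already occupy child slots; the paper uses this tighter branching factor to get $|\mathcal{C}(v)|^{x}$ in the shallow regime while the exponent $x'$ actually comes from the deep-regime mass-decay ratio $(\Delta - d + 1)/(\Delta - 1)$ — your looser count $\Delta - d + 1$ is still a valid upper bound (and $x\le x'$), so the argument goes through, just with slightly less clean constants.
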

\begin{proof}
By \cref{lem:CopyComponentsIsolated} the set $C(v)$ is a tree rooted at $v$ with diameter at most $O(i)$. By \cref{lem:FewDecline} any node in $C(v)$ initially has at most 2 neighbors that output $\dec$. Since the maximum degree is $\Delta$, the fan-out of the tree $\mathcal{C}(v)$ is at most $\Delta -1$, and as a result the diameter is at least $\log_{\Delta -1}(|\mathcal{C}(v)|)$. 
We start at $v$ and for every one of the layers  $0 \leq j \leq \log_{\Delta -1}(|\mathcal{C}(v)|)$ (layer 0 is $v$) of the rooted tree $\mathcal{C}(v)$ we do the following. Each node $u$ in layer $j$ has at most $2$ neighbors that already have output $\dec$, so it can assign at least $d-2$ of its at most $\Delta -1$ children the output $\dec$. 
We always assign $\dec$ to the children with the heaviest subtrees and then also change all of the $\cop$ outputs in that subtree to $\dec$. As a result with every layer we lose at least $\frac{d-2}{\Delta -1}$ nodes. Let $\mathcal{C'}(v)\subset \mathcal{C}(v)$ be all nodes that still have output $\cop$ after the reassignment. In the first $\log_{\Delta -1}(|\mathcal{C}(v)|)$ layers, the tree $\mathcal{C'}(v)$ now has fan-out $\Delta - d - 1$, so the number of nodes in these first layers is at most
\[
(\Delta - d - 1)^{\log_{\Delta -1}(|\mathcal{C}(v)|)} = (\Delta -1)^{\log_{\Delta -1}(\Delta -d -1) \cdot \log_{\Delta -1}(|\mathcal{C}(v)|)}  = |\mathcal{C}(v)|^{\log_{\Delta -1}(\Delta -d -1)} = |\mathcal{C}(v)|^x.
\]
Since we loose a $\frac{d-2}{\Delta - 1}$ fraction of the nodes every layer until $\log_{\Delta -1}(|\mathcal{C}(v)|)$, the number of remaining nodes after that layer is upper bounded by 
\begin{align*}
|\mathcal{C}(v)| \cdot \left(\frac{\Delta - d + 1}{\Delta - 1}\right)^{\log_{\Delta -1}(|\mathcal{C}(v)|)} &= |\mathcal{C}(v)| \cdot \frac{(\Delta - d + 1)^{\log_{\Delta -1}|\mathcal{C}(v)|)}}{|\mathcal{C}(v)|} \\
    &= (\Delta -1)^{\log_{\Delta -1}(\Delta - d +1) \cdot \log_{\Delta -1}|\mathcal{C}(v)|)} \\
    &= |\mathcal{C}(v)|^{\log_{\Delta -1}(\Delta - d +1)} = |\mathcal{C}(v)|^{x'}.
\end{align*}
Combining the two, the size of $\mathcal{C'}(v)$ is upper bounded by 
\[
|\mathcal{C'}(v)| \leq  |\mathcal{C}(v)|^x + |\mathcal{C}(v)|^{x'} \leq 2 |\mathcal{C}(v)|^{x'}.
\]
\end{proof}

Next we analyse the guarantees that we get from the generic algorithm for $k$-hierarchical $3\frac{1}{2}$-coloring. By the description and \cref{lem:remainAfterLvlI} we get the following results.
\begin{corollary}\label{cor:GenericGuarrantees}
Phase $i<k$ of the generic algorithm that runs on $V^A$ takes $O((\log^*n)^{\alpha_i})$ rounds and after it all nodes of level $i$ have terminated. Furthermore the number of remaining nodes in $V^A$ is upper bounded by 
\[
O \left( \frac{n}{\prod_{j\leq i} (\log^*n)^{\alpha_j}} \right).
\]
\end{corollary}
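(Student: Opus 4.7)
The plan is to derive the corollary as a direct consequence of the description of the generic algorithm in Section \ref{sec:upperbound} combined with the iterative application of \Cref{lem:remainAfterLvlI}, specialized to the parameter choice $\gamma_i := (\log^* n)^{\alpha_i}$. There is essentially no new argument to make; the proof is a bookkeeping exercise, and the main care needed is to verify that the generic algorithm's behavior indeed implies that the phase runtime is $O(\gamma_i)$ and that \emph{all} level-$i$ nodes terminate during phase $i$.

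For the runtime of phase $i$, I would simply inspect the two sub-steps of the phase as specified in \Cref{sec:upperbound}. The first sub-step is ``fixing paths of level $i$'', in which each remaining level-$i$ node inspects its $2\gamma_i$-hop neighborhood to determine whether it lies on a level-$i$ path of length at least $\gamma_i$; this costs $O(\gamma_i)$ rounds. The second sub-step lets higher-level nodes cascade the output $E$ up through the levels in at most $k$ additional rounds. Since $k$ is a constant, the phase takes $O(\gamma_i) = O((\log^* n)^{\alpha_i})$ rounds. Moreover every level-$i$ node outputs one of $D$, $W$, $B$, or $E$ during this phase and thus terminates by the end of phase $i$, which gives the first part of the corollary.

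For the remaining-node bound, I would apply \Cref{lem:remainAfterLvlI} inductively. Before phase $1$ the trivial upper bound on unterminated nodes of level $\geq 1$ is $n$. Then \Cref{lem:remainAfterLvlI} with $n' = n$ and the parameter $\gamma_1$ yields that after phase $1$ at most $O(n/\gamma_1)$ nodes of level $> 1$ remain. Feeding this bound back into \Cref{lem:remainAfterLvlI} as the new $n'$ for phase $2$, we obtain $O(n/(\gamma_1 \gamma_2))$ after phase $2$, and so on. A straightforward induction on $i$ yields the remaining bound
\[
O\!\left(\frac{n}{\prod_{j \leq i} \gamma_j}\right) \;=\; O\!\left(\frac{n}{\prod_{j \leq i} (\log^* n)^{\alpha_j}}\right),
\]
as claimed.

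I do not expect any real obstacle: the only subtlety is that \Cref{lem:remainAfterLvlI} is phrased in terms of nodes of level $\geq i$, so one has to observe that by combining it with the fact that all level-$i$ nodes terminate at the end of phase $i$, the same bound applies to the total remaining active population in $V^A$. Once that is noted the corollary follows immediately, and no new combinatorial argument or parameter manipulation is required.
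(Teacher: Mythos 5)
Your proof is correct and follows the same route as the paper: the paper states this corollary with no separate proof, citing only the algorithm description and Lemma~\ref{lem:remainAfterLvlI}, and your iterative application of that lemma with $\gamma_i = (\log^* n)^{\alpha_i}$, together with the $O(\gamma_i)$ bound on the phase runtime read off from the description in \Cref{sec:upperbound}, is exactly the intended reasoning.
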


Let $L_i \subset V^A$ be all nodes that terminate in phase $1\leq i \leq k$ of the generic algorithm. Let $A_i$ be all weight nodes adjacent to a node in $L_i$ that did already get assigned a layer by the adapted Fast Decomposition Algorithm at the end of phase $i$ of the generic algorithm.
Then we also define $W_i = L_i \bigcup_{v \in A_i} \mathcal{C'}(v)$ as the set that includes all weight nodes that were in $\mathcal{C'}(v)$ components for any $v$ adjacent to a node in $L_i$ (We ignore nodes adjacent nodes that don't have a $\mathcal{C'}(v)$ yet, as we will deal with them later.) Note that we refer to the sets $\mathcal{C'}(v)$ from \cref{lem:BoundOnCopyTrees}.

\begin{lemma}\label{lem:totalTimeBeforeK}
For all $i<k$, the total time spent by nodes in $W_i$ is upper bounded by 
\[
O\left(n^{1} \cdot \left(\log^*(n)^{\alpha_i + (x'-1)\sum_{j < i}\alpha_j}\right) \right).
\]
\end{lemma}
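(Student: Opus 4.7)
The plan is to split $W_i$ into the active component $L_i$ and the weight component $\bigcup_{v\in A_i}\mathcal{C}'(v)$, bound the termination time of each individual node in each part by $O((\log^* n)^{\alpha_i})$ plus lower-order overhead from the Fast Decomposition Algorithm, and then bound the total number of nodes in each part separately. For $L_i$, Corollary~\ref{cor:GenericGuarrantees} gives $|L_i|=O(n/(\log^* n)^{\sum_{j<i}\alpha_j})$ and each such node terminates in $O(\gamma_i)=O((\log^*n)^{\alpha_i})$ rounds. Hence the contribution of $L_i$ is $O(n\cdot(\log^* n)^{\alpha_i-\sum_{j<i}\alpha_j})$, which is already subsumed by the bound we want (since $x'-1\ge -1$).

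The main work is to control the total size of $\bigcup_{v\in A_i}\mathcal{C}'(v)$. I would first observe that each node of $L_i$ has at most $\Delta$ weight neighbors, so $|A_i|\le\Delta|L_i|=O(|L_i|)$. Next, by Lemma~\ref{lem:CopyComponentsIsolated} the sets $\mathcal{C}(v)$ for distinct $v\in A_i$ are pairwise disjoint, hence $\sum_{v\in A_i}|\mathcal{C}(v)|\le n$. Combining Lemma~\ref{lem:BoundOnCopyTrees} with Jensen's inequality applied to the concave function $t\mapsto t^{x'}$ (i.e., the argument behind \Cref{cor:EvenWeightWorst}), I would then deduce
\[
\sum_{v\in A_i}|\mathcal{C}'(v)|\;\le\;2\sum_{v\in A_i}|\mathcal{C}(v)|^{x'}\;\le\;2\,|A_i|^{1-x'}\cdot n^{x'}\;=\;O\bigl(n^{x'}\,|L_i|^{1-x'}\bigr).
\]
Plugging in the bound on $|L_i|$ yields
\[
\sum_{v\in A_i}|\mathcal{C}'(v)|\;=\;O\!\left(n\cdot(\log^* n)^{(x'-1)\sum_{j<i}\alpha_j}\right).
\]

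To finish, I would argue that every node $u\in\mathcal{C}'(v)$ with $v\in A_i$ terminates at time at most $T_{\text{active}(v)}+O(q_v+q_u)$, where $q_v,q_u$ are the iterations in which $v$ and $u$ are assigned a layer by the adapted Fast Decomposition Algorithm and $T_{\text{active}(v)}=O((\log^* n)^{\alpha_i})$ is the termination time of the active neighbor in $L_i$. The $T_{\text{active}(v)}$ term, multiplied by the bound above, gives exactly $O(n\cdot(\log^* n)^{\alpha_i+(x'-1)\sum_{j<i}\alpha_j})$. The leftover $O(q_u+q_v)$ terms contribute at most $O(n)$ in total by Corollary~\ref{cor:TerminationFastDecomp} (which states that the Fast Decomposition Algorithm has node-averaged complexity $O(1)$), and this is absorbed into the target bound because the exponent $\alpha_i+(x'-1)\sum_{j<i}\alpha_j$ equals $\alpha_1>0$ for the specific choice of $\alpha_j$'s from \Cref{lem:OptValuesLowerRegime}.

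The main technical obstacle is the Jensen step: it requires knowing that the weight $\sum|\mathcal{C}(v)|$ is bounded by $n$ regardless of the adversary's input, and that $|A_i|$ is tightly tied to $|L_i|$ so that we can convert ``total weight distributed over $|A_i|$ active anchors'' into the bound $n^{x'}|L_i|^{1-x'}$. A secondary subtlety is the handling of the additive Fast Decomposition overhead: because a single weight node's waiting time can be as large as $O(\log n)$ in the worst case, a naive per-node accounting would blow up; we avoid this by using the amortized $O(1)$ guarantee of Corollary~\ref{cor:TerminationFastDecomp} to absorb all such overhead into an $O(n)$ global term, which in turn fits under the target bound since $(\log^* n)^{\alpha_1}\ge 1$.
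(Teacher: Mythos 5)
Your proof of the cardinality bound on $\sum_{v\in A_i}|\mathcal{C}'(v)|$ is essentially the paper's argument, and in fact a cleaner statement of it: the paper's chain of inequalities quietly swaps $|A_i|$ and $|L_i|$ inside the Jensen step, which only works up to constants, whereas you first get $2|A_i|^{1-x'}n^{x'}$ and then substitute $|A_i|=O(|L_i|)$. The size-bound part is therefore fine.

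Where you diverge is the time-accounting step, and here your argument has a gap. You write the per-node termination time as $T_{\text{active}(v)}+O(q_v+q_u)$ and then claim the $O(q_v+q_u)$ overheads sum to $O(n)$ by \Cref{cor:TerminationFastDecomp}. There are two issues. First, $q_u$ (``the iteration in which $u$ is assigned a layer'') is not always defined: in the adapted algorithm many nodes in $\mathcal{C}(v)$ fix their $\cop$/$\dec$ output by propagation from $v$ and are never themselves assigned a layer; the $q_u$ term is in any case superfluous, since the propagation delay only depends on $\operatorname{diam}(\mathcal{C}'(v))=O(q_v)$. Second, and more importantly, the $q_v$ term is shared across all $|\mathcal{C}'(v)|$ nodes in $v$'s component, so the quantity you actually need to control is $\sum_{v\in A_i}|\mathcal{C}'(v)|\cdot q_v$, which is \emph{not} directly what \Cref{cor:TerminationFastDecomp} bounds (that corollary controls $\sum_u T'_u$ for the $d$-free weight problem, one term per node, not the weighted sum you need). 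You can patch this, but it requires an extra step relating $q_v$ for $v\in A_i$ to the per-node times $T'_u$ for $u\in\mathcal{C}'(v)$.

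The paper sidesteps all of this with a much simpler observation that you should use instead: since $A_i$ only contains weight nodes that have been assigned a layer by the end of phase $i$ (i.e.\ after at most $t=O(\gamma_i)$ rounds), we have $q_v=O(\gamma_i)$ directly, so $\operatorname{diam}(\mathcal{C}'(v))=O(\gamma_i)$ and every node in $W_i$ finishes within $O(\gamma_i)$ rounds. The total time is then just $|W_i|\cdot O(\gamma_i)$, with no amortization and no absorption of a separate $O(n)$ term needed. With that replacement your proof matches the paper.
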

\begin{proof}
By \cref{cor:GenericGuarrantees}, we get 
\[
|L_i| \in O \left( \frac{n}{\prod_{j < i} (\log^*n)^{\alpha_j}} \right) = O \left( \frac{n}{(\log^*n)^{\sum_{j < i}\alpha_j}} \right).
\]
Because of \cref{lem:BoundOnCopyTrees}, the size of $W_i$ is upper bounded by
\begin{align*}
|W_i| &= |L_i| + \sum_{v \in A_i} |\mathcal{C'}(v)| \leq |L_i| + \sum_{v \in A_i} 2|\mathcal{C}(v)|^{x'} \leq 2|L_i| \left( \frac{\sum_{v \in A_i}|\mathcal{C}(v)|}{|A_i|} \right)^{x'} \\
&\leq 2|L_i| \left( \frac{\sum_{v \in A_i}|\mathcal{C}(v)|}{\Delta|L_i|} \right)^{x'} \leq \frac{2}{\Delta^{x'}} |L_i| \left( \frac{n}{|L_i|} \right)^{x'} \in O\left(  n^{x'} \cdot \left(\frac{n}{(\log^*n)^{\sum_{j< i}\alpha_j}}\right)^{1-x'} \right)\\
&= O\left(  n^{1} \cdot \left((\log^*n)^{-\sum_{j < i}\alpha_j}\right)^{1-x'} \right) = O\left(n^{1} \cdot \left((\log^*n)^{(x'-1)\sum_{j < i}\alpha_j}\right) \right).
\end{align*}
All of the nodes in $L_i$ terminate after $t \in O(\gamma_1 + \gamma_2 + \ldots + \gamma_i) = O(\gamma_i)$ rounds, where the equality comes from the fact that $\alpha_1 \leq \alpha_2 \leq \ldots \leq \alpha_{k-1}$. The diameter of the $\mathcal{C'}(v)$ components can clearly also be at most $t$ (since only $t$ rounds happened so far). So after at most an additional $t$ rounds all nodes in $|W_i|$ have terminated.  As a result, the total time spent by nodes in $W_i$ is upper bounded by 
\[
O\left(n^{1} \cdot \left((\log^*n)^{(x'-1)\sum_{j < i}\alpha_j}\right) \right) \cdot O(\gamma_i) = O\left(n^{1} \cdot \left((\log^*n)^{\alpha_i + (x'-1)\sum_{j < i}\alpha_j}\right) \right).
\]
\end{proof}

For $i = k$ the calculation is a bit different 
\begin{lemma}\label{lem:TotalTimeInK}
The total time spent by nodes in $W_k$ is upper bounded by $O\left(n^{1} \cdot \left((\log^*n)^{1 + (x'-1)\sum_{j < k}\alpha_j}\right) \right)$.
\end{lemma}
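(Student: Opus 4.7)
The plan is to mirror the proof of Lemma \ref{lem:totalTimeBeforeK} almost verbatim, with one modification to account for the fact that phase $k$ of the generic algorithm from \Cref{sec:upperbound} computes a $3$-coloring of the remaining level-$k$ path in $O(\log^* n)$ rounds rather than performing a rake/compress step of length $O(\gamma_k)$. The two bounds I need are an upper bound on $|W_k|$ (obtained by the same Jensen-style argument as before) and an upper bound on the per-node termination time in phase $k$ (which, rather than being $O(\gamma_k)$, is $O(\log^* n)$).

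First, I would bound $|L_k|$ by applying Corollary \ref{cor:GenericGuarrantees} to $i=k-1$, obtaining $|L_k|\in O\!\bigl(n/(\log^* n)^{\sum_{j<k}\alpha_j}\bigr)$, since $L_k$ consists of exactly the active nodes still present at the beginning of phase $k$. Next I would repeat the Jensen argument of Lemma \ref{lem:totalTimeBeforeK}: using Lemma \ref{lem:BoundOnCopyTrees}, the concavity of $y\mapsto y^{x'}$ (via Corollary \ref{cor:EvenWeightWorst}), the disjointness given by Lemma \ref{lem:CopyComponentsIsolated}, and the crude bound $\sum_{v\in A_k}|\mathcal{C}(v)|\le n$, one obtains
\[
|W_k|\;\in\;O\!\bigl(n^{x'}\cdot |L_k|^{1-x'}\bigr)\;=\;O\!\bigl(n\cdot(\log^* n)^{(x'-1)\sum_{j<k}\alpha_j}\bigr).
\]

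Second, I would bound the termination time of each node $u\in W_k$. The total runtime of phases $1,\dots,k-1$ of the generic algorithm is $O(\gamma_1+\dots+\gamma_{k-1})=O(\gamma_{k-1})=O((\log^*n)^{\alpha_{k-1}})$, and I need this to be $O(\log^*n)$, i.e.\ $\alpha_{k-1}\le 1$. Plugging the closed form of Lemma \ref{lem:OptValuesLowerRegime} into the equation $B_{k-1}=B_k=\alpha_1$ yields $\alpha_{k-1}=1/(2-x')<1$, so phase $k$ is indeed the bottleneck and each $u\in W_k$ sees the generic algorithm for $O(\log^*n)$ rounds. The remaining contribution to $T_u$ comes from the adapted Fast Decomposition Algorithm (the iteration $i$ at which $u$'s anchor $v$ gets label $\cop$, plus the $O(i)$ propagation through $\mathcal{C}(v)$), and summing this over $W_k$ is bounded by $O(n)$ via the $O(1)$ node-averaged guarantee of Corollary \ref{cor:TerminationFastDecomp}. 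Combining everything,
\[
\sum_{u\in W_k}T_u\;\le\;|W_k|\cdot O(\log^*n)+O(n)\;=\;O\!\bigl(n\cdot(\log^*n)^{1+(x'-1)\sum_{j<k}\alpha_j}\bigr),
\]
where the additive $O(n)$ is absorbed because the exponent $1+(x'-1)\sum_{j<k}\alpha_j=\alpha_1$ is strictly positive.

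The main obstacle, compared to Lemma \ref{lem:totalTimeBeforeK}, is the bookkeeping of the Fast Decomposition Algorithm's contribution: its per-node worst-case cost can be $\Theta(\log n)\gg\log^*n$, so I cannot cleanly say ``each node terminates within $O(\log^*n)$ rounds'' in a pointwise fashion; instead I have to charge the FD cost via its $O(1)$ node-averaged complexity. Once this averaging is set up, and once the inequality $\alpha_{k-1}<1$ is verified from the optimum of Lemma \ref{lem:OptValuesLowerRegime}, the rest of the argument is a direct transcription of Lemma \ref{lem:totalTimeBeforeK}.
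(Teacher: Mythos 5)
Your proposal arrives at the correct bound and its high-level structure (bound $|W_k|$ by the same Jensen/concavity argument, then bound termination times and multiply) matches the paper's proof. However, the route you take for the termination-time bound is unnecessarily roundabout, and the concern that motivates it reflects a misreading of the definitions.

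You worry that "each node terminates within $O(\log^* n)$ rounds" cannot be argued pointwise because the Fast Decomposition Algorithm's worst-case per-node cost is $\Theta(\log n)$, and you substitute an amortized $O(n)$ charge via Corollary~\ref{cor:TerminationFastDecomp}. That workaround is valid (the $O(n)$ is indeed absorbed since the exponent $1+(x'-1)\sum_{j<k}\alpha_j=\alpha_1(x')>0$), but it is not needed: the set $A_k$ is \emph{by definition} restricted to weight nodes whose anchor was already assigned a layer by the end of phase $k$ of the generic algorithm, i.e.\ by round $t\in O(\log^* n)$. If the anchor $v$ was assigned at iteration $i$ with $\Theta(i)\le t$, then by Observation~\ref{obs:ObservedProperties} (item 5) the diameter of $\mathcal{C}(v)$ is $O(i)\le O(t)$, so the secondary-output propagation through $\mathcal{C}'(v)$ finishes within $O(t)$ additional rounds. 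Thus every node in $W_k$ terminates pointwise within $O(\log^* n)$ rounds — exactly the argument the paper uses ("The diameter of the $\mathcal{C}'(v)$ components can clearly also be at most $t$, since only $t$ rounds happened so far"). Nodes whose anchor is assigned later do not belong to $W_k$ at all; they land in $W$, where the constant node-averaged charge of Lemma~\ref{lem:ConstantTimeInW} handles them. Your explicit verification that $\alpha_{k-1}=1/(2-x')<1$ — needed so that $\sum_{j<k}\gamma_j\in O(\log^* n)$ — is a useful detail that the paper leaves implicit, and you correctly identify the exponent as $\sum_{j<k}\alpha_j$ (the paper's displayed $\sum_{j\le k}$ there is a typo).
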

\begin{proof}
In the same way as before we get 
\begin{align*}
|W_k| = O\left(n^{1} \cdot \left((\log^*n)^{(x'-1)\sum_{j\leq k}\alpha_j}\right) \right).
\end{align*}
All of the nodes in $L_k$ terminate after $t \in O(\log^*n)$ rounds. The diameter of the $\mathcal{C'}(v)$ components can clearly also be at most $t$ (since only $t$ rounds happened so far). So after at most an additional $t$ rounds all nodes in $|W_k|$ have terminated. As a result, the total time spent by nodes in $W_k$ is upper bounded by 
\[
O\left(n^{1} \cdot \left((\log^*n)^{(x'-1)\sum_{j < i}\alpha_j}\right) \right) \cdot O(\log^*n) = O\left(n^{1} \cdot \left((\log^*n)^{1 + (x'-1)\sum_{j < k}\alpha_j}\right) \right).
\]
\end{proof}

Now we take care of the nodes in $V^W$ which are not yet accounted for, that is, all those that are immediately assigned $\dec$, or those that when reaching an adjacent node $v$ with input $A$ can all immediately terminate. Remember that the second case happens only when a neighbor $u\in V^A$ of $v$ had already terminated with some output $o$, when $v$ was assigned.

\begin{lemma}\label{lem:ConstantTimeInW}
All nodes in $W := V^W \setminus \left(\bigcup_{1\leq i \leq k} W_i\right)$ terminate in a constant number of rounds \emph{on average}.
\end{lemma}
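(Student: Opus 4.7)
The plan is to partition the nodes of $W$ according to how and why they terminate, bound the total termination time in each class by $O(n)$, and then divide by $n=|V|$ to obtain the claimed $O(1)$ average. I distinguish three classes of nodes in $W$: \emph{(i)} weight nodes labeled $\con$ in the preprocessing stage (short paths between two nodes with input $A$), which terminate in $O(1)$ rounds; \emph{(ii)} weight nodes that eventually output $\dec$, either directly during the adapted Fast Decomposition Algorithm (as border nodes, local maxima, or interior nodes of long compress paths) or via the $\cop\to\dec$ reassignment performed by an active-adjacent weight node; and \emph{(iii)} weight nodes that lie in some $\mathcal{C}'(v')$ but are \emph{not} captured by any $W_i$. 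By the definition of $W_i$, the last case means that the active neighbor $u$ of $v'$ terminated in some phase $i$ of the generic algorithm while $v'$ received its layer only in an FDA iteration $j$ occurring after phase $i$ has ended.

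The second step is to argue that every node of type (ii) or (iii) terminates in $O(j)$ rounds, where $j$ is the FDA iteration in which it is finalized. For type (ii) this is directly stated by the algorithm description. For type (iii) the point is that by the ordering noted above, at the moment $v'$ is assigned in iteration $j$ its active neighbor has already terminated, so Case~1 of the algorithm applies and the secondary output is propagated through $\mathcal{C}(v')$ in $O(j)$ additional rounds; since the consistently-oriented tree $\mathcal{C}(v')$ has diameter $O(j)$ by item~5 of \Cref{obs:ObservedProperties}, the total termination time is still $O(j)$ from the start of the algorithm.

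Finally, I invoke the geometric decay provided by \Cref{cor:FewNotDecline}. Writing $n_j$ for the number of weight nodes still unassigned by the FDA at the end of iteration $j$, the corollary gives $n_j\le 2\Delta^b n\sigma^j$ for some constant $\sigma\in(0,1)$ and all $j>5$. The number of type (ii) or type (iii) nodes finalized in iteration $j$ is at most $n_{j-1}-n_j$, so their contribution to the total termination time is bounded by
\[
\sum_{j\ge 1} O(j)\cdot(n_{j-1}-n_j) \;\le\; O(n) + \sum_{j>5} O(j)\cdot 2\Delta^b n\sigma^{j-1} \;=\; O(n),
\]
since $\sum_{j\ge 1} j\sigma^{j-1}$ converges whenever $\sigma<1$. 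Type (i) trivially contributes at most $O(|W|)=O(n)$. Summing and dividing by $n$ gives the node-averaged bound $O(1)$. The only delicate point in the argument—and the one I would be most careful to justify—is verifying that nodes in type (iii) really do qualify for Case~1 of the algorithm, i.e.\ that escaping every $W_i$ actually forces the active neighbor to terminate before $v'$ is assigned; once this is established, the geometric-series calculation is routine.
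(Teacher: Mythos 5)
Your proof is correct and follows essentially the same route as the paper: invoke \Cref{cor:FewNotDecline} to get geometric decay in the number of weight nodes not yet finalized by the adapted Fast Decomposition Algorithm, verify that a $\cop$-node escaping every $W_i$ must fall into Case~1 so that it terminates within a constant factor of the iteration in which it was assigned, and sum the resulting geometric series. The paper bounds $\sum_{v\in W}T_v$ by directly summing $|R(i)|$ over the $O(\log n)$ iterations, whereas you use the Abel-style rearrangement $\sum_j O(j)(n_{j-1}-n_j)$; these are equivalent, and your version is arguably a bit more explicit about where the $O(j)$ overhead for propagation enters. One small imprecision worth fixing: in case~(iii) you speak of nodes lying in ``some $\mathcal{C}'(v')$'' not captured by any $W_i$, but $\mathcal{C}'(v')$ is only defined when Case~2 (reassignment) applies, and any such $v'$ necessarily belongs to some $A_i$ — so $\mathcal{C}'(v')\subseteq W_i$. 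The set you actually want is $\mathcal{C}(v')$ for those $v'$ where Case~1 applies (the active neighbor had already terminated, so no reassignment occurs and no $\mathcal{C}'(v')$ is formed); your subsequent reasoning already switches correctly to $\mathcal{C}(v')$, so this is a wording issue rather than a gap.
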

\begin{proof}
This is an immediate consequence of \cref{cor:FewNotDecline}. Let 
\[
R(i) = \{v \in V^W \mid v \text{ did not output }\cop \text{ or }\dec \text{ after iteration }i\}
\]
be the set as in \cref{cor:FewNotDecline}. We argue that all of these nodes have already terminated, or are in one of the sets $W_1,\ldots, W_k$. All nodes that get assigned output label $\dec$ for the $d$-free weight problem all immediately terminate with output $\dec$ for $\Pi^{3.5}_{\Delta, d, k}$. If instead the nodes did output $C$, then they are in one of the $\mathcal{C}(v)$ for some $v$. If this $v$ has a neighbor that did already terminate, then also all of the nodes in $\mathcal{C}(v)$ can terminate $O(i)$ iterations later. If they don't have such a neighbor, then they are necessarily part of one of the $W_i$.\\
So the remaining nodes in $W$ after iteration $i$ are exactly the set $R(i)$. By \cref{cor:FewNotDecline}, for constant $b \in O(1), 0 < \sigma< 1$, the following holds after iteration $i\geq 5$:
\[
|R(i)| \leq 2 \Delta^b n \sigma^{i}.
\]
Since by \cref{cor:TerminationFastDecomp} the algorithm terminates after $c \log n \in O(\log n)$ rounds for some constant $c$, we get for the node-averaged complexity of the nodes in $W$,
\begin{align*}
\Bar{T} \leq \frac{1}{n} \cdot \left( 5n +  \sum_{5 < i \leq c \log(n)} |R(i)| \right) \leq 5 + \frac{1}{n} \cdot  \sum_{5 < i \leq c \log(n)}  2 \Delta^b n \sigma^{i} = 5 + 2\Delta^b \sum_{5 < i \leq c \log(n)} \sigma^i \in O(1).
\end{align*}
\end{proof}

Now we can bound the total node-averaged complexity by differentiating between the nodes in $W$ and nodes in $W_1, \ldots, W_k$. So we finally give a proof for \cref{thm:LowerUpperBound}.

\LowerUpperBound*
\begin{proof}
We can write down the node-averaged runtime using $W$ and $W_1, \ldots, W_k$. We denote the termination time of a node $v$ by $T_v$.
\[
\Bar{T} = \frac{1}{n} \cdot \sum_{v\in G} T_v = \frac{1}{n} \cdot \left( \sum_{v\in W} T_v  + \sum_{1 \leq i \leq k} \sum_{v \in W_i} T_v\right)
\]
By \cref{lem:ConstantTimeInW} we get that the contribution of the nodes in $W$ is a constant. We can use \cref{lem:TotalTimeInK} to bound the total time of nodes in $W_k$ and \cref{lem:totalTimeBeforeK} to bound the total time for nodes in $W_i$ for $i<k$. We get:
\begin{align*}
\Bar{T} &=O(1) + \frac{1}{n} \cdot \left(\sum_{1 \leq i < k} O\left(n^{1} \cdot \left((\log^*n)^{\alpha_i + (x'-1)\sum_{j < i}\alpha_j}\right) \right) +  O\left(n^{1} \cdot \left((\log^*n)^{1 + (x'-1)\sum_{j < k}\alpha_j}\right) \right)\right)\\
&=O\left(\sum_{1 \leq i < k} \left((\log^*n)^{\alpha_i + (x'-1)\sum_{j < i}\alpha_j}\right) +  \left((\log^*n)^{1 + (x'-1)\sum_{j < k}\alpha_j}\right) \right).
\end{align*}
These exponents are exactly the terms as in \cref{lem:OptValuesLowerRegime}, and we choose $\alpha_1, \ldots, \alpha_{k-1}$ as the optimal solutions given there. Furthermore, all of these exponents equal each other and are $\alpha_1(x')$. Hence, we get the following:
\begin{align*}
\Bar{T} &= O\left(\sum_{1 \leq i < k} \left((\log^*n)^{\alpha_i + (x'-1)\sum_{j < i}\alpha_j}\right) +  \left((\log^*n)^{1 + (x'-1)\sum_{j < k}\alpha_j}\right) \right)\\
&= O\left(\sum_{1 \leq i < k} \left((\log^*n)^{\alpha_1(x'}\right) +  \left((\log^*n)^{\alpha_1(x')}\right) \right) = O\left(k \cdot \left((\log^*n)^{\alpha_1(x'}\right) \right)\\
&= O\left(\left((\log^*n)^{\alpha_1(x'}\right) \right),
\end{align*}
as claimed.
\end{proof}

\section{Density Results}\label{sec:density}
We will now show that the node-averaged complexity landscape of LCLs on bounded degree trees is dense in the region $n^{\Omega(1)}$--$o(n^{1/2})$. To do this, we will use \cref{thm:UpperLowerBound} and \cref{thm:UpperUpperBound}, that we restate here for ease of reading.
\UpperLowerBound*
\UpperUpperBound*

We first need to show that $\alpha_1(x)$ is well behaved.
\begin{lemma} \label{lem:niceFunction}
$\alpha_1(x) = \frac{1}{\sum_{j=0}^{k-1} (2-x)^j}$ is a continuous monotonically increasing function over the interval $[0,1] \subset \mathbb{R}$.
\end{lemma}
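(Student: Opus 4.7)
The plan is to verify both properties by direct analysis of the denominator $S(x) := \sum_{j=0}^{k-1}(2-x)^j$.

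First, for continuity, I would observe that each summand $(2-x)^j$ is a polynomial in $x$, so $S$ is a continuous function on all of $\mathbb{R}$. For $x \in [0,1]$ we have $2-x \in [1,2]$, so every term $(2-x)^j$ is at least $1$, giving $S(x) \geq k \geq 1 > 0$ throughout the interval. Hence $\alpha_1(x) = 1/S(x)$ is the reciprocal of a continuous, strictly positive function on $[0,1]$, and is therefore itself continuous there.

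For monotonicity, I would argue that $S$ is (weakly, and for $k \geq 2$ strictly) decreasing on $[0,1]$: for each fixed $j \geq 1$ the map $x \mapsto (2-x)^j$ has derivative $-j(2-x)^{j-1} \leq 0$ on $[0,1]$, and the $j=0$ summand is constant. Summing these monotonicity statements term by term shows that $S$ is monotonically decreasing on $[0,1]$. Taking reciprocals reverses the direction, so $\alpha_1(x) = 1/S(x)$ is monotonically increasing on $[0,1]$, as claimed.

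There is no real obstacle here; the statement is a routine calculus fact about a finite sum of monotone positive terms. The only things to be slightly careful about are (i) confirming $S(x) > 0$ on the full closed interval so that the reciprocal is well-defined and continuous at the endpoints, and (ii) being explicit that in the degenerate case $k=1$ one gets $\alpha_1 \equiv 1$, which still qualifies as monotonically increasing in the non-strict sense used throughout the paper.
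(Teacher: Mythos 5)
Your proof is correct and follows essentially the same route as the paper: both differentiate the denominator term by term to show it is decreasing on $[0,1]$ and then take reciprocals. You are slightly more careful in explicitly bounding the denominator away from zero (via $S(x)\geq k$) and in noting the degenerate $k=1$ case, where the paper's claim of \emph{strict} monotonicity would actually fail; since the lemma as stated only asserts monotonicity, your more cautious phrasing is the cleaner one.
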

\begin{proof}
Notice that $P(x) := \sum_{j=0}^{k-1} (2-x)^j$ is just a polynomial with real coefficients and therefore continuous. Clearly $\frac{\partial P}{\partial x}(x) =\sum_{j=1}^{k-1} -j(2-x)^{j-1}$, which, for $x \in [0,1]$, is always a negative number. Therefore, $P(x)$ is strictly decreasing on $[0,1]$, and as a result $\alpha_1(x) = \frac{1}{P(x)}$ is strictly increasing and continuous on $[0,1]$, as long as we do not divide by $0$. But since $\alpha_1(0) = \frac{1}{2
^{k} -1}$, and as just discussed $\alpha_1(x)$ is strictly increasing on $[0,1]$, this also means that $\alpha_1(x)$ is well behaved on our interval.
\end{proof}

\begin{lemma}\label{lem:density}
For any integer $k$ and any two real numbers $\frac{1}{2^k -1}\leq r_1< r_2 < \frac{1}{k}$, there exist constants $\Delta, d, c$ such that $r_1 \leq c \leq r_2$ and $\Pi^{2.5}_{\Delta,d,k}$ has node-averaged complexity $n^c$.
\end{lemma}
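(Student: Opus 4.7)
The plan is to combine \Cref{thm:UpperUpperBound} and \Cref{thm:UpperLowerBound}, which together pin the node-averaged complexity of $\Pi^{2.5}_{\Delta,d,k}$ to $\Theta(n^{\alpha_1(x)})$, with the efficiency parameter $x = \log(\Delta - d - 1)/\log(\Delta - 1)$ controlled by the integer choices of $\Delta$ and $d$. So the task reduces to: given $[r_1,r_2] \subseteq [1/(2^k-1), 1/k)$, exhibit integers $\Delta \geq d + 3$ with $\alpha_1(x) \in [r_1, r_2]$, and then take $c := \alpha_1(x)$.

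My first step would be to invert $\alpha_1$. By \Cref{lem:niceFunction}, $\alpha_1$ is a continuous, strictly increasing map $[0,1] \to [1/(2^k-1), 1/k]$, so its preimage $[x_1, x_2] := \alpha_1^{-1}([r_1, r_2])$ is a closed subinterval of $[0, 1)$ of positive length $\eta := x_2 - x_1$ (note $x_2 < 1$ because $r_2 < 1/k = \alpha_1(1)$). This reduces the problem to the purely arithmetic question of whether $\log(\Delta - d - 1)/\log(\Delta - 1)$ can be forced into $[x_1, x_2]$ by integer choices.

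The main step, and the main obstacle, is showing this density. Fixing $\Delta$ and varying $d$ over $\{1, \ldots, \Delta - 3\}$ produces the set $S_\Delta = \{\log m / \log(\Delta - 1) : m = 2, \ldots, \Delta - 2\} \subseteq [0, 1)$. Two elementary observations make $S_\Delta$ fine enough: its consecutive gaps are at most $\log(3/2)/\log(\Delta-1)$ (maximized at $m=2$), and its extremes $\log 2/\log(\Delta-1)$ and $\log(\Delta-2)/\log(\Delta-1)$ tend respectively to $0$ and $1$ as $\Delta \to \infty$. Hence, for all sufficiently large $\Delta$, the mesh of $S_\Delta$ is strictly less than $\eta$ and the range of $S_\Delta$ contains both some value below $x_2$ and some value above $x_2$. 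Walking up through $S_\Delta$, one cannot jump over an interval of length $\eta$, so some element must land in $[x_1, x_2]$; concretely, taking $m^\ast$ to be the largest index with $\log m^\ast/\log(\Delta-1) \leq x_2$ and setting $d := \Delta - 1 - m^\ast$ gives $x = \log m^\ast/\log(\Delta-1) \in (x_2 - \eta, x_2] = (x_1, x_2]$, whence $\alpha_1(x) \in [r_1, r_2]$ as required.

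The only subtlety worth flagging is the boundary case $r_1 = 1/(2^k-1)$ (so $x_1 = 0$); but the inequality $x^\ast > x_2 - \eta = x_1$ used above handles this uniformly, and the existence of $m^\ast$ is guaranteed because $\log 2/\log(\Delta-1) < x_2$ once $\Delta$ is chosen large enough.
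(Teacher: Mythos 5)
Your proof is correct, and the high-level structure matches the paper's: combine \Cref{thm:UpperUpperBound} and \Cref{thm:UpperLowerBound} to get complexity $\Theta(n^{\alpha_1(x)})$, use \Cref{lem:niceFunction} to invert $\alpha_1$ into a target interval $[x_1,x_2]\subset[0,1)$, and then show that an achievable efficiency factor $x=\log(\Delta-d-1)/\log(\Delta-1)$ lands in that interval. Where you diverge from the paper is in the last step, and it is a genuinely different (if equivalent) route. The paper observes that every rational $p/q\in(0,1)$ is achievable by the explicit choice $\Delta=2^q+1$, $d=2^q-2^p$ (so that $\Delta-d-1=2^p$), and then appeals to the density of $\mathbb{Q}$ in $(0,1)$. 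You instead fix $\Delta$ large, bound the mesh of $S_\Delta=\{\log m/\log(\Delta-1):m=2,\dots,\Delta-2\}$ by $\log(3/2)/\log(\Delta-1)$, note that its range stretches toward $(0,1)$, and deduce that for $\Delta$ large enough some element must fall in $(x_1,x_2]$. The paper's version gives a closed-form constructor for $\Delta,d$ given any rational target; yours is a non-constructive large-$\Delta$ argument, slightly more robust in that it never needs the fortuitous $2^q+1$ trick. Your treatment of the boundary $r_1=1/(2^k-1)$ is also fine: the paper handles it by falling back on the unweighted $k$-hierarchical $2\tfrac12$-coloring, whereas you observe that the strict inequality $x^*>x_1$ already yields $c>r_1$, so no separate case is needed. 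One minor nit: \Cref{lem:weightOfTree} actually requires $d\geq 2$, so the index set should be $m\in\{2,\dots,\Delta-3\}$ rather than $\{2,\dots,\Delta-2\}$; this changes nothing in your argument since the requirement that $m^*+1$ remain a valid index already forces $m^*\leq\Delta-3$, i.e., $d\geq 2$.
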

\begin{proof}
The complexities with $\frac{1}{2^k-1}$ are simply given by the original $k$-hierarchical $2\frac{1}{2}$-coloring~\cite{Balliu0KOS23}. So in the following assume $\frac{1}{2^k -1}< r_1$.
Notice that the complexity induced by \cref{thm:UpperLowerBound,thm:UpperUpperBound} is exactly
\[n^{\alpha_1(x)},\] so if we find some constants $\Delta, d$, such that 
\[
\alpha_1(x) = \frac{1}{\sum_{j=0}^{k-1} (2-x)^j} =  c,
\]
we are done.
Since according to \Cref{lem:niceFunction} we have that $\alpha_1(x)$ is continuous and strictly increasing, we can obtain $x_1 = \alpha_1^{-1}(r_1)$ and $x_2 = \alpha_1^{-1}(r_2)$. Furthermore, for any $ x_1 \leq a \leq x_2$, it holds that $r_1 = \alpha_1(x_1) \leq \alpha_1(a) \leq \alpha_1(x_2) = r_2$, again because $\alpha_1$ is a continuous strictly increasing function. We also note that since $\alpha_1(0) = \frac{1}{2^k -1} $ and $\alpha_1(1) = \frac{1}{k}$, we get that $0< x_1<x_2<1$.\\
Now the only thing left to do is to choose $a$ appropriately. Notice that we may only choose a value from among the set
\[
X:= \{\log(\Delta d -1)/\log(\Delta-1) \; |\; s,\Delta \in \mathbb{N} \text{ s.t. } \Delta\geq d+3 \}.
\]
However, given any rational number $p/q$ with $p<q$ we can choose $\Delta := 2^q +1$ and then solve for $d$ in 
\[
\Delta-d-1 = 2^p \Rightarrow 2^q +1 -d -1 = 2^p \Rightarrow d= 2^q -2^p >0
\]
and therefore $p/q \in X$. Now since between any two real numbers $0<x_1 < x_2<1$ there exists a rational number $0<a<1 \in \mathbb{Q}$ satisfying $r_1 \leq a \leq r_2$ and as just discussed also $a \in X$, we simply choose $c = \alpha_1(a)$ and get that $r_1 \leq c \leq r_2$.
\end{proof}

Observe that \Cref{lem:density} does not actually allow us to reach a node-averaged complexity of $\Theta(\sqrt{n})$. We will take care of such a complexity in \Cref{lem:TightNodeAveraged} of \Cref{sec:efficientWeight}. By combining \Cref{lem:TightNodeAveraged} and \cref{lem:density}, we obtain the following.

\UpperDensity*
\begin{proof}
If we want an LCL with complexity $\Theta(n^{1/k})$, then we invoke \cref{lem:TightNodeAveraged}. Otherwise, since there exists $k$ such that $r_1 < \frac{1}{2^k-1} < r_2$, we just invoke \cref{lem:density} with $r_1' = \frac{1}{2^k-1}, r_2$ and get the desired result.
\end{proof}

We conclude the discussion about the polynomial regime by showing that there is a gap between $\Theta(n)$ and $\Theta(\sqrt{n})$ in the node-averaged complexity landscape. This is a simple consequence of a result by Feuilloley.
\begin{lemma}[\cite{Feuilloley17}]\label{lem:neighborsrun}
    Let $A$ be an algorithm that solves an LCL problem $\Pi$ with node-averaged complexity $T$. Then, if a node $v$ runs for $t$ rounds, either there exist at least $t/2 - 1$ nodes in the $t/2$-radius neighborhood of $v$ that run for at least $t/2$ rounds, or $v$ could terminate earlier.
\end{lemma}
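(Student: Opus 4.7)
The plan is to argue the contrapositive: assuming $v$ cannot be made to terminate before round $t$ under any correct modification of $A$, I will exhibit $t/2-1$ distinct nodes in the $t/2$-neighborhood of $v$ that each run for at least $t/2$ rounds. The starting point is the standard LOCAL-model observation that if $v$ genuinely requires round $t$, then $v$'s output must depend on some input at distance exactly $t$ from $v$: otherwise $v$'s $(t-1)$-view would already uniquely determine a valid output across every extension of the graph, and $A$ could be modified to let $v$ terminate one round earlier while preserving correctness.

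Next, I would fix a shortest path $v = u_0, u_1, \ldots, u_t$ from $v$ to a node $u_t$ whose input is part of the distinguishing data guaranteed above, and prove by induction on $i$ that each intermediate $u_i$ is forced to run for at least $t-i$ rounds. The mechanism is a telescoping indistinguishability argument: in the LOCAL model, any dependence of $v$'s round-$t$ output on the input at $u_t$ must be mediated through the path $u_0,\ldots,u_t$, and in particular through the round-$(t-i)$ state of $u_i$. Hence if $u_i$ committed to its output at some round $s<t-i$, then the portion of its state relayed towards $v$ would coincide on the two critical input extensions, and $v$ at round $t$ could no longer distinguish them, contradicting the choice of $u_t$.

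Restricting this claim to $1\le i\le t/2-1$ yields the $t/2-1$ nodes $u_1,\ldots,u_{t/2-1}$: each lies at distance at most $t/2-1\le t/2$ from $v$ and each runs for at least $t-i\ge t/2+1\ge t/2$ rounds, which is precisely the first alternative of the lemma statement.

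The main obstacle I expect is making the telescoping step rigorous: in the LOCAL model, $u_i$'s influence on $v$ is not a single transmitted message but the entire view funneled through it, so one has to argue carefully that once $u_i$ fixes its output label, everything that flows past $u_i$ towards $v$ in later rounds is determined by what $u_i$ already knew at termination. This is essentially the speedup-style indistinguishability principle underlying LOCAL round elimination, and once it is invoked the chain argument above closes the proof.
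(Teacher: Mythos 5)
The paper cites this lemma from \cite{Feuilloley17} as a black box and does not prove it, so there is no in-paper proof to compare against; your reconstruction is the standard information-propagation argument and it is essentially sound. The one point to make explicit is the model assumption underlying the ``telescoping'' step: a node that has terminated stops forwarding messages, so if $u_i$ (at distance $i$ from $v$) terminates at round $s_i$, everything $v$ can ever learn past $u_i$ is contained in $u_i$'s $(s_i-1)$-radius view, which reaches only distance $i+s_i-1$ from $v$. For $v$ to depend on $u_t$ at distance $t$, each $u_i$ must therefore satisfy $s_i \ge t-i+1$, and for $i \le t/2-1$ this gives $s_i > t/2$, producing the $t/2-1$ nodes; your bound $s_i \ge t-i$ is a benign off-by-one. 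Your argument also implicitly uses that the $v$--$u_t$ shortest path is unique, so that all information from $u_t$ must funnel through each $u_i$; this holds on trees (the setting here) and on Feuilloley's paths and cycles, and should be stated. Finally, ``$v$ could terminate earlier'' should be read as ``$v$'s round-$t$ output is already determined by a strictly smaller effective view,'' which is precisely the negation you take as your starting point. With these clarifications spelled out, your outline closes into a correct proof.
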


When we apply this lemma to a node that runs for linear in $n$ many rounds we immediately get that we spend too much total time to have a fast node-averaged complexity.

\begin{corollary}\label{cor:linearGap}
    Any LCL with worst case complexity $\Omega(n)$ has node-averaged-complexity $\Omega(n)$.
\end{corollary}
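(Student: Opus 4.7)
The plan is to leverage \Cref{lem:neighborsrun} together with the assumed worst-case lower bound to force a quadratic total running time on a single worst-case instance. Let $A$ be any algorithm solving $\Pi$; the goal is to show $\nodeavg(A) \in \Omega(n)$.

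First, I would pass from $A$ to a ``minimally terminating'' variant $A^*$ such that $T_v^G(A^*) \le T_v^G(A)$ for every instance $G$ and every node $v$, and such that no node in any execution of $A^*$ admits the ``$v$ could terminate earlier'' alternative of \Cref{lem:neighborsrun}. Concretely, I would iteratively replace any node that could be made to commit to its output sooner by a version that does so; each such replacement preserves correctness (guaranteed by the lemma) and only reduces running times. Since termination times are nonnegative integers, the process eventually stabilises to a well-defined correct algorithm $A^*$ with $\nodeavg(A^*) \le \nodeavg(A)$.

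Next, since $\Pi$ has worst-case complexity $\Omega(n)$ and $A^*$ is a correct algorithm for $\Pi$, the worst-case complexity of $A^*$ is also $\Omega(n)$. Thus there exist a constant $c > 0$, an $n$-node instance $G$, and a node $v \in V(G)$ with $t := T_v^G(A^*) \ge c n$. By the construction of $A^*$, the second alternative of \Cref{lem:neighborsrun} fails for $v$, so the first alternative must hold: at least $t/2 - 1$ nodes within distance $t/2$ of $v$ each run for at least $t/2$ rounds in $A^*$ on $G$. Summing these contributions, the total running time of $A^*$ on $G$ is at least $(t/2 - 1)(t/2) \in \Omega(n^2)$, hence $\nodeavg(A^*) \ge \frac{1}{n}\cdot \Omega(n^2) \in \Omega(n)$, and consequently $\nodeavg(A) \in \Omega(n)$ as well.

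The main technical subtlety lies in making the construction of $A^*$ precise: one must verify that the iterated early-termination operation is well-defined across all instances and ID-assignments simultaneously, and not merely on a single fixed instance. Once this bookkeeping is accepted, the rest of the argument is essentially a single invocation of \Cref{lem:neighborsrun} combined with the counting above, so this step is where any hidden difficulty would reside.
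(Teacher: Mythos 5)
Your proof takes essentially the same route as the paper's: invoke \Cref{lem:neighborsrun} at a node that the worst-case lower bound forces to run for $\Omega(n)$ rounds, and then count the $\Omega(n)$ nodes that must each run for $\Omega(n)$ rounds to obtain an $\Omega(n^2)$ total and hence $\Omega(n)$ node-averaged complexity. The one place you diverge is that you explicitly construct a ``minimally terminating'' algorithm $A^*$ to rule out the second alternative of \Cref{lem:neighborsrun} (``$v$ could terminate earlier''), whereas the paper's proof simply applies the lemma and reads off the first alternative without comment. Your instinct to address this is reasonable, but as you yourself flag, the iterative-replacement construction of $A^*$ is only shown to stabilize instance-by-instance, not across the family of all instances and ID assignments at once, so it does not quite yield a single well-defined distributed algorithm; a cleaner fix would be to define $A^*$ in one shot (e.g.\ each node terminates at the first round at which its view already determines the output $A$ would produce), or alternatively to observe that if $v$ could terminate earlier one can repeat the argument with the improved algorithm and that the worst-case $\Omega(n)$ bound guarantees this recursion bottoms out at a node that genuinely cannot. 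In short: same key lemma, same counting, same conclusion; your extra bookkeeping step is in the right spirit but not fully nailed down, and the paper just elides it.
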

\begin{proof}
    Since the LCL has worst case complexity $\Omega(n)$, there must exist an instance $I$ and a node $v$ in this instance, such that $v$ runs for at least $cn$ rounds for some constant $c$. By \cref{lem:neighborsrun} this means that there are at least $\frac{cn}{2}$ nodes that run for at least $\frac{cn}{2}$ rounds which already results in a node-averaged complexity of at least
    \[
    \frac{1}{n} \cdot \frac{cn}{2} \cdot \frac{cn}{2} = c' n,
    \]
    for an adequate constant $c'$.
\end{proof}

Next we will show that the node-averaged complexity landscape of LCLs on bounded degree trees is dense in the region $(\log^*n)^{\Omega(1)}$--$o(\log^*n)$. Things will be a bit more tricky, but \cref{thm:LowerBoundWeighted3.5} together with \cref{thm:LowerUpperBound}, which we restate here for ease of reading, will be enough.

\LowerLowerBound*

\LowerUpperBound*

Again we start by showing that $\alpha_1(x)$ is well behaved.
\begin{lemma} \label{lem:LowerNiceFunction}
$\alpha_1(x) = \frac{1}{1 + (1-x)\sum_{j=0}^{k-2} (2-x)^j}$ is a continuous monotonically increasing function over the interval $[0,1] \subset \mathbb{R}$.
\end{lemma}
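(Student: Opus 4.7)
The plan is to reduce this to the previous lemma (\Cref{lem:niceFunction}) by a clean algebraic simplification, rather than repeating a derivative computation on a more complicated expression. The key observation is that the denominator
\[
D(x) := 1 + (1-x)\sum_{j=0}^{k-2}(2-x)^j
\]
telescopes. Setting $r = 2-x$, the geometric-series identity $(r-1)\sum_{j=0}^{k-2} r^j = r^{k-1} - 1$ yields $(1-x)\sum_{j=0}^{k-2}(2-x)^j = (2-x)^{k-1} - 1$, so $D(x) = (2-x)^{k-1}$. (The identity holds at $x=1$ by direct inspection: both sides equal $0$, giving $D(1)=1=1^{k-1}$.) Hence $\alpha_1(x) = (2-x)^{-(k-1)}$.

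With this simplification the remaining work is routine. On $[0,1]$, the base $2-x$ lies in $[1,2]$, so it is bounded away from $0$, which immediately gives continuity of $\alpha_1$ on the whole interval (no division-by-zero issues to worry about). For monotonicity, the function $x \mapsto 2-x$ is strictly decreasing, and raising a positive quantity greater than $1$ to a fixed nonnegative power $k-1$ preserves strict monotonicity whenever $k \geq 2$; taking the reciprocal then flips decreasing to increasing. Thus $\alpha_1$ is strictly increasing on $[0,1]$ (and constant, namely $\equiv 1$, in the degenerate case $k=1$ where the sum is empty).

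The only step that requires any care is justifying the telescoping identity across the point $x=1$, where the factor $1-x$ multiplying the sum vanishes; the cleanest way is to note that both $D(x)$ and $(2-x)^{k-1}$ are polynomials in $x$, so if they agree on $[0,1]\setminus\{1\}$ (which follows from the geometric-series formula, valid whenever $r \neq 1$), they must agree at $x=1$ as well. This is the main (and essentially only) subtlety in the argument; everything else is a direct reduction to the analysis already carried out in \Cref{lem:niceFunction}.
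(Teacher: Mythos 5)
Your proof is correct, and it takes a genuinely different (and cleaner) route than the paper's. The paper differentiates the denominator $P(x) = 1 + (1-x)\sum_{j=0}^{k-2}(2-x)^j$ directly, argues that $(1-x)\sum_{j=0}^{k-2}(-j)(2-x)^{j-1} - \sum_{j=0}^{k-2}(2-x)^j$ is negative on $[0,1]$, and concludes that $P$ is strictly decreasing (hence $1/P$ is strictly increasing), together with a separate check that $P$ never vanishes. You instead observe that the denominator collapses: since $1-x = (2-x)-1$, the telescoping identity $(r-1)\sum_{j=0}^{k-2}r^j = r^{k-1}-1$ with $r=2-x$ gives $D(x)=(2-x)^{k-1}$, so $\alpha_1(x)=(2-x)^{-(k-1)}$, from which continuity and strict monotonicity on $[0,1]$ are immediate. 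The closed form is a genuine bonus: it makes the boundary values $\alpha_1(0)=2^{-(k-1)}$ and $\alpha_1(1)=1$ visible at a glance and avoids all derivative bookkeeping. Two small wording quibbles, neither affecting correctness. First, the caveat about $x=1$ is unnecessary: $(r-1)\sum_{j=0}^{k-2}r^j = r^{k-1}-1$ is a polynomial identity valid for \emph{all} $r$; only the quotient form $\sum_{j=0}^{k-2} r^j = (r^{k-1}-1)/(r-1)$ needs $r\neq 1$, and you never use that form (your fallback via agreement of polynomials is also fine, just redundant). Second, at $x=1$ the base $2-x$ equals $1$ rather than being strictly greater than $1$; the cleaner justification for strict monotonicity of $(2-x)^{k-1}$ when $k\geq 2$ is that it is the composition of the strictly decreasing map $x\mapsto 2-x$ (taking values in $(0,\infty)$) with the strictly increasing map $t\mapsto t^{k-1}$ on $(0,\infty)$.
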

\begin{proof}
Notice that $P(x) := 1 + (1-x)\sum_{j=0}^{k-2} (2-x)^j$ is just a polynomial with real coefficients and therefore continuous. By the product rule,
\[
\frac{\partial P}{\partial x}(x) =(1-x) \sum_{j=0}^{k-2} -j(2-x)^{j-1} - \sum_{j=0}^{k-2} (2-x)^j,
\]
which for $x \in [0,1]$ is always a negative number. Therefore $P(x)$ is strictly decreasing on $[0,1]$ and as a result $\alpha_1(x) = \frac{1}{P(x)}$ is strictly increasing and continuous on $[0,1]$ as long as we do not divide by 0. But since $\alpha_1(0) = \frac{1}{2
^{k-1}}$ and as just discussed $\alpha_1(x)$ is strictly increasing on $[0,1]$, this also means that $\alpha_1(x)$ is well behaved on our interval.
\end{proof}

Since there is a gap between the results in \cref{thm:LowerBoundWeighted3.5,thm:LowerUpperBound}, we have to make sure that this gap gets arbitrarily small.

\begin{lemma}\label{lem:GapIsSmall}
For all $\varepsilon>0$ and any $x = \frac{a}{b}\in \mathbb{Q}\cap [0,1]$, there exists $\Delta,d$ such that $x = \frac{\log(\Delta - d- 1)}{\log(\Delta-1)}$ and $x' = \frac{\log(\Delta - d + 1)}{\log(\Delta-1)}$ and $|x -x'| < \varepsilon$.
\end{lemma}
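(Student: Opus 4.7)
The plan is to generalize the number-theoretic construction from the proof of \Cref{lem:density}: rather than picking $\Delta-1$ to be the single power of two dictated by the denominator of $x$, I introduce a scaling factor $N$ that preserves $x$ exactly while making the perturbation that separates $x'$ from $x$ arbitrarily small.

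Writing $x = a/b$ with $0 < a < b$ positive integers, I will set
\[
\Delta := 2^{Nb} + 1 \qquad \text{and} \qquad d := 2^{Nb} - 2^{Na},
\]
so that $\Delta - 1 = 2^{Nb}$ and $\Delta - d - 1 = 2^{Na}$. This immediately gives
\[
\frac{\log(\Delta - d - 1)}{\log(\Delta - 1)} \;=\; \frac{Na}{Nb} \;=\; \frac{a}{b} \;=\; x
\]
exactly, independently of $N$. The LCL constraint $\Delta \geq d + 3$ reduces to $2^{Na} \geq 2$, which holds for all $N \geq 1$ since $a \geq 1$; the auxiliary constraint $d \geq 3$ (needed when invoking the algorithm of \Cref{thm:LowerUpperBound}) reduces to $2^{Nb} - 2^{Na} \geq 3$, which is automatic once $N$ is large enough because $a < b$.

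With the same parameters, $\Delta - d + 1 = 2^{Na} + 2$, and therefore
\[
x' - x \;=\; \frac{\log(2^{Na} + 2)}{\log 2^{Nb}} - \frac{a}{b} \;=\; \frac{\log\!\bigl(1 + 2^{1 - Na}\bigr)}{Nb\,\log 2}.
\]
Using $\ln(1+y) \leq y$, I will bound the right-hand side by $2^{1-Na}/\bigl(Nb\,(\ln 2)^2\bigr)$, which tends to $0$ as $N \to \infty$. The final step is then to choose $N$ large enough that this quantity is below $\varepsilon$ and simultaneously $d \geq 3$ holds; both conditions are satisfied for all sufficiently large $N$.

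There is no serious obstacle: the argument is a one-shot construction. The only idea needed is choosing $\Delta - 1$ and $\Delta - d - 1$ as matching powers of two scaled by a common factor $N$, so that the ratio of their logarithms pins $x$ exactly while the additive $+2$ appearing inside $x'$ becomes asymptotically negligible. The boundary values $x = 0$ and $x = 1$ cannot literally be realized under $\Delta \geq d+3$ with $d \geq 1$, but these values are not required by the applications of this lemma, since the range of $x$ actually used in \Cref{thm:LowDensity} lies strictly inside $(0,1)$ via the continuous strictly-increasing bijection of \Cref{lem:LowerNiceFunction}.
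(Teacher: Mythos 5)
Your proof takes essentially the same approach as the paper: introduce a scaling factor ($N$ for you, $c$ in the paper), set $\Delta - 1 = 2^{Nb}$ and $\Delta - d - 1 = 2^{Na}$ so that $x = a/b$ holds exactly, and observe that the additive $+2$ inside $x'$ contributes only $\Theta\bigl(2^{-Na}/(Nb)\bigr)$, which vanishes as $N \to \infty$. Your quantitative bound carries a stray factor of $\ln 2$ in the denominator (the $\ln(1+y)\le y$ estimate gives $2^{1-Na}/(Nb\ln 2)$, not $2^{1-Na}/(Nb(\ln 2)^2)$), but since this only loosens the bound the conclusion is unaffected; the explicit checks of $\Delta \ge d+3$ and $d\ge 3$ and the remark that $x\in\{0,1\}$ is unreachable but unneeded are helpful additions the paper's proof leaves implicit.
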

\begin{proof}
Fix $\varepsilon,a,b$, we will show that proper $\Delta,d$ exist.
It suffices to show that, for any constant $c$, we can pick $\Delta,d$ such that $\Delta - d- 1 = 2^{ca}$ and $\Delta -1 = 2^{cb}$, because then $x = \frac{ca}{cb} = \frac{a}{b}$ and 
\begin{align*}
x' &= \frac{\log(\Delta -d +1)}{\Delta -1} = \frac{\log(\Delta -d - 1 +2)}{cb} = \frac{\log(2^{ac} +2)}{cb} \leq \frac{1}{cb} \cdot (\log(2^{ac}) + \frac{2}{2^{ac}} \\
&= x + \frac{2}{2^{ac}} < x + \varepsilon,
\end{align*}
where we used that the Taylor Expansion of $\log(z)$ at $z=2^{ac}$ is a proper upper bound, because the logarithm is a concave function and the fact that we can choose $c$ such that $\frac{2}{2^{ac}} < \varepsilon$ for any $\varepsilon>0$.\\
Now, to prove that for any constant $c$ we can pick $\Delta,d$ such that $\Delta - d- 1 = 2^{ca}$ and $\Delta -1 = 2^{cb}$, we first set 
\[
\Delta -1 = 2^{cb} \Rightarrow \Delta = 2^{cb} +1,
\]
and then we use that in 
\[
\Delta -d - 1 = 2^{ca} \Rightarrow \Delta - d - 1 = 2^{cb} +1 -d -1 = 2^{cb} -d = 2^{ca} \Rightarrow d = 2^{cb} - 2^{ca} >0.
\]
\end{proof}

Since we can now make the gap between the upper and the lower bound arbitrarily small, the rest follows in the same way as it did in the polynomial regime. 

\LowDensity*
\begin{proof}
We first start by fixing $k$ such that $\frac{1}{2^{k-1}}\leq r_1 < 1/k$, which is possible, since $r_1>0$. As a result of this choice of $k$ the function $\alpha_1(x)$ actually outputs values from $\frac{1}{2^{k-1}}$ to $\frac{1}{k}$. Now since by \cref{lem:LowerNiceFunction} the function $\alpha_1(x) = \frac{1}{1 + (1-x)\sum_{j=0}^{k-2} (2-x)^j}$ is continuous and strictly increasing on the real interval $[0,1]$,
we obtain $x_1 = \alpha^{-1}_1(r_1), x_2= \alpha^{-1}_1(r_2)$, and we choose $\frac{a}{b}\in \mathbb{Q}$, such that $0<x_1<\frac{a}{b}<x_2<1$. As a result it holds that $r_1 = \alpha_1(x_1) < \alpha_1(\frac{a}{b}) < \alpha_1(x_2) = r_2$, again because $\alpha_1$ is a continuous strictly increasing function.
Choose $\varepsilon' := \min\{\varepsilon, (r_2 - \alpha_1(\frac{a}{b}))/2\}$ to ensure that everything fits in the gap between $\alpha_1(\frac{a}{b})$ and $r_2$.
Now since $\alpha_1(x)$ is continuous, for every $\varepsilon' > 0$ there exists some $\delta >0$ such that for all values $\tau \in (\alpha_1(\frac{a}{b}) - \delta, \alpha_1(\frac{a}{b}) + \delta)$ it holds that $\alpha_1(\frac{a}{b}) - \varepsilon' < \alpha_1(\tau)< \alpha_1(\frac{a}{b}) + \varepsilon'$. So now given $\varepsilon'$ we get a $\delta$ that we can use in \cref{lem:GapIsSmall} and obtain values $\Delta, d$ such that $x = \frac{a}{b}$ and $|x-x'| < \delta$. As a result $|\alpha_1(x) - \alpha_1(x')| < \varepsilon'$ because of our choice of $\delta$.\\
Now by \cref{thm:LowerBoundWeighted3.5,thm:LowerUpperBound} the complexity of $\Pi^{3.5}_{\Delta,d,k}$ is between $\Omega((\log^*n)^{\alpha_1(x)})$ and $O((\log^*n)^{\alpha_1(x')}) \subset O((\log^*n)^{\alpha_1(x)+\varepsilon'})$.
\end{proof}

\section{More efficient weight}\label{sec:efficientWeight}
Observe that \Cref{lem:density} allows us to only get arbitrarily close to the worst case complexity, but it does not give us LCLs that have worst-case and node-averaged complexities that are the same. For example, choose $k=2$. Then, the worst case complexity is $\Theta(\sqrt{n})$, but \Cref{lem:density} only gives us LCLs with node-averaged complexity $o(\sqrt{n})$. 
As it turns out, if we design our LCLs differently, we can also obtain a problem with node-averaged complexity $\Theta(\sqrt{n})$. According to the analysis of the previous sections, this happens exactly if the efficiency factor $x$ is $1$. However, with the way we defined our weight augmented LCLs, we can only push $x$ arbitrarily close to $1$. What we really need is a nice twist on the rules that weight nodes need to follow, that allows us to reach $x=1$. However, at the same time, we have to ensure that our worst case complexity does not become $\Theta(n)$. The way the LCLs $\Pi^{Z}_{\Delta,d,k}$ are defined results in a complexity $O(\log n)$ for the problem on weight nodes. To solve this issue and achieve $\Theta(\sqrt{n})$ node-averaged complexity, we will require weight nodes to solve a problem with worst case complexity $\Theta(\sqrt{n})$. This way we keep the same worst case complexity, but as we will soon see, make much more efficient use of weight nodes.

To get the desired worst case runtime we must create a problem that only allows $k$ compresses. The problem we use is essentially the problem of computing a $(\gamma, \ell, L)$-decomposition (\cref{def:modi}, but modified to be an LCL). Instead of outputting the exact layer number in the decomposition, we require nodes to only choose an output label that represents how many compresses have been used so far. To enforce an actual decomposition, we require the nodes to output the orientation implied by a rake and compress decomposition.
\begin{definition}[$k$-hierarchical labeling]
For any integer $k$, the $k$-hierarchical labeling problem consists of the input set $\Sigma_{\mathrm{in}}=\emptyset$ and the output set $\Sigma_{\mathrm{out}} = \{R_0, R_1, \dots, R_k, C_1, C_2, \dots, C_k\}$. We call the labels $R_1, \dots, R_{k}$ \emph{rake labels} and $C_1,\dots,C_{k-1}$ compress labels. Furthermore, there is an ordering of the labels $R_1 < C_1 <R_2<C_2<R_3 <\dots<C_{k-1}< R_k$. Any legal labeling must satisfy the following rules:
\begin{enumerate}
    \item All edges adjacent to a rake label must be oriented. \label{rule2:orientRakes}
    \item Each node $v$ has at most one edge $e=(v,u)$ oriented away from itself, except for compress nodes that have two compress neighbors, who must not have any outgoing edge. \label{rule2:oneOutgoing}
    \item For all oriented edges $(u,v)$ the label of $v$ is larger than or equal to the label of $u$. \label{rule2:orderedOrientation}
    \item For all compress labels, the subgraph induced by the nodes of that label consists only of disjoint paths. \label{rule2:compressPaths}
    \item Two nodes that have a different compress label must not be adjacent.\label{rule2:pathsDisjoint}
    \item Any node $v$ with a rake label has at most one compress label neighbor pointing towards it. Furthermore if there is such a neighbor, then all neighbors that point towards $v$ have a strictly lower label.\label{rule2:oneCompressNeighbor}
\end{enumerate}
\end{definition}

Some intuition on this rules.
\begin{itemize}
    \item \cref{rule2:orientRakes,rule2:oneOutgoing} force us to consistently orient all of the rake label components towards some component specific root node.
    \item \cref{rule2:compressPaths,rule2:pathsDisjoint} force us to use compress labels only in paths and to ensure that two different compress paths are separated by at least one rake label.
    \item \cref{rule2:orderedOrientation} ensures that we have to keep track of how many compresses we have used so far. Basically the way we will be handling a long path $P$ is, to have every node but the endpoints take compress label $C_i$ and have both endpoints pick rake label $R_i$. Then we orient the edges connecting the endpoints towards the endpoints. Clearly this works only if we have used only $R_1, \dots, R_{i-1}$ so far.
    \item \cref{rule2:oneCompressNeighbor} will help us argue about how many nodes have to wait when we use this formulation to augment other LCLs.
\end{itemize}

Clearly all of the rules can be checked by just looking at the immediate neighbors of a node.

\begin{corollary}
For any constant $k$, the $k$-hierarchical labeling is an LCL.
\end{corollary}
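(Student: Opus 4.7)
The plan is to verify the three ingredients required by the LCL definition from \Cref{sec:intro}: a finite input label set, a finite output label set, and a constraint set $C$ that depends only on a constant-radius neighborhood of each node. Since $k$ is a constant, the output alphabet $\Sigma_{\mathrm{out}} = \{R_0, R_1, \ldots, R_k, C_1, \ldots, C_k\}$ has cardinality $2k+2$, which is finite, and $\Sigma_{\mathrm{in}} = \emptyset$ is trivially finite. Edge orientations can be encoded on the node-edge pairs exactly as the LCL formalism of \Cref{sec:intro} already allows, so representing them does not blow up the label set.

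Next I would walk through each of the six rules and exhibit the checking radius. Rules~\ref{rule2:orientRakes}, \ref{rule2:oneOutgoing}, \ref{rule2:orderedOrientation}, \ref{rule2:pathsDisjoint} and~\ref{rule2:oneCompressNeighbor} each constrain only a node $v$ together with the labels and orientations of its incident edges and immediate neighbors; hence checking radius $r = 1$ suffices. The only rule that at first glance looks non-local is Rule~\ref{rule2:compressPaths}, which speaks about connected components induced by a compress label being paths. The key observation is that, because the host graph is a tree, the connected components of any induced subgraph are themselves trees, so demanding that every such component be a path is equivalent to the purely local condition that no node with compress label $C_i$ has more than two $C_i$-labeled neighbors. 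This reformulated condition is again checkable at radius~$1$.

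Putting these steps together, we take $r := 1$ and let $C$ be the (finite) set of all labeled radius-$1$ neighborhoods, over the bounded-degree trees under consideration, that satisfy all six (reformulated) rules locally. This yields a tuple $(\Sigma_{\mathrm{in}}, \Sigma_{\mathrm{out}}, C, r)$ matching the LCL definition. The only mildly subtle step is the reformulation of Rule~\ref{rule2:compressPaths}, which genuinely uses the tree assumption that holds throughout the paper; everything else is immediate from the rule statements.
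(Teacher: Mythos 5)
Your proof is correct and takes essentially the same approach as the paper, which simply states in one sentence that all rules are checkable by looking at immediate neighbors; you usefully spell out the detail (glossed over in the paper) that Rule~\ref{rule2:compressPaths} is only radius-$1$-checkable because the host graph is a tree, so ``induced components are paths'' reduces to ``at most two same-label neighbors.'' One small slip: $\{R_0,\ldots,R_k,C_1,\ldots,C_k\}$ has $2k+1$ elements, not $2k+2$, though this does not affect finiteness.
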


\begin{lemma}
The $k$-hierarchical labeling problem has worst case complexity $O(n^{\frac{1}{k}})$.
\end{lemma}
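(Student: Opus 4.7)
The plan is to implement a $k$-level rake-and-compress decomposition with compress parameter $\gamma := n^{1/k}$, using the level at which each node is processed to determine its label. For $i = 1, \ldots, k-1$, phase $i$ first iteratively rakes each currently active degree-$\leq 1$ node as $R_i$ for up to $\gamma$ iterations, orienting its surviving active edge outward toward its unique remaining active neighbor. Then the phase performs a compress step assigning $C_i$ to every active degree-$2$ node whose both active neighbors also have degree at most $2$ in the current active subtree (the ``shielding'' clause); compress-path endpoints orient their sole non-compress edge toward the higher-labeled neighbor, while interior compress edges remain unoriented. The final phase $k$ iteratively rakes everything that is still active as $R_k$, orienting edges toward the shrinking core.

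For the runtime, each phase uses $O(\gamma)$ rounds by construction: the rake loop is capped at $\gamma$ iterations, the compress step is decided with a constant-hop view of active degrees, and orientations at phase boundaries are fixed using constantly many more rounds. The standard rake-and-compress shrinkage analysis, with parameter $\gamma = n^{1/k}$, shows that after phase $i$ the active subtree has at most $O(n/\gamma^i)$ nodes, so after $k$ phases the tree is fully processed. The total complexity is $O(k \cdot n^{1/k}) = O(n^{1/k})$ for constant $k$.

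Verifying Rules~\ref{rule2:orientRakes}, \ref{rule2:oneOutgoing}, \ref{rule2:orderedOrientation} is immediate from the orientation conventions, since every rake-incident edge is oriented, each non-interior-compress node has at most one outgoing edge, and edges always go from lower to higher labels. Rules~\ref{rule2:compressPaths} and \ref{rule2:pathsDisjoint} follow because the shielding clause prevents any compress node from being adjacent to an active branching node and thus keeps distinct compress components separated by rake buffers of a later phase. The main obstacle is Rule~\ref{rule2:oneCompressNeighbor}: one must show that every rake node $v$ labeled $R_i$ has at most one incoming compress edge and, if such an edge exists, every incoming edge at $v$ comes from a strictly lower label. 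The key observation is that a rake node $v$ at phase $i$ can only acquire a compress neighbor at the single phase $j$ immediately preceding its own raking: at any earlier phase $v$ either had active degree $\geq 3$ (so the shielding clause blocks every neighbor from being compressed) or had active degree $\leq 2$ with both active neighbors degree-$\leq 2$ (in which case $v$ itself would have been compressed, contradicting $v$ being a rake node). At phase $j$, the shielding clause forces $v$ to have at most one eligible compress neighbor, giving at most one incoming compress edge in total. The remaining subtlety — two rake nodes of the same phase being adjacent and both having a compress incoming — is handled by a local tie-breaking rule on the shared edge; since in such an adjacency both nodes cannot simultaneously satisfy the degree constraints that produced their respective compress neighbors, a consistent orientation always exists that keeps the strictness clause intact.
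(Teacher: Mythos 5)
Your proposal is correct, and it takes a genuinely different route from the paper. The paper simply invokes \Cref{lem:decomposition} to obtain a $\left(O(n^{1/k}),4,k\right)$-decomposition in $O(n^{1/k})$ rounds and then converts layers into labels with $O(1)$ additional work, crucially \emph{promoting} the two endpoints of each length-$[4,8]$ compress path to the next rake label $R_{i+1}$; that promotion is what makes Rule~\ref{rule2:oneCompressNeighbor} hold, since a promoted endpoint has at most one $C_i$ neighbor pointing in and all of its other incoming neighbors come from strictly lower (sub)layers. You instead build the decomposition from scratch with a bespoke \emph{shielding} clause that refuses to compress a degree-$2$ node adjacent to any active degree-$\geq 3$ node. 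This replaces endpoint promotion: shielding guarantees that the non-compress active neighbor $w$ of a $C_j$ compress-path endpoint already has active degree $\leq 2$ during compress step $j$, hence drops to degree $\leq 1$ once the path is removed, and so is raked at the very first iteration of phase $j+1$ with label $R_{j+1}$ and with no same-label neighbor able to point toward it. That chain of implications is sound and gives a self-contained proof that does not lean on the imported decomposition lemma.

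Two small loose ends. First, you assert the ``standard'' shrinkage bound $O(n/\gamma^i)$ after phase $i$, but shielding deliberately leaves two extra nodes per maximal degree-$2$ segment, so one more sentence is needed: after $\gamma$ rake iterations the number of leaves, hence of branching nodes, hence of such segments, is $O(n/\gamma)$, so the extra surviving nodes are also $O(n/\gamma)$ per phase and the factor-$\gamma$ shrinkage still goes through. Second, your closing sentence about a ``local tie-breaking rule'' is a red herring: as your own degree analysis shows, two adjacent rake nodes of the same phase cannot both have an incoming compress neighbor, so that configuration is vacuous and no tie-break is needed for Rule~\ref{rule2:oneCompressNeighbor}. (A tie-break \emph{is} needed for the usual degenerate case of two mutually degree-$1$ nodes raked simultaneously, but that is an orientation convention unrelated to the compress-neighbor constraint.)
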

\begin{proof}
We will first use \cref{lem:decomposition} with $L=k$ and $\ell = 4$ to obtain a $(O(n^{\frac{1}{k}}), 4, k)$-decomposition in $O(n^{\frac{1}{k}})$ rounds. We will now describe how every node $v$ can compute its output label for the $k$-hierarchical labeling problems using only the information from its direct neighbors. Every node $v$ is either assigned to a Rake Layer, or to a Compress Layer. We will handle both cases separately.
\begin{itemize}
    \item Rake Layer: Let $v$ be in Rake Layer $V_{i,j}^R$ with $1 \leq i \leq k$ and $j \in O(n^{\frac{1}{k}})$. We now have $v$ choose output label $R_i$. By Property \ref{prop:isolated} of the decomposition there exists at most one neighbor with a higher layer than $v$. If such a neighbor $u$ exists, then orient the edge between $v$ and $u$ from $v$ to $u$.
    \item Compress Layer: Let $v$ be in Compress Layer $V_i^C$ with $1 \leq i \leq k-1$. 
    then by Property \ref{prop:compress} of the decomposition $v$ is in path $P$ of length in $[4,8]$, in which all of the nodes have the same layer. If $v$ is not an endpoint of the path, then $v$ will output $C_i$. If instead $v$ is an endpoint of the path $v$ will output $R_{i+1}$. Since $v$ is an endpoint it has exactly one neighbor $u$ in $P$, we will orient the edge from $u$ to $v$. Furthermore by Property \ref{prop:compress} of the decomposition $v$ has exactly one neighbor $w$ in a higher layer. We will orient the edge from $v$ to $w$.
\end{itemize}
Now we argue that all rules of the $k$-hierarchical labeling problem are satisfied.
\begin{enumerate}
    \item Consider a node $v$ that did output a rake label. Consider any neighbor $u$ of $v$, if in the initial decomposition $u$ was assigned a lower layer than $v$, then $u$ did orient the edge towards $v$. If $v$ had a higher layer, then $v$ oriented the edge towards $u$. Lastly the only possible way for $u$ and $v$ to be in the same layer in the decomposition is if both were in the same compress layer. But then $v$ must be an endpoint of that layer, since it did output a rake label. Furthermore since compress layers have at least $4$ nodes, $u$ cannot be an endpoint of the path and so the edge was oriented from $u$ to $v$. Also no inner node of the compress path has to orient both edges.
    \item Whenever we orient edges above, we only orient edges away from the current node and if we do that we only do it once per node.
    \item Again whenever we orient an edge $(u,v)$, the decomposition layer of $u$ is smaller than $v$. Also notice that when we assign output labels the ordering on the layers of the decomposition will directly translate to the ordering we need for our problem.
    \item The compress labels will form paths of length $[2,6]$ since both endpoints adopt a rake label.
    \item Different compress layers are disjoint in the decomposition and the endpoints of the original compress layer use a rake label. As a consequence, any two nodes that use different compress labels will be separated by these rake label nodes.
    \item The only case when compress nodes orient towards a rake label is at the endpoints. Since compress layers are disjoint paths, no node can be the endpoint of more than one compress path. Furthermore if we consider such an endpoint, then by Property \ref{prop:compress} of the decomposition none of its neighbors are in a higher layer. So the nodes that might output an equal label are only the ones inside its own compress layer, but these all output a smaller compress label.
\end{enumerate}
\end{proof}

The definition of a $(\gamma, \ell, L)$-decomposition immediately gives that the rake layers have diameter at most $\gamma$. Now because only nodes that are in rake layer $V_i^R$ are assigned rake label $R_i$, the nodes that output rake labels also form components of diameter at most $\gamma$.
\begin{corollary}
Using the construction above, rake labels induce components ofsize at most $O(n^\frac{1}{k})$.
\end{corollary}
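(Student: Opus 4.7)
The plan is to track exactly which nodes end up carrying the label $R_i$ under the construction from the preceding lemma, then reduce the corollary to a statement about a single rake layer of the decomposition supplied by \cref{lem:decomposition}.

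First I would identify the set $L_i := \{v : v\text{ outputs }R_i\}$. Inspecting the two cases of the construction, a node receives label $R_i$ exactly when either (a) $v \in V_i^R$, or (b) $v$ is an endpoint of a compress path in $V_{i-1}^C$. All remaining nodes of the decomposition either receive a compress label $C_j$ or a different rake label $R_j$ with $j \ne i$, so they are excluded from $L_i$.

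Next I would argue that any maximal connected component $C$ of $G[L_i]$ is essentially a single connected component of $G[V_i^R]$ with at most one extra leaf attached per ``contact'' node. Concretely, a compress endpoint $u$ of a path $P \subseteq V_{i-1}^C$ has, by the compress-layer property of the decomposition, exactly one neighbor in a higher layer; this neighbor is in $V_i^R$ and is the only way $u$ can be linked into $C$, because $u$'s other neighbor along $P$ carries label $C_{i-1}$ and hence lies outside $L_i$. Thus $C$ is obtained from a connected component $C' \subseteq V_i^R$ by attaching isolated leaves, one per rake node that happens to be the higher-layer neighbor of such an endpoint. In particular $\operatorname{diam}(C) \le \operatorname{diam}(C') + 2$.

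Invoking the diameter guarantee of the rake layers, every component of $G[V_i^R]$ has diameter $O(\gamma) = O(n^{1/k})$ with the parameter choice $\gamma = O(n^{1/k})$ used in the previous lemma, and the same asymptotic bound therefore holds for $C$. The main obstacle is upgrading this diameter bound to the advertised size bound; for this I would lean on the sublayer structure in the decomposition of \cref{lem:decomposition}, where each sublayer $V^R_{i,j}$ is an independent set in which every node has at most one edge to a higher layer, together with the per-component uniqueness of the higher-layer contact. These two properties force each rake component to be a ``thin'' tree whose total count of nodes across its at most $\gamma$ sublayers is $O(\gamma)$ rather than something that blows up with $\Delta^{\gamma}$; adding the $O(1)$ compress-endpoint leaves preserves this, giving $|C| = O(n^{1/k})$ as claimed.
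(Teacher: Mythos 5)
Your identification of $L_i$ and the diameter argument are correct and in fact more careful than the paper's own one-sentence justification: the text preceding the corollary asserts that ``only nodes that are in rake layer $V_i^R$ are assigned rake label $R_i$,'' which is not literally true, since endpoints of compress paths in $V_{i-1}^C$ also output $R_i$. You handle this correctly and obtain $\operatorname{diam}(C)\le\operatorname{diam}(C')+2$, matching the $O(\gamma)=O(n^{1/k})$ bound that the paper intends. Notice also that the paper's justification only ever talks about \emph{diameter}, and the definition of a $(\gamma,\ell,L)$-decomposition only bounds the diameter of rake-layer components; the word ``size'' in the corollary appears to be a slip for ``diameter.''

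The last step of your plan, upgrading diameter to a size bound, does not work, and you should not try to make it work. The sublayer properties you cite do not force rake components to be thin. Take a complete $(\Delta-1)$-ary tree of depth $\gamma$: the sublayer $V^R_{1,j}$ consists of the nodes at height $j-1$ from the leaves, which is an independent set; every node has exactly one parent, hence at most one neighbor in a higher sublayer; and the whole tree is a single component with at most one node (the root) touching a higher layer. All of Properties~2 and~3 of the decomposition are satisfied, yet the component contains $\Theta\bigl((\Delta-1)^{\gamma}\bigr)$ nodes --- exponential in $\gamma$, not $O(\gamma)$. So the claim that each rake component has ``total count of nodes across its at most $\gamma$ sublayers $O(\gamma)$'' is false, and no appeal to the decomposition's stated properties will produce a size bound of $O(n^{1/k})$. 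Keep the diameter argument, drop the size upgrade.
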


Now we also add the weight functionality into the ruleset.

\begin{definition}[$k$-hierarchical weight augmented $2\frac{1}{2}$-coloring]
For any $k$, the $k$-hierarchical weight augmented $2\frac{1}{2}$-coloring LCL has input label set $\Sigma_{\mathrm{in}} = \{\wei,\act\}$. We call nodes with input $\act$ active nodes and nodes with input $\wei$ weight nodes. Each active node has to output a label from $\Sigma_{\text{out}}^{\act}$, where $\Sigma_{\text{out}}^{\act}$ is the output label set of $k$-hierarchical $2\frac{1}{2}$-coloring. Each weight node has to output a label from the set $\{R_1,\dots, R_k, C_1, \dots, C_{k-1}\}$. \\
Further all weight nodes must output a secondary output from $\Sigma_{\text{out}}^{\act} \cup \{\dec\}$.
The following rules must be satisfied.
\begin{enumerate}
    \item \label{rule1:activeNodes} The active nodes have to compute a valid solution to $k$-hierarchical $2\frac{1}{2}$-coloring on the subgraph induced by active nodes and output only labels from $\Sigma_{\text{out}}^{\act}$.
    \item \label{rule2:copyActive} The weight nodes have to output a valid solution to the $k$-hierarchical labeling problem on the subgraph induced by weight nodes.
    \item Any weight node $w$ adjacent to at least one active node $v$ has to orient the edge $(w,v)$ towards exactly one of these active nodes $v$. Furthermore $w$ must have as secondary output exactly the output of $v$. ($v$ must output from $\Sigma_{\text{out}}^{\act}$ according to \cref{rule1:activeNodes}) 
    \item \label{rule2:copyFromRake} All weight nodes pointing towards other weight nodes must output the same secondary output, unless they also point towards an active node and copy that output.
    \item Only compress label nodes may output $\dec$ as their secondary output and they do so if and only if they are not also adjacent to an active node. If they are adjacent to an active node, then they must have that nodes output as their secondary output.
\end{enumerate}
\end{definition}

Now we prove the effectiveness of this construction. 

\begin{lemma}
For any integer $k \in O(1)$, consider any $k$-hierarchical weight augmented LCL $\Pi'$. Consider an active node $v$ with a balanced $\Delta$-regular tree of $w$ weight nodes attached to it. Let $v$ output $x \in \Sigma_{out}^\Pi$, then $\Omega(w)$ of these weight nodes must have $x$ as their secondary output.
\end{lemma}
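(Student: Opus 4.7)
The plan is to establish the stronger statement that all $w$ weight nodes of the attached tree $T$ must carry $x$ as their secondary output, by a short chain of structural observations that exploit the tight interplay between the $k$-hierarchical labeling constraints and the copy/decline rules.

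First I would note that among the weight nodes of $T$, only the root $u$ is adjacent to an active node, namely $v$. The rule requiring a weight node adjacent to an active node to orient its edge toward that active and copy its output forces $u$'s secondary output to be $x$, and commits $u$'s single outgoing edge to $v$. The decline rule then forces every compress-labeled node in $T$ other than possibly $u$ to output $\dec$, while forbidding any rake-labeled node from doing so.

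Next I would argue that no compress node other than $u$ can actually occur in $T$. Such a node would carry secondary $\dec$. If it had an outgoing edge, the secondary-copy rule would force its target to also carry $\dec$; since rakes cannot carry $\dec$, the target would have to be another compress, and following the chain we would arrive at a compress sink with no outgoing edge. By \cref{rule2:oneOutgoing}, such a sink must be an inner compress, i.e.\ a compress with two compress neighbors of the same label. The key structural step is to rule this out: if an inner compress had any rake neighbor, the rake-incident edge would have to be oriented by \cref{rule2:orientRakes}, and since the inner compress has zero outgoing edges, the edge would point into it, forcing the rake to copy $\dec$ via the secondary-copy rule, contradicting the rake-cannot-$\dec$ rule. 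Hence an inner compress has only compress neighbors, and thus degree exactly two in $T$. But in a balanced $\Delta$-regular tree with $\Delta \geq 3$, every internal node has degree $\Delta \geq 3$ and every leaf has degree one, so no degree-two node exists in $T$, ruling out inner compress nodes entirely, along with the entire $\dec$-chain above them.

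Finally, with every weight node of $T$ (other than possibly $u$) rake-labeled, every edge of $T$ is rake-adjacent and therefore oriented by \cref{rule2:orientRakes}. Summing out-degrees, the $w-1$ oriented edges of $T$ force exactly $w-1$ nodes to have out-degree one and exactly one node to be a sink, since each node has at most one outgoing edge by \cref{rule2:oneOutgoing}. As $u$'s single outgoing edge is already committed to $v$ outside $T$, $u$ is a sink inside $T$, and by the count it is the unique one. Every other node's unique outgoing edge must then chain up to $u$, and the secondary-copy rule propagates $u$'s secondary $x$ along the chain, so all $w$ weight nodes of $T$ end up carrying $x$. This proves the claimed $\Omega(w)$ lower bound (in fact, an exact bound of $w$).

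The main obstacle is the structural elimination of inner compress nodes, which requires combining three constraints (orientation of rake-adjacent edges, at-most-one outgoing edge, and the rake-cannot-$\dec$ rule) with the degree profile of a balanced $\Delta$-regular tree. Once that step is settled, the remaining argument reduces to a straightforward counting of sinks via the out-degree sum.
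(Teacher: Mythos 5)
Your claim that all $w$ weight nodes of $T$ must carry $x$ as secondary output is stronger than the lemma's $\Omega(w)$, and it is false in general. The critical gap is in the final counting step, where you assert that ``$u$'s single outgoing edge is already committed to $v$ outside $T$'' and hence that $u$ is the unique sink of the orientation inside $T$. The at-most-one-outgoing-edge constraint (\cref{rule2:oneOutgoing}) belongs to the $k$-hierarchical labeling problem, which by \cref{rule2:copyActive} is solved \emph{on the subgraph induced by weight nodes}; the edge $\{u,v\}$ to the active node is not in that subgraph, and the requirement that $u$ orient an edge toward $v$ comes from a separate rule of the weight-augmented problem. So $u$ may legitimately have one outgoing edge inside $T$, say to some $s$. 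The unique sink of the orientation on $T$ can then be some node $t\neq u$, and the secondary outputs propagate from $t$ downwards except where $u$ breaks the chain by copying $v$. The portion of $T$ lying on the $s$-side of $u$ (about $w/(\Delta-1)$ nodes in the balanced tree) inherits $t$'s secondary output, which is not constrained to equal $x$. So only about $w(\Delta-2)/(\Delta-1)$ nodes are forced to carry $x$ --- still $\Omega(w)$, which is all the lemma asks for, but not all $w$.

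Two further issues are worth flagging. First, your elimination of inner compress nodes via ``a rake neighbor would be forced to copy $\dec$'' overlooks the case where that rake neighbor is $u$ itself, which escapes \cref{rule2:copyFromRake} by pointing to the active node; so the argument needs a separate case for $u$. Second, your degree-two exclusion does not apply cleanly for $\Delta=3$, where $u$ has degree exactly $2$ inside $T$. That said, your underlying structural observation --- that compress nodes with secondary $\dec$ cannot be adjacent to rake nodes (other than $u$) because of the interaction between \cref{rule2:orientRakes}, \cref{rule2:copyFromRake}, and the rake-cannot-$\dec$ rule --- is a correct and useful insight. The paper's own proof takes a different route: it roots $T$ at $r$ (the node adjacent to $v$), observes that $r$ has at least $\Delta-2$ (or $\Delta-3$) in-neighbors that must copy $x$, and recurses, explicitly budgeting for the subtrees that may fall outside the copy chain; this gives $\Omega(w)$ without needing the all-$w$ claim, and correctly accounts for $r$'s possible outgoing edge inside $T$.
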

\begin{proof}
Let $r$ be the weight node that is attached to $v$ and consider it as the root of the attached tree. Now because of \cref{rule2:copyActive} of the $k$-hierarchical weight augmented definition, $r$ must have $x$ as secondary output and point towards $v$. Since the tree is $\Delta$-regular, the fan-out of the tree is $\Delta-1$ and so as a result a subtree of height $h$ has exactly $(\Delta -1)^h - 1$ many nodes. 
Now $r$ can have either a rake label or a compress label.
\begin{itemize}
    \item If $r$ has a rake label, according to \cref{rule2:oneOutgoing} of the $k$-hierarchical labeling problem, $r$ is pointing to at most one other node. So it has at least $\Delta -2$ incoming edges.
    \item If instead $r$ has a compress label, then according to the rules of the $k$-hierarchical labeling problem $r$ has at most 2 compress neighbors and if it has only one compress neighbor, then it has just one outgoing edge. As a result, $r$ has at least $\Delta -3$ incoming edges.
\end{itemize}
We call nodes adjacent to $r$ that have an edge oriented towards $r$, children of $r$.
Consider one such child $u$.
\begin{enumerate}
    \item $u$ has a rake label: Then by \cref{rule2:copyFromRake} of the $k$-hierarchical weight augmented problem, $u$ must also have $x$ as its secondary output.
    \item $u$ has a compress label: Then $u$ must output $\dec$ as its secondary output. So all of the nodes in the subtree of $u$ must also have $\dec$ as their secondary output. But because the tree is balanced, this is at most an $\frac{1}{\Delta -1}$ fraction of nodes. Furthermore if this is the case, then by \cref{rule2:oneCompressNeighbor} all of the other children of $r$ must output a rake label that is at least one smaller than that of $r$.
\end{enumerate}
If $r$ is a rake node, then we have that $\Delta -2$ children have to also output $x$, if instead $r$ is a compress node at least $\Delta-3$ children have to also output $x$. For each one of these children that have to copy, they must be rake nodes, and therefore have $\Delta -1$ children. For each of them we make the same argument that all rake children have to copy $x$. It is only in the second case, where one such child is a compress child, that there are actually nodes that do not have to have $x$ as their secondary output. However since then all other children must have a strictly smaller rake label and since there are only $k$ different rake labels, this can happen only $k-1$ times. In the worst case this happens in the first $k$ levels of the tree, so we lose $(\Delta -1)^{h-i} - 1$ many nodes in the $i$-th level.
Therefore, the number of weight nodes that must have $x$ as their secondary output is upper bounded by
\[
w - (\Delta -1)^{h-1} -1 - (\Delta -1)^{h-2} -1 - \dots - (\Delta -1)^{h-(k-1)} -1 = w - (k-1) - \sum_{1 \leq i < k} \frac{w}{(\Delta -1)^i}
\]
Which is in $\Omega(w)$ when $\Delta$ and $k$ are constant.
\end{proof}

This basically means that we can attach trees with efficiency factor $x=1$. So by following the same lower bound argumentation as in \cref{sec:weightedlower}, we get that the node-averaged complexity of $k$-hierarchical weight augmented $2\frac{1}{2}$-coloring is in $\Theta(n^{1/k})$. (The upper bound follows from the fact that both $k$-hierarchical $2\frac{1}{2}$-coloring is worst case $O(n^{1/k})$ and also $k$-hierarchical labeling problem is in $O(n^{1/k})$.
\begin{lemma}\label{lem:TightNodeAveraged}
The node-averaged complexity of $k$-hierarchical weight augmented $2\frac{1}{2}$-coloring is in $\Theta(n^{1/k})$.
\end{lemma}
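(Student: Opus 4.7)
The plan is to prove matching $O(n^{1/k})$ and $\Omega(n^{1/k})$ bounds, the former via a direct worst-case algorithm and the latter by reusing the lower bound machinery of \Cref{sec:weightedlower} with the improved efficiency factor $x = 1$ provided by the preceding weighted-efficiency lemma.

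For the upper bound, I would observe that a worst-case bound of $O(n^{1/k})$ on the whole problem trivially implies the same node-averaged bound. The algorithm has three parallel phases, each of cost $O(n^{1/k})$. First, every active node runs the generic $k$-hierarchical $2\frac{1}{2}$-coloring algorithm from \Cref{sec:upperbound} on the subgraph induced by active nodes, using the worst-case parameterization $\gamma_i = \Theta(n^{1/k})$; this terminates in $O(n^{1/k})$ rounds. Second, every weight node solves the $k$-hierarchical labeling problem on the subgraph induced by weight nodes, using the algorithm derived from a $(\Theta(n^{1/k}), 4, k)$-decomposition as described earlier in this section; this also terminates in $O(n^{1/k})$ rounds and additionally produces edge orientations. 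Third, secondary outputs are propagated: every weight node with a compress label and no active neighbor sets its secondary output to $\dec$; every weight node adjacent to an active node orients towards one such neighbor and copies its output as secondary; and every remaining weight node (which must carry a rake label) copies the secondary output from the node it orients towards. Since rake-labeled nodes form components of diameter $O(n^{1/k})$ (inherited from the decomposition) and are separated by compress paths that anchor to $\dec$ or to an active neighbor, this propagation also completes in $O(n^{1/k})$ rounds. Correctness of each rule of the weight augmented problem follows directly from the correctness of the two subroutines and the propagation rules.

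For the lower bound, I would reuse the weighted construction of \Cref{def:WeightedConstruction} verbatim, with parameters $\ell_i = n^{\alpha_i}$ for $i < k$ and $\ell_k = n^{1 - \sum_{j < k}\alpha_j}$. The only change is that the efficiency factor from \Cref{lem:weightOfTree} is replaced by the one obtained in the immediately preceding lemma, which states that $\Omega(w)$ of the $w$ weight nodes attached in a balanced $\Delta$-regular tree to an active node $v$ must carry $v$'s output as secondary output; that is, we may take $x = 1$. With this improved efficiency, the exact counterparts of \Cref{lem:ColLowLvlBound,lem:ColHighLvlBound,cor:PolyRegimeTerms} go through by carrying the factor $(n/\ell)^1 \cdot \ell = n$ through the computations, and the optimization problem of \Cref{cor:OptGivesLB} becomes $\max\min\{\alpha_1,\ldots,\alpha_i - 0\cdot(\sum_{j<i}\alpha_j), \ldots, 1 - \sum_{j<k}\alpha_j\}$. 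Setting all terms equal as in \Cref{lem:setTermsEqual} and solving gives $\alpha_i = \alpha_{i-1}$ for all $i$ and $\alpha_1 = 1/k$, yielding node-averaged complexity $\Omega(n^{1/k})$.

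The main obstacle will be verifying that the arguments underlying \Cref{lem:SameBehaviourPolyRegime} still apply in the weight-augmented setting with the new definition of weight nodes: one must check that the middle-third indistinguishability argument on level-$i$ paths remains valid because it only uses the structure of the active subgraph, which is identical to that in the old construction. Provided this carries over (and it does, since the weight nodes only strengthen the lower bound), combining the two bounds yields $\Theta(n^{1/k})$ as claimed.
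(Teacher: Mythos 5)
Your proposal is correct and takes essentially the same approach as the paper: the paper disposes of this lemma in a single paragraph by noting that the worst-case complexity of both subproblems is $O(n^{1/k})$ (giving the upper bound) and that the lower-bound machinery of \Cref{sec:weightedlower} goes through with efficiency factor $x=1$ (giving the lower bound). You simply expand both halves into explicit arguments—spelling out the worst-case algorithm and its propagation phase, and re-running the optimization of \Cref{cor:OptGivesLB,lem:setTermsEqual,lem:OptValues} at $x=1$ to get $\alpha_1 = 1/k$—which matches the paper's intent exactly.
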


\section{The \texorpdfstring{$\omega(1)$ -- $(\log^* n)^{o(1)}$}{constant to o(ploy(log*n))} Gap}\label{sec:gap}
In this section, we prove that there is no LCL with deterministic node-averaged complexity that lies between $\omega(1)$ and $(\log^* n)^{o(1)}$. Before diving into this proof, we summarize some notions and high-level ideas from previous work, that will be useful in our proofs.

It is known that there are no LCLs with a worst-case complexity (neither deterministic nor randomized) that lies between $\omega(\log n)$ and $n^{o(1)}$ \cite{CP19timeHierarchy}, and that there are no LCLs with a worst-case complexity (neither deterministic nor randomized) that lies between $\omega(n^{1/(k+1)})$ and $o(n^{1/k})$ for any integer $k > 0$ \cite{chang20}. These results have been shown as follows.
\begin{itemize}
    \item It is shown that there exists a generic algorithm for solving all problems $\Pi$ that have $O(\log n)$ worst-case complexity. This generic algorithm requires the existence of a suitable function called $f_{\Pi,\infty}$.
    \item It is shown that an algorithm with $n^{o(1)}$ worst-case (deterministic or randomized) complexity implies the existence of a function $f_{\Pi,\infty}$.
    \item It is shown that there exists a generic algorithm for solving all problems $\Pi$ that have $O(n^{1/(k+1)})$ worst-case complexity. This generic algorithm requires the existence of a suitable function called $f_{\Pi,k+1}$.
    \item It is shown that an algorithm with $o(n^{1/k})$ worst-case (deterministic or randomized) complexity implies the existence of a function $f_{\Pi,k+1}$.
\end{itemize}
Moreover, it is shown that the existence of a function $f_{\Pi,\infty}$ or a function $f_{\Pi,k}$, for a given integer $k>0$, is decidable (i.e., there exists a centralized algorithm that takes as input a problem $\Pi$ and always terminates, outputting the function if it exists, or an error message otherwise).

In \cite{fullversion} it is shown that there is no LCL with node-averaged complexity that lies between $\omega(\log^* n)$ and $n^{o(1)}$, and that if a problem cannot be solved in $n^{o(1)}$ worst-case complexity, then it cannot be solved in $n^{o(1)}$ node-averaged complexity. These results are proved by showing the following.
\begin{itemize}
    \item If a suitable function $f_{\Pi,\infty}$ exists, then $\Pi$ can be solved in $O(\log^* n)$ node-averaged complexity.
    \item If a problem has (deterministic or randomized) worst-case complexity $\Omega(n^{1/k})$, then it has randomized node-averaged complexity $\Omega(n^{1/(2^k - 1)} / \log n)$ and  deterministic node-averaged complexity $\Omega(n^{1/(2^k - 1)})$. This is shown by using a randomized algorithm with node-averaged complexity $o(n^{1/(2^k - 1)} / \log n)$, or a deterministic algorithm with node-averaged complexity $o(n^{1/(2^k - 1)})$, to construct a function $f_{\Pi,k+1}$. 
\end{itemize}
In this section, we prove the following.
\begin{itemize}
    \item Recall that the existence of a suitable function $f_{\Pi,\infty}$ implies that the problem $\Pi$ has node-averaged complexity $O(\log^* n)$, and that it is possible to use an algorithm with $O(\log^* n)$ node-averaged complexity to construct a suitable function $f_{\Pi,\infty}$. We show that, if the function $f_{\Pi,\infty}$ satisfies some additional properties, then $\Pi$ can be solved in $O(1)$ rounds. Moreover, we show that it is decidable whether such function exists.
    \item If there exists an algorithm with deterministic node-averaged complexity $(\log^* n)^{o(1)}$, then a function with such properties exists.
\end{itemize}
Hence, by showing the above, we obtain that the existence of an algorithm with deterministic node-averaged complexity $(\log^* n)^{o(1)}$ implies the existence of an algorithm with deterministic node-averaged complexity $O(1)$, and hence that there are no LCLs with a deterministic node-averaged complexity that lies in the range $\omega(1)$ -- $(\log^* n)^{o(1)}$. Moreover, we also obtain that, whether a problem can be solved in $O(1)$ node-averaged rounds, is decidable. Hence, we obtain the following theorem.
\lowgap*

In the reminder of the section, we start by giving a recap of some notions  presented in \cite{fullversion}, and then we prove our statements. More precisely, the section is structured as follows.
\begin{itemize}
    \item In \cite{fullversion}, it is first shown that all LCLs on trees can be converted into problems described in a specific formalism, called the \emph{black-white formalism}, while preserving the node-averaged complexity of the problem under consideration. In \Cref{ssec:blackWhite} we present this formalism.

    \item In \cite{fullversion} it is shown that, every LCL in the black-white formalism can be solved in a specific generic way, that starts by decomposing the tree in some way. This result is an adaptation, to the black-white formalism, of results already presented in \cite{CP19timeHierarchy,chang20}, We give a high-level overview of this decomposition, and of the generic algorithm, in \Cref{ssec:treeDec}, and \Cref{ssec:genericAlgo}, respectively.

    \item Fundamental ingredients of the generic algorithm are the concepts of \emph{classes} and \emph{label-sets}, that we present in \Cref{ssec:classes}.

    \item In \Cref{ssec:worstToAvg}, we provide a high-level overview of how the described ingredients are used in \cite{fullversion} to obtain an algorithm with $O(\log^* n)$ node-averaged complexity.

    \item As already mentioned, the generic algorithm relies on the existence of a function with special properties. In \Cref{sec:testingProcedure}, we show how the existence of this function is determined.

    \item Finally, in the reminder of the section, we show that, if this function satisfies some additional properties, then the problem can be solved in $O(1)$ deterministic node-averaged complexity.

\end{itemize}

\subsection{LCLs in the Black-White Formalism}\label{ssec:blackWhite}
\begin{definition}[\cite{fullversion}]
  A problem $\Pi$ described in the black-white formalism is a tuple $(\Sigma_{\mathrm{in}},\Sigma_{\mathrm{out}},C_W,C_B)$, where:
\begin{itemize}
    \item $\Sigma_{\mathrm{in}}$ and $\Sigma_{\mathrm{out}}$ are finite sets of labels.
    \item $C_W$ and $C_B$ are both sets of multisets of pairs, where each pair $(\ell_{\mathrm{in}},\ell_{\mathrm{out}})$ is in $\Sigma_{\mathrm{in}} \times \Sigma_{\mathrm{out}}$. 
\end{itemize}
Solving a problem $\Pi$ on a graph $G$ means that:
\begin{itemize}
    \item $G = (W \cup B,E)$ is a graph that is properly $2$-colored, and in particular each node $v \in W$ is labeled $c(v) = W$, and each node $v \in B$ is labeled $c(v) = B$.
    \item To each edge $e \in E$ is assigned a label $i(e) \in \Sigma_{\mathrm{in}}$.
    \item The task is to assign a label $o(e) \in \Sigma_{\mathrm{out}}$ to each edge $e \in E$ such that, for each node $v \in W$ (resp.\ $v \in B$) it holds that the multiset of incident input-output pairs is in $C_W$ (resp.\ in $C_B$).
\end{itemize}  
\end{definition}
In \cite{fullversion}, it is shown how, by starting from an LCL $\Pi$ on trees, one can define an LCL $\Pi'$ in the black-white formalism that has the same asymptotic node-averaged complexity as $\Pi$. Observe that, when considering a problem in the black-white formalism, it is assumed that the tree is $2$-colored.

\subsection{A Tree Decomposition}\label{ssec:treeDec}
The generic algorithms used in \cite{CP19timeHierarchy,chang20} are based on the idea of decomposing the tree into layers. More in detail, the first step is running a procedure called \emph{rake-and-compress}, that takes as input a parameter $\gamma$ and works as follows. For $i = 1,\ldots$ perform the following.
\begin{itemize}
    \item For $j = 1,\ldots,\gamma$ remove nodes of degree 1. Call the removed nodes at step $(i,j)$ \emph{rake nodes of layer $(i,j)$}, and let $V^R_{i,j}$ be the set of these nodes. Call all the nodes removed at step $i$ \emph{rake nodes of layer $i$} and let $V^R_i$ be the set of these nodes.
    \item Remove nodes of degree 2 and call the removed nodes \emph{compress nodes of layer $i$}, and let $V^C_i$ be the set of these nodes.
    \item Repeat until the tree becomes empty.
\end{itemize}
The choice of $\gamma$ affects the number of obtained layers: if $\gamma = 1$, then it is possible to prove that the obtained layers are at most $O(\log n)$, while if $\gamma = \Theta(n^{1/k})$ then it is possible to prove that rake layers are bounded by $k$ and compress layers are bounded by $k-1$.
This decomposition algorithm can be slightly tweaked to obtain additional properties, summarized in \Cref{def:modi}. The worst-case time required to compute such a decomposition is summarized in \Cref{lem:decomposition}.
\begin{definition}[$(\gamma,\ell,L)$-decomposition \cite{fullversion}]\label{def:modi}
    Let $G=(V(G), E(G))$ be a tree. Let $G[V']$, where $V'\subseteq V(G)$, be the subgraph induced by the nodes in $V'$. Given three integers $\gamma, \ell, L$, a \emph{$(\gamma,\ell,L)$-decomposition} is a partition of the nodes in $V(G)$ into $2L-1$ layers $V_1^R = (V^R_{1,1},\ldots,V^R_{1,\gamma}), \ldots, V_{L}^R  = (V^R_{L,1},\ldots,V^R_{L,\gamma})$, $V_1^C, \ldots, V_{L-1}^C$ such that the following hold.
	\begin{enumerate}
		\item \label{prop:compress} \textbf{Compress layers}. The connected components of each $G[V_i^C]$ are paths of length in $[\ell,2\ell]$, the endpoints have exactly one neighbor in a higher layer, and all other nodes do not have any neighbor in a higher layer.
		\item \label{prop:rake}\textbf{Rake layers}. The diameter of the connected components in $G[V_i^R]$ is $O(\gamma)$, and for each connected component at most one node has a neighbor in a higher layer.
		\item \label{prop:isolated}The connected components of each sublayer $G[V^R_{i,j}]$ consist of isolated nodes. Each node in a sublayer $V^R_{i,j}$ has at most one neighbor in a higher layer or sublayer. 
	\end{enumerate}
\end{definition}
\begin{lemma}[\cite{CP19timeHierarchy, chang20}]\label{lem:decomposition}
	Assume $\ell = O(1)$. Then the following hold.
    \begin{itemize}
        \item For any positive integer $k$ and $\gamma = n^{1/k}(\ell / 2)^{1 - 1/k}$, a $(\gamma,\ell,k)$-decomposition can be computed in $O(k\cdot n^{1/k})$ rounds.
        \item For $\gamma = 1$ and $L = O(\log n)$, a $(\gamma,\ell,L)$-decomposition can be computed in $O(\log n)$ rounds.
    \end{itemize}
\end{lemma}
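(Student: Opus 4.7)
The plan is to implement the rake-and-compress procedure from \Cref{ssec:treeDec} literally, producing the sublayers $V^R_{i,1},\ldots,V^R_{i,\gamma}$ of the $i$-th rake phase followed by the compress layer $V^C_i$, and then post-process every resulting compress path whose length exceeds $2\ell$ to chop it into sub-paths of length in $[\ell,2\ell]$. Properties~\ref{prop:compress}--\ref{prop:isolated} of \Cref{def:modi} then follow almost by construction: a node placed in $V^R_{i,j}$ had degree $\le 1$ at step $j$ of the $i$-th rake phase, so its unique remaining neighbour lies in a strictly higher sublayer, and a node placed in $V^C_i$ sits on a maximal degree-$2$ path of the post-rake tree, so the two path-neighbours stay in the same compress component while its other (non-path) neighbour is in a higher layer. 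The chopping step preserves these properties because it only inserts new rake-layer endpoints inside a compress layer.

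The heart of the proof is the layer-count bound. For the second bullet I would invoke the Miller--Reif style contraction argument: in any tree with at least two nodes, a constant fraction of the vertices is either a leaf or lies on a degree-$2$ path, so a single $\gamma=1$ iteration removes a constant fraction of the current tree, and $L=O(\log n)$ iterations suffice to empty it. For the first bullet I would follow the argument of~\cite{CP19timeHierarchy, chang20}: after raking $\gamma$ times the residual tree has no leaves, so every maximal degree-$2$ path it contains has length $\ge \gamma$, and hence when we compress, the number of surviving (degree-$\ge 3$) junctions is at most $|T_i|/\gamma$ up to a constant factor coming from the degree-$3$ lower bound. Iterating $k$ times shrinks the size by a factor $\gamma^k$ up to $O((\ell/2)^{k-1})$-type corrections that account for cutting the long compress paths into pieces of length $\le 2\ell$; a direct calculation shows the residual is below $1$ exactly when $\gamma \ge n^{1/k}(\ell/2)^{1-1/k}$.

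For the running time, each rake sublayer is a single round (remove current degree-$1$ nodes) and each compress phase takes $O(\gamma+\ell)$ rounds because every maximal degree-$2$ path in the post-rake tree has length $O(\gamma)$ and its two endpoints are discovered by traversing it. Splitting the long compress paths into segments of length in $[\ell,2\ell]$ is a path-LCL solvable in $O(\log^* n)$ rounds via Linial's $3$-colouring followed by a greedy endpoint selection; since these splittings are independent across layers they can be executed in parallel at the end of the whole procedure instead of inside every iteration. Adding up: for the first bullet we pay $O(k\gamma)$ for the rake+compress passes plus $O(\log^* n)$ for the final splitting, which for constant $k$ is $O(k\cdot n^{1/k})$ because $\log^* n = o(n^{1/k})$; for the second bullet we pay $O(L)=O(\log n)$ for the rakes and compresses plus $O(\log^* n)$ for the final splitting, hence $O(\log n)$ overall.

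The step I expect to require the most care is the precise shrinkage bookkeeping for the first bullet: after chopping, a single compress path of length $\Theta(\gamma)$ contributes $\Theta(\gamma/\ell)$ new rake endpoints that rejoin the next iteration, so the naive ``shrinkage by $\gamma$'' becomes ``shrinkage by $\Theta(\gamma/\ell)$'', and the $\ell$-dependence has to be tracked to recover the exact exponent $n^{1/k}(\ell/2)^{1-1/k}$ rather than a weaker $n^{1/k}\ell^{1-1/k}$-type bound. Everything else is a direct adaptation of the analyses in~\cite{CP19timeHierarchy, chang20}.
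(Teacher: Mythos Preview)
The paper does not give its own proof of this lemma: it is imported verbatim from \cite{CP19timeHierarchy,chang20}, so there is no in-paper argument to compare against. Your sketch follows the same rake-and-compress approach as those works and is essentially correct.

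One genuine slip: you claim that ``every maximal degree-$2$ path in the post-rake tree has length $O(\gamma)$''. This is false. A caterpillar (a long path with one leaf hanging from each node) loses all its leaves after a single rake step and exposes a degree-$2$ path of length $\Theta(n)$, regardless of $\gamma$. Fortunately your running-time bound does not actually depend on this claim: marking the compress layer is a purely local operation (each node checks whether its current degree is~$2$, one round), and the only non-constant cost on compress paths is the $O(\log^*n)$ splitting you already isolate. So the conclusion survives, but the stated reason does not. You should also note that maximal degree-$2$ paths of length $<\ell$ are not compressed at all; they are left in place and get absorbed by subsequent rake steps, which is why Property~\ref{prop:compress} can guarantee length $\ge\ell$.
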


\subsection{The Generic Algorithm}\label{ssec:genericAlgo}
The rake-and-compress procedure produces a useful layering on the tree, that is used as follows. Consider, for simplicity, the case in which there are no nodes in compress layers: according to the ordering $(i,j) < (i',j')$ if $i < i'$ or $i = i' \land j < j'$, we get that each node has at most one neighbor in a higher rake layer. We call the edge connecting a node $u$ to its unique higher-layer neighbor $v$ the \emph{outgoing edge} of $u$, and all the other edges incident to $u$ are called \emph{incoming edges} of $u$. 

Recall that, solving a problem $\Pi$ in the black-white formalism means assigning one label for each edge such that the constraints of the problem are satisfied. We now process the tree to assign a set of labels to each edge.
We process nodes from lower to higher layers. To each node $u$, we assign a set of labels to its outgoing edge, such that, for any choice in such a set, there exists a choice in the sets assigned to the incoming edges of $u$, such that the constraints of the problem are satisfied on $u$. Note that such a set is always non-empty, as long as the problem is solvable in any tree. In other words, we compute what labels we could put on the outgoing edge of $u$ such that we can pick a valid labeling on the whole subtree rooted at $u$. This set of labels, informally, is called \emph{class} of the subtree rooted at $u$. Observe that, once a set has been assigned to each edge, we can process the nodes in reverse order to assign a label to each edge and make the constraints of the problem satisfied on all nodes.

While we discussed how to handle rake layers, handling compress layers is more complicated.
In fact, if we define incoming and outgoing edges analogously as in the case of rake nodes, we get that compress paths have \emph{two} outgoing edges, one for each endpoint. The issue in having two outgoing edges is that, fixing a label on the edge of one endpoint of a compress path may affect what are the valid labelings on the edge of the other endpoint, and hence there is no straightforward way to assign sets of labels to the two endpoints such that this labeling is completable into a valid solution in the path.

This issue is handled in \cite{CP19timeHierarchy,chang20} by first using a decomposition that gives only paths of constant size, and by then using some function that maps the sets of the incoming edges of a path into two sets for the two outgoing edges of the path. This function must satisfy a special property: we want the whole process to never compute empty sets (since otherwise we cannot then assign labels to the edges), and in \cite{CP19timeHierarchy,chang20} is proved that such a function always exists (conditioned on the problem being solvable in the target runtime).
We now provide a formal definition of class, also extended to the case of compress paths.

\subsection{Classes and Label-Sets}\label{ssec:classes}
While the generic algorithms are provided already in \cite{CP19timeHierarchy,chang20}, in \cite{fullversion} the same algorithms are provided in a more accessible form, thanks to the fact that they are restricted to LCLs in the black-white formalism. We now present some of the ingredients provided in \cite{fullversion}. The first ingredient is the definition of label-sets and classes.
\begin{definition}[\cite{bcmos21,fullversion}]\label{def:classes}
    Assume we are given an LCL $\Pi = (\Sigma_{\mathrm{in}},\Sigma_{\mathrm{out}},C_W,C_B)$ in the black-white formalism.
    Consider a tree $G = (V,E)$, and a connected subtree $H = (V_H, E_H)$ of $G$. Assume that the edges connecting nodes in $V_H$ to nodes in $V \setminus V_H$ are split into two parts, $F_{\mathrm{incoming}}$ and $F_{\mathrm{outgoing}}$, that are called, respectively, the set of incoming and outgoing edges. Assume also that for each edge $e \in F_{\mathrm{incoming}}$ is assigned a set $L_e \subseteq \Sigma_{\mathrm{out}}$. This set is called \emph{the label-set of $e$}. Let $\mathcal{L}_{\mathrm{incoming}} = (L_e)_{e \in F_{\mathrm{incoming}}}$. 
    A \emph{feasible labeling} of $H$ w.r.t.\ $F_{\mathrm{incoming}}$, $F_{\mathrm{outgoing}}$, and $\mathcal{L}_{\mathrm{incoming}}$ is a tuple $(L_{\mathrm{outgoing}},L_{\mathrm{incoming}},L_{\mathrm{H}})$ where:
    	\begin{itemize}
		\item $L_{\mathrm{incoming}}$ is a labeling $(l_e)_{e\in F_{\mathrm{incoming}}}$ of $F_{\mathrm{incoming}}$ satisfying $l_e\in (\mathcal{L}_{\mathrm{incoming}})_e$ for all $e \in F_{\mathrm{incoming}}$,
		\item $L_{\mathrm{outgoing}}$ is a labeling $(l_e)_{e\in F_{\mathrm{outgoing}}}$ of ${F_{\mathrm{outgoing}}}$ satisfying $l_e \in \Sigma_{\mathrm{out}}$ for all $e \in F_{\mathrm{outgoing}}$,
		\item $L_H$ is a labeling $(l_e)_{e\in E_H}$ of $E_H$ satisfying $l_e \in \Sigma_{\mathrm{out}}$ for all $e \in E_H$,
		\item the output labeling of the edges incident to nodes of $H$ given by $L_{\mathrm{outgoing}}, L_{\mathrm{incoming}},$ and $L_H$
        is such that all node constraints of each node $v\in V_H$ are satisfied.
	\end{itemize}
 	Also, we define the following:
	\begin{itemize}
		\item a \emph{class}  is a set of feasible labelings,
		\item a \emph{maximal class}  is the unique inclusion maximal class, that is, it is the set of all feasible labelings,
		\item an \emph{independent class} is a class $A$ such that
		for any $\big(L_{\mathrm{outgoing}}, L_{\mathrm{incoming}}, L_H\big)\in A$, and for any $\big(L'_{\mathrm{outgoing}}, L'_{\mathrm{incoming}}, L'_H\big)\in A$ the following holds. Let $L''_{\mathrm{outgoing}}$ be an arbitrary combination of $L_{\mathrm{outgoing}}$ and $L'_{\mathrm{outgoing}}$, that is, $L''_{\mathrm{outgoing}} = (l_e)_{e\in F_{\mathrm{outgoing}}}$ where $l_e \in \{ (L_{\mathrm{outgoing}})_e, (L'_{\mathrm{outgoing}})_e \}$. There must exist some $L''_{\mathrm{incoming}}$ and $L''_H$ satisfying $\big(L''_{\mathrm{outgoing}}, L''_{\mathrm{incoming}}, L''_H\big)\in A$.
	\end{itemize}
\end{definition}
Note that the maximal class with regard to some given $\Pi, H, F_{\mathrm{incoming}}, F_{\mathrm{outgoing}},$ and $\mathcal{L}_{\mathrm{incoming}}$, is unique. In contrast, there may be different ways (or none) to restrict a maximal class to a (nonempty) independent class. 

As discussed, the generic algorithm needs to assign label-sets to edges in two specific types of subgraphs $H$, that are the following.
\begin{itemize}
    \item The graph $H$ consists of a single rake node. By construction, $H$ has only one outgoing edge.
    \item The graph $H$ consists of a short compress path. By construction, $H$ has only two outgoing edges.
\end{itemize}
Hence, \Cref{def:classes} is used only for the two specific types of graphs $H$: either single nodes, or short paths. In each of these cases we need to compute a label-set for each outgoing edge. We now report the way to compute label-sets in these two cases as presented in \cite{fullversion}. The cases are illustrated in \Cref{fig:label-set}, which comes from \cite{fullversion}.

\begin{figure}
	\centering
	\includegraphics[width=0.6\textwidth]{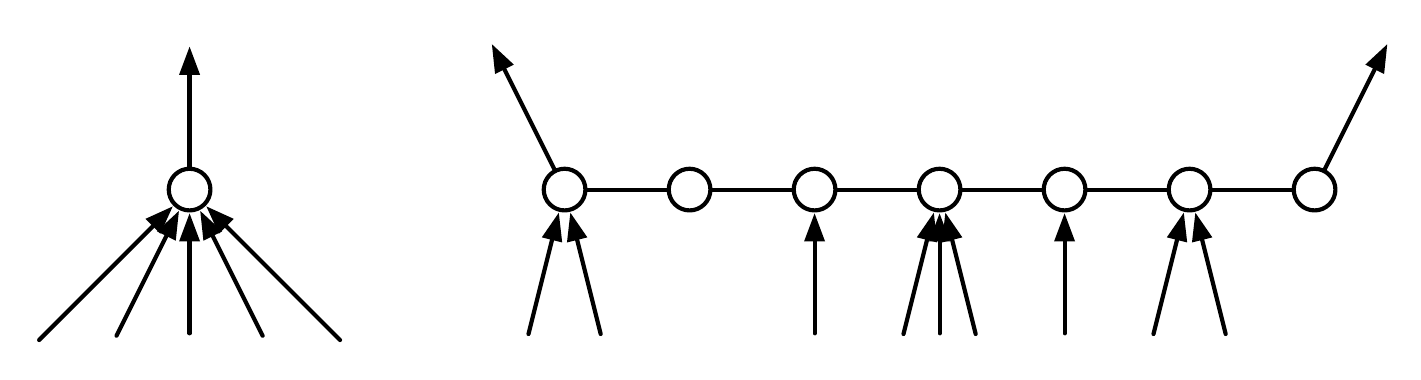}
	\caption{The figure illustrates the two cases of the label-set computation, where it is assumed that the incoming edges have already a label-set assigned and the goal is to assign a label-set to the outgoing edges; the left side depicts the case of a single node, the right side shows the case of a short path.}
	\label{fig:label-set}
\end{figure}

\begin{definition}[label-set computation \cite{fullversion}]\label{def:computing-label-set}
Assume we are given some function $f_{\Pi, k}$ (to be specified later). We define a function $g(v)$ that can be used to compute label-sets for the outgoing edges as a function of $H$, $\Pi$, $F_{\mathrm{incoming}}$, $F_{\mathrm{outgoing}}$, $\mathcal{L}_{\mathrm{incoming}}$, and $f_{\Pi,k}$, for two specific types of graphs $H$.
\begin{itemize}
	\item \textbf{Single nodes:} the graph $H$ consists of a single node $v$ that has a single outgoing edge $e$, and hence $F_{\mathrm{outgoing}} = \{e\}$. All the other edges (which might be $0$) are incoming, and for each of them we are given a label-set (where $\mathcal{L}_{\mathrm{incoming}}$ represents this assignment). We assign, to the outgoing edge, the label-set $g(v)$, that consists of the set of labels that we can assign to the outgoing edge, such that we can pick a label for each incoming edge in a valid manner. More in detail, let $B$ be the maximal class of $H$ w.r.t.\ $\Pi, F_{\mathrm{incoming}}, F_{\mathrm{outgoing}},$ and $\mathcal{L}_{\mathrm{incoming}}$. Then, we denote $g(v)=\bigcup_{(L_{\mathrm{outgoing}}, L_{\mathrm{incoming}}, L_H)\in B}\{(L_{\mathrm{outgoing}})_e\}$. We have $g(v)\subseteq \Sigma_{\mathrm{out}}$. Observe that each node $v$ can compute $g(v)$ if it is given the value of $g(u)$ (that is, the label-set of the edge $\{u,v\}$) for each incoming edge $\{v,u\}$. 
	\item \textbf{Short paths:} the graph $H$ is a path of length between $\ell$ and $2 \ell$, for some $\ell = O(1)$ that depends solely on $\Pi$ and the target running time. The endpoints of the path are $v_1$ and $v_2$. The outgoing edges are $F_{\mathrm{outgoing}} = \{e_1,e_2\}$, where $e_1$ (resp.\ $e_2$) is the outgoing edge incident to $v_1$ (resp.\ $v_2$). Let $B$ be the maximal class of $H$.
	We assume to be given a function $f_{\Pi,k}$, that depends solely on $\Pi$ and some parameter $k$ (that, in turn, depends on the target running time), that maps a class $B$ into an independent class $B' = f_{\Pi,k}(B)$. For $i \in \{1,2\}$, let $g(v_i) = \bigcup_{(L_{\mathrm{outgoing}}, L_{\mathrm{incoming}}, L_H)\in B'}\{(L_{\mathrm{outgoing}})_{e_i}\}$. We have $g(v_i)\subseteq \Sigma_{\mathrm{out}}$.  The label-set of $e_1$ (resp.\ $e_2$) is $g(v_1)$ (resp.\ $g(v_2)$). Observe that the values of $g(v_i)$, for $i\in\{1, 2\}$, can be computed given $H$ and $\mathcal{L}_{\mathrm{incoming}}$.
\end{itemize}
\end{definition}
Observe that, by the definition of independent class, we get that, for any choice of labels in the two label-sets assigned to the outgoing edges of a compress path, there exists a valid labeling for the compress path that is compatible with the label-sets of the incoming edges.

\subsection{From Worst-Case to Node-Averaged Case}\label{ssec:worstToAvg}
\Cref{def:computing-label-set} provides a way to process the layers of a rake-and-compress decomposition to assign label-sets to all edges, such that then nodes can be processed in reverse order to assign a label to each edge. In fact, consider the following ordering.
\begin{definition}[Layer ordering \cite{fullversion}]\label{def:ordering}
    We define the following total order on the (sub)layers of a $(\gamma, \ell, L)$-decomposition.
    \begin{itemize}
        \item  $V_{i,j}^R < V_{i',j'}^R$ iff $i < i' \lor (i = i' \land j < j')$
        \item $V_{i,j}^R < V_{i}^C$
        \item $V_{i}^C < V_{i+1,j}^R$
    \end{itemize}
\end{definition}
We can first process the nodes according to the ordering of \Cref{def:ordering} to assign label-sets to all edges by using \Cref{def:computing-label-set}. Then, we can process the nodes in reverse order and assign a label to each edge such that the constraints of the problem are satisfied. For more details, see \cite[Section 4]{fullversion}. 

In \cite{CP19timeHierarchy,chang20} it is shown that the parameter $\ell$ required by the decomposition can be computed solely as a function of $\Pi$, and that $\ell$ is $O(1)$. Given a function $f_{\Pi,\infty}$, by using a $(1,O(1),O(\log n))$-decomposition, we obtain an algorithm that solves $\Pi$ in $O(\log n)$ worst-case deterministic rounds, while by using a $(O(n^{1/k}),O(1),k)$-decomposition, for $k=O(1)$, we obtain an algorithm that solves $\Pi$ in $O(n^{1/k})$ worst-case deterministic rounds. We call these generic algorithms \emph{solvers}.

In order to obtain $O(\log^* n)$ node-averaged complexity, in \cite{fullversion} it is shown how to compute a tree decomposition with $O(\log^* n)$ node-averaged complexity, and how to additionally tweak it so that the label-set computation, and the following label picking phase, can be computed in $O(1)$ node-averaged complexity. For more details, see \cite[Section 5]{fullversion}. The $O(\log^* n)$ time is actually only spent for splitting compress paths into shorter paths (of length between $\ell$ and $2\ell$), and the whole algorithm of \cite{fullversion} would actually require $O(1)$ rounds if splitting long paths into short ones is not needed. What we show in the rest of the section is that, if there exists an algorithm with $(\log^* n)^{o(1)}$ deterministic node-averaged complexity, then this splitting is not needed, implying an algorithm with $O(1)$ deterministic node-averaged complexity.

\subsection{Finding a Function}\label{sec:testingProcedure}
Recall that, in the described solver procedure, it is required to use a function $f_{\Pi,k}$ for solving a problem in $O(n^{1/k})$ worst-case deterministic rounds, or a function $f_{\Pi,\infty}$ for solving a problem in $O(\log n)$ worst-case deterministic rounds. Moreover, recall that the function used in the solver needs to satisfy the condition that the solver never creates empty label-sets. We call \emph{good} a function that satisfies this condition.

In \cite{CP19timeHierarchy,chang20}, it is argued that there is a \emph{finite} amount of possible (good or bad) functions $f_{\Pi,\infty}$, and for a given $k$, there is a \emph{finite} amount of possible functions $f_{\Pi,k}$. Moreover, it is shown that it is \emph{decidable} whether a given function is good, implying that it is decidable whether a good function exists, and if it exists it is possible to compute it. 

We now present the algorithm, shown in \cite{fullversion}, that tests whether a given function is good. This function is called \emph{testing procedure}.
The procedure depends not only on the function to be tested, but also on a parameter $\ell$ that, in \cite{CP19timeHierarchy,chang20}, it is shown that it can be determined solely as a function of $\Pi$. The testing procedure is well-defined for any integer $k > 0$, but also for $k = \infty$. The idea of the testing procedure is to keep track of all possible label-sets that one could possibly obtain while running the solver.
For each of these label-sets $L$, the procedure also keeps track of a subtree (where nodes are also marked with the layers of a decomposition) where, if we run the solver by using the function that we are testing, we would obtain an edge with label-set $L$. These trees, called \emph{representative trees}, will be used later. 

\vspace{0.5cm}
\hrule
\vspace{0.3cm}
\captionof{algorithm}{The testing procedure of \cite{fullversion}}
\hrule
\begin{enumerate}
    \item Initialize $S$ with all the possible values of the label-set $g(v)$ of $v$ (as defined in \Cref{def:computing-label-set}) that could be obtained when $v$ is a leaf. Note that the possible values are a finite amount that only depends on the amount of input labels of $\Pi$. Initialize $\mathcal{R}_1$ by inserting one pair $((\tilde{T},u),L)$ for each element $L$ in  $S$, where $\tilde{T}$ is a tree composed of $2$ nodes $\{u,v\}$ and $1$ edge $\{u,v\}$, and $L=g(v)$. Node $v$ is marked as a rake node of layer $1$, while $u$ is marked as a temporary node.\label{item:leaves}
    
    \item For $i = 1, \ldots, k$ do the following. If, at any step, an empty label-set is obtained, then the tested function is not good.\label{item:tryall}
    
    \begin{enumerate}
        \item Do the following in all possible ways. Consider $x$ arbitrary elements $((\tilde{T}_j,v_j),L_j)$ of $\mathcal{R}_i$, where $1\le j\le x$ and $1 \le x \le \Delta$. Construct the tree $T$ as the union of all trees $\tilde{T}_j$, where all the nodes $v_j$ (note that each node $v_j$ has degree $1$) are identified as a single node, call it $v$, which, after this process, has degree $x$ in $T$. Let $F_{\mathrm{incoming}}$ be the set of edges connected to $v$, and let $\mathcal{L}_{\mathrm{incoming}}$ be the label-set assignment given by the sets $L_j$. The node $v$ is marked as a rake node of layer $i$.
        If $v$ has an empty maximal class w.r.t. $F_{\mathrm{incoming}}$, $F_{\mathrm{outgoing}} = \{\}$, and $\mathcal{L}_{\mathrm{incoming}}$, then the tested function is not good.\label{item:top-level-can-complete}
        
        \item Do the following in all possible ways. Consider $x$ arbitrary elements $((\tilde{T}_j,v_j),L_j)$ of $\mathcal{R}_i$, where $1\le j\le x$ and $1 \le x \le \Delta -1$. Construct the tree $T$ as the union of all trees $\tilde{T}_j$, where all the nodes $v_j$ are identified as a single node, call it $v$. Attach an additional neighbor $u$ to $v$. Let $F_{\mathrm{outgoing}} = \{\{u,v\}\}$. Let $F_{\mathrm{incoming}}$ be the set of edges connected to $v$, excluding $\{u,v\}$, and let $\mathcal{L}_{\mathrm{incoming}}$ be the label-set assignment given by the sets $L_j$. The node $v$ is marked as a rake node of layer $i$, while $u$ is marked as a temporary node.
        Let $L = g(v)$ (as defined in \Cref{def:computing-label-set}). If $L$ is empty, then the function is not good. Add $((T, u), L)$ to $\mathcal{R}_i$ if no pair with second element $L$ is already present.\label{item:new-rake-labelsets}
        
        \item Repeat the previous two step until nothing new is added to $\mathcal{R}_i$. This must happen, since there are a finite amount of possible label-sets. \label{constructfunction-b}
        
        \item If $i = k$, stop.
        
        \item Initialize $\mathcal{C}_i = \emptyset$.
        
        \item Do the following in all possible ways. Construct a graph starting from a path $H$ of length between $\ell$ and $2 \ell$ where we connect nodes of degree $1$ to the nodes of $H$ satisfying: (i) all nodes in $H$ have degree at most $\Delta$; (ii) the two endpoints of $H$ have an outgoing edge that connects respectively to nodes $u_1$ and $u_2$ that are nodes of degree $1$; (iii) all the other edges connecting degree-$1$ nodes to the nodes of $H$ are incoming for $H$.
        Next, replace each incoming edge $e$ and the node of degree $1$ connected to it with a tree $\tilde{T}$ of a pair $((\tilde{T},u),L)$ in $\mathcal{R}_i$, by identifying $u$ with the node of the path connected to $e$. Different trees can be used for different edges. The nodes $u_1$ and $u_2$ are marked as temporary nodes, while the nodes of the path are marked as compress nodes of layer $i$.
        Use the function as described in \Cref{def:computing-label-set} to compute the label-sets $L_1$ and $L_2$ of the two endpoints. If $L_1$ or $L_2$ is empty, then the function is not good.
        Otherwise, add the pair $((H,u_1),L_1)$ (resp.\ $((H,u_2),L_2)$) to $\mathcal{C}_i$ if no pair with second element $L_1$ (resp.\ $L_2$) is already present. The \emph{representative tree} of $P = (H,F_{\mathrm{incoming}},F_{\mathrm{outgoing}},\mathcal{L}_{\mathrm{incoming}})$ is defined as $r(P) = T$.\label{item:compress-paths}
        
        \item Set $\mathcal{R}_{i+1} = \mathcal{R}_i \cup \mathcal{C}_i$. If $\mathcal{R}_{i+1} = \mathcal{R}_{i}$, stop.
    \end{enumerate}
\end{enumerate}
\hrule
\vspace{0.5cm}

For more details about this procedure, we refer the reader to \cite[Section 7]{fullversion}.
As discussed in \cite{fullversion}, it is possible to prove that the testing procedure generates exactly those label-sets that could possibly be obtained by running the solver \cite{CP19timeHierarchy, chang20}. Hence, if empty label-sets are never obtained, then the function can indeed be used to solve a problem.
We now prove a useful property of this testing procedure.
\begin{lemma}\label{lem:kmaxfunction}
    Let $\Pi$ be an LCL in the black-white formalism. Let $F$ be the (finite) set of all possible functions $f_{\Pi,\infty}$. 
    Let $k_{\mathrm{max}}$ be the largest value of $i$ reached when testing all the function in $F$. If a good function $f_{\Pi,k_{\mathrm{max}}+1}$ exists, then a good function $f_{\Pi,\infty}$ exists.
\end{lemma}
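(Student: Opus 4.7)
The plan is to take a good candidate $f^*$ for $f_{\Pi,k_{\max}+1}$ and show that the same function, viewed as a candidate for $f_{\Pi,\infty}$, is also good. A preliminary observation is that the function types $f_{\Pi,k}$ (for any finite $k$) and $f_{\Pi,\infty}$ share the same signature: each is a function mapping a maximal class of a short compress path to an independent subclass thereof. Consequently $f^* \in F$, and it makes sense to subject $f^*$ to the $f_{\Pi,\infty}$-testing procedure.

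Next I would compare the two variants of the testing procedure step by step. Inspecting the pseudocode, the only differences are (1) the stopping condition, i.e.\ step (d) for finite $k$ vs.\ relying only on the stabilization check of step (g) for $k=\infty$, and (2) that in the very last iteration $i=k$ of the finite-$k$ test the compress phase is skipped. In particular, as long as no early stop is triggered, the two procedures maintain \emph{identical} sets $\mathcal{R}_i$ and $\mathcal{C}_i$ in iterations $i \leq k_{\max}$, and any empty label-set produced by one of them at such an iteration would also be produced by the other.

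The key argument is then a short case analysis. Run the $f_{\Pi,\infty}$-test on $f^*$. By the definition of $k_{\max}$ applied to $f^* \in F$, this test reaches at most iteration $k_{\max}$, so it either (i) terminates by producing an empty label-set at some iteration $j \leq k_{\max}$, or (ii) terminates by stabilization within $k_{\max}$ iterations. Case (i) is ruled out: by the step-by-step equivalence above, the same empty label-set would arise at iteration $j \leq k_{\max} < k_{\max}+1$ of the $f_{\Pi,k_{\max}+1}$-test, contradicting the hypothesis that $f^*$ is good for $k_{\max}+1$. Hence case (ii) must hold, which is exactly the statement that $f^*$ is a good $f_{\Pi,\infty}$-function.

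The part I expect to be the main obstacle is making the ``step-by-step equivalence'' fully rigorous, in particular ruling out that step (a) of iteration $k_{\max}+1$ in the $f_{\Pi,k_{\max}+1}$-test (the rake completability check at the very top) might be an extra failure mode that the $f_{\Pi,\infty}$-test has not already inspected. This is handled by noting that stabilization at iteration $j \leq k_{\max}$ of the $\infty$-test yields $\mathcal{R}_{j+1}=\mathcal{R}_j$, and consequently $\mathcal{R}_{k_{\max}+1}=\mathcal{R}_j$ in the finite-$k$ test as well, so step (a) at iteration $k_{\max}+1$ only re-examines tuples that were already valid inputs at iteration $j$ of the $\infty$-test; since the latter did not produce an empty class, neither can the former.
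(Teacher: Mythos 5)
Your proposal is correct and follows essentially the same approach as the paper's (much terser) proof: identify $f^*=f_{\Pi,k_{\max}+1}$ as a candidate for $f_{\Pi,\infty}$, and observe that by the definition of $k_{\max}$ any non-good $f_{\Pi,\infty}$-candidate fails the test within $k_{\max}$ iterations, whereas a good $f_{\Pi,k_{\max}+1}$ does not fail within $k_{\max}+1$ iterations. Your extra care in establishing the step-by-step equivalence of the two testing procedures and in ruling out a late failure at step (a) of iteration $k_{\max}+1$ fills in details the paper leaves implicit, but the argument is the same.
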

\begin{proof}
 Define $f_{\Pi,\infty} := f_{\Pi,k_{\mathrm{max}}+1}$.
    The proof follows from the fact that all functions $f_{\Pi,\infty}$ that are not good fail the testing procedure within the first $k_{\mathrm{max}}$ steps, while $f_{\Pi,k_{\mathrm{max}}+1}$ does not fail in the first $k_{\mathrm{max}+1}$ steps.
\end{proof}

\subsection{The Plan for Saving the \texorpdfstring{$O(\log^*n)$}{O(log*n)} Term}
We now discuss how the $O(\log^* n)$ term in the node-averaged complexity of the solver procedure of \cite{fullversion} can be avoided.
The solver procedure, as shown in \cite{fullversion}, precomputes a distance-$O(1)$ $O(1)$-coloring (for some suitable constants), which can be done in $O(\log^* n)$ deterministic worst-case rounds. Then, throughout its execution, the solver handles each compress path $P$ in the compress layer $V^C_i$ as follows. 
\begin{itemize}
    \item First, compute a set $S$ of nodes of the compress path $P$ such that the subpaths induced by the nodes in $P \setminus S$ form paths of length in $[\ell,2\ell]$, and such that endpoints of $P$ are not in $S$. The nodes in $S$ are moved to $V^R_{i+1,1}$. This is done in $O(1)$ deterministic worst-case rounds by exploiting the precomputed coloring.
    \item The function $f_{\Pi,\infty}$ is used to assign label-sets to the edges connecting nodes of the subpaths to their higher-layer neighbors.
    \item Nodes in $S$ now have all incoming edges with label-sets assigned, and no outgoing edges. This means that they can pick a valid labeling for all their incident edges and terminate.
    \item Nodes in the subpaths, except for the first and the last subpath, can now pick a valid labeling for their incident edges and terminate, since such subpaths are connected to nodes in $S$ on both sides, and since the subpaths have constant length.
\end{itemize}
This whole procedure allows to fully label the nodes of a path (except for the first and last subpath) in $O(1)$ worst-case rounds, by exploiting the precomputed coloring. We now provide an alternative way for handling compress paths in $O(1)$ worst-case rounds, that, on the one hand, it does not require to precompute a coloring, but on the other hand, it requires the function $f_{\Pi,\infty}$ to satisfy some additional properties. We thus get that, if there exists a function $f_{\Pi,\infty}$ that satisfies these additional properties, then the algorithm with deterministic node-averaged complexity $O(\log^* n)$ of \cite{fullversion} can be turned into an algorithm with $O(1)$ deterministic node-averaged complexity.
Finally, we will show that if a problem can be solved in $(\log^* n)^{o(1)}$ deterministic node-averaged complexity, then a function with the required properties exists.

\subsection{The New Way of Handling Compress Paths}
We modify the algorithm of \cite{fullversion}, and in particular the handling of compress paths, as follows. In the original algorithm, all compress paths have length in $[\ell,2\ell]$, due to the fact that they have been split by exploiting the precomputed coloring. Since we cannot precompute this coloring anymore, we now need to be able to handle paths of arbitrary length (still at least $\ell$). Observe that if a path $P$ has length at most $3\ell+4$, a coloring in $P$ can be computed in constant time, and hence the original algorithm can be used. Hence, in the following, we assume the path to be of length at least $3\ell+4$. We now show how to handle a path $P$ in compress layer $i$, that is, all nodes of $P$ are in $V^C_i$.

Consider the two nodes at distance exactly $\ell+1$ from an endpoint of $P$.
We split $P$ into three subpaths by promoting these two nodes to $V^R_{i+1,1}$ (that is, the next layer). We call (according to some arbitrary ordering) the first and last promoted node $s_1$ and $s_2$, the first and last subpath $P_1$ and $P_2$, and the middle path $P_m$. Observe that $P_1$ and $P_2$ have length exactly $\ell$, and $P_m$ has length at least $\ell$. We use the function $f_{\Pi,\infty}$ to assign label-sets to the edges connecting the endpoints of $P_1$ and $P_2$ to their higher-layer neighbors (that include $s_1$ and $s_2$).

What remains to be done is assigning labels to the edges of $P_m$, and this must be done in constant time even if $P_m$ is of superconstant length. On a high-level, we have discussed why this is sufficient in order to obtain an algorithm with $O(1)$ deterministic node-averaged complexity, and this is also shown more formally in \cite[Lemma 40, arXiv version]{fullversion}. 

We define a new LCL problem $\Pi'$, as a function of $\Pi=(\Sigma_{\mathrm{in}},\Sigma_{\mathrm{out}},C_W, C_B)$, $f_{\Pi,\infty}$, a set of label-sets $\mathcal{C}$, and the original maximum degree $\Delta$, that captures exactly what we need (that is, to assign labels to the edges of $P_m$).

\begin{definition}[Compress Problem $\Pi'$]\label{def:compressproblem}
We define an LCL problem $\Pi'$ (not in the black-white formalism) that has checkability-radius $O(\ell)$, where inputs are provided to nodes, and outputs are on edges. 
The problem $\Pi'$ is defined on paths, that is, it is assumed that all nodes have degree at most $2$. If the path is shorter than $\ell$ (recall that $\ell=O(1)$), then the problem $\Pi'$ is defined such that any output is allowed. Hence, in the following we assume that the path has length at least $\ell$.
The problem $\Pi' = (\Sigma'_{\mathrm{in}},\Sigma'_{\mathrm{out}},C')$ is defined as follows. Let $\Sigma'_{\mathrm{in}} := \bigcup_{j=0}^{\Delta-1} \mathcal{C}^j$, that is, a node receives as input a tuple of size at most $\Delta-1$, where each element of the tuple is a label-set in $\mathcal{C}$. Let $\Sigma'_{\mathrm{out}} := \Sigma_{\mathrm{out}}$, that is, the possible outputs of $\Pi$ and $\Pi'$ are the same. We now define the constraint $C'$, by consider three possible cases.
    \begin{itemize}
        \item \textbf{Nodes of degree $2$}. The constraint $C'$, on an arbitrary node $v$ of degree $2$, is defined as follows. If the tuple that $v$ received as input is of size $\Delta - 1$, then $v$ is unconstrained (any output is allowed). Let $o_1,o_2$ be the labels outputted by $v$ on its incident edges, and let $(L_1,\ldots,L_k)$ be the input of $v$. It must hold that there exists a choice $\ell_1 \in L_1,\ldots,\ell_k \in L_k$ such that the multiset $\{o_1,o_2\} \cup \{\ell_1,\ldots,\ell_k\}$ is in the constraint of $v$ of the original problem $\Pi$.
        \item \textbf{Nodes of degree $1$}. For a node $v$ of degree $1$, the constraint $C'$ is defined as follows. Let $(L_1,\ldots,L_k)$ be the input of $v$. If $k = 0$, then $v$ is unconstrained (that is, any output is allowed). If $L_1$ is not in the codomain of the function $g$ (that is defined as a function of $f_{\Pi,\infty}$ in \Cref{def:computing-label-set}), then $v$ is unconstrained. Otherwise, the constraint of $v$ is defined in the same way as in the case of nodes of degree $2$.
    \end{itemize}
\end{definition}
Before proving that a constant-time algorithm for $\Pi'$ implies that we can assign labels to the edges of $P_m$ in constant time, we report a useful property about LCLs on paths.
\begin{observation}[\cite{balliu19lcl-decidability}]\label{obs:constant-no-n}
    Let $\Pi$ be an LCL defined on paths. Let $\mathcal{A}$ be an algorithm that solves $\Pi$ in $O(1)$ worst-case rounds, when given as input the size $n$ of the path. Then, there exists a value $n_0=O(1)$ such that $\mathcal{A}$ solves $\Pi$ in constant time on instances of size $n\ge n_0$ when given as input $n_0$ instead of $n$.
\end{observation}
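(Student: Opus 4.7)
The plan is a pigeonhole argument on the finite number of ``behaviors'' that $\mathcal{A}$ exhibits as its hardcoded input varies, combined with a locality argument that lets a single behavior certify correctness for all large path sizes.

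First, let $T = O(1)$ be the worst-case round complexity of $\mathcal{A}$. For each $n$, the algorithm $\mathcal{A}$ with input $n$ determines, at every node of a path, an output that depends only on that node's $T$-hop view (local shape, inputs, and IDs). By a standard Naor--Stockmeyer-style reduction specialized to paths, one can replace the dependence on specific IDs inside a view by a dependence on the order-type of those IDs; together with the input labelling and whether the view abuts the left endpoint, the right endpoint, or neither, this encodes the behavior of $\mathcal{A}$ with input $n$ as a map $\sigma_n$ from a finite index set $\mathcal{V}$ (of cardinality bounded in terms of $T$, $|\Sigma_{\mathrm{in}}|$, and $\Delta$) into $\Sigma_{\mathrm{out}}$. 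Since $|\mathcal{V}|$ and $|\Sigma_{\mathrm{out}}|$ are constants independent of $n$, there are only finitely many possible maps $\sigma_n$.

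By pigeonhole, some $\sigma^\star$ is realized by infinitely many $n$. I pick $n_0$ to be any such value with $n_0 \geq 2(T+r) + 2$, where $r$ is the checkability radius of $\Pi$; the infinitude of the fibre over $\sigma^\star$ ensures such a choice exists.

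Finally, for any $n \geq n_0$ and any path $P$ of length $n$, I argue that $\mathcal{A}$ with hardcoded input $n_0$ on $P$ produces a valid solution. Fix a node $v \in P$ and let $B_v$ be its $(T+r)$-hop ball; the correctness check at $v$ depends only on the outputs inside $B_v$. By the choice $n_0 \geq 2(T+r)+2$, the ball $B_v$ touches at most one endpoint of $P$, so it embeds isomorphically as a prefix, a suffix, or an interior segment of some longer path $P'$ of length $n' \geq n$ with $\sigma_{n'} = \sigma^\star$ (such $n'$ exists since $\sigma^\star$ recurs arbitrarily far). Running $\mathcal{A}$ with true input $n'$ on $P'$ is correct by hypothesis, so the check passes at the image of $v$ in $P'$; and since both runs use the same behavior $\sigma^\star$ on isomorphic views, the outputs on $B_v$ coincide and the check also passes at $v$. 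The main obstacle I anticipate is the reduction that makes the set $\{\sigma_n\}$ finite: since the ID range grows with $n$, the Naor--Stockmeyer reduction to order-types must be carried out uniformly in $n$, and the cleanest route is to cite it from~\cite{NaorStockmeyer95} rather than reprove it here.
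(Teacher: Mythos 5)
The paper cites this observation from \cite{balliu19lcl-decidability} and does not reprove it, so there is no internal proof to compare against. Your approach (order-invariance reduction, pigeonhole over a finite set of local behaviors, then an embedding argument using locality) captures the right intuition, and the locality step in the last paragraph is essentially sound. However, there is a genuine gap in the pigeonhole step that you flag at the end but, I think, underestimate. The map $\sigma_n$ is well-defined only after $\mathcal{A}$ has been replaced by an \emph{order-invariant} algorithm; for an arbitrary $\mathcal{A}$, the output at a node depends on the numerical values of the IDs in the $T$-hop view, not just their order-type, so the set of possible behaviors is not a priori finite and pigeonhole does not apply. The Naor--Stockmeyer Ramsey argument does produce an order-invariant $T$-round algorithm for each $n$, but it produces a \emph{different} algorithm $\mathcal{A}'_n$ (and a priori a different one for each $n$); it does not make $\mathcal{A}$ itself order-invariant. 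What your argument therefore establishes is that \emph{some} constant-round algorithm for $\Pi$ on paths can be made oblivious to $n$, which is what the paper actually needs downstream, but it is not a proof of the stability property for the given $\mathcal{A}$ as the observation literally asserts.

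A closely related point is the ID-range issue that the order-invariance step quietly absorbs: when $\mathcal{A}(n_0)$ is run on a path of size $n > n_0$, nodes may carry IDs larger than $n_0^c$, on which the behavior of a general $\mathcal{A}(n_0)$ is completely unconstrained by its correctness guarantee on size-$n_0$ instances. An order-invariant algorithm sidesteps this because its output depends only on the relative order of the IDs in a view, so it is indifferent to the magnitude of the ID space. If you want to keep the statement about the given $\mathcal{A}$, you must either assume $\mathcal{A}$ is already order-invariant (noting this is without loss of generality for the paper's downstream application) or constrain the ID range in the conclusion. Either way, this is not merely a Naor--Stockmeyer citation to be carried out uniformly in $n$ --- it is the step that makes the claim true, and without it the proof proves a (slightly) different statement than the observation as written.
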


We observe the following. 
\begin{observation}\label{obs:constantpath}
    If $\Pi'$ can be solved in worst-case $O(1)$ rounds, then the edges incident to the nodes of $P_m$ can be labeled in worst-case $O(1)$ rounds such that the following hold.
    \begin{itemize}
        \item The constraints of $\Pi$ are satisfied.
        \item On edges that have an assigned label-set, the output is from that set.
    \end{itemize}
\end{observation}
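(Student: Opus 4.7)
The approach is to encode the task of labeling the edges incident to $V(P_m)$ as a $\Pi'$-instance on the path $P_m$, solve it using the hypothesized $O(1)$-round algorithm, and then have each $P_m$-node locally extract labels for its external edges from the existential guarantee in $\Pi'$'s constraint.

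First I would view $P_m$ as a standalone path and set up its $\Pi'$-input as follows. For each interior node $v$ of $P_m$, the input is the tuple of label-sets on $v$'s edges to its lower-layer neighbors; these label-sets were produced in the bottom-up label-set computation and thus lie in $\mathcal{C}$. For each endpoint $a_i$ of $P_m$, the first entry of $a_i$'s input tuple is the label-set $L_{s_i}$ on the edge $(a_i, s_i)$, computed via $f_{\Pi,\infty}$ from the $P_i$-side (and hence in the codomain of $g$, so in $\mathcal{C}$), and the remaining entries are the label-sets on $a_i$'s edges to its lower-layer neighbors. With these inputs, $P_m$ is a valid $\Pi'$-instance.

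Next I would invoke the hypothesized $O(1)$-round algorithm $\mathcal{B}$ for $\Pi'$ on this path. Because $\Pi'$ is an LCL on paths, \Cref{obs:constant-no-n} yields a constant $n_0$ such that $\mathcal{B}$ is correct in $O(1)$ rounds when run with $n_0$ in place of the actual path length, as long as the path has at least $n_0$ nodes. Since $\ell$ is a constant depending only on $\Pi$, I may assume $\ell \geq n_0$, so $|P_m| \geq \ell \geq n_0$ and $\mathcal{B}$ outputs a label for every edge strictly inside $P_m$ in $O(1)$ rounds. For every edge incident to $V(P_m)$ that is not strictly inside $P_m$, each $v \in V(P_m)$ uses the defining constraint of $\Pi'$: given its $\Pi'$-output $o_1, o_2$ (or just $o_1$ for an endpoint), there exist labels $\ell_1 \in L_1, \dots, \ell_k \in L_k$ from $v$'s input tuple with $\{o_1, o_2\} \cup \{\ell_1, \dots, \ell_k\}$ in $\Pi$'s constraint at $v$; node $v$ computes such a choice locally in $O(1)$ rounds and assigns the $\ell_i$'s as the labels on its external edges.

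The main obstacle I anticipate is the consistency of these independently chosen external labels across shared edges. For edges from $P_m$ to lower-layer neighbors this is automatic, since those nodes are processed only later in the top-down labeling phase and simply adopt the labels chosen here. The delicate case is the edges $(a_i, s_i)$: both $a_i$ and $s_i$ want to assign a label and they must agree. I plan to handle this by ordering the label-picking so that $a_1, a_2$ pick first and $s_1, s_2$ pick second; each $s_i$ then faces a single-node labeling task with the label on $(a_i, s_i)$ already fixed and lying in the corresponding label-set, and $s_i$ completes its own labeling by the same existence argument, which is guaranteed to succeed because the label-sets on all of $s_i$'s incoming edges were produced by a testing procedure that never generated an empty label-set. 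Combining everything, all edges incident to $V(P_m)$ receive labels in $O(1)$ rounds and by construction satisfy both properties claimed in the observation.
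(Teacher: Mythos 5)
Your high-level approach is the same as the paper's proof: run an $O(1)$-round $\Pi'$-solver on $P_m$, use \Cref{obs:constant-no-n} to circumvent the fact that nodes in $P_m$ do not know $|P_m|$, and then exploit the existential clause built into the constraints of $\Pi'$ to locally extract labels for the non-path edges incident to $P_m$. Two of your steps need repair, though. (1) The step ``I may assume $\ell \ge n_0$'' is a gap as written: $\ell$ is fixed solely as a function of $\Pi$, whereas $n_0$ is a constant determined by the hypothesized $\Pi'$-solver, so nothing guarantees $\ell \ge n_0$. The paper resolves the case of short $P_m$ differently: if $|P_m| < n_0$, the nodes of $P_m$ simply compute $|P_m|$ exactly in $O(n_0)=O(1)$ rounds and feed the true size to the solver. (2) Your justification that the label-set on the edge $(a_i,s_i)$ lies in the codomain of $g$ --- namely, that it is ``computed via $f_{\Pi,\infty}$ from the $P_i$-side'' --- does not match the construction: applying $f_{\Pi,\infty}$ to the short path $P_i$ assigns label-sets to $P_i$'s two \emph{outgoing} edges, and the one incident to $s_i$ is the edge between $P_i$'s inner endpoint and $s_i$, which is a different edge of $s_i$, not $(a_i,s_i)$. (The paper's terse proof also asserts this codomain property without saying where the label-set on $(a_i,s_i)$ comes from, so the imprecision may be inherited, but the mechanism you give does not produce the label-set you need.) Finally, the consistency argument between $a_i$ and $s_i$ is outside the scope of the observation: $s_i$ is not a node of $P_m$ and is labeled only in a later (top-down) phase; the observation's second condition --- that labels respect assigned label-sets --- is exactly the invariant that lets that later phase succeed, so no separate ordering argument is needed here.
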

\begin{proof}
    The claim follows directly from the definition of $\Pi'$. In fact, the path $P_m$ is of length at least $\ell$, and the edges connecting the endpoints of $P_m$ to neighbors of higher layers have a label-set that is in the codomain of $g$. Under these conditions, the constraint of $\Pi'$ is defined in such a way that it satisfies the requirements of the observation. Hence, by running an algorithm for $\Pi'$ we obtain the required labeling. However, in the LOCAL model, it typically is assumed that, when running an algorithm, the size $n$ of the graph is provided to the nodes, but in our case nodes do not know the size of $P_m$. Though, by \Cref{obs:constant-no-n}, such a knowledge of $n$ is not required on instances of size at least some constant $n_0$. On instances smaller than $n_0$ we can anyways compute the size of the path in constant time.
\end{proof}

We are now ready to define the additional property that we require on a function.
\begin{definition}[Constant-good function]\label{def:constantgood}
    A function $f_{\Pi,\infty}$ is called constant-good if its associated problem $\Pi'$ has worst-case complexity $O(1)$.
\end{definition}

Observe that, by combining \Cref{obs:constantpath} with a constant-good function, we can handle compress paths in constant time without precomputing a coloring.
We exploit the following lemma.
\begin{lemma}[\cite{balliu19lcl-decidability}]\label{lem:path-decidability}
    Given an LCL problem $\Pi$ defined on paths, it is possible to decide whether it can be solved in $O(1)$ worst-case rounds.
\end{lemma}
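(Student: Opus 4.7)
The plan is to exploit the drastic restriction on the complexity landscape of LCLs on paths together with the regular structure of LCL constraints. By the classification theorem for LCLs on paths (e.g., from \cite{balliu19lcl-decidability,Feuilloley17}), the deterministic worst-case complexity of any LCL on paths is either $O(1)$, $\Theta(\log^* n)$, or $\Theta(n)$. So it suffices to exhibit a computable combinatorial property of $\Pi$ that characterizes the $O(1)$ regime, and then verify that property by finite enumeration.

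The first step is to encode $\Pi = (\Sigma_{\mathrm{in}}, \Sigma_{\mathrm{out}}, C, r)$ as a finite automaton $M_\Pi$ over the alphabet $\Sigma_{\mathrm{in}} \times \Sigma_{\mathrm{out}}$, whose states are windows of length $2r+1$ of input/output labels and whose transitions encode sliding one position along the path; the accepted finite words are exactly the valid input--output labelings of finite paths (with appropriate boundary conditions encoded as initial/final states). This is a direct translation of the checkability constraint and is computable from the description of $\Pi$.

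The second step is to use the combinatorial characterization: $\Pi$ is $O(1)$-solvable on paths if and only if, for every input word $\sigma \in \Sigma_{\mathrm{in}}^*$ labeling a sufficiently long path, there is a valid output word that can be written as $u w^{k} v$ where $u$, $w$, $v$ are bounded-length words depending only on the constant-radius boundary conditions and on the local input pattern, but not on the length or global IDs of the path. The proof of the ``only if'' direction is the hard step: assuming a $T$-round algorithm $\mathcal{A}$ exists, one applies a Ramsey-style pigeonhole argument to any sufficiently long path with a large ID space, finding two nodes whose $T$-radius views are identical; $\mathcal{A}$ must then assign the same outputs at both, and by ``cutting and pasting'' these indistinguishable views one manufactures valid labelings of arbitrary length whose middle part is a periodic word of bounded period --- this is precisely a cycle in $M_\Pi$ compatible with the relevant boundary conditions. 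The ``if'' direction is easy: given such periodic valid labelings for every boundary condition and input pattern, each node can read its $O(1)$-radius neighborhood, identify the applicable pattern, and output the corresponding label in $O(1)$ rounds.

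The decision procedure then enumerates, for every initial state and every final state of $M_\Pi$, all simple cycles of $M_\Pi$ (of which there are finitely many), and checks whether the resulting family of periodic labelings covers every possible input pattern compatible with the boundary configurations. Since all of this reasoning takes place inside the finite automaton $M_\Pi$, the procedure terminates. The main obstacle is the hard direction of the equivalence: formally bounding the period and the lengths $|u|, |v|$ in terms of $T$, $r$, $|\Sigma_{\mathrm{in}}|$, and $|\Sigma_{\mathrm{out}}|$, so that the search space of candidate cycles is a priori finite and explicitly computable --- this is exactly where the analysis of \cite{balliu19lcl-decidability} applies.
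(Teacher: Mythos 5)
The paper cites this lemma from \cite{balliu19lcl-decidability} and does not give a proof of its own, so I am assessing your proposal on its own merits rather than against an internal argument. The automaton encoding $M_\Pi$ in your first step is fine, and the $O(1)$ versus $\Theta(\log^* n)$ versus $\Theta(n)$ trichotomy for LCLs on paths is the right backdrop; the general flavor of regular-language pumping plus a Ramsey-style indistinguishability argument is consistent with the machinery of the cited paper.

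The central gap is your claimed characterization in the second step: that $\Pi$ has $O(1)$ complexity if and only if every input word admits a valid output expressible as $u w^k v$ with $|u|,|w|,|v|$ bounded. This is false once $\Pi$ has nontrivial inputs, which the lemma explicitly allows. Consider the LCL on paths in which each node must output its own input label: it is solved in zero rounds, yet on a sufficiently aperiodic input word the unique valid output is equally aperiodic and admits no bounded-period $u w^k v$ decomposition. Your parenthetical that $u,w,v$ ``depend on the local input pattern'' does not repair this, since the local input pattern varies along the path, and an output that tracks a varying pattern is precisely one with no single iterable block $w$. The correct invariant must be phrased over the product alphabet $\Sigma_{\mathrm{in}}\times\Sigma_{\mathrm{out}}$ and must quantify uniformly over all input sequences; getting that invariant right, and bounding the relevant window sizes so the search space is finite, is exactly the delicate content of the decidability result. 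Compounding this, your final sentence defers the quantitative bound on $|u|,|w|,|v|$ to ``the analysis of \cite{balliu19lcl-decidability}''---but that bound is the substance of the lemma you are asked to prove. As written, the proposal establishes the easy direction of an incorrect characterization and outsources the decidability-bearing step to the very result it is supposed to re-derive.
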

It is easy to modify the testing procedure to additionally check if a tested function $f_{\Pi,\infty}$ is constant-good: first, run the normal testing procedure, then, if a function passes the test, additionally check, by using \Cref{lem:path-decidability}, whether the problem $\Pi'$ defined as a function of $f_{\Pi,\infty}$ can be solved in worst-case $O(1)$ rounds.

\subsection{From a \texorpdfstring{$(\log^* n)^{o(1)}$}{o(poly(log*n))} Node-Averaged Algorithm to a Constant-Good Function}
We now prove that, if there exists an algorithm $\mathcal{A}$ solving $\Pi$ with deterministic node-averaged complexity $(\log^* n)^{o(1)}$, then there exists a function $f_{\Pi,\infty}$ that is constant-good.
We start by reporting a useful lemma shown in \cite{fullversion}.

\begin{lemma}[Lemma 60 in the arXiv version of \cite{fullversion}, rephrased]\label{lem:summarydet}
    Let $\mathcal{A}$ be an algorithm with deterministic node-averaged complexity $o(w^{1/(2^k -1)})$, where $w = w(n)$ is in $O(n)$. Let $\{1,\ldots,n^c\}$ be the range of IDs for which the algorithm is defined, and assume that $c$ is a large-enough constant. Moreover, let $n$ be an integer that is at least a large-enough constant. Then, it is possible to construct a good function $f_{\Pi,k+1}$, as a function of $\mathcal{A}$ and $n$, satisfying the following properties.

    Let $((T,v), L)$ be a pair obtained by the testing procedure by using the function $f_{\Pi,k+1}$. Then, there exists a tree $T'$ satisfying the following properties.
    
    \begin{itemize}
        \item $T'$ contains a node $v'$ of degree one.
        \item $T'$ has at most $N$ nodes, where $N = \Theta(w(n))$, and
        \item there exists a set $\mathcal{X}$ containing $O(n^{c-4})$ disjoint sets of IDs
    \end{itemize}
    satisfying that, for each $X \in \mathcal{X}$, it is possible to assign IDs to the nodes of $T'$, such that:
    \begin{itemize}
        \item All the IDs are from $X$.
        \item There exists a set of nodes $D$ in $T'$ that, by running $\mathcal{A}$ on any supertree of $T'$ of size $n$, they do not see outside $T'$.
        \item Any labeling of $T'$ that agrees with the outputs of the nodes of $D$ restricts the labels on the edge incident to $v'$ to a subset of $L$.
    \end{itemize}
\end{lemma}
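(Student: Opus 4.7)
The plan is to prove this by induction on the testing-procedure iteration $i$, with the function $f_{\Pi, k+1}$ being constructed level by level in lockstep with the representative trees. Concretely, I would maintain as an inductive invariant that, for every pair $((T,v),L) \in \mathcal{R}_i$ produced by the testing procedure, there is a tree $T' = T'_i$ of size at most $N_i := \Theta(w(n)^{(2^{i}-1)/(2^{k}-1)})$ containing a degree-one node $v'$, together with a family $\mathcal{X}_i$ of at least $n^{c-4i}$ pairwise disjoint ID sets from $\{1,\ldots,n^c\}$, such that for each $X \in \mathcal{X}_i$ there is an assignment of IDs from $X$ to $T'$ and a set $D \subseteq V(T')$ of ``settled'' nodes that (i) do not see outside $T'$ when $\mathcal{A}$ is run on any supertree of size $n$, and (ii) the combined outputs of $D$ force the label on the edge incident to $v'$ into $L$. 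For $i=0$ this is trivial (a single edge with any ID). The lemma statement is exactly this invariant at $i = k$, where $N_k = \Theta(w(n))$.

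For the inductive step I would match the two operations of the testing procedure. In the \emph{rake-attach} step (items \ref{item:top-level-can-complete} and \ref{item:new-rake-labelsets}), to form the new representative tree for $((T,v),L)$, I would glue the inductive trees of the incoming pairs at a common new vertex $u$ playing the role of $v$, then feed the combined object into $\mathcal{A}$. A Markov-style argument on the node-averaged complexity $o(w^{1/(2^k-1)})$ shows that on all but an $o(1)$-fraction of ID assignments the new vertex $u$ terminates within $O(w^{1/(2^k-1)})$ rounds; add $u$ to $D$ on exactly those assignments. The size bound follows because we concatenate at most $\Delta$ trees of the previous level and grow the settled region by the algorithm's horizon, and the number of disjoint ID sets only drops by a constant factor since at most a constant fraction was discarded.

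The compress-path step (item \ref{item:compress-paths}) is where the function $f_{\Pi, k+1}$ is actually defined, and this is the main obstacle. For each maximal class $B$ arising from a constant-length path $H$ whose incoming edges have been equipped with inductive representative trees, I would \emph{define} $f_{\Pi, k+1}(B)$ to be the class of those feasible labelings that are realized by running $\mathcal{A}$ on the full substituted tree under some ID assignment in which the two endpoints of $H$ terminate within $O(w^{1/(2^k-1)})$ rounds. The delicate points are (a) verifying that this class is \emph{independent}, which requires showing that arbitrary combinations of endpoint outputs can be witnessed by combining different ID assignments from $\mathcal{X}_i$ on the two halves of the path, exploiting that the two endpoints' radius-$O(w^{1/(2^k-1)})$ neighborhoods are disjoint once the inductive trees are attached on opposite sides, and (b) showing that $f_{\Pi, k+1}$ is well-defined and good, i.e., never produces an empty label-set. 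Goodness here follows because the outputs of $\mathcal{A}$ on valid instances must be locally correct, so the induced class is always nonempty.

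Finally I would bookkeep sizes and ID budgets: at each of the $k$ iterations, the representative tree's diameter/size grows multiplicatively by roughly $2$ (one hop of attachment plus the horizon of $\mathcal{A}$, whose worst-case cannot exceed $n$ but whose ``early-terminating'' witnesses sit in subtrees of size $w^{(2^{i}-1)/(2^{k}-1)}$), and the count of disjoint ID sets drops by at most an $n^{O(1)}$ factor per level to account for the combinatorial combinations used to prove independence. After $k$ steps this yields $N_k = \Theta(w(n))$ and $|\mathcal{X}_k| = \Omega(n^{c-4k}) = \Omega(n^{c-4})$ once $c$ is taken large enough, which matches the statement. The hard part, as noted, will be making the independence-of-class argument quantitative enough to preserve polynomially many ID sets at each level; everything else is essentially a careful bookkeeping of the simulation of $\mathcal{A}$ on the recursively built trees.
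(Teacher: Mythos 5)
This lemma is stated as an imported result from the cited companion paper (Lemma~60 of \cite{fullversion}) and is not proved in the present paper, so there is no in-paper proof to compare against. Your proposal, read on its own merits, has the right broad shape --- building representative trees level by level in lockstep with the testing procedure, using the node-averaged bound to find ID assignments under which key nodes terminate within a local horizon, and defining $f_{\Pi,k+1}$ from the algorithm's observed behavior --- but several steps as written would not go through.

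First, the ``Markov-style'' step in the rake case is not a Markov argument: node-averaged complexity bounds the \emph{sum} of running times over a worst-case instance, not the running time of a \emph{designated} node such as your glue vertex $u$. From $\nodeavg(\calA) = o(w^{1/(2^k-1)})$ you can conclude that in any instance most nodes finish quickly, but you cannot conclude that $u$ in particular finishes quickly on all but an $o(1)$-fraction of ID assignments without an extra pigeonhole/indistinguishability argument over a carefully constructed family of instances (this is exactly the kind of argument used for \Cref{lem:slow-or-d}: you need to find many instances and exploit that slow nodes in one instance can be replaced by fast nodes with disjoint views from another). Second, your disjointness-of-views claim for the compress step cannot hold as stated: the paths $H$ in the testing procedure have constant length $[\ell,2\ell]$, yet you invoke disjointness of the endpoints' radius-$O(w^{1/(2^k-1)})$ neighborhoods. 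For $w(n)\to\infty$ these radii overlap across a constant-length path, so the independence argument collapses. The fix (which the original construction uses) is to work with long compress paths of level-dependent length $\Theta(w^{2^{i-1}/(2^k-1)})$ when extracting the function from $\calA$, and only afterward argue that the extracted function still passes the testing procedure on constant-length paths; you are conflating the two. Third, you define $f_{\Pi,k+1}(B)$ in terms of a particular substituted tree, but $f_{\Pi,k+1}$ must be a function of the class $B$ alone --- many different substituted trees can realize the same maximal class $B$, and nothing in your description ensures the resulting independent subclass is the same for all of them. Until these three issues --- the pigeonhole step, the path lengths used for the independence argument, and the well-definedness of $f_{\Pi,k+1}$ --- are addressed, the induction does not close.
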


Since $\mathcal{A}$ runs in $(\log^* n)^{o(1)}\subseteq n^{o(1)}$, we get that a good function $f_{\Pi,k}$ exists for any arbitrary $k$ \cite{fullversion}.
By \Cref{lem:kmaxfunction}, this implies that a good function $f_{\Pi,\infty}$ exists. In the rest of the section, we prove that some constant-good function $f_{\Pi,\infty}$ exists (but not necessarily the one given by \Cref{lem:kmaxfunction}). Observe that we only need to prove that there exists a function $f_{\Pi,\infty}$ for which its associated problem $\Pi'$ has worst-case $O(1)$ complexity.

For this purpose, we first define a different LCL problem $\Pi''$, and, by exploiting \Cref{lem:summarydet}, we prove that $\Pi''$ has $O(1)$ worst-case complexity, by showing that we can use $\mathcal{A}$ to solve it. We will later show that this implies that there exists a function $f_{\Pi,\infty}$ for which its associated problem $\Pi'$ has worst-case $O(1)$ complexity.

\begin{definition}[The Problem $\Pi''$]
    The problem $\Pi''$ is a problem on paths defined as follows. Let $F = \{f_1,\ldots,f_h\}$ be the set of possible good functions for $\Pi$ (recall that $h = O(1)$), and let $\Pi'_i$ be the problem $\Pi'$ defined as a function of $f_i$. Each node receives an input for $\Pi'_i$ for each $i$. Each node has to output a value $i\in \{1,\ldots,h\}$ and an output for $\Pi'_h$. If two nodes are neighbors, they are required to output the same index $i$. The output for $\Pi'_h$ needs to satisfy the constraints of $\Pi'_h$.
\end{definition}
Observe that the definition of $\Pi''$ does not depend on a specific function, but it solely depends on the original problem $\Pi$.
\begin{lemma}
    The problem $\Pi''$ has worst-case complexity $O(1)$.
\end{lemma}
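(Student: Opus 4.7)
The plan is to invoke Feuilloley's dichotomy for paths \cite{Feuilloley17} and then reduce to $\mathcal{A}$. By \cite{Feuilloley17}, every LCL on paths has deterministic worst-case complexity in $\{O(1),\Theta(\log^* n),\Theta(n)\}$, and this matches its deterministic node-averaged complexity. Hence it is enough to exhibit an algorithm for $\Pi''$ with deterministic node-averaged complexity $o(\log^* n)$; the worst-case $O(1)$ conclusion then follows automatically from the dichotomy.

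To design such an algorithm, I would first pin down a ``canonical'' index $i^* \in \{1,\ldots,h\}$ as a function of $\Pi$ and $\mathcal{A}$ alone. Because $\mathcal{A}$ runs in $(\log^* n)^{o(1)} = n^{o(1)}$ node-averaged rounds, \Cref{lem:summarydet} yields from $\mathcal{A}$ a good function $f_{\Pi, k+1}$ for arbitrarily large $k$, and the size parameter $w(n)$ appearing there can be taken to grow as slowly as $(\log^* n)^{\varepsilon}$ for any fixed $\varepsilon > 0$ (by choosing $k$ large enough that $o(1)\cdot(2^k-1)$ stays below $\varepsilon$). Since $F$ is finite, some fixed $f_{i^*} \in F$ is produced for arbitrarily large $k$, and combined with \Cref{lem:kmaxfunction} we may treat $f_{i^*}$ as a good $f_{\Pi,\infty}$; its associated compress problem is $\Pi'_{i^*}$. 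Moreover, \Cref{lem:summarydet} provides, for every label-set $L$ that appears in the testing procedure of $f_{i^*}$, a tree gadget $T'_L$ of size $O((\log^* n)^{\varepsilon})$ together with a family $\mathcal X$ of disjoint ID sets, such that running $\mathcal{A}$ on any appropriately-padded supertree of size $n$ into which $T'_L$ is glued (using IDs from some $X \in \mathcal X$) yields a label on the boundary edge of $T'_L$ lying inside $L$, and the nodes of $T'_L$ do not see outside $T'_L$.

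The algorithm for a $\Pi''$-instance on a path $P$ of length $m$ then reads only the $i^*$-coordinate $x_{i^*}(v)$ of the input at each $v \in P$, attaches to $v$ a fresh copy of $T'_L$ (with a private ID set from $\mathcal X$) for every label-set $L$ appearing in $x_{i^*}(v)$, and adds any padding needed to match the hypotheses of \Cref{lem:summarydet}. Running $\mathcal{A}$ on the assembled tree $\widehat T$ produces labels on the edges of $P$ that lie in the prescribed label-sets and hence form a valid $\Pi'_{i^*}$-solution for the input $x_{i^*}$; each node of $P$ then outputs $i^*$ together with its two $\mathcal{A}$-derived edge labels, yielding a valid $\Pi''$-output. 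Since $|\widehat T| = O(m \cdot (\log^* m)^{\varepsilon})$, the total work of $\mathcal{A}$ on $\widehat T$ is at most $|\widehat T| \cdot (\log^* |\widehat T|)^{o(1)} = O(m \cdot (\log^* m)^{\varepsilon + o(1)})$, and charging it entirely to the $m$ nodes of $P$ gives node-averaged complexity $O((\log^* m)^{\varepsilon + o(1)}) = o(\log^* m)$.

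I expect the main technical obstacle to lie in the second and third steps: threading the ID scheme, padding, and locality guarantees of \Cref{lem:summarydet} together so that the labels $\mathcal{A}$ outputs on the edges of $P$ genuinely respect the label-sets prescribed by $x_{i^*}$ \emph{on every input}, while keeping $|\widehat T|$ only a $(\log^* m)^{\varepsilon}$-factor larger than $m$ so that the amortization survives. This is the one place where we truly use the fine output of \Cref{lem:summarydet} (the representative trees with controlled IDs and the non-visibility property) rather than just the existence of $f_{i^*}$, and where the detour through the parameter $w(n)$ is forced.
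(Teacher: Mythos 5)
Your approach is structurally the same as the paper's: invoke \Cref{lem:summarydet} with a slowly-growing $w$, glue the resulting gadget trees onto the path, run $\mathcal{A}$ on the virtual tree, show the amortized cost on the path is $o(\log^* n)$, and conclude worst-case $O(1)$ from the path dichotomy of \cite{Feuilloley17}. The choice $w(n) = (\log^* n)^{\varepsilon}$ rather than the paper's $w(n) = \sqrt{\log^* n}$ is cosmetic.

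There is, however, a genuine gap where you claim to ``pin down a canonical index $i^*$ as a function of $\Pi$ and $\mathcal{A}$ alone.'' The function $f_{\Pi,k_{\max}+1}$ that \Cref{lem:summarydet} outputs, together with the representative trees $T'$ and the ID family $\mathcal{X}$, are all constructed \emph{as a function of the size parameter} $n$ passed to the lemma, which for a path of length $m$ must be taken to be $n' = \Theta(m\cdot w(m))$. As $m$ (hence $n'$) varies, the lemma may return different members of $F$; finiteness of $F$ only gives you a single $f_{i^*}$ for \emph{infinitely many} $n'$, not for all of them. On an input where the lemma produces $f_j \neq f_{i^*}$, the gadgets constrain $\mathcal{A}$'s output to $f_j$-label-sets, not $f_{i^*}$-label-sets, so an algorithm that always reads the $i^*$-coordinate and always outputs $i^*$ is simply incorrect on those instances. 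The paper handles this by letting the output index depend on $n'$ (all nodes on one instance agree, so this is a legitimate $\Pi''$-algorithm), establishes $o(\log^* n)$ node-averaged complexity for that algorithm, and only in the \emph{subsequent} lemma — via an indistinguishability argument gluing two instances of different sizes — shows that a single index must in fact be used for all large $n$. That second step cannot be front-loaded: it relies on already knowing $\Pi''$ has worst-case $O(1)$ complexity, which is precisely what you are in the middle of proving.
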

\begin{proof}
We start by showing how to use $\mathcal{A}$ to solve $\Pi''$. For this purpose, we show how, nodes of a path, can simulate $\mathcal{A}$ on a virtual graph constructed as a function of their input.

Let $n$ be the length of the path. If $n$ is smaller than some large enough constant, we solve the problem by brute force. A solution exists because $\Pi''$ is constructed by using good functions.
Let $w(n) = \sqrt{\log^* n}$. Observe that $(\log^* n)^{o(1)}$ is in $o(w(n)^{1/(2^k -1)})$ for any $k$, and hence we can use \Cref{lem:summarydet}. Let $n' = c \cdot n \cdot \sqrt{\log^* n}$, for some large enough constant $c$. 
We apply \Cref{lem:summarydet} to construct a good function $f = f_{\Pi,k_{\mathrm{max}} + 1}$ with parameters $n'$ and $w$. 
Let $i$ be the index of $f$ in the definition of the problem $\Pi''$ (which exists since the definition of $\Pi''$ considers all the good functions).
Let $P = (v_1,\ldots,v_n)$ be a path, where the input of node $v_j$ at index $i$ is $\mathcal{L} = (L^j_1,\ldots,L^j_{d_j})$. We show that we can use $\mathcal{A}$ to solve $\Pi''$ with index $i$, by constructing a virtual graph where we run $\mathcal{A}$. 

The virtual graph is obtained by connecting to each node $v_j$, $d_j$ trees, one for each label-set in $\mathcal{L}$, as given by \Cref{lem:summarydet}, where the node $v'$ of each tree is identified with $v_j$. It is known that the complexity of an LCL problem on paths does not depend on the range of the IDs \cite{balliu19lcl-decidability}, and thus, for our purpose, we assume that the IDs on the path are in the range $\{1,\ldots,n^{c-5}\}$. We get that, for large enough $n$, there exists an injective function from the IDs on the path to $\{1,\ldots,O(n^{c-4})\}$. Hence, by \Cref{lem:summarydet}, we can label the whole virtual graph with unique IDs that satisfy the following. By running $\mathcal{A}$ (and telling it that there are $n'$ nodes, which, by \Cref{lem:summarydet}, is an upper bounds on the nodes of the virtual graph), we obtain that, by \Cref{lem:summarydet}, for each node $v_j$ of the path, the labeling produced by $\mathcal{A}$ satisfies that the virtual $h$th edge of $v_j$ has a label present in $L^j_h$. Moreover, $\mathcal{A}$ also assign a solution to the edges of the path. Thus, we get a solution for $\Pi'_i$, and hence a solution for $\Pi''$.

We now prove that the deterministic node-averaged runtime is bounded by $o(\log^* n)$, and by \cite{Feuilloley17}, $o(\log^* n)$ deterministic node-averaged complexity on paths implies $O(1)$ worst-case complexity.

The sum of the runtimes on the virtual instance is bounded by $S' \le n' \cdot (\log^* n')^{o(1)} = n' \cdot (\log^* n)^{o(1)}$. 
Let $S$ be the sum of the running times of the nodes in the path. We obtain that the node-averaged complexity is bounded by the following.
\[
\frac{1}{n}S \le \frac{1}{n}S' \le \frac{c \cdot \sqrt{\log^* n}}{n'}S' \le \frac{c \cdot \sqrt{\log^* n}}{n'} \cdot n' \cdot (\log^* n)^{o(1)} \le c \cdot (\log^* n)^{\frac{1}{2} + o(1)} \in o(\log^* n)
\]
\end{proof}

Note that the deterministic algorithm that we constructed for solving $\Pi''$ in $O(1)$ may solve, for different values of $n$, different subproblems $\Pi'_i$, which by itself is not sufficient to show that there is a problem $\Pi'_i$ with deterministic worst-case complexity $O(1)$. However, we now show that, the fact that $\Pi''$ has worst-case complexity $O(1)$, implies that $\Pi'_i$ has worst-case complexity $O(1)$ for some $i$, implying that a constant-good function exists.
\begin{lemma}
    There exists an $i$ for which $\Pi'_i$ has worst-case complexity $O(1)$.
\end{lemma}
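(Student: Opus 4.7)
The plan is to use the ``neighbors must agree on the index'' constraint of $\Pi''$ to extract a fixed winning index $i^*$, and then invoke the $O(1)$-round algorithm $\mathcal{A}$ for $\Pi''$ as a black box that solves $\Pi'_{i^*}$.

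First I would argue that the index output by $\mathcal{A}$ depends only on the graph size, not on the specific instance. Let $T=O(1)$ be the locality of $\mathcal{A}$. Since every instance of $\Pi''$ is a single connected path, the agreement constraint forces $\mathcal{A}$ to output a common index $i(I)$ across the whole of any instance $I$. Fix a sufficiently large $n$, and suppose toward contradiction that two size-$n$ instances $I_1,I_2$ satisfy $i(I_1)\neq i(I_2)$. After a harmless relabeling of IDs within the polynomial range (possible since $n\gg T$), I may assume $I_1$ and $I_2$ use disjoint IDs. I then splice the first $n/2$ nodes of $I_1$ with the last $n/2$ nodes of $I_2$ to form a single size-$n$ hybrid path $I'$: every node at position at most $n/2-T$ has the same $T$-radius view as in $I_1$, so $\mathcal{A}$ outputs $i(I_1)$ there, and symmetrically the nodes at positions at least $n/2+T+1$ output $i(I_2)$. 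As all of these nodes lie on the same connected component of $I'$, this violates the agreement constraint, a contradiction. Hence $\mathcal{A}$ assigns to every size-$n$ instance a common index $i^*(n)$.

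Next, since $i^*(n)\in\{1,\ldots,h\}$ and $h$ is a constant, pigeonhole yields a fixed $i^*$ with $i^*(n)=i^*$ for an infinite set $S$ of sizes. For any instance $J$ of $\Pi'_{i^*}$ of size $n\in S$, I build a $\Pi''$-instance $\bar J$ on the same path whose input in coordinate $i^*$ is exactly $J$'s input, and whose input in every other coordinate is an unconstrained one: by \Cref{def:compressproblem}, a size-$(\Delta-1)$ tuple at each degree-$2$ node and an empty tuple at each degree-$1$ node leave that coordinate unconstrained. Running $\mathcal{A}$ on $\bar J$ in $O(1)$ rounds yields, for $n\in S$, index $i^*$ together with an edge-labeling that is a valid solution to $\Pi'_{i^*}$ on $J$. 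Hence $\Pi'_{i^*}$ can be solved in $O(1)$ deterministic rounds on every size in $S$.

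Finally, the Naor--Stockmeyer classification for LCLs on paths says that every such problem has deterministic worst-case complexity in $\{O(1),\Theta(\log^* n),\Theta(n)\}$. Since $\Pi'_{i^*}$ admits $O(1)$-round algorithms on the infinite set $S$, it cannot lie in either of the two non-constant classes, each of which forces a lower bound growing with $n$ that would be violated on sufficiently large $n\in S$. Thus $\Pi'_{i^*}$ has worst-case complexity $O(1)$, as required. The main technical obstacle is the splicing in the second paragraph: one has to verify that the $T$-views on each side of the splice are preserved exactly and that IDs remain unique within the hybrid. Both are routine for $n\gg T$ with a polynomial ID range, after which pigeonhole and the path trichotomy finish the proof.
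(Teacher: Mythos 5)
Your argument reaches the right conclusion but takes a genuinely different route from the paper's. The paper first applies \Cref{obs:constant-no-n} so that $\mathcal{A}$ always runs with a fixed size parameter $n_0$; node outputs then depend only on constant-radius views and not on the true instance size, which yields index uniformity across all large instances of \emph{any} sizes in a single stroke, and hence a single $O(1)$-round algorithm for some $\Pi'_i$ that works on all large $n$ (small $n$ is handled by brute force). You instead prove per-size uniformity, then pigeonhole to extract an index $i^*$ realized on an infinite set $S$ of sizes, and invoke the path-LCL classification to upgrade $O(1)$-solvability on $S$ to full worst-case $O(1)$ complexity. That upgrade is valid---for LCLs on paths, if the complexity is not $O(1)$ then every constant-round algorithm fails for all sufficiently large $n$, not merely for infinitely many---but it is an extra load-bearing claim the paper's route does not need, and the classification you invoke should be attributed to \cite{balliu19lcl-decidability} rather than Naor--Stockmeyer.

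There is one genuine local gap: the ``harmless relabeling of IDs'' is not automatically harmless. A LOCAL algorithm may depend on the numerical values of IDs, so after relabeling $I_1,I_2$ to make them ID-disjoint, the output indices may change and there is no guarantee they remain different. The fix is short: if two size-$n$ instances produce different indices, take a third size-$n$ instance $I_3$ whose ID set avoids both (possible for large $n$, since the ID range is polynomial in $n$); then $i(I_3)$ differs from at least one of $i(I_1),i(I_2)$, which gives a pair with different indices \emph{and} disjoint ID sets, to which the splicing argument applies directly. With that and the citation tightened, your proof stands.
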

\begin{proof}
    By \Cref{obs:constant-no-n}, any large-enough instance of $\Pi''$ can be solved by running an algorithm $\mathcal{A}$ for $\Pi''$ by providing it always the same value of $n$, call it $n_0$.

    We start by proving that, given two large enough instances $I_1$ and $I_2$ for $\Pi''$, even of different sizes, the solution produced by $\mathcal{A}$ in $I_1$ and $I_2$ must satisfy that all nodes, in both instances, output a solution that uses the same index $i$.
    For a contradiction, suppose that there are two different large enough instances $I_1$ and $I_2$ where nodes use different indices. Then, we can create a single instance $I_3$ where there is at least one node that has the same view as in $I_1$ within its runtime, and at least one node that has the same view as in $I_2$ within its runtime. Since $\mathcal{A}$ is run by telling it that there are $n_0$ nodes, we get that these two nodes, in $I_3$, produce the same output as in the instances $I_1$ and $I_2$. This contradicts the correctness of the algorithm, since the constraints of $\Pi''$ require that all nodes output the same index $i$.

    Since, for large enough instances, $\mathcal{A}$ always solves $\Pi''$ with the same index $i$, and for all small enough instances we can find a solution by brute force, we get that an algorithm that solves $\Pi'_i$ in $O(1)$ rounds exists.
\end{proof}

\urlstyle{same}
\bibliographystyle{alpha}
\bibliography{biblio}

\end{document}